\documentclass{article}

\usepackage{amssymb,amsmath,amsfonts,eurosym,geometry,ulem,graphicx,caption,color,setspace,sectsty,comment,footmisc,caption,natbib,pdflscape,subfigure,array,hyperref,upgreek,bbm}

\normalem
\usepackage{tabularx}
\usepackage{amsthm}
\usepackage{amssymb}
\usepackage{graphicx}
\usepackage{amsmath}
\usepackage{verbatim}
\usepackage{setspace}
\usepackage{ulem}
\usepackage{textpos}
\usepackage{changepage}
\usepackage{url}
\usepackage{float}
\usepackage[para,online,flushleft]{threeparttable}
\usepackage{booktabs}
\usepackage{multirow}
\usepackage{palatino}
\usepackage{enumitem}
\usepackage[table,xcdraw]{xcolor}
\DeclareMathOperator*{\argmax}{arg\,max}

\usepackage{rotating}
\usepackage{adjustbox}
\usepackage{mathabx}

\tolerance=5000

\def\bs{\boldsymbol}

\def\t{^{\top}}

\onehalfspacing
\newtheorem{theorem}{Theorem}
\newtheorem{corollary}[theorem]{Corollary}
\newtheorem{proposition}{Proposition}
\newtheorem{lemma}[theorem]{Lemma}
\newtheorem{example}{Example}

\usepackage{natbib} %%Comment on first run

\usepackage{etoolbox}
\newcounter{bibcount}
\makeatletter
\patchcmd{\@lbibitem}{\item[}{\item[\hfil\stepcounter{bibcount}{[\thebibcount]}}{}{}
\setlength{\bibhang}{2\parindent}
\renewcommand\NAT@bibsetup%
[1]{\setlength{\leftmargin}{\bibhang}\setlength{\itemindent}{-\parindent}%
  \setlength{\itemsep}{\bibsep}\setlength{\parsep}{\z@}}
\makeatother

\newcommand{\R}{\mathbb{R}}
\newcommand{\E}{\mathbb{E}}

\newtheorem{assumption}{Assumption}

\usepackage{apptools}
\AtAppendix{\counterwithin{assumption}{section}}
\AtAppendix{\counterwithin{proposition}{section}}
%\AtAppendix{\counterwithin{lemma}{section}}

\newcolumntype{L}[1]{>{\raggedright\let\newline\\arraybackslash\hspace{0pt}}m{#1}}
\newcolumntype{C}[1]{>{\centering\let\newline\\arraybackslash\hspace{0pt}}m{#1}}
\newcolumntype{R}[1]{>{\raggedleft\let\newline\\arraybackslash\hspace{0pt}}m{#1}}

\begin{document}

\begin{titlepage}
\title{\textbf{Estimating the Number of Components in Panel Data Finite Mixture Regression Models with an Application to Production Function Heterogeneity}
\author{Yu Hao\thanks{Address for correspondence: Yu (Jasmine) Hao, Faculty of Business and Economics, The University of Hong Kong. We thank the editor, Stéphane Bonhomme, and two anonymous referees whose insightful comments and suggestions substantially improved this paper. We are also grateful to Chun Pang Chow, Vadim Marmer, Kevin Song, Bin Chen, and participants at the 2019 IAAE Conference for their helpful discussions. Financial support from the Natural Sciences and Engineering Research Council of Canada (NSERC) is gratefully acknowledged. }\\
Faculty of Business and Economics\\
The University of Hong Kong\\
haoyu@hku.hk \and
Hiroyuki Kasahara\\
Vancouver School of Economics\\
The University of British Columbia\\
hkasahar@mail.ubc.ca
}}
\date{\today}
\maketitle
\begin{abstract}
\noindent  
This paper develops statistical methods for determining the number of components in panel data finite mixture regression models with regression errors independently distributed as normal or more flexible normal mixtures. We analyze the asymptotic properties of the likelihood ratio test (LRT) and information criteria (AIC and BIC) for model selection in both conditionally independent and dynamic panel settings. Unlike cross-sectional normal mixture models, we show that panel data structures eliminate higher-order degeneracy problems while retaining issues of unbounded likelihood and infinite Fisher information. Addressing these challenges, we derive the asymptotic null distribution of the LRT statistic as the maximum of random variables and develop a sequential testing procedure for consistent selection of the number of components. Our theoretical analysis also establishes the consistency of BIC and inconsistency of AIC. Empirical application to Chilean manufacturing data reveals significant heterogeneity in production technology, with substantial variation in output elasticities of material inputs and factor-augmented technological processes within narrowly defined industries, indicating plant-specific variation in production functions beyond Hicks-neutral technological differences. These findings contrast sharply with standard  practice of assuming homogeneous production function and highlight the necessity of accounting for unobserved plant heterogeneity in empirical production analysis.
  \\
% \vspace{0in}\\
%\noindent\textbf{Keywords:} penalized likelihood ratio test; EM test; normal mixture panel regression; sequential hypothesis testing, production function\\
\bigskip
\end{abstract}
\setcounter{page}{0}
\thispagestyle{empty}
\end{titlepage}

\section{Introduction}\label{sec:intro}

Finite mixture models provide a flexible and natural framework for representing unobserved heterogeneity across a finite number of latent classes. Due to their adaptability, these models have found extensive applications in empirical studies across various disciplines since the seminal introduction of a two-component normal mixture model by \cite{Pearson1894}. In economics, often utilizing panel data, finite mixture models have been particularly influential in capturing unobserved individual-specific effects in fields such as labor economics, health economics, and industrial organization.\footnote{For instance, \cite{Heckman1984} utilize finite mixture models as an alternative method for handling unobserved heterogeneity in unemployment duration analysis. Similarly, \cite{Keane1997} and \cite{Cameron1998} employ these models to analyze dynamic decision-making regarding schooling and occupational choices in the presence of unobserved heterogeneity in human capital. In health economics, \cite{Deb1997} propose a finite mixture negative binomial model to capture the unobserved dispersion in healthcare utilization among elderly populations. Additionally, finite mixture models are extensively used in consumer segmentation within industrial organization, as demonstrated by \cite{Kamakura1989} and \cite{Andrews2003}. \cite{Kasahara2009} and \cite{Hu2012} establish the conditions under which finite mixture models are non-parametrically identified using panel data.} The theoretical underpinnings and practical applications of finite mixture models have been extensively explored by \cite{Titterington1985}, \cite{Lindsay1995}, and \cite{McLachlan2000}.

A critical aspect in the application of finite mixture models is determining the appropriate number of components, which often corresponds to the number of latent individual types or abilities in economic models. Selecting an incorrect number of components can lead to estimation biases, excessive computational costs, or identification difficulties. Therefore, developing reliable statistical procedures for accurately selecting the number of components in finite mixture models is essential for empirical analysis.

This paper investigates the statistical properties and practical performance of methods for determining the number of components in \textit{panel data finite mixture regression models}, including the Likelihood Ratio Test (LRT) and the Akaike and Bayesian Information Criteria (AIC and BIC). Panel data finite mixture models offer significant advantages over their cross-sectional counterparts by utilizing repeated observations for each unit, thereby enhancing parameter identification and mitigating common identification issues associated with cross-sectional analyses. However, the existing literature on asymptotic properties of panel data finite mixture regression models is limited, particularly when the component-specific regression errors are normally distributed—a commonly assumed baseline model in regression analyses. This study addresses this gap by rigorously examining and comparing these estimation methods under normally distributed regression errors.

Specifically, this paper develops the LRT and BIC methods for consistently selecting the number of components in various panel data finite mixture regression models and applies these methods to examine production function heterogeneity. We first consider a baseline scenario where regression errors are conditionally independent and normally distributed across periods, conditional on the latent type. Acknowledging the limitations of the normality assumption, we also analyze cases where regression errors follow more flexible, component-specific normal mixtures. Moreover, we extend the analysis to \textit{dynamic panel data finite mixture models}, incorporating lagged outcomes as covariates under a Markovian assumption, which accommodates regression errors modeled by an autoregressive process. We also consider a nonparametric procedure to estimate a lower bound on the number of components for finite mixture models under conditional independence using the rank test of \cite{Kleibergen06} and \cite{Kasahara2014}.

%Testing for the number of components in normal mixture regression models remains challenging, as standard asymptotic regularity conditions fail due to parameter non-identifiability, singular Fisher information, and boundary parameter issues. Numerous studies have addressed the likelihood ratio test (LRT) for finite mixture models \citep[e.g.,][]{ghoshsen85book, chernofflander95jspi, lemdanipons97spl, chenchen01cjstat, chenchen03sinica, cck04jrssb, garel01jspi, garel05jspi, Chen14joe}, deriving its asymptotic distribution through Gaussian processes \citep{dacunha99as, liushao03as, zhuzhang04jrssb, azais09esaim}. However, the key assumptions in these works are violated in \textit{cross-sectional} normal regression models because normal mixtures possess additional undesirable mathematical properties: (i) the Fisher information for testing is not finite, (ii) the log-likelihood function is unbounded, and (iii) the second derivative of the density function for the mean parameter is linearly dependent on its first derivative for the variance parameter.\footnote{\cite{chenli09as}, \cite{Chen2012jasa}, and \cite{Kasahara2015a} analyze the LRT asymptotics for univariate models, with \cite{Kasahara2019} extending to multivariate cases. Additionally, \cite{Amengual2022} develop a score-type test for cross-sectional normal mixtures.}

Testing for the number of components in normal mixture regression models remains challenging, as standard asymptotic regularity conditions fail due to parameter non-identifiability, singular Fisher information, and boundary parameter issues. While numerous studies have addressed the likelihood ratio test (LRT) for finite mixture models \citep[e.g.,][]{ghoshsen85book, chernofflander95jspi, lemdanipons97spl, chenchen01cjstat, chenchen03sinica, cck04jrssb, garel01jspi, garel05jspi, Chen14joe}, deriving its asymptotic distribution through Gaussian processes \citep{dacunha99as, liushao03as, zhuzhang04jrssb, azais09esaim}, these approaches fail in \textit{cross-sectional} normal regression models due to: (i) infinite Fisher information for testing, (ii) unbounded log-likelihood function, and (iii) linear dependence between density derivatives for mean and variance parameters.\footnote{\cite{chenli09as}, \cite{Chen2012jasa}, and \cite{kasaharashimotsu15jasa} analyze the LRT asymptotics for univariate models, with \cite{Kasahara2019} extending to multivariate cases. Additionally, \cite{Amengual2022} develop a score-type test for cross-sectional normal mixtures.}

To the best of our knowledge, whether the aforementioned problems (i)--(iii) of the cross-sectional normal mixture still arise in panel data normal finite mixture models or their extensions remains unknown in the literature. Furthermore, no likelihood-based test has yet been developed for testing the null hypothesis of an $M_0$-component model against an alternative $(M_0+1)$-component model for $M_0\geq 1$ in panel normal regression mixture models with conditionally independent errors.

We show that the higher-order degeneracy of problem (iii) disappears in panel normal mixture models with conditional independence and dynamic panel normal mixture models with a Markov assumption, but problems related to (i) and (ii) arise. Consequently, the existing approaches in \cite{dacunha99as}, \cite{liushao03as}, \cite{zhuzhang04jrssb}, and \cite{azais09esaim} do not directly apply to this class of panel normal mixture models. We impose bounds on the component-specific variance parameters and mixing proportions to address the unboundedness and infinite Fisher information, respectively, and then analyze the asymptotic distribution of the LRT using reparameterization orthogonal to the direction in which the Fisher information matrix is singular. The likelihood ratio of an $(M_0+1)$-component model against the $M_0$-component model is approximated with local quadratic-form expansion with squares and cross-products of the reparameterized parameters. We demonstrate that the asymptotic null distributions of the LRT statistic are characterized by the maximum of $M_0$ random variables. Building on the LRT tests, we propose a sequential hypothesis testing approach for consistently estimating the number of components.

This paper contributes to the literature in several ways. First, we analyze the likelihood ratio test (LRT) for determining the number of components in panel data normal regression finite mixture models, covering both conditional independence and dynamic panel cases with lagged dependent variables. While \cite{kasaharashimotsu15jasa} and \cite{Kasahara2019} study the LRT for cross-sectional univariate and multivariate normal mixture regression models, respectively, we show that the asymptotic distribution of the LRT statistic in panel data differs due to the absence of higher-order dependencies when repeated outcome measurements are available under conditional independence or Markov assumptions.\footnote{From a technical perspective, our panel data finite mixture models constitute a special case of the multivariate normal framework studied by \cite{Kasahara2019}. The primary distinction lies in our assumption of either conditional independence or a Markov process within each mixture component, which facilitates improved identification of model parameters. We show that this structure yields asymptotic null distributions for the likelihood ratio test that differ from those in \cite{Kasahara2019}, mainly because our panel data setting does not involve higher-order singularities.}

Second, while it is known that the log-likelihood function of normal mixture models is unbounded \citep{Hartigan1985}, whether this issue occurs in panel data remains unexplored. We show that the likelihood ratio test (LRT) statistic is also unbounded in panel normal mixture models with conditionally independent errors when the time dimension is finite, though this issue diminishes as the time dimension grows. This unboundedness may lead to excessive rejection rates in the LRT, which we address by imposing explicit bounds on the component-specific variance parameters.

%Third, we establish the consistency of the Bayesian Information Criterion (BIC) and the inconsistency of the Akaike Information Criterion (AIC) for selecting the true number of components in panel data finite mixture models. Our consistency result generalizes \cite{keribin00sankhya} by relaxing its condition (P2), which relies on second-order expansions of density ratios for identification and may not hold if the specified number of components $M$ moderately exceeds the true number $M_0$. Instead, our approach accommodates identification via higher-order rank conditions, utilizing a generalized form of Le Cam's differentiability in quadratic mean (DQM) framework \citep{liushao03as,kasahara2018arXiv}, thereby broadening the scope of the consistency result originally developed by \cite{keribin00sankhya}.

Third, we establish the consistency of the Bayesian Information Criterion (BIC) and the inconsistency of the Akaike Information Criterion (AIC) for selecting the true number of components in panel data finite mixture models. Our consistency result generalizes \cite{keribin00sankhya} by relaxing its restrictive condition (P2) through higher-order rank conditions and utilizing a generalized form of Le Cam's differentiability in quadratic mean (DQM) framework \citep{liushao03as,kasahara2018arXiv}.

Fourth, we empirically analyze production technology heterogeneity using panel data from Chilean manufacturing plants. Our findings reveal significant variation in output elasticities of material inputs and the stochastic processes governing factor-augmented technological changes within narrowly defined industries. This contrasts sharply with standard production function estimation methods, which typically impose homogeneous coefficients across plants \citep{olley1996dynamics, levinsohn2003estimating, Ackerberg2015}. Our results highlight the necessity of explicitly accounting for unobserved plant heterogeneity beyond Hicks-neutral technological differences in empirical analyses of production functions \citep{LiSasaki17arxiv, doraszelski2018measuring, Balat19mimeo, Kasahara2022esri}.

The identification and estimation of latent group structures in panel data has received attention in recent studies \citep{Kasahara2009, LinNg12jem, Bonhomme15ecma, AndoBai16jae, SuShiPhillips16ecma, LuSu17qe}. Finite mixture modeling provides a practical, model-based approach to determining unobserved group structures. Choosing the number of groups is often a prerequisite for classifying each individual's group membership. We can estimate the number of groups in panel data regression models by applying our proposed sequential hypothesis testing approach.

This paper closely relates to \cite{Kasahara2022esri}, which analyzes the nonparametric identification and estimation of finite mixture production models with unobserved heterogeneity, explicitly assuming a known number of mixture components. In contrast, our study emphasizes testing and estimating the number of technology types, focusing specifically on heterogeneity in output elasticities of material inputs. By leveraging first-order condition expressions, our approach enables more flexible finite mixture model specifications than those studied by \cite{Kasahara2022esri}.\footnote{Additionally, we analyze a production function specification with factor-augmented technological changes, explicitly testing for plant-level heterogeneity in labor-augmented technological processes, a specification not explored by \cite{Kasahara2022esri}.} However, as our analysis does not recover the entire production function, our framework cannot address heterogeneity in output elasticities of predetermined inputs such as capital, nor heterogeneity in the stochastic processes governing Hicks-neutral technological changes. Thus, our paper complements \cite{Kasahara2022esri} by addressing distinct yet related aspects of mixture modeling in production function analysis.

 %%%%%%%%%%%%%%%%%%%%%%%%%%%
 
The rest of this paper is organized as follows. In Section~\ref{sec:model}, we introduce several classes of panel data finite mixture regression models studied in this paper, along with empirical examples. In Section~\ref{sec:regularity}, we discuss the failure of regularity conditions when analyzing panel data finite mixture normal regression models. Section~\ref{sec:consistency} analyzes the consistency of Maximum Likelihood Estimation (MLE). Section~\ref{sec:LRT1} analyzes the likelihood ratio test (LRT) for testing $H_0: M=1$ against $H_1: M=2$, while Section~\ref{sec:LRT2} considers the LRT for testing $H_0: M=m$ against $H_1: M=m+1$ for $m\geq 2$. Section~\ref{sec:sht} develops a sequential hypothesis testing procedure, and Section~\ref{sec:bic} analyzes the consistency of the Bayesian Information Criterion (BIC) for selecting the number of components. Section~\ref{sec:rank} discusses the estimation of a lower bound for the number of components using the rank test. Section~\ref{sec:simulation} presents simulation results, while Section~\ref{sec:empirics} presents empirical analysis.

In what follows, all limits are taken as $n \rightarrow \infty$ unless otherwise stated. Let $:=$ denote "equals by definition." For a $k\times 1$ vector $\bs{a}$ and a function $f(\bs{a})$, let $\nabla_{\bs{a}}f(\bs{a})$ denote the $k\times 1$ vector of partial derivatives $(\partial/\partial \bs{a})f(\bs{a})$, and let $\nabla_{\bs{a}\bs{a}^{\top}}f(\bs{a})$ denote the $k\times k$ matrix of second partial derivatives $(\partial/\partial \bs{a}\partial \bs{a}^{\top})f(\bs{a})$. Let $||\cdot||$ denote the Euclidean norm. We adopt the convention that capitalized letters, such as $\bs{W}$, represent random variables, whereas their lowercase counterparts, such as $\bs{w}$, denote evaluation points.

\section{Heteroskedastic finite mixture panel regression model}\label{sec:model}

We consider finite mixture regression models with panel data, where the panel length \( T \geq 2 \) is fixed, and the number of cross-sectional observations \( n \) tends to infinity. Given \( M \geq 2 \) components, we assume that \( y_t \) is conditionally independent over time. The conditional probability density function of \(\{y_t\}_{t=1}^T\) given \(\{\bs{x}_t\}_{t=1}^T\), where \( y_t \in \mathbb{R} \) and \(\bs{x}_t \in \mathbb{R}^q\), is expressed as:
\begin{equation}\label{eq:fm}
    g_M(\{y_t\}_{t=1}^T \mid \{\bs{x}_t\}_{t=1}^T; \bs{\vartheta}_M)
    = \sum_{j=1}^M \alpha_j f(\{y_t\}_{t=1}^T \mid \{\bs{x}_t\}_{t=1}^T;\bs{\theta}_j),
\end{equation}
where \( f(y_t \mid \bs{x}_t; \bs{\theta}_j) \) denotes the density function of the \( j \)-th component for \( y_t \) given \(\bs{x}_t\), belonging to a parametric class, and \(\alpha_j\) represents the population proportion of the \( j \)-th component. Let \(\bs{\vartheta}_M := (\bs{\alpha}^\top, \bs{\theta}_1^\top, \ldots, \bs{\theta}_M^\top)^\top \in \Theta_{\bs{\vartheta}_M}\), where \(\bs{\alpha}^\top := (\alpha_1, \ldots, \alpha_{M-1})^\top\) and \(\alpha_M = 1 - \sum_{j=1}^{M-1} \alpha_j\). The vector $\bs x_t$ may include both time-varying and time-invariant regressors.

We are interested in estimating the number of components $M$ when we have correctly specified parametric mixture models given in (\ref{eq:fm}).  As a baseline model, we specify  $f(\{y_t\}_{t=1}^T \mid \{\bs{x}_t\}_{t=1}^T;\bs{\theta}_j)$ using the normal density function as
\begin{equation}\label{eq:f1}
	  f(\{y_t\}_{t=1}^T \mid \{\bs{x}_t\}_{t=1}^T;\bs{\theta}_j) =	\prod_{t=1}^T  \frac{1}{\sigma_j} \phi\left(\frac{ y_{t} -\mu_{j} - \bs{x}^\top_{t}\bs{\beta}_j  }{\sigma_j } \right),
\end{equation}
where $\bs{\theta}_j := (\mu_j, \sigma_j^2, \bs{\beta}_j^\top)^\top  \in \Theta_{\bs\theta}$ with $\sigma_j^2 \in \Theta_{\sigma^2} :=\R_{++}$, and $\bs\beta_j \in \Theta_{\bs\beta}$ while  $\phi(t) := (2\pi)^{-1/2} \exp(-\frac{t^2}{2})$ is the standard normal probability density function.  Given that the assumption of a normal density function is restrictive, we also consider more flexible parametric models using a mixture of normal density functions:
\begin{equation}\label{eq:f1-mixture}
	  f(\{y_t\}_{t=1}^T \mid \{\bs{x}_t\}_{t=1}^T;\bs{\theta}_j) = \prod_{t=1}^T \left(	\sum_{k=1}^{{\cal{K}}} \tau_{jk}  \frac{1}{\sigma_j} \phi\left(\frac{ y_{t} -\mu_{jk} - \bs{x}^\top_{t}\bs{\beta}_j  }{\sigma_j } \right)\right),
\end{equation}
where  we assume $\mu_{j1}<\mu_{j2}< \cdots<\mu_{j{\cal{K}}}$ for $j=1,...,M$ while  the component-specific parameter is given by  $\bs{\theta}_j := (\bs\tau_j,\bs\mu_j, \sigma_j^2, \bs{\beta}_j^\top)^\top  \in \Theta_{\bs\theta}$,
$\bs\tau_j=(\tau_{j1},...,\tau_{j{\cal{K}}-1}) \in\Theta_{\bs\tau}$, $\tau_{j{\cal{K}}}:=1-\sum_{k=1}^{{\cal{K}}-1}\tau_{jk} $, $\bs\mu_j=(\mu_{j1},...,\mu_{{\cal{K}}})^\top \in \Theta_{\mu}^{{\cal{K}}}$ , $\sigma_j^2 \in \Theta_{\sigma^2}$,  $\bs\beta_j \in \Theta_{\bs\beta}$.\footnote{Testing procedures under other parametric classes of mixture models can be developed based on the results of this paper or existing results in the literature, provided the regularity conditions are satisfied \citep[e.g.,][]{zhuzhang04jrssb}.} In (\ref{eq:f1-mixture}), the mean parameter $\mu_{jk}$ varies across ${\cal{K}}$ components, but neither the variance parameter $\sigma_j$ nor the coefficient $\bs\beta_j$ varies across ${\cal{K}}$ components. The constant variance parameter assumption prevents another source of unboundedness in this context, while the constant coefficient $\bs\beta_j$ assumption ensures that the mixture structure arises solely from the mixture distribution of regression error terms.

To simplify our analysis, we assume that ${{\cal{K}}}\geq 2$ is known to the researcher and that all true mixing proportions $\tau_{jk}$ are non-zero.\footnote{Testing the number of components $K$ simultaneously with the number of components $M$ is certainly an important issue but beyond the scope of this paper, and left for future research.}
\begin{assumption}\label{assumption:K}
In (\ref{eq:f1-mixture}) and (\ref{eq:f1-dynamic-mixture}),
$\cal K$ is known to the researcher and $\tau^*_{jk}> 0$,  $\mu_{jk}\neq \mu_{j\ell}$, and  $\mu_{1,jk}\neq \mu_{1,j\ell}$ if $k\neq \ell$ for  $j=1,...,M$ and $k,\ell=1,...,{\cal{K}}$.
\end{assumption}

Here, $M$ and ${\cal{K}}$ serve distinct roles in our model specification. Specifically, $M$   captures the number of latent types representing permanent unobserved heterogeneity across units, whereas increasing ${\cal{K}}$ enhances the flexibility of the i.i.d. error term distributions. Our primary interest in this paper lies in analyzing the permanent unobserved heterogeneity represented by $M$.

%In our empirical application, we will analyze plant heterogeneity beyond Hicks-neutral technology differences by estimating the number of latent technology type. The following example provides a finite mixture specification for the analysis of plant heterogeneity in input elasticities.

In our empirical application, we will analyze plant heterogeneity beyond Hicks-neutral technology differences by estimating the number of latent technology types. The following example provides a finite mixture specification for analyzing plant heterogeneity in input elasticities.

%Consider
%\begin{equation}\label{eq:fm}
%    g_M(\{y_t\}_{t=1}^T \mid \{\bs{x}_t\}_{t=1}^T; \bs{\vartheta}_M)
%    = \sum_{j=1}^M \alpha_j \prod_{t=1}^T f(y_t \mid \bs{x}_t; \bs{\theta}_j),
%\end{equation}
%with
%\begin{equation}\label{eq:f1-mixture}
%	  f(y_t|x_t;\bs{\theta}_j) =	\sum_{k=1}^{{\cal{K}}} \tau_{jk}  \frac{1}{\sigma_j} \phi\left(\frac{ y_{t} -\mu_{jk} - \bs{x}^\top_{t}\bs{\beta}_j  }{\sigma_j } \right),
%\end{equation}

\begin{example} [Production Function Heterogeneity  with Conditionally Independent Shocks] \label{example-1}
To motivate our analysis, consider the panel data of $n$ plants over $T$  years, $\{\{O_{it}, V_{it}, L_{it},K_{it},Z_{it}\}_{t=1}^T\}_{i=1}^N$, where $O_{it}$, $V_{it}$,  $L_{it}$, $K_{it}$, and $Z_{it}$ represent output, material input, labor,  capital, and other observed characteristics---such as import status and industry classifications---of plant $i$ in year $t$, respectively. %We denote the logarithms of the corresponding variables by lowercase letters as $(o_{it}, v_{it}, l_{it}, k_{it})$, with, for example, $o_{it} = \log(O_{it})$.

We consider the possibility of plant heterogeneity in input elasticities by specifying plant's production function with finite mixture.
Define the latent technology type $D_i \in \{ 1,2,\ldots, M\}$ to represent the production technology type of plant $i$. If $D_i  = j$, then plant $i$ is of type $j$.  The population proportion of type $j$ is denoted by $\alpha_j=\Pr(D=j)$.  For technology type $D_i=j$, the output is related to inputs through the following  production function with heterogenous output elasticities as
\begin{align}
O_{it}&=e^{\omega_{it}}  F_{jt}(V_{it},L_{it},K_{it}, Z_{it},\epsilon_{it})\ \text{ with }\ \omega_{it} =h_j(\omega_{it-1})+ \eta_{it},\quad  \label{prod}
\end{align}
where  $\omega_{it}$ is the serially correlated Hicks-neutral productivity shock that follows the exogenous first-order Markov process given technology type $j$ while $\epsilon_{it}$ is an unobserved random variable that affects the output elasticity of material input.

Assume that  $V_{it}$ is flexibly chosen by plant $i$  after observing the serially correlated productivity shock $\omega_{it}$ and the elasticity shock $\epsilon_{it}$ given output and material input prices, $P_{O,t}$ and $P_{V,t}$, while the value of  $\bs x_{it}$ and $\bs z_{it}$ including $(K_{it},L_{it})$ is determined before $\omega_{it}$ and  $\epsilon_{it}$ are realized. Then, the first-order conditions of profit maximization implies that
\begin{equation}\label{y-foc}
\frac{P_{V,t} V_{it}}{P_{O,t} O_{it}}=\frac{\partial \log  F_{jt}(V_{it},L_{it},K_{it}, Z_{it},\epsilon_{it})}{\partial \log V_{it}}.
\end{equation}
Therefore, the ratios of material input to output capture the output elasticities with respect to material input.

Conditional on being the $j$-th type and a vector of observed plant characteristics $\mathbf x_{it}$  (e.g., capital, labor,  and other plant characteristics), we assume that the log of output elasticity of material input is related to $\bs x_{it}$ and $\epsilon_{it}$ as
\begin{equation}\label{eq:d-v}
\log\left(\frac{\partial \log  F_{jt}(V_{it},L_{it},K_{it}, Z_{it},\epsilon_{it})}{\partial \log V_{it}} \right)= \mathbf x_{it}^\top \bs\beta_j + \epsilon_{it},
\end{equation}
wheret $\epsilon_{it}$ independently follows a  normal distribution with mean $\mu_j$ and variance $\sigma_j^2$, expressed as $\epsilon_{it}\overset{iid}{\sim}  \mathcal{N}(\mu_{j},\sigma_j^2)$ or a mixture distribution $ \epsilon_{it} \overset{iid}{\sim} \sum_{k=1}^{{\cal{K}}}\tau_{jk} \mathcal{N}(\mu_{jk},\sigma_j^2)$. Denoting the logarithm of materials-to-output ratios by
 \[
 y_{it}:=\log\left(\frac{P_{V,t} V_{it}}{P_{O,t} O_{it}}\right),
 \]
 the first-order conditions of profit maximization (\ref{y-foc}) and equation  (\ref{eq:d-v})  imply
 \begin{equation}\label{eq:spec1}
y_{it} = \mathbf x_{it}^\top \bs\beta_j  +\epsilon_{it}, \ \text{  where } \ \epsilon_{it} \overset{iid}{\sim} \mathcal{N}(\mu_{j},\sigma_j^2) \ \text{ or }\ \epsilon_{it} \overset{iid}{\sim} \sum_{k=1}^{{\cal{K}}}\tau_{jk} \mathcal{N}(\mu_{jk},\sigma_j^2).
 \end{equation}
%Denoting the logarithm of materials-to-output ratios by $y_{it}:=\log\left(\frac{P_{V,t} V_{it}}{P_{O,t} O_{it}}\right)$,  equations (\ref{eq:foc})-(\ref{eq:d-v}) implies
%\begin{equation}\label{eq:logs}
%y_{it}
%= \mu_j + \mathbf x_{it}^\top \bs\beta_j + \mathbf z_{it}^\top \bs\gamma+\epsilon_{it} \quad \text{for $D_i=j$}.
%\end{equation}
%\begin{equation}\label{eq:foc}
%S_{it} = \delta_{v,j} \exp(0.5 \sigma_j^2-\epsilon_{it})\quad \text{for $D_i=j$,}
%\end{equation}
Then, the  density function of $\{y_t\}_{t=1}^T$ given $\{\bs x_t\}_{t=1}^T$  is given by the finite mixture model (\ref{eq:fm}) with component density functions (\ref{eq:f1}) or (\ref{eq:f1-mixture}).
 %In an extended case, we assume that $\epsilon_{it} \overset{iid}{\sim} \sum_{k=1}^{{\cal{K}}}\tau_{jk} \mathcal{N}(\mu_{jk},\sigma_j^2)$ in place of $\epsilon_{it} \overset{iid}{\sim} \mathcal{N}(\mu_{j},\sigma_j^2)$.
 \end{example}

This specification can be extended to a finite mixture variant of the correlated random effects framework \citep{Mundlak1978, Chamberlain1984}, where the unobserved individual-specific effects have finite discrete support. In this extension, the covariate vector \( \bs{x}_t \) may include either the individual-specific averages of the time-varying regressors or even the complete sequence of regressors \(\{\bs x_t\}_{t=1}^T\) over the entire observation period.

The strict exogeneity assumption on \(\bs{x}_t\) can be relaxed by adopting a sequential exogeneity framework, where \(\bs{x}_t\) includes covariates observed up to period \(t-1\). Under sequential exogeneity, coefficients become period-specific, as the dimension of \(\bs{x}_t\) may vary over time. Although this approach has the advantage of relaxing strict exogeneity, it also substantially increases the number of parameters to estimate, particularly as the number of mixture components grows. To maintain analytical tractability, we focus below on the simpler scenario where the dimension of \(\bs{x}_t\) is constant and the associated coefficients are time-invariant but extending our theoretical results to accommodate sequential exogeneity is conceptually straightforward.

%This specification can be extended to a finite-mixture version of the correlated random-effects model \citep{Mundlak1978, Chamberlain1984} by assuming that \(\bs{x}_t\) includes the complete sequence of covariates over the entire sample period, with the conditional distribution of unobserved permanent individual effects having a finite discrete support. It is also possible to relax the strict exogeneity assumption on \(\bs{x}_t\) by assuming sequential exogeneity—namely, that \(\bs{x}_t\) includes the complete sequence of covariates up to period \(t-1\). In this case, the coefficients on the covariates become period-specific, and the component-specific density function takes the form
%$f(\{y_t\}_{t=1}^T \mid \{\bs{x}_t\}_{t=1}^T;\bs{\theta}_j) =	  \prod_{t=1}^T  \frac{1}{\sigma_j} \phi\left(\frac{ y_{t} -\mu_{j} - \bs{x}^\top_{t}\bs{\beta}_{jt}  }{\sigma_{jt} } \right)$,
%where the dimension of \(\bs{x}_t\) may vary over time, and \(\boldsymbol{\beta}_{jt}\) represents a period-specific coefficient. This approach has a clear advantage of relaxing the strict exogeneity assumption. One potential disadvantage is that it may substantially increase the number of parameters to be estimated especially when the number of components is moderately large. Our analysis below focus on the case where the dimension of $\bs x_t$ is not period-specific and its coefficient does not change over periods although it is straightforward to extend our theoretical analysis to the case of sequential exogeneity.

The assumption of conditional independence of unobserved idiosyncratic shocks over time in the model (\ref{eq:fm}) may be seen as restrictive. To address this limitation, we extend our analysis to finite mixture dynamic panel data models that incorporate lagged outcomes as covariates by considering models where \( \{y_t\}_{t=1}^T \)  follows a first-order Markov process, conditional on both the latent component type and the sequence of covariates \( \{\bs{x}_t\}_{t=1}^T \). Specifically, for the baseline dynamic panel data model, the conditional probability density of \( \{y_t\}_{t=1}^T \) given  \( \{\bs{x}_t\}_{t=1}^T \) is given by (\ref{eq:fm}) with
\begin{align}\label{eq:fm-dynamic}
f( \{y_t\}_{t=1}^T|  \{\bs{x}_t\}_{t=1}^T;\bs\theta_j) =   f_1(y_1 | \bs{x}_1; \boldsymbol{\theta}_j) \prod_{t=2}^T f(y_t | y_{t-1}, \bs{x}_t; \boldsymbol{\theta}_j)
\end{align}
with
%\frac{1}{\sigma_{1,j}} \phi\left(\frac{ y_{1} -\mu_{1,j} - \bs{x}^\top_{1}{\bs{\beta}}_{1,j}  }{\sigma_{1,j} } \right)\prod_{t=2}^T \frac{1}{\sigma_j} \phi\left(\frac{ y_{t} -\mu_{j} - \rho_j y_{t-1}- \bs{x}^\top_{t}{\bs{\beta}}_j  }{\sigma_j } \right),
%\end{align}
%expressed as:
%\begin{equation}\label{eq:fm-dynamic}
%g_M(\{y_t\}_{t=1}^T | \{\bs{x}_t\}_{t=1}^T; \boldsymbol{\vartheta}_M) = \sum_{j=1}^M \alpha_j f_1(y_1 | \bs{x}_1; \boldsymbol{\theta}_j) \prod_{t=2}^T f(y_t | y_{t-1}, \bs{x}_t; \boldsymbol{\theta}_j),
%\end{equation}
%where \( f_1(y_1 | \bs{x}_1; \boldsymbol{\theta}_j) \) and \( f(y_t | y_{t-1}, \bs{x}_t; \boldsymbol{\theta}_j) \) are the \( j \)-th component parametric density functions for \( y_1 \) given \( \bs{x}_1 \) and for \( y_t \) given \( y_{t-1} \) and \( \bs{x}_t \), respectively. As a baseline case,  these densities are given by:
\begin{align}\label{eq:f1-dynamic}
 f_1(y_1|\bs x_1;\bs{\theta}_j) =  \frac{1}{\sigma_{1,j}} \phi\left(\frac{ y_{1} -\mu_{1,j} - \bs{x}^\top_{1}{\bs{\beta}}_{1,j}  }{\sigma_{1,j} } \right) \text{ and }
 f(y_t|y_{t-1},\bs x_t;\bs{\theta}_j) =  \frac{1}{\sigma_j} \phi\left(\frac{ y_{t} -\mu_{j} - \rho_j y_{t-1}- \bs{x}^\top_{t}{\bs{\beta}}_j  }{\sigma_j } \right),
\end{align}
where $\bs\theta_j=(\mu_{1,j},\mu_j,\sigma_{1,j},\sigma_{j},\bs\beta_{1,j}\t,\bs\beta_j\t,\rho_j)\t$.
In an extended case,  the densities within each component are flexibly modeled as mixtures of normal densities:
\begin{align}
 f_1(y_1|\bs x_1;\bs{\theta}_j) = \sum_{k=1}^{{\cal{K}}} \tau_{jk}\frac{1}{\sigma_{1,j}} \phi\left(\frac{ y_{1} -\mu_{1,jk} - \bs{x}^\top_{1}{\bs{\beta}}_{1,j}  }{\sigma_{1,j} } \right) \text{ and }\nonumber\\
 f(y_t|y_{t-1},\bs x_t;\bs{\theta}_j) =\sum_{k=1}^{{\cal{K}}} \tau_{jk}  \frac{1}{\sigma_j} \phi\left(\frac{ y_{t} -\mu_{jk} - \rho_j y_{t-1}- \bs{x}^\top_{t}{\bs{\beta}}_j  }{\sigma_j } \right)\label{eq:f1-dynamic-mixture}
\end{align}
with $\bs\theta_j=(\bs\tau_j,\bs\mu_{1,j}\t,\bs\mu_j\t,\sigma_{1,j}^2,\sigma_{j}^2,\bs\beta_{1,j}\t,\bs\beta_j\t,\rho_j)\t$, where $\bs\tau_j=(\tau_{j1},...,\tau_{{j\cal{K}}-1})\t$ and $\tau_{j{\cal{K}}} = 1 - \sum_{k=1}^{{\cal{K}}-1} \tau_{jk}$, $\bs\mu_{1,j}=(\mu_{1,j1},...,\mu_{1,j\cal{K}})\t$, and $\bs\mu_j=(\mu_{j1},...,\mu_{{j\cal{K}}})\t$.
%where \( \tau_{jk} \) represents the mixture weights, which satisfy \( \sum_{k=1}^{{\cal{K}}} \tau_{jk} = 1 \) for each \( j \).
The following example illustrates that the finite mixture dynamic panel data model  incorporates a scenario  in which the regression errors follow a first-order autoregressive (AR(1)) process.

  \begin{example}   [Production Function Heterogeneity with AR(1) Process] \label{example-2}
%  The finite mixture model (\ref{eq:fm-dynamic}) can be derived by considering the production function (\ref{prod}) along with (\ref{eq:d-v}), assuming that the elasticity shock \(\epsilon_{it}\) follows a first-order autoregressive (AR(1)) process. 
 To capture a possibility that output elasticities of material input are subject to persistent shocks, suppose that the elasticity shock \(\epsilon_{it}\) in (\ref{eq:spec1}) follows a first-order autoregressive (AR(1)) process. Specifically, conditional on the technology type \(D_i = j\), we consider
\begin{equation}\label{y-ar1}
\epsilon_{it} = \rho_j \epsilon_{it-1} + \xi_{it},
\end{equation}
where the innovation term \(\xi_{it}\) is independently and identically distributed (iid) as:
\begin{equation}\label{xi}
\xi_{it} \overset{iid}{\sim} \mathcal{N}(\mu_{j},\sigma_j^2) \quad\text{or}\quad \xi_{it} \overset{iid}{\sim} \sum_{k=1}^{{\cal{K}}} \tau_{jk} \mathcal{N}(\mu_{jk},\sigma_j^2).
\end{equation}
Substituting \(\epsilon_{it}=y_{it} - \mathbf{x}_{it}^\top \bs{\beta}_j\) into (\ref{y-ar1}) for \(D_i = j\) and rearranging terms leads to the following specification:
\begin{align}
y_{it} &=   \rho_j y_{it-1} + \tilde{\mathbf{x}}_{it}^\top \tilde{\bs{\beta}}_j + \xi_{it}, \label{eq:spec2}
\end{align}
where  \(\tilde{\bs{\beta}}_j := (\bs{\beta}_j^\top, -\rho_j\bs{\beta}_j^\top)^\top\), and \(\tilde{\bs{x}}_{it} := (\bs{x}_{it}^\top, \bs{x}_{it-1}^\top)^\top\).
By further appropriately specifying the initial distribution of \(y_1\) given \(\bs{x}_1\) and \(D_i = j\), the conditional density of \(\{y_t\}_{t=1}^T\) and \(\{\bs{x}_{t}\}_{t=1}^T\) can be represented in the form (\ref{eq:fm}) combined with (\ref{eq:fm-dynamic}), where the component-specific density functions are described by either (\ref{eq:f1-dynamic}) or (\ref{eq:f1-dynamic-mixture}).

 \end{example}

Factor augmented technological changes are also a popular way to incorporate heterogeneity in production functions beyond the Hicks-neutral technological change \citep{doraszelski2018measuring,Zhang2019,Raval2019}. The heterogeneity in the stochastic process of factor augmented technological changes can be modelled using finite mixture dynamic panel data model as the following example illustrates.

\begin{example}[Factor Augmented Technological Change with AR(1) Process] \label{example-3}
We modify a production function (\ref{prod}) by incorporating labor-augmented technological change  $\delta_{it}$  as follows \citep[cf.,][]{Demirer2022}:
\[
O_{it}=e^{\omega_{it}}\bar F_{jt}( \chi_{jt}(V_{it},e^{\delta_{it}} L_{it}),K_{it}, Z_{it},\epsilon_{it})\ \text{ with }\
\chi_{jt}(V_{it}, e^{\delta_{it}} L_{it}) = \left[ \alpha_{V,jt} V_{it}^{1/\varsigma_j} + \alpha_{L,jt} \left(e^{\delta_{it}} L_{it}\right)^{1/\varsigma_j} \right]^{\varsigma_j},
\]
where $\delta_{it}$ follows an AR(1) process:
\begin{equation}\label{delta}
\delta_{it} = \rho^{\delta}_j \delta_{it-1} + \eta^{\delta}_{it}\ \text{ with }\ \text{${\eta}^{\delta}_{it}\overset{iid}{\sim}   \mathcal{N}(0,(\sigma^{\delta}_j)^2)$ \ or\ ${\eta}^{\delta}_{it}\overset{iid}{\sim}\sum_{k=1}^{{\cal{K}}}\tau^{\delta}_{jk} \mathcal{N}(\mu^{\delta}_{jk},(\sigma_j^{\delta })^2)$}.
\end{equation}
 Assume that both $V_{it}$ and $L_{it}$ are flexibly chosen after observing $(\omega_{it},\delta_{it},\epsilon_{it})$ given intermediate and labor input prices, $P_{V,t}$ and $P_{L,t}$. Then, taking the ratios of the first-order conditions of profit maximization with respect to $L_{it}$ and $V_{it}$, respectively, gives $
\delta_{it} = \varsigma_j \log\left( \frac{P_{L,t}}{P_{V,t}} \cdot \frac{\alpha_{V,jt}}{\alpha_{L,jt}} \right) + (1 - \varsigma_j) \log\left( {V_{it}}/{L_{it}} \right).$
Then, in view of (\ref{delta}),  we may express the dynamic process of the log input ratios $y_{it}^\delta:= \log\left( {V_{it}}/{L_{it}} \right)$ as
\begin{equation}\label{eq:spec3}
y^{\delta}_{it} = \mu^{\delta}_{jt} + \rho^{\delta}_j y^{\delta}_{it-1} + \tilde{\eta}^{\delta}_{it},
\end{equation}
where
$
\mu^{\delta}_{jt} := \frac{\rho^{\delta}_j \varsigma_j}{1 - \varsigma_j} A_{jt-1} - \frac{\varsigma_j}{1 - \varsigma_j} A_{jt}$,  $A_{jt} := \log\left( \frac{P_{L,t}}{P_{V,t}} \cdot \frac{\alpha_{V,jt}}{\alpha_{L,jt}} \right),
$
and
$
\tilde{\eta}^{\delta}_{it} := \frac{\eta^{\delta}_{it}}{1 - \varsigma_j}.
$ By appropriately specifying the density functions of \( y^{\delta}_{i1} \), the conditional density of \( \{ y^{\delta}_{it} \}_{t=1}^T \) can be represented as (\ref{eq:fm}) with (\ref{eq:fm-dynamic}), where the component-specific density function are given in either (\ref{eq:f1-dynamic}) or (\ref{eq:f1-dynamic-mixture}).

\end{example}

The number of components, denoted by $M_0$, is defined as the smallest integer $M$ such that the data density of $\bs{w}$ admits the representation (\ref{eq:fm}). Consider a random sample of $n$ with a panel length of $T$ independent observations $\{\bs{W}_{i}\}_{i=1}^n$, where  $\bs{W}_i = \{ (Y_{it},\bs{X}\t_{it})\}_{t=1}^T$ is drawn from a true $M_0$-component density $g_M(\{y_{t}\}_{t=1}^T|\{\bs{x}_{t}\}_{t=1}^T;\bs\vartheta_{M_0}^*)$   in Equation (\ref{eq:fm}), where $\bs\vartheta_{M_0}^*$ represents the true parameter value.  Because component distributions can be identified only up to permutation, we assume that $\mu_1^*<\mu_2^*<\cdots <\mu_{M_0}^*$ for identification.\footnote{More generally, we may consider a lexicographical order: $\bs\theta_1^*<\bs\theta_2^*<\cdots<\bs\theta_{M_0}^*$.}

% In Examples \ref{example-1}-\ref{example-3}, the number of components represents the number of latent technology types captured by heterogeneity in material input elasticities across plants. We develop a procedure to test the null hypothesis of $H_0: M=1$ against $H_1: M=2$ in finite mixture models (\ref{eq:fm}), providing a method to examine  if there exists plant heterogeneity in input elasticities. Furthermore, by sequentially testing $H_0: M=M_0$ against $H_1: M=M_0+1$ for $M_0=1,2,...$,
%we develop a  procedure for consistently estimating how many types of plants there are.   We also estimate $M$ by using the Bayesian Information Criteria. Furthermore, we estimate the lower bound of the number of components by extending a method proposed by \cite{Kasahara2014} without relying on parametric assumptions. We return to these examples in the empirical  application in Section \ref{sec:7_app}.

In Examples \ref{example-1}–\ref{example-3}, the number of components corresponds to the number of latent technology types, reflecting plant-level heterogeneity in  output elasticities of material input or   factor-augmented technology stochastic processes. We propose a procedure based on the Likelihood Ratio Test (LRT) for the finite mixture model specified in equation (\ref{eq:fm}) to evaluate the null hypothesis of $H_0: M=1$ against $H_1: M=2$. This provides a systematic method to assess whether significant heterogeneity exists across plants in terms of input elasticities or factor-augmented technologies.  Moreover, we further develop the LRT  for testing
\[
\text{$H_0: M=m$ \ against \ $H_1: M=m+1$}
\]
where $m$ is some known integer. By sequentially testing $H_0: M=m$ against $H_1: M=m+1$ for $m=1,2,...$, this procedure consistently estimates the number of distinct plant types.

%In Examples \ref{example-1}–\ref{example-3}, the number of components corresponds to the number of latent technology types, reflecting plant-level heterogeneity in output elasticities of material input or factor-augmented technology stochastic processes. We propose a procedure based on the Likelihood Ratio (LR) test for the finite mixture model specified in equation (\ref{eq:fm}) to evaluate the null hypothesis of $H_0: M=1$ against $H_1: M=2$. This procedure provides a systematic method to assess whether significant heterogeneity exists across plants concerning input elasticities or factor-augmented technologies. Moreover, we extend the LRT  framework to test:
%$$
%H_0: M=M_0 \quad \text{against} \quad H_1: M=M_0+1,
%$$
%where $M_0$ is a known integer. By sequentially testing $H_0: M=M_0$ against $H_1: M=M_0+1$ for $M_0=1,2,\dots$, this procedure consistently estimates the number of distinct plant types.

Additionally, we consider determining the number of components using the Akaike Information Criterion (AIC) and Bayesian Information Criterion (BIC), with BIC consistency analyzed in Section \ref{sec:bic}. We further propose estimating a lower bound on the number of components based on the nonparametric rank test approach of \cite{Kasahara2014}, thus avoiding parametric assumptions. These methods are empirically applied to analyze the plant heterogeneity in production function using Examples \ref{example-1}–\ref{example-3} in Section \ref{sec:empirics}.

For notational brevity, let $\bs w = \{y_{t},\bs{x}_{t}\}_{t=1}^T$ and write $g_M(\bs w;\bs\vartheta_M):=g_M(\{y_{t}\}_{t=1}^T|\{\bs{x}_{t}\}_{t=1}^T;\bs\vartheta_{M})$ in (\ref{eq:fm}), and let $f(\bs w;\bs\theta_j):=f( \{y_t\}_{t=1}^T|  \{\bs{x}_t\}_{t=1}^T;\bs\theta_j)$ be component-specific density function so that   (\ref{eq:fm})  is written as
$g_M(\bs w;\bs\vartheta_M)
	=
	  \sum_{j=1}^M \alpha_j   f(\bs w;\bs\theta_j).$

\section{The failure of the regularity conditions}\label{sec:regularity}

We consider a random sample of $n$ observations $\{\boldsymbol{W}_{i}\}_{i=1}^n$, where $\boldsymbol{W}_i = \{ (Y_{it},\boldsymbol{X}^\top_{it})^\top \}_{t=1}^T$ from an $M_0$-component density $g_{M_0}(\boldsymbol{w}; \boldsymbol{\vartheta}_{M_0})$ defined in equation (\ref{eq:fm0}):
\begin{equation}\label{eq:fm0}
g_{M_0}(\boldsymbol{w}; \boldsymbol{\vartheta}_{M_0}^*)  = \sum_{j=1}^{M_0} \alpha^*_j  f(\boldsymbol{w};\boldsymbol{\theta}^*_j),
\end{equation}
where $\boldsymbol{\vartheta}_{M_0}^* = (\boldsymbol{\theta}_1^*,\boldsymbol{\theta}_2^*,\ldots,\boldsymbol{\theta}_{M_0}^*,\alpha_1^*,\ldots,\alpha_{M_0 -1}^*) \in \Theta_{\boldsymbol{\vartheta}_{M_0}}$ and $\alpha_{M_0}^*=1-\sum_{j=1}^{M-1}\alpha_j^*$.
 We assume $\mu_{1}^* < \mu_{2}^*, \ldots, < \mu_{M_0}^*$ in the true parameters for identification.

To examine the failure of the regularity conditions of the LRTS in panel finite mixture models,  consider testing the null hypothesis $H_0: M=1$ against the alternative hypothesis $H_1: M=2$, where  $\boldsymbol{W}_i = \{ (Y_{it}, \boldsymbol{X}_{it}^{\top})^{\top} \}_{t=1}^T$, drawn from a true one-component density $f(\boldsymbol{w};  \boldsymbol{\theta}^*)=\prod_{t=1}^Tf(y_t|\bs x_t;\bs\theta^*)$ or $f_1(y_1|\bs x_1;\bs\theta^*)\prod_{t=2}^Tf(y_t|y_{t-1},\bs x_t;\bs\theta^*)$.
The two-component model
$g_2(\boldsymbol{w}; \boldsymbol{\vartheta}_2) = \alpha f(\boldsymbol{w};   \boldsymbol{\theta}_1 ) + (1 - \alpha) f(\boldsymbol{w};  \boldsymbol{\theta}_2)$
can generate the true one-component density in two cases: (1) $\boldsymbol{\theta}_1 = \boldsymbol{\theta}_2 = \boldsymbol{\theta}^*$ and (2) $\alpha = 0$ or $1$. Consequently,  $H_0: M=1$ can be partitioned into two sub-hypotheses: $H_{01}: \boldsymbol{\theta}_1 = \boldsymbol{\theta}_2$ and $H_{02}: \alpha (1-\alpha) = 0$. The regularity conditions of the LRTS fail in any finite mixture models: under $H_{01}$, $\alpha$ is not identified, and the Fisher information matrix for the other parameters becomes singular; under $H_{02}$, $\alpha$ is on the boundary of the parameter space, and either $\boldsymbol{\theta}_1$ or $\boldsymbol{\theta}_2$ is not identified.

Analyzing the asymptotic distribution of the LRTS for the \textit{cross-sectional} normal mixture,  i.e., with $T=1$, is even more challenging because \citep[cf.,][]{chenli09as}: (i) the Fisher information for testing $H_{02}$ is not finite, (ii) the log-likelihood function is unbounded \citep{Hartigan1985}, and (iii) the first-order derivative of $g_2(\boldsymbol{w}; \boldsymbol{\vartheta}_2)$ with respect to $\sigma_j^2$ is linearly dependent on its second-order derivative with respect to $\mu_j$. We now examine whether problems (i)--(iii) are present in panel normal mixture models with $T\geq 2$.

Regarding problem (i),  the issue of the infinite Fisher information for testing $H_{02}$ arises in the panel normal mixture model.  The score for testing $H_{02}: \alpha=0$ takes the form
%\[
%\left.\frac{\partial  g_2(\boldsymbol{W} ; \alpha,\bs\theta_1,\bs\theta_2)}{\partial \alpha} \right|_{\alpha=0,\bs\theta_2=\bs\theta_2^*}=
% \frac{f(\boldsymbol{W};\mu_1,\sigma_1^2)}{f(\boldsymbol{W};\mu^*,\sigma^{*2} )}-1,
% \]
 \[
\left.\frac{\partial  g_2(\boldsymbol{W} ; \alpha,\bs\theta_1,\bs\theta_2)}{\partial \alpha} \right|_{\alpha=0,\bs\theta_2=\bs\theta^*}=
 \frac{f(\boldsymbol{W};\bs\theta_1)}{f(\boldsymbol{W};\bs\theta^* )}-1,
 \]
where $f(\boldsymbol{W};\bs\theta)= \prod_{t=1}^T \phi((Y_t-\mu-\bs X_t\t\bs\beta)/\sigma)/\sigma$. Then, $\mathbb{E}[\{f(\boldsymbol{W};\bs\theta_1)/f(\boldsymbol{W};\bs\theta^*  )-1\}^2]=\infty$ when $\sigma_1^2> 2\sigma^{*2}$. This infinite Fisher information also arises for normal mixture density (\ref{eq:f1-mixture}) in place of (\ref{eq:f1})  and/or for dynamic panel models (\ref{eq:fm-dynamic}).  For more details, please refer to Proposition \ref{prop:infinite_fisher}. Because the infinite Fisher information causes difficulty in deriving the asymptotic distribution under $H_{02}$, throughout this paper, we  focus on testing $H_{01}$ by restricting the value of $\alpha$ to be away from $0$ and $1$ by assuming that $\alpha\in[c_{1}, 1-c_{1}]$ for some positive constant $c_{1}$.   Appendix \ref{sec:LRT2} discusses the asymptotic distribution under $H_{02}$ under the constraint $\sigma_1^2\leq 2\sigma^{*2}-c$ for some $c>0$ based on the framework of \cite{andrews01em}.%\footnote{Because the true value of $\sigma^{*2}$ is not known, imposing the constraint $\sigma_1^2\geq 2\sigma^{*2}+c$ is difficult in practice in finite sample. }

Because we focus on $H_{01}$, our test may not have power against the local alternatives with $\alpha_n\rightarrow 0$ as discussed in Appendix \ref{sec:local}. %We analyze the asymptotic distribution of the PLRTS under the contiguous local alternatives in Section \ref{sec:local}.

 Related to problem (ii), the LRTS in normal mixture models with panel data becomes unbounded as the sample size $n$ goes to $\infty$. Define the likelihood ratio statistic with respect to the true parameter under $H_0$ as
$
LR_n^*(\boldsymbol{\vartheta}_{2}) := 2\left\{ \sum_{i=1}^n \log g_2(\boldsymbol{W}_i;\boldsymbol{\vartheta}_{2}) - \sum_{i=1}^n\log f(\boldsymbol{W}_i;\boldsymbol{\theta}^*)\right\},
$
where $g_2$ is the density of the two-component finite mixture distribution in (\ref{eq:fm}) with $M=2$ and $\boldsymbol{\theta}^*$ is the true parameter value under $H_0$.  % Let $\tilde{\boldsymbol{\vartheta}}_{2,n}$ be the maximum likelihood estimator for the two-component model, i.e., $\tilde{\boldsymbol{\vartheta}}_{2,n}=\arg\max_{\boldsymbol{\vartheta}_{2}\in \Theta_{\boldsymbol{\vartheta}_{2}}} LR_n^*(\boldsymbol{\vartheta}_{2})$.
 \begin{proposition}\label{prop:unbounded_likelihood}
Suppose that the true model is described by the one-component model with $\boldsymbol{\theta}=\boldsymbol{\theta}^*$. The parameter space for $\sigma_j^2$ is given by $\Theta_{\sigma^2}=\mathbb{R}_{++}$. Then, for any positive constant $M > 0$,  (i) when the component-specific density function is given by (\ref{eq:fm}) with (\ref{eq:f1}) or  (\ref{eq:f1-mixture}), $\Pr \Big( \sup_{\boldsymbol{\vartheta}_{2}\in \Theta_{\boldsymbol{\vartheta}_{2}}} LR_n^*(\boldsymbol{\vartheta}_{2}) < M \Big) =\exp(-C_T n^{1/T})$  as $n\rightarrow \infty$, where $C_T$ for $T=2,3,..$ are some positive constant that depends on $M$ and $T$; (ii) when the density function is given by (\ref{eq:fm-dynamic}) with (\ref{eq:f1-dynamic}) or (\ref{eq:f1-dynamic-mixture}), $\Pr \Big( \sup_{\boldsymbol{\vartheta}_{2}\in \Theta_{\boldsymbol{\vartheta}_{2}}} LR_n^*(\boldsymbol{\vartheta}_{2}) < M \Big) = 0$ for any $n$.
\end{proposition}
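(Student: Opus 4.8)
Both parts exploit the same Hartigan-type mechanism: assign one mixture component to (essentially) interpolate a single unit's data and drive that component's variance to zero, while pinning the other component at the truth, $\bs\theta_2=\bs\theta^*$, and optimizing over $\alpha$ (which I take unrestricted here, since this proposition is precisely what motivates bounding $\alpha$ later). The distinction between (i) and (ii) is whether the interpolation can be achieved \emph{deterministically} for the realized sample. Part (ii) is the easy one: in the dynamic density (\ref{eq:fm-dynamic}) the initial factor $f_1(y_1\mid\bs x_1;\bs\theta_1)$ carries its \emph{own} mean/variance $(\mu_{1,1},\sigma_{1,1},\bs\beta_{1,1})$, separate from the transition parameters. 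I would fix any unit $i_0$, solve the single linear equation $y_{i_01}-\mu_{1,1}-\bs x_{i_01}\t\bs\beta_{1,1}=0$ (always solvable in the $q+1$ unknowns, e.g. $\bs\beta_{1,1}=\mathbf 0$, $\mu_{1,1}=y_{i_01}$), and let $\sigma_{1,1}\to0$ with the transition parameters held fixed and $\bs\theta_2=\bs\theta^*$. Then the initial factor for $i_0$ diverges like $\sigma_{1,1}^{-1}$, so $f(\bs W_{i_0};\bs\theta_1)\to\infty$, while every other unit contributes at least $\log(1-\alpha)$ to $LR_n^*$. Hence $\sup_{\bs\vartheta_2}LR_n^*=+\infty$ for \emph{every} realization, giving $\Pr(\sup_{\bs\vartheta_2}LR_n^*<M)=0$ for all $n$.

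\textbf{Part (i): construction and reduction.} Under conditional independence the component factor for unit $i$ is the product $\prod_{t=1}^T\sigma_1^{-1}\phi((y_{it}-\mu_1-\bs x_{it}\t\bs\beta_1)/\sigma_1)$ sharing one $(\mu_1,\sigma_1,\bs\beta_1)$ across all $T$ periods, so it can blow up only if \emph{all} $T$ residuals of some unit are simultaneously $O(\sigma_1)$. Let $S_i:=\min_{\mu,\bs\beta}\sum_{t=1}^T(Y_{it}-\mu-\bs X_{it}\t\bs\beta)^2$ be the within-unit residual sum of squares. Profiling the first component (the optimal $(\mu_1,\bs\beta_1)$ give $S_i$, the optimal $\sigma_1^2=S_i/T$) yields a unit-$i_0$ log-contribution $-\tfrac T2\log(S_{i_0}/T)+O_p(1)$, while optimizing the penalty from the remaining units over $\alpha$ (the maximizer is $\alpha^*\asymp1/n$) contributes $-\log n+O(1)$. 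Combining these, I would show that $\{\sup_{\bs\vartheta_2}LR_n^*\ge M\}$ holds as soon as $\min_{i\le n}S_i\le\delta_n$ for a threshold $\delta_n\asymp n^{-2/T}$ whose constant depends on $M$ and $T$, and conversely that $\{\sup<M\}\subseteq\{\min_iS_i>\delta_n\}$.

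\textbf{Part (i): the rate.} The $\exp(-C_Tn^{1/T})$ rate then follows from the extreme-value behaviour of $\min_{i\le n}S_i$. By independence across units, $\Pr(\min_iS_i>\delta_n)=(1-\Pr(S_1\le\delta_n))^n$. The within-unit sum of squares has $T-1$ residual degrees of freedom, so its small-ball probability satisfies $\Pr(S_1\le\delta)\asymp\delta^{(T-1)/2}$ as $\delta\downarrow0$; with $\delta_n\asymp n^{-2/T}$ this gives $\Pr(S_1\le\delta_n)\asymp n^{-(T-1)/T}$ and hence $(1-\Pr(S_1\le\delta_n))^n\to\exp(-C_Tn^{1/T})$. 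The mixture density (\ref{eq:f1-mixture}) is handled identically after replacing $S_i$ by the analogous minimal within-unit discrepancy that governs its small-$\sigma_1$ blow-up (each period matched to the nearest of the $\mathcal K$ shared means), whose small-ball exponent again produces the $n^{1/T}$ rate.

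\textbf{Main obstacle.} The hard step is the matching lower bound forced by the exact-rate (equality) statement: on the event $\{\min_iS_i>\delta_n\}$ the supremum of $LR_n^*$ over the \emph{entire} unbounded space $\Theta_{\bs\vartheta_2}$ must stay below $M$. I must rule out every alternative route to unboundedness---$\sigma_1\to\infty$, $\|\bs\beta_1\|\to\infty$, or a component concentrating on several units at once---and show each collapses back to the within-unit-spread statistic. I would control this uniformly through a reparametrization isolating the $\sigma_1\to0$ direction, compactness of the remaining parameters, and the Gaussian product structure, which bounds the profiled likelihood of any unit by a monotone function of its within-unit residual sum of squares; the upper-bound direction, by contrast, needs only the explicit construction above and is comparatively routine.
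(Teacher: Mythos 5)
Your construction is essentially the paper's: for part (i) you devote one component to the unit whose within-unit data are most nearly degenerate, drive its variance to zero at the profiled optimum, take $\alpha\asymp 1/n$, and obtain the $\exp(-C_Tn^{1/T})$ rate from the extreme-value behaviour of the minimum over $n$ i.i.d.\ small-ball events for a chi-square statistic; for part (ii) you exploit the free initial-period parameters $(\mu_{1,1},\sigma_{1,1})$ to make the first factor diverge deterministically for every realization. The paper does exactly this (its $i^*=\arg\min_i s_i^2$, $\alpha_n=1/n$, $\bs\theta_{1,n}=(\bar Y_{i^*},s_{i^*}^2)$, and Lemma \ref{lemma:unbounded_likelihood} computing $[1-F_{T-1}((\epsilon/n)^{2/T})]^n$), and it reduces the normal-mixture error case to the normal case by a pointwise lower bound on the inner mixture, much as you propose.

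Two points of genuine divergence. First, in the covariate case you profile the within-unit residual sum of squares over $(\mu,\bs\beta)$ jointly, so your $S_i$ is a $\sigma^{*2}\chi^2$ with $T-1-q$ (not $T-1$) degrees of freedom; with the threshold $\delta_n\asymp n^{-2/T}$ this gives $n\Pr(S_1\le\delta_n)\asymp n^{(q+1)/T}$ and hence a different exponent than the one stated. The paper sidesteps this by fixing $\bs\beta_{1,n}=\bs 0$ in its construction, so the relevant statistic remains $\chi^2_{T-1}$ and the $n^{1/T}$ exponent survives; if you want the stated rate you should do the same rather than profile over the covariate coefficients. Second, the "main obstacle" you flag --- the matching uniform upper bound over all of $\Theta_{\bs\vartheta_2}$ needed to turn the one-sided construction into the literal equality $\Pr(\sup LR_n^*<M)=\exp(-C_Tn^{1/T})$ --- is not addressed in the paper at all: its proof explicitly reduces to showing that $LR_n^*(\bs\vartheta_{2,n})$ along the constructed sequence is unbounded in probability, with the rate supplied by Lemma \ref{lemma:unbounded_likelihood} for the dominant term only. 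So you have correctly identified a step that a fully rigorous proof of the equality would require, but you are not missing anything the paper supplies; the paper's own argument establishes only the one-sided bound and the divergence.
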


The assumption of non-compactness for the parameter space of $\sigma_j^2$ is crucial for Proposition \ref{prop:unbounded_likelihood}. Without allowing parameters to approach boundary points (i.e., if compactness were imposed), one could not demonstrate the divergence of the likelihood ratio statistic, thus highlighting the source of unboundedness.

For problem (iii), consider the two-component panel mixture model without covariates given by
\begin{align}
g_2(\{y_t\}_{t=1}^T;\alpha,\bs\theta_1,\bs\theta_2)&=\alpha f(\{y_t\}_{t=1}^T ;\bs{\theta}_1)+(1-\alpha)  f(\{y_t\}_{t=1}^T ;\bs{\theta}_2)\quad\text{with}\nonumber\\
  f(\{y_t\}_{t=1}^T ;\bs{\theta}_j) &= \prod_{t=1}^T  \frac{1}{\sigma_j} \phi\left(\frac{ y_{t} -\mu_{j}   }{\sigma_j } \right)\quad\text{for $j=1,2$},
  \end{align} where $\bs\theta_j=(\mu_j,\sigma_j^2)\t$.
Then, when evaluated under the null hypothesis $H_{01}: \bs{\theta}_1 = \bs{\theta}_2 = \bs{\theta}^*$ with $\alpha(1 - \alpha) \neq 0$, the first-order derivative of $g_2(\{y_t\}_{t=1}^T;\alpha,\bs\theta_1,\bs\theta_2)$ with respect to $\sigma_j^2$ is \textit{not} linearly dependent on its second-order derivative with respect to $\mu_j$.

\begin{proposition}\label{prop:linear_independence}
For any $j \in {1,2}$, any $a,b \in \mathbb{R}$, and any $\bar \alpha \in [c, 1-c]$ for some constant $c > 0$, we have:
$$
\Pr\left(\left. \frac{\partial^2 g_2(\{Y_t\}_{t=1}^T; \alpha, \bs{\theta}_1, \bs{\theta}_2)}{(\partial \mu_j)^2} \right|_{ (\alpha,\bs\theta_1,\bs\theta_2)=(\bar\alpha,\bs\theta^*,\bs\theta^*)} =a + b \left. \frac{\partial  g_2(\{Y_t\}_{t=1}^T; \alpha, \bs{\theta}_1, \bs{\theta}_2)}{\partial \sigma_j^2} \right|_{ (\alpha,\bs\theta_1,\bs\theta_2)=(\bar\alpha,\bs\theta^*,\bs\theta^*)}\right) = 0.
$$
when $T\geq 2$.
\end{proposition}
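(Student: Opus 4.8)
The plan is to reduce the event to the vanishing locus of a real-analytic function of $\{y_t\}_{t=1}^T$ and then show that function is not identically zero, so that its zero set has Lebesgue measure zero and hence is a null event under the (absolutely continuous) law of $\{Y_t\}_{t=1}^T$.

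First I would differentiate. Writing $f^*:=f(\{y_t\}_{t=1}^T;\bs\theta^*)=\prod_{t=1}^T \sigma^{*-1}\phi((y_t-\mu^*)/\sigma^*)$, $u_t:=y_t-\mu^*$, $S:=\sum_{t=1}^T u_t$, and $Q:=\sum_{t=1}^T u_t^2$, the product-Gaussian structure gives, after evaluating at $(\bar\alpha,\bs\theta^*,\bs\theta^*)$,
\[
\frac{\partial^2 g_2}{(\partial\mu_j)^2}=\alpha_j f^*\left(\frac{S^2}{\sigma^{*4}}-\frac{T}{\sigma^{*2}}\right),\qquad
\frac{\partial g_2}{\partial\sigma_j^2}=\alpha_j f^*\left(\frac{Q}{2\sigma^{*4}}-\frac{T}{2\sigma^{*2}}\right),
\]
where $\alpha_1=\bar\alpha$ and $\alpha_2=1-\bar\alpha$, both strictly positive since $\bar\alpha\in[c,1-c]$. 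The event in the proposition therefore equals $\{H(\bs Y)=0\}$, where $H(\bs y):=\alpha_j f^*(\bs y)\,P(\bs y)-a$ and $P(\bs y):=\frac{1}{\sigma^{*4}}\big(S^2-\tfrac{b}{2}Q\big)+\frac{T}{\sigma^{*2}}\big(\tfrac{b}{2}-1\big)$ is a quadratic polynomial in $(u_1,\dots,u_T)$. Since $f^*$ is everywhere positive and real-analytic and $P$ is polynomial, $H$ is real-analytic on the connected set $\R^T$; by the standard fact that a real-analytic function which is not identically zero has a Lebesgue-null zero set, it suffices to prove $H\not\equiv 0$, and the claim then follows because $\bs Y$ has a density with respect to Lebesgue measure.

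It then remains to show $H\not\equiv 0$ for every $(a,b)$. If $a\neq 0$, this will be immediate from the Gaussian tail: as $\|\bs y\|\to\infty$ the factor $f^*(\bs y)$ decays exponentially while $P(\bs y)$ grows at most polynomially, so $\alpha_j f^* P\to 0$ and hence $H\to-a\neq 0$, ruling out $H\equiv 0$. If $a=0$, then since $f^*>0$ it suffices to show $P\not\equiv 0$. Here the key step is to expand $S^2=\big(\sum_t u_t\big)^2=Q+\sum_{s\neq t}u_s u_t$ and isolate a witness monomial: for $T\geq 2$ the coefficient of $u_1 u_2$ in $P$ equals $2/\sigma^{*4}$, arising entirely from the cross terms of $S^2$, and this is nonzero regardless of the value of $b$ (neither $Q$ nor the constant term contributes any $u_1 u_2$). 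Hence $P$ is a nonzero polynomial and $H\not\equiv 0$.

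This cross-term coefficient is precisely where the panel structure ($T\geq 2$) does the work, and isolating it is the only delicate point, so I expect that to be the main obstacle rather than the routine computations. For $T=1$ one has $S^2=u_1^2=Q$, the cross terms are absent, and the choice $b=2$, $a=0$ forces $P\equiv 0$; this recovers the cross-sectional degeneracy (problem (iii)) and confirms that $T\geq 2$ is essential. The remaining verifications—the explicit derivative formulas and the invocation of the real-analytic zero-set lemma—are routine.
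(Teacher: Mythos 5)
Your proof is correct and follows essentially the same route as the paper's: the same explicit derivative formulas, followed by the observation that $(\sum_t u_t)^2$ contains cross-terms $u_su_t$ ($s\neq t$) that $\sum_t u_t^2$ cannot reproduce, which is exactly the paper's contradiction argument and exactly where $T\geq 2$ enters. Your wrapper via the real-analytic zero-set lemma and the case split on $a$ is in fact slightly more careful than the paper's version, which divides through by $f^*$ and treats the resulting $a\,\sigma^{*2}/(\bar\alpha f^*)$ term as a constant in the polynomial identity.
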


Consequently, the panel mixture model (\ref{eq:fm}) with component density function (\ref{eq:f1}) is strongly identifiable, and  the best rate of convergence for estimating the mixing distribution is $n^{-1/4}$ when the number of components is unknown \citep[cf.,][]{chen95as}; see also our analysis under local alternatives in Appendix \ref{sec:local}.  The key assumption underlying this result is that the component density function can be factored into a product of density functions, reflecting limited dependence in regression errors across periods under either the conditional independence or Markov assumption.  In contrast, the strong identifiability does not hold for the cross-sectional normal mixture or the multivariate normal mixture with general dependence, and its convergence rate becomes as slow as $n^{-1/8}$ when the number of components is over-specified  \citep[cf.,][]{kasaharashimotsu15jasa,Kasahara2019}.

As in any finite mixture models, however, the standard asymptotic analysis breaks down in testing $H_{01}: \bs{\theta}_1 = \bs{\theta}_2 = \bs{\theta}^*$  because $\alpha$ is not identified under $H_{01}$; in addition, the first-order derivative at the true value $\bs\vartheta^*_2 = (\alpha, (\bs\theta^*)\t, (\bs\theta^*)\t)\t$ is linear dependent as
\begin{align}\label{eq:dependence}
 \nabla_{\bs{\theta}_1} \log g_2(\bs{w};\bs\vartheta^*_2)
= \frac{\alpha}{1 - \alpha}   \nabla_{\bs{\theta}_2} \log g_2(\bs{w};\bs\vartheta^*_2).
\end{align}
Consequently, we derive the asymptotic distribution of the LRTS using a fourth-order Taylor series approximation for the log-likelihood function.

\section{Consistency of the MLE}\label{sec:consistency}

To address the issue of unboundedness, following  \cite{hathaway85as}, we restrict the parameter space to ensure that the ratio of each component-specific variance to the unconditional variance is bounded away from zero, i.e., $\min_{j} \left( \frac{\sigma_j^2}{\sum_{k=1}^M \alpha_k \sigma_k^2} \right) \geq c_{2},$
for some constant \( c_{2} > 0 \). This constraint enforces a positive lower bound on the component variances when the unconditional variance is strictly positive with $\sum_{k=1}^M \alpha_k \sigma_k^2\geq c_3>0$. We also assume that $\alpha_j$ is bounded away from $0$ and $1$ by a small constant $c_1>0$ so that $\bs\alpha\in\Theta_{\bs\alpha,c_1}:=\{\bs\alpha: \alpha_j\in [c_1,1-c_1], \sum_{j=1}^{M-1}\alpha_j\leq 1-c_1\}$.   This assumption prevents issues related to infinite Fisher information, allowing us to focus on $H_{01}$. % The sensitivity of the results to the choice of \(\bs c:=(c_1,c_2,c_3)^\top \) is assessed through simulations in Section xx.
Let $\bs c=(c_{1},c_{2},c_3)\t$, and define a restricted parameter space for $\bs\vartheta_M$ for a model with the component-specific density function (\ref{eq:fm}) by
 \[
 {\bar \Theta}_{\bs\vartheta_M}(\bs c) := \{ \bs\vartheta_M \in  \Theta_{\bs\vartheta_M} : \bs \alpha\in \Theta_{\bs\alpha,c_1}, \ \min_j \sigma_j^2\geq c_2 \sum_{k=1}^M  \alpha_k\sigma_k^2,\   \sum_{k=1}^M \alpha_k \sigma_k^2\geq c_3 \}.
 \]
 For the dynamic panel model (\ref{eq:f1-dynamic}), we define a restricted parameter space for $\bs\vartheta_M$ analogously, but with additional restrictions on $\sigma_{1,j}$: $\min_{j}  \sigma_{1,j}^2 \geq c_{2} \sum_{k=1}^M \alpha_k \sigma_{1,k}^2 $ and $\sum_{k=1}^M \alpha_k \sigma_{1,k}^2\geq c_3$.

%For the dynamic panel model   (\ref{eq:f1-dynamic}), we define a restricted parameter space for $\bs\vartheta_M$ analogously but further restrict the values of $\sigma_{1,j}$ by requiring $\min_{j} \left( \frac{\sigma_{1,j}^2}{\sum_{k=1}^M \alpha_k \sigma_{1,k}^2} \right) \geq c_{2}$ and $\sum_{k=1}^M \alpha_k \sigma_{1,k}^2\geq c_3$.

We consider the MLE of $\bs\vartheta_M$ over a restricted parameter space  ${\bar \Theta}_{\bs\vartheta_M}(\bs c)$ for the M-component model:
\begin{equation}\label{eq:MLE}
\ell_n^M(\widehat{\bs\vartheta}_M) = \sup_{{\bs{\vartheta}}_M \in {\bar \Theta}_{\bs\vartheta_M}(\bs c)} \ell_n^M({\bs\vartheta}_M),%+\tilde p_n({\boldsymbol{\vartheta}}_2)
\end{equation}
where
\begin{equation}\label{eq:log-like}
\ell_n^M({\bs{\vartheta}}_M) := \sum_{i=1}^n \log g_M(\boldsymbol{W}_i; {\bs{\vartheta}}_M)
\end{equation}
is the the log likelihood function of ${\bs{\vartheta}}_M$. The MLE $\widehat{\bs\vartheta}_M$ is potentially one of many possible global maximum points of $\ell_n^M({\bs{\vartheta}}_M)$ when $M$ is greater than the true number of components.

For $M>M_0$, define a set of parameter values for the $M$-component density that generates the true $M_0$-component density by
\[
\Theta_{\bs\vartheta_M}^*:=\left\{\bs\vartheta_M\in \Theta_{\bs\vartheta_M}: g_M(\bs w;\bs\vartheta_M)=g_{M_0}(\bs w;\bs\vartheta_{M_0}^*)\quad\text{for all $\bs w\in \bs{\mathcal{W}}$}\right\}.
\]
For example, when  $M_0=1$ with the true density function $f(\bs w;\bs\theta^*)$, we have   \[
\Theta_{\bs\vartheta_2}^*:= \left\{ (\alpha,\boldsymbol{\theta}_1,\boldsymbol{\theta}_2) \in  \Theta_{\bs\vartheta_2}: \boldsymbol{\theta}_1 = \boldsymbol{\theta}_2 = \boldsymbol{\theta}^*; \alpha=1  \text{ and } \bs\theta_1=\bs\theta^*; \alpha=0 \text{ and } \bs\theta_2=\bs\theta^* \right\},\] where
$\boldsymbol{\theta}_1$ and $\boldsymbol{\theta}_2$ are component-specific parameters.

The following proposition establishes the consistency of the MLE for over-specified mixture models. % when the true data-generating process follows a one-component model.

\begin{assumption}\label{assumption:consistency} (a) $\boldsymbol{X}$  has finite second moments and $\text{Pr}(\boldsymbol{X}^\top \boldsymbol{\beta}_j \neq \boldsymbol{X}^\top \boldsymbol{\beta}_j^*  ) > 0 $ for $\boldsymbol{\beta}_j  \neq \boldsymbol{\beta}_j^*$ for $j=1,2,...,M_0$. (b) For finite $M> M_0$, $\Theta_{\bs\vartheta_M}^*\cap {\bar \Theta}_{\bs\vartheta_{M}}(\bs c)$ is non-empty. (c)  ${\bar \Theta}_{\bs\vartheta_{M}}(\bs c)$ is compact. %$c_j>0$ for $j=1,2,3$.

%(c) $\bs\vartheta_{M_0}\in \text{int}(\Theta_{\bs\vartheta_{M_0}})$.%(b) $a_n> 0$ and $a_n = o(n^{1/4})$ in the penalty function (\ref{eq:penalty}).
\end{assumption}

\begin{proposition}\label{prop:consistency}
Suppose that Assumptions \ref{assumption:K} and \ref{assumption:consistency} holds and that the data is generated from $M_0$-component models with the component-specific density function  (\ref{eq:fm}) with (\ref{eq:f1}) or  (\ref{eq:f1-mixture}), or  the component-specific density function (\ref{eq:fm-dynamic}) with (\ref{eq:f1-dynamic}) or  (\ref{eq:f1-dynamic-mixture}). Then, for any  finite $M>M_0$, $\inf_{\boldsymbol{\vartheta}_M \in \Theta_{\bs\vartheta_M}^*} ||\widehat{\boldsymbol{\vartheta}}_M - \boldsymbol{\vartheta}_M|| \to 0$ almost surely.
\end{proposition}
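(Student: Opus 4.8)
The plan is to prove the set-valued strong consistency by a Wald-type argument adapted to non-identified (over-specified) models, in the spirit of the consistency proof for nonidentifiable mixtures (see \cite{keribin00sankhya}). The essential observation is that the restriction to $\bar\Theta_{\bs\vartheta_M}(\bs c)$ plays two roles that make the standard machinery applicable. The lower bounds $c_2,c_3$ force $\sigma_j^2\geq c_2 c_3>0$ for every admissible parameter, so each component density obeys $f(\bs w;\bs\theta_j)\leq (2\pi c_2 c_3)^{-T/2}$ and hence $\log g_M(\bs w;\bs\vartheta_M)$ is bounded above by a constant uniformly on $\bar\Theta_{\bs\vartheta_M}(\bs c)$; and the compactness in Assumption \ref{assumption:consistency}(c) lets me run a uniform law of large numbers. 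Without these restrictions the supremum of the likelihood diverges by Proposition \ref{prop:unbounded_likelihood}, and the Wald argument would break down.

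First I would show that the population criterion $L(\bs\vartheta_M):=\E[\log g_M(\bs W;\bs\vartheta_M)]$ is maximized over $\bar\Theta_{\bs\vartheta_M}(\bs c)$ exactly on $\Theta_{\bs\vartheta_M}^*\cap\bar\Theta_{\bs\vartheta_M}(\bs c)$, which is nonempty by Assumption \ref{assumption:consistency}(b). By the information inequality, $L(\bs\vartheta_M)-\E[\log g_{M_0}(\bs W;\bs\vartheta_{M_0}^*)]=-D\bigl(g_{M_0}^*\,\|\,g_M\bigr)\leq 0$, with equality if and only if $g_M(\cdot;\bs\vartheta_M)=g_{M_0}(\cdot;\bs\vartheta_{M_0}^*)$ almost everywhere, which by the definition of $\Theta_{\bs\vartheta_M}^*$ is equivalent to $\bs\vartheta_M\in\Theta_{\bs\vartheta_M}^*$. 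Second, I would establish a uniform strong law, $\sup_{\bs\vartheta_M\in\bar\Theta_{\bs\vartheta_M}(\bs c)}|n^{-1}\ell_n^M(\bs\vartheta_M)-L(\bs\vartheta_M)|\to 0$ almost surely, by the compact-parameter Glivenko--Cantelli route: the map $\bs\vartheta_M\mapsto\log g_M(\bs w;\bs\vartheta_M)$ is continuous, and an integrable envelope is available because $|\log g_M(\bs w;\bs\vartheta_M)|\leq C(1+\|\bs w\|^2)$ uniformly on the compact set (the quadratic exponents of the normal kernels are controlled by the variance lower bound), while $\E\|\bs W\|^2<\infty$ follows from the finite second moments of $\bs X$ in Assumption \ref{assumption:consistency}(a) together with the (mixture-of-)normal conditional law of $Y_t$.

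With these two ingredients the set consistency is routine. Suppose the conclusion fails; then with positive probability $\inf_{\bs\vartheta_M\in\Theta_{\bs\vartheta_M}^*}\|\widehat{\bs\vartheta}_M-\bs\vartheta_M\|\geq\delta$ along a subsequence for some $\delta>0$. By compactness I extract a further subsequence with $\widehat{\bs\vartheta}_M\to\bar{\bs\vartheta}$ and $\mathrm{dist}(\bar{\bs\vartheta},\Theta_{\bs\vartheta_M}^*)\geq\delta$, so $L(\bar{\bs\vartheta})<\max_{\bar\Theta} L$ because $L$ is continuous, attains its maximum only on $\Theta_{\bs\vartheta_M}^*$, and the set $\{\bs\vartheta_M\in\bar\Theta_{\bs\vartheta_M}(\bs c):\mathrm{dist}(\bs\vartheta_M,\Theta_{\bs\vartheta_M}^*)\geq\delta\}$ is compact and disjoint from the maximizing set. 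The uniform law then yields $n^{-1}\ell_n^M(\widehat{\bs\vartheta}_M)\to L(\bar{\bs\vartheta})<\max_{\bar\Theta} L$. But for any $\bs\vartheta_0\in\Theta_{\bs\vartheta_M}^*\cap\bar\Theta_{\bs\vartheta_M}(\bs c)$, optimality of $\widehat{\bs\vartheta}_M$ gives $n^{-1}\ell_n^M(\widehat{\bs\vartheta}_M)\geq n^{-1}\ell_n^M(\bs\vartheta_0)\to L(\bs\vartheta_0)=\max_{\bar\Theta} L$, a contradiction.

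The main obstacle is the identifiability step underlying the equality case of the information inequality: showing that $g_M(\cdot;\bs\vartheta_M)=g_{M_0}(\cdot;\bs\vartheta_{M_0}^*)$ a.e.\ can occur only through the collapsing patterns encoded in $\Theta_{\bs\vartheta_M}^*$. This requires identifiability of finite mixtures of the \emph{product} normal regression densities (\ref{eq:f1}): Assumption \ref{assumption:consistency}(a) pins down each $\bs\beta_j$ from the conditional-mean variation, and the classical identifiability of univariate normal mixtures, combined with the product-over-$t$ structure, pins down the remaining $(\mu_j,\sigma_j^2,\alpha_j)$ up to label permutation and splitting of components with identical parameters. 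For the within-component normal-mixture specification (\ref{eq:f1-mixture}) the argument must further invoke Assumption \ref{assumption:K} (positive weights $\tau_{jk}^*$ and distinct means $\mu_{jk}$) to recover the nested mixing structure, and for the dynamic specifications (\ref{eq:f1-dynamic})--(\ref{eq:f1-dynamic-mixture}) it must identify the transition density, including $\rho_j$, from the Markov factorization. Verifying these identifiability claims across all four specifications, and confirming that the admissible parameter configurations are exactly those in $\Theta_{\bs\vartheta_M}^*$, is the delicate part; the remainder is the standard Wald argument.
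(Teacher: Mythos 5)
Your proof is correct under the stated assumptions, but it takes a genuinely different route from the paper. The paper follows \cite{hathaway85as} and verifies the conditions of \cite{kieferwolfowitz56ams}: it works with the joint density of blocks of $m_q=M(q+1)$ observations (so that within any block at least $q+1$ observations share a component), and it establishes the key envelope condition $\E\sup_{\bs\vartheta_M}\log[\,\cdot\,]<\infty$ by bounding the off-block components using only the variance-\emph{ratio} restriction $\min_{j,k}\sigma_j/\sigma_k>c$ and controlling the remaining block by the maximized Gaussian regression likelihood, whose sum of squared residuals is $\sigma^{*2}\chi^2((T-1)(q+1))$ with $\E\log\chi^2<\infty$. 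You instead combine the two constraints in $\bar\Theta_{\bs\vartheta_M}(\bs c)$ to get the absolute floor $\sigma_j^2\ge c_2c_3>0$, which makes $\log g_M$ uniformly bounded above and reduces everything to a textbook Wald/ULLN argument on the compact restricted space. Your route is shorter and perfectly valid here; the paper's route is the one that survives if only the ratio constraint is imposed (Hathaway's original setting, in which all variances may shrink to zero jointly) and is what makes the $T\ge 2$ panel structure do real work via the positive degrees of freedom in the SSR. One correction to your closing paragraph: the identifiability step you flag as the ``delicate part'' is not needed. The set $\Theta_{\bs\vartheta_M}^*$ is \emph{defined} as $\{\bs\vartheta_M: g_M(\cdot;\bs\vartheta_M)=g_{M_0}(\cdot;\bs\vartheta_{M_0}^*)\}$, so the equality case of the information inequality is membership in $\Theta_{\bs\vartheta_M}^*$ by definition; no characterization of the admissible collapsing patterns (and hence no mixture-identifiability argument across the four specifications) is required for the set-consistency claim.
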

Consequently, Proposition \ref{prop:consistency} suggests that the MLE $\widehat{\boldsymbol{\vartheta}}_M$ converges almost surely to a set of parameters for which the true density function $g_{M_0}(\boldsymbol{w}; \boldsymbol\theta_{M_0}^*)$ emerges within the space of  overspecified $M$-component density functions.

\section{Likelihood ratio test for $H_0: M=1$ against $H_1: M=2$}\label{sec:LRT1}

%In this section, We consider the likelihood ratio test under the null hypothesis of homogeneous distribution, which is $H_0: M = 1$, as a precursor of multiple mixture cases under the null hypothesis. The general multiple mixture cases are an extension of the homogeneous case. Furthermore, We show that the finite normal panel regression model meets the assumption \ref{assumption:1}, \ref{assumption:LRT1} and \ref{assumption:3} below under reparameterization proposed in \cite{Kasahara2012}(KS12 hereafter), thus the likelihood ratio can be approximated by the local quadratic expansion of squares and cross-products of reparameterized parameters.

We first consider testing   $H_0: M=1$ against  $H_1: M=2$.  Let $\widehat{\bs{\theta}}_0$ be the one-component MLE that maximizes the one-component log likelihood function $\ell_n^1(\bs\theta):= \sum_{i=1}^n \log f(\bs{W}_i; \bs{\theta}) $. Define the LRTS  of testing $H_{01}$  as \begin{align}\label{eq:LR_def}
	LR_n &:=   2 \left\{\ell_n^2(\hat{\bs{\vartheta}}_2) - \ell_n^1(\hat{\bs\theta}_0)  \right\},
\end{align}
where $\hat{\bs\vartheta}_2$ is the MLE under a restricted parameter space as defined by (\ref{eq:MLE}).

To derive the asymptotic distribution of the LRTS, we consider the following one-to-one reparameterization of $\bs{\theta}_1$ and $\bs{\theta}_2$ given $\alpha$ \cite[cf.,][]{Kasahara2012}:
\begin{equation}\label{eq:m0_repar2}
\begin{pmatrix}
\bs{\lambda} \\
\bs{\nu}
\end{pmatrix} := \begin{pmatrix}
\bs{\theta}_1 - \bs{\theta}_2 \\
\alpha \bs{\theta}_1 + (1 - \alpha) \bs{\theta}_2
\end{pmatrix} \text{ so that }
\begin{pmatrix}
\bs{\theta}_1 \\
\bs{\theta}_2
\end{pmatrix} = \begin{pmatrix}
\bs{\nu} + ( 1- \alpha) \bs{\lambda} \\
\bs{\nu} - \alpha \bs{\lambda}
\end{pmatrix},
\end{equation}
where $\bs{\nu}$ and $\bs{\lambda}$ are both $q \times 1$ reparameterized parameter vectors.

For the model (\ref{eq:fm}) with density   (\ref{eq:f1}), we have $\bs{\nu}=(\nu_\mu,\nu_\sigma,\bs{\nu}_{\bs\beta}\t)\t$ and  $\bs\lambda = (\lambda_{\mu},\lambda_{\sigma}, \bs\lambda_{\beta}\t)\t= ( \mu_1 - \mu_2, \sigma_1^2 - \sigma_2^2, (\bs \beta_1 - \bs\beta_2)\t)\t$ while for the model  (\ref{eq:fm}) with  (\ref{eq:f1-mixture}),  $\bs{\nu}$ and   $\bs\lambda$ are given as $\bs{\nu}=(\bs\nu_{\bs\tau}\t,\bs\mu\t,\nu_\sigma,\bs{\nu}_{\bs\beta}\t)\t$ and  $\bs\lambda = (\bs\lambda_{\bs\tau}\t,\bs\lambda_{\bs\mu}\t,\lambda_{\sigma}, \bs\lambda_{\bs\beta}\t)\t= ( \bs\tau_1\t-\bs\tau_2\t,\bs\mu_1\t - \bs\mu_2\t, \sigma_1^2 - \sigma_2^2, \bs \beta_1\t - \bs\beta_2\t)\t$. We may also define $\bs\nu$ and $\bs\lambda$ appropriately fort the dynamic models (\ref{eq:fm-dynamic}) with (\ref{eq:f1-dynamic}) or  (\ref{eq:f1-dynamic-mixture}). Here,   the parameter $\bs{\lambda}$ captures a deviation from the one-component model, such that $\bs \nu=\bs\theta^*$ and $\bs\lambda=\bs 0$  under $H_0: M=1$.

Define the elements of $\bs{\theta}$ and $\bs{\lambda}$ as $\bs{\theta} =(\theta_1,\theta_2,...,\theta_{q})\t$ and $\bs{\lambda}=(\lambda_1,\lambda_2,...,\lambda_{q})\t$.

Taking the derivative with respect to $\bs\lambda$ reveals that $\nabla_{\bs{\lambda}} \log g_2(\bs{w};\alpha,\bs{\nu} + ( 1- \alpha) \bs{\lambda},\bs{\nu} - \alpha \bs{\lambda})|_{\bs\lambda=\bs 0}  = \bs{0}$.
Therefore, the Fisher information matrix is singular, and the standard quadratic approximation fails. Consequently, we use the second-order derivative with respect to  $\bs{\lambda}$ to identify $\bs{\lambda}$.

Let $f^*$ and $\nabla f^*$ denote $f(\bs{W}; \bs{\theta}^*)$ and $\nabla f(\bs{W}; \bs{\theta}^*)$, respectively.
Define the vector $\bs{s}(\bs{W})$ as
\begin{align}\label{eq:s_1}
	 &\bs{s}(\bs{W}) = \begin{pmatrix}
		\bs{s}_{\bs{\nu}}(\bs W)   \\
		\bs{s}_{\bs{\lambda} \bs\lambda}(\bs W)
\end{pmatrix},
 \text{ where }  {\bs{s}_{\bs{\nu} }(\bs W) }:= \frac{\nabla_{\bs{\theta}} f^*}{f^*}   \text{ and } {\bs{s}_{\bs{\lambda} \bs\lambda}(\bs W) } :=\frac{1}{2}\widetilde{\text{vech}}_2\left(\frac{ {\nabla}_{\bs\theta\bs\theta\t} f^*}{f^*}\right),
\end{align}
where $\widetilde{\text{vech}}_2\left({ {\nabla}_{\bs\theta\bs\theta\t} f^*}/{f^*}\right)$ is a $q(q+1)/2\times 1$ vector that contains all unique elements of $q$-dimensional symmetric array ${ {\nabla}_{\bs\theta\bs\theta\t} f^*}/{f^*}$, multiply by its frequency, and stacks them into a vector.
 The function $\bs{s}_{\bs{\lambda} \bs\lambda}(\bs w)$ essentially comprises the second-order derivatives of the log-likelihood function with respect to $\bs\lambda$, serving as a score function for identifying $\bs{\lambda}$. We refer to $\bs{s}(\bs{w})$  as a score function.
An explicit expression for the score function $\bs{s}(\bs{w})$ can be derived using Hermite polynomials, as shown in Appendix \ref{sec:appendix_score_1}, from which the non-singularity of $\bs{\mathcal{I}}:=\mathbb{E}[\bs s(\bs W)\bs s(\bs W)\t]$ follows.

%By leveraging Hermite polynomials, the score function can be effectively characterized, ultimately facilitating the identification and estimation of the reparameterized parameter $\bs{\lambda}$.

%\begin{equation}
%\label{eq:v}
%v(\bs{\lambda})   := (\lambda_1^2/2, \lambda_2\lambda_1,\lambda_2^2/2,..., \lambda_q\lambda_1,..., \lambda_q\lambda_{q-1},\lambda_q^2/2)\t.
%\end{equation}

% In this reparametrization,   the parameter $\bs{\lambda}$ captures a deviation from the one-component model, such that $\bs \nu=\bs\theta^*$ and $\bs\lambda=\bs 0$  under $H_0: M=1$.

%Collect the relevant normalized reparameterized parameters and define $\bs{t}(\bs{\psi},\alpha)$ as
%\begin{equation}
% \label{eq:t_1}
% \bs{t}(\bs{\nu},\bs\lambda,\alpha)  = \begin{pmatrix}
% \bs{t}_{\bs{\nu}} \\
% \bs{t}_{\bs{\lambda} }(\bs{\lambda},\alpha)
% \end{pmatrix}= \begin{pmatrix}
%	 \bs{\nu} - \bs{\nu}^* \\
%	  \alpha ( 1- \alpha)  \bs{v} (\bs{\lambda} )
% \end{pmatrix},
%\end{equation}

%the length of which is $q_\lambda:=q(q+1)/2$.

Partition $\bs{\mathcal{I}}$ and define
\begin{equation}\label{eq:information}
\begin{split}
\bs{\mathcal{I}} =
\begin{pmatrix}
 \bs{\mathcal{I}}_{\bs\nu}  & \bs{\mathcal{I}}_{\bs\nu\bs\lambda}  \\
 \bs{\mathcal{I}}_{\bs\lambda\bs\nu}  & \bs{\mathcal{I}}_{\bs\lambda\bs\lambda}
\end{pmatrix},
\quad \bs{\mathcal{I}}_{\bs{\nu}}  = \mathbb{E}[\bs s_{\bs\nu}(\bs W)  \bs s_{\bs\nu}(\bs W)\t], \quad \bs{\mathcal{I}}_{\bs\lambda\bs\nu}   = \mathbb{E}[\bs s_{\bs\lambda\bs\lambda }(\bs W)  \bs s_{\bs\nu}(\bs W)\t ], \quad \bs{\mathcal{I}}_{\bs\nu\bs\lambda}  = \bs{\mathcal{I}}_{\bs\lambda\bs\nu} \t, \\
 \bs{\mathcal{I}}_{\bs\lambda\bs\lambda}  =\mathbb{E}[\bs s_{\bs\lambda\bs\lambda }(\bs W)  \bs s_{\bs\lambda\bs\lambda}(\bs W)\t ],  \quad \bs{\mathcal{I}}_{\bs\lambda,\bs\nu}  = \bs{\mathcal{I}}_{\bs\lambda\bs\lambda}  - \bs{\mathcal{I}}_{\bs\lambda\bs\nu}  \bs{\mathcal{I}}_{\bs\nu} ^{-1} \bs{\mathcal{I}}_{\bs\nu\bs\lambda} , \quad\text{and}\quad \bs{G}_{\bs{\lambda},\bs\nu} := ( \bs{\mathcal{I}}_{\bs\lambda,\bs\nu}  )^{-1} \bs{S}_{\bs\lambda,\bs\nu} ,
%  \label{eq:I_lambda_eta}
\end{split}
\end{equation}
where $\bs{G}_{\bs\lambda,\bs\nu}  \sim \mathcal{N}(\bs 0,(\bs{\mathcal{I}}_{\bs\lambda,\bs\nu})^{-1})$.

%Define
%\[
% \bs{t}_{\bs{\lambda} }(\bs{\lambda},\alpha)=\alpha ( 1- \alpha)  \bs{v} (\bs{\lambda} ),
% \]
%where $v(\bs{\lambda})$  is a vector of unique elements of $\bs\lambda\bs\lambda\t$ with each off-diagonal element multiplied by 2:
%\begin{equation}
%\label{eq:v}
%v(\bs{\lambda})   = (\lambda_1^2,2\lambda_2\lambda_1,\lambda_2^2,2\lambda_3\lambda_1,2\lambda_3\lambda_2,\lambda_3^2,....,2\lambda_q\lambda_1,....,2\lambda_q\lambda_{q-1},\lambda_q^2)\t.
%\end{equation}

Define $\widehat{\bs{t}}_{\lambda}$ as
\begin{equation}\label{eq:t_lambda_def}
r_{\lambda}(\widehat{\bs t}_{\bs \lambda}) = \inf_{\bs{t}_{\bs{\lambda}} \in \Lambda_{\bs\lambda}} r_{\lambda}(\bs{t}_{\bs{\lambda}}), \quad r_{\lambda} (\bs{t}_{\bs{\lambda}}) : = (\bs{t}_{\bs{\lambda}} - \bs{G}_{\bs{\lambda},\bs\nu})\t \bs{\mathcal{I}}_{\bs\lambda,\bs\nu}  (\bs{t}_{\bs{\lambda}} - \bs{G}_{\bs{\lambda},\bs\nu} )
\end{equation}
such that $\widehat{\bs{t}}_{\bs\lambda}$ is a projection of a Gaussian random variable $\bs{G}_{\bs{\lambda}}$ on a cone $\Lambda_{\bs\lambda}: =  \Big\{    v(\bs{\lambda})  : \bs{\lambda} \in \mathbb{R}^q\Big\}$,
where $v(\bs{\lambda})$  is a vector of unique elements of $\bs\lambda\bs\lambda\t$, where the diagonal elements are divided by
2 :
\begin{equation}
\label{eq:v}
v(\bs{\lambda})   := (\lambda_1^2, \lambda_2\lambda_1,\lambda_2^2,..., \lambda_q\lambda_1,..., \lambda_q\lambda_{q-1},\lambda_q^2)\t.
\end{equation}
%which represents a set of the feasible values of $\sqrt{n}\bs{t}_\lambda(\bs{\lambda},\alpha)$ when $n \to \infty$.

 The following proposition establishes the asymptotic  distribution of the LRTS  under the null hypothesis $H_0: M=1$.
\begin{assumption}\label{assumption:LRT1}
	(a) \(\bs{X}\) has finite $9$-th moments.  (b) $\mathbb{E}[\boldsymbol U \boldsymbol{U\t}]$ is non-singular, where $\boldsymbol U = [1, \boldsymbol{ X\t}]\t$. (c) $\bs{\mathcal{I}}$ is non-singular.  (d) The true parameter $\bs\theta^*$ is an interior of the parameter space $\Theta$.
\end{assumption}
\begin{proposition}\label{prop:t2_distribution}
Suppose that Assumptions \ref{assumption:K}-\ref{assumption:LRT1} hold. Suppose that the data is generated under the null hypothesis $H_0: M = 1$. Then, for the model with the component-specific distribution (\ref{eq:fm}) with (\ref{eq:f1}) or  (\ref{eq:f1-mixture}), or   (\ref{eq:fm-dynamic}) with (\ref{eq:f1-dynamic}) or  (\ref{eq:f1-dynamic-mixture}),  $LR_n  \overset{d}{\to} (\widehat{\bs{t}}_{\bs\lambda} )\t \bs{\mathcal{I}}_{\bs\lambda,\bs\nu}  \widehat{\bs{t}}_{\bs\lambda}$.
	\end{proposition}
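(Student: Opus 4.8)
The plan is to combine the almost-sure consistency of the restricted MLE established in Proposition~\ref{prop:consistency} with a higher-order quadratic expansion of the log-likelihood ratio in the reparameterized coordinates $(\bs\nu,\bs\lambda)$ of~(\ref{eq:m0_repar2}), and then to reduce the resulting constrained maximization to the cone-projection problem in~(\ref{eq:t_lambda_def}). First I would invoke Proposition~\ref{prop:consistency} to localize: under $H_0$ the MLE satisfies $\widehat{\bs\nu}\to\bs\theta^*$ and $\widehat{\bs\lambda}\to\bs 0$ almost surely, so it suffices to analyze $LR_n$ on a shrinking neighborhood of the degenerate set $\{\bs\lambda=\bs 0\}$. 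Because $\nabla_{\bs\lambda}\log g_2|_{\bs\lambda=\bs 0}=\bs 0$, the ordinary quadratic expansion degenerates in the $\bs\lambda$-direction, and the leading non-trivial term in $\bs\lambda$ is quadratic, captured by the second-order score $\bs s_{\bs\lambda\bs\lambda}(\bs W)$ in~(\ref{eq:s_1}). Strong identifiability---Proposition~\ref{prop:linear_independence} together with the non-singularity of $\bs{\mathcal{I}}$ in Assumption~\ref{assumption:LRT1}(c)---guarantees that this quadratic term is non-degenerate, so that the reparameterized deviation obeys $\bs\lambda=O_p(n^{-1/4})$ and hence $v(\bs\lambda)=O_p(n^{-1/2})$, the same rate as $\bs\nu-\bs\theta^*$; this is precisely the feature distinguishing the panel case from the cross-sectional normal mixture, where problem~(iii) forces a slower rate.

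The core step is to show that, uniformly over $\alpha\in[c_1,1-c_1]$ and over the $O_p(n^{-1/4})$-neighborhood of $\bs\lambda=\bs 0$,
\[
\ell_n^2(\bs\vartheta_2)-\ell_n^1(\bs\theta^*)=\bs t^\top \bs S_n-\tfrac12\bs t^\top\bs{\mathcal{I}}\,\bs t+o_p(1),
\]
where $\bs S_n:=n^{-1/2}\sum_{i=1}^n\bs s(\bs W_i)\overset{d}{\to}\mathcal{N}(\bs 0,\bs{\mathcal{I}})$ by the central limit theorem, and $\bs t=(\bs t_{\bs\nu}^\top,\bs t_{\bs\lambda}^\top)^\top$ collects the local parameters $\bs t_{\bs\nu}=\sqrt n(\bs\nu-\bs\theta^*)$ and $\bs t_{\bs\lambda}$ proportional to $\sqrt n\,\alpha(1-\alpha)\,v(\bs\lambda)$. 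A key simplification is that although the coefficient of the quadratic term carries the factor $\alpha(1-\alpha)$, the cone $\Lambda_{\bs\lambda}=\{v(\bs\lambda):\bs\lambda\in\R^q\}$ is invariant under positive scaling; since $\alpha(1-\alpha)>0$ on $[c_1,1-c_1]$, the rescaled $\bs t_{\bs\lambda}$ still ranges over $\Lambda_{\bs\lambda}$, so the maximization---and hence the limiting distribution---does not depend on the unidentified $\alpha$. To make this expansion rigorous and uniform I would use the generalized differentiability-in-quadratic-mean (DQM) framework of \cite{liushao03as} and \cite{kasahara2018arXiv} rather than termwise Taylor expansion, which handles the vanishing first derivative and delivers the locally-asymptotically-quadratic form directly; the moment conditions in Assumption~\ref{assumption:LRT1}(a)--(b) supply the integrability needed for the score and the remainder control.

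Given the quadratic approximation, I would maximize over $\bs t$. Since $\bs t_{\bs\nu}$ is unconstrained (using interiority, Assumption~\ref{assumption:LRT1}(d)), profiling it out replaces $\bs{\mathcal{I}}_{\bs\lambda\bs\lambda}$ by the Schur complement $\bs{\mathcal{I}}_{\bs\lambda,\bs\nu}$ and the full score by the residual score $\bs S_{\bs\lambda,\bs\nu}$ (the $\bs\lambda$-score orthogonalized against the $\bs\nu$-score), with $\bs G_{\bs\lambda,\bs\nu}=\bs{\mathcal{I}}_{\bs\lambda,\bs\nu}^{-1}\bs S_{\bs\lambda,\bs\nu}\overset{d}{\to}\mathcal{N}(\bs 0,\bs{\mathcal{I}}_{\bs\lambda,\bs\nu}^{-1})$ as in~(\ref{eq:information}). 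Completing the square, the profiled objective equals $\bs G_{\bs\lambda,\bs\nu}^\top\bs{\mathcal{I}}_{\bs\lambda,\bs\nu}\bs G_{\bs\lambda,\bs\nu}-\inf_{\bs t_{\bs\lambda}\in\Lambda_{\bs\lambda}}r_\lambda(\bs t_{\bs\lambda})=\bs G_{\bs\lambda,\bs\nu}^\top\bs{\mathcal{I}}_{\bs\lambda,\bs\nu}\bs G_{\bs\lambda,\bs\nu}-r_\lambda(\widehat{\bs t}_{\bs\lambda})$. Because $\Lambda_{\bs\lambda}$ is scale-invariant, radial stationarity of the projection gives the orthogonality $\widehat{\bs t}_{\bs\lambda}^\top\bs{\mathcal{I}}_{\bs\lambda,\bs\nu}(\bs G_{\bs\lambda,\bs\nu}-\widehat{\bs t}_{\bs\lambda})=0$, and this identity collapses the preceding display to $\widehat{\bs t}_{\bs\lambda}^\top\bs{\mathcal{I}}_{\bs\lambda,\bs\nu}\widehat{\bs t}_{\bs\lambda}$, yielding the claim by the continuous mapping theorem.

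The step I expect to be the main obstacle is establishing the uniform quadratic expansion of the second paragraph---showing that the remainder is $o_p(1)$ uniformly over both the unidentified $\alpha$ and the $n^{-1/4}$-shrinking neighborhood of $\bs\lambda=\bs 0$. The delicacy is that the cubic terms in $\bs\lambda$, aggregated over the $n$ observations, are a priori of borderline order $O_p(n^{1/4})$, so they must be shown to be dominated by the non-degenerate quadratic term; this is exactly where the factorized (conditionally independent or Markov) structure and the absence of the higher-order singularity (Proposition~\ref{prop:linear_independence}) are essential, and where the generalized DQM expansion together with a stochastic-equicontinuity argument must be invoked, using the ninth-moment condition of Assumption~\ref{assumption:LRT1}(a). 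The unified treatment of the four specifications~(\ref{eq:f1}), (\ref{eq:f1-mixture}), (\ref{eq:f1-dynamic}), and~(\ref{eq:f1-dynamic-mixture}) then follows because in each case the per-component density factorizes across $t$, so the same score construction~(\ref{eq:s_1}) and non-singularity argument apply with only the dimension of $\bs\theta$ changing.
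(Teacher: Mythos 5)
Your proposal is correct and follows essentially the same route as the paper's proof: the reparameterization (\ref{eq:m0_repar2}), the observation that the first-order $\bs\lambda$-score vanishes so that $v(\bs\lambda)$ scaled by $\sqrt{n}\,\alpha(1-\alpha)$ becomes the effective local parameter, the locally quadratic approximation in $(\sqrt{n}(\bs\nu-\bs\theta^*),\sqrt{n}\,\alpha(1-\alpha)v(\bs\lambda))$, the profiling of the $\bs\nu$-block against the one-component MLE, and the reduction to the projection of $\bs G_{\bs\lambda,\bs\nu}$ onto the scale-invariant cone $\Lambda_{\bs\lambda}$ (which is exactly how both you and the paper dispose of the unidentified $\alpha$). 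The only substantive differences are technical substitutions: the paper establishes the quadratic approximation by an explicit fourth-order Taylor expansion with Hermite-polynomial moment bounds (Lemma \ref{lemma:expansion}), reserving the generalized DQM machinery of \cite{liushao03as} for the higher-order BIC analysis in Appendix \ref{app:bic}, whereas you invoke DQM directly; and the paper obtains the cone-projection limit by verifying the conditions of Theorem 3(c) of \cite{Andrews1999}, whereas you derive it by hand from the radial-stationarity orthogonality $\widehat{\bs t}_{\bs\lambda}^\top\bs{\mathcal{I}}_{\bs\lambda,\bs\nu}(\bs G_{\bs\lambda,\bs\nu}-\widehat{\bs t}_{\bs\lambda})=0$, which is the same identity Andrews' theorem encapsulates. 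Both substitutions are valid, and you correctly identify the uniform remainder control over $\alpha$ and the $n^{-1/4}$-neighborhood as the step requiring the most care.
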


For Assumption \ref{assumption:LRT1}(c), the non-singularity of $\bs{\mathcal{I}}$ is proved in Lemma \ref{lemma:expansion} for the model with component-specific density function (\ref{eq:fm}) with (\ref{eq:f1}). For other models, while analytically proving the non-singularity of $\bs{\mathcal{I}}$ is not straightforward, each element of the score function $\bs s(\bs W)$ is a linear transformation of products of posterior probabilities and Hermite polynomials of different degrees. Therefore, we expect $\bs{\mathcal{I}}=\mathbb{E}[\bs s(\bs W) \bs s(\bs W)^\top]$ to be finite and non-singular.

%
% (a) $\bs t(\widehat{\bs{\psi}},\alpha)=O_p(n^{-1/2})$ for any $\alpha\in (0,1)$,  (b) and  $PLR_n \overset{d}{\to} (\widehat{\bs{t}}_{\bs\lambda} )\t \bs{\mathcal{I}}_{\bs\lambda,\bs\nu}  \widehat{\bs{t}}_{\bs\lambda} +  \text{plim}_{n\rightarrow \infty} \sum_{j=1}^2 p_n(\sigma_j^2(\widehat{\bs{\psi}},\alpha))$.
%Proposition \ref{prop:t2_distribution}(a) implies that $\widehat{\bs{\theta}}_j - \bs{\theta}^*=O_p(n^{-1/4})$ for $j=1,2$.  The $n^{1/4}$  convergence rate is a consequence of the linear dependency in (\ref{eq:dependence}), where the identification of the parameter $\bs{\theta}$ relies on the fourth-order Taylor approximation of the log-likelihood function. This rate is also the best convergence rate for an over-parameterized mixture under the strong identifiability condition \citep{chen95as}.  When we choose the penalty function  so that $\sum_{j=1}^2 p_n(\sigma_j^2(\widehat{\bs{\psi}},\alpha))=o_p(1)$ under the null hypothesis of $M=1$,  $PLR_n$ has the same asymptotic null distribution as  $LR_n$.

\section{Likelihood ratio test for $H_0: M=m$ against $H_1: M=m+1$ }\label{sec:LRT2}
%To derive the asymptotic distribution of the LRTS for testing  $H_1: M=M_0$ against  $H_1: M=M_0+1$, where $M_0 \ge 2$, .%We follow \cite{kasaharashimotsu15jasa} and develop a partition of the $(M_0 + 1)$-component parameter space into $M_0$ sub-spaces, of which each sub-space corresponds to a specific way of generating the true model.
%Then we derive the asymptotic distribution of the LRT statistic for each subset and characterize the asymptotic distribution of the LRT statistic by the maximum across the $M_0$ partitions.
%The density of the $(M_0 + 1)$-component model is given by
If the data is generated from the $M_0$-component model (\ref{eq:fm0}),
the $(M_0 + 1)$-component model
\begin{equation}\label{eq:fm0_1}
		g_{M_0+1}(\boldsymbol{w}; \boldsymbol{\vartheta}_{M_0+1}) =\sum_{j=1}^{M_0+1} \alpha_j  f(\boldsymbol{w};\boldsymbol{\theta}_j)
\end{equation}
gives rise to the true density (\ref{eq:fm0}) in two cases: (i) two components have identical parameters or (ii) one component has zero mixing proportion. Accordingly, we partition the null hypothesis $H_0: M=M_0$ into two as $H_0 = H_{01}\cup H_{02}$, with $H_{01}: \boldsymbol{\theta}_h = \boldsymbol{\theta}_{h+1} = \boldsymbol{\theta}_{h}^*$ for some $h=1,\ldots,M_0$ and $\alpha_j>0$ for all $j$, and $H_{02}: \alpha_h = 0$ for some $h=1,\ldots,M_0+1$.

%We first analyze the infinite Fisher information problem for testing $H_{02}$.
Partition  $H_{02}$ as $H_{02}=\cup_{h=1}^{M_0}H_{0,2h}$, where $H_{0,2h}: \alpha_h=0$.
Define the subset of $\Theta_{\vartheta_{M_0+1}}^*$  corresponding to $H_{0,2h}$ as
	$\Theta_{\bs \vartheta_{M_0+1},2h}^* = \{ \bs \vartheta_{M_0+1} \in \Theta_{\bs \vartheta_{M_0+1}} : \alpha_h = 0; (\alpha_j, \bs\theta_j)  = (\alpha_j^*, \bs\theta_j^*)\text{ for } j < h; (\alpha_j, \bs\theta_j)= (\alpha_{j-1}^*, \bs\theta_{j-1}^*)\text{ for } j > h\}$. %The score for testing $H_{0,2h}: \alpha_h = 0$ takes the form $\nabla_{\alpha_h} \log f_{M_0+1}(\bs W_i, \bs \vartheta_{M_0+1})  = [f(\bs W_i; \bs\theta_h ) - f(\bs W_i; \bs\theta_{M_0}^{*} )]  / f_{M_0}(\bs W_i, \bs \vartheta_{M_0}^*)$ when $\bs \vartheta_{M_0+1} \in \Psi_h^*$.
Because $\bs\theta_h$ is not identified when $\alpha_h = 0$, we take the supremum of the variance of $\nabla_{\alpha_h} \log g_{M_0 + 1}(\bs W_i; \bs \vartheta_{M_0+1})$ over $\bs \vartheta_{M_0+1} \in \Theta_{\bs \vartheta_{M_0+1},2h}^*$ to examine the finiteness of the Fisher information matrix for testing $H_{0,2h}: \alpha_h = 0$.  The Fisher information is infinite unless there is an a priori restriction on the values of component-specific variance, $\sigma_j^2$.
\begin{proposition}\label{prop:infinite_fisher}
	$\sup_{\vartheta_{M_0+1} \in \Theta_{\bs \vartheta_{M_0+1},2h}^*} \E [  \{ \nabla_{\alpha_h}  \log g_{M_0+1}  (\bs W, \vartheta_{M_0+1} )\}^2  ]  < \infty$ if and only if $\max \{ \sigma^2: \sigma \in \Theta_{\sigma} \} < 2 \max \{ \sigma_1^{2*},\ldots,\sigma_{M_0}^{2*}\}$ for $g_{M_0+1}  (\bs w, \vartheta_{M_0+1} )$ in (\ref{eq:fm0_1}) with the component-specific density function  (\ref{eq:fm}) with (\ref{eq:f1}) or  (\ref{eq:f1-mixture}), or  the component-specific density function (\ref{eq:fm-dynamic}) with (\ref{eq:f1-dynamic}) or  (\ref{eq:f1-dynamic-mixture}).
\end{proposition}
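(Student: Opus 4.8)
The plan is to compute the score for $\alpha_h$ explicitly on $\Theta^*_{\bs\vartheta_{M_0+1},2h}$, reduce the finiteness of its second moment to the convergence of a Gaussian integral, and then settle that convergence by a tail-dominance argument pinned to the largest component variance. Since $\alpha_h$ is a free mixing proportion for $h\le M_0$ (with $\alpha_{M_0+1}$ absorbing the unit-sum constraint), and since on this set the remaining components reproduce $g^*(\bs w):=g_{M_0}(\bs w;\bs\vartheta_{M_0}^*)$ with $\bs\theta_{M_0+1}=\bs\theta_{M_0}^*$, differentiating $g_{M_0+1}=\alpha_h f(\bs w;\bs\theta_h)+\cdots$ gives
\[
\nabla_{\alpha_h}\log g_{M_0+1}(\bs W;\bs\vartheta_{M_0+1})\Big|_{\Theta^*_{2h}}=\frac{f(\bs W;\bs\theta_h)-f(\bs W;\bs\theta_{M_0}^*)}{g^*(\bs W)},
\]
which is the multi-component analogue of the display $f(\bs W;\bs\theta_1)/f(\bs W;\bs\theta^*)-1$ in Section~\ref{sec:regularity}. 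Using $\tfrac12 a^2-b^2\le(a-b)^2\le 2a^2+2b^2$ with $a=f(\bs W;\bs\theta_h)/g^*$ and $b=f(\bs W;\bs\theta_{M_0}^*)/g^*$, and noting $g^*\ge\alpha_{M_0}^* f(\bs W;\bs\theta_{M_0}^*)$ so that $b\le 1/\alpha_{M_0}^*$ is bounded, the problem reduces to determining when $\sup_{\bs\theta_h\in\Theta_{\bs\theta}}\E[(f(\bs W;\bs\theta_h)/g^*(\bs W))^2]<\infty$. Because $\E[(f/g^*)^2\mid\bs X]=\int f(\{y_t\}\mid\{\bs x_t\};\bs\theta_h)^2/g^*(\{y_t\}\mid\{\bs x_t\})\,d\{y_t\}$, everything comes down to the convergence of this Gaussian integral.

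For the baseline static density (\ref{eq:f1}) I would bound the denominator below by its slowest-decaying component. Let $j_0$ satisfy $\sigma_{j_0}^{2*}=\max_j\sigma_j^{2*}=:\sigma_{\max}^{2*}$, so that $g^*\ge\alpha_{j_0}^* f(\cdot;\bs\theta_{j_0}^*)$. Both the numerator and this bound factor over $t$, so the conditional integral factorizes into $T$ one-dimensional integrals $\int \phi_{\sigma_h}(y_t-m_{h,t})^2/\phi_{\sigma_{j_0}^*}(y_t-m_{j_0,t}^*)\,dy_t$, where $\phi_\sigma(u):=\sigma^{-1}\phi(u/\sigma)$ and $m_{j,t}=\mu_j+\bs x_t^\top\bs\beta_j$. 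Completing the square, the coefficient of $y_t^2$ in the exponent is $-1/\sigma_h^2+1/(2\sigma_{\max}^{2*})$, so each factor converges iff $\sigma_h^2<2\sigma_{\max}^{2*}$, and its value is continuous in $(\sigma_h^2,m_{h,t},m_{j_0,t}^*)$ and bounded when $\sigma_h^2$ stays below any constant strictly less than $2\sigma_{\max}^{2*}$. This delivers the \emph{if} direction: when $\max\{\sigma^2:\sigma\in\Theta_\sigma\}<2\sigma_{\max}^{2*}$ the conditional integral is uniformly bounded in $\bs\theta_h$, and integrating over $\bs X$ under the maintained moment conditions (or simply conditioning on $\bs X$) gives a finite supremum.

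For the \emph{only if} direction I would argue by exhibiting divergence. If $\max\{\sigma^2:\sigma\in\Theta_\sigma\}\ge 2\sigma_{\max}^{2*}$, choose $\bs\theta_h$ with $\sigma_h^2\ge 2\sigma_{\max}^{2*}$ and $\mu_h=\mu_{j_0}^*$, $\bs\beta_h=\bs\beta_{j_0}^*$. On the far tail the sum $g^*$ is dominated up to a constant by its slowest-decaying term $\alpha_{j_0}^* f(\cdot;\bs\theta_{j_0}^*)$, so with the means matched the integrand is bounded below by $c\exp\!\big(-\sum_t (y_t-m_{j_0,t}^*)^2(1/\sigma_h^2-1/(2\sigma_{\max}^{2*}))\big)$ with a nonpositive coefficient; hence the $\{y_t\}$-integral already diverges for every fixed $\bs X$, the conditional second moment is $+\infty$ almost surely, and the supremum is infinite.

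Finally I would extend to the within-component mixture (\ref{eq:f1-mixture}) and the dynamic densities (\ref{eq:f1-dynamic})--(\ref{eq:f1-dynamic-mixture}). In (\ref{eq:f1-mixture}) the variance $\sigma_j$ is common across the $\mathcal{K}$ inner mixands, so the tail of each factor is still governed by $\sigma_j^2$ and the same threshold $2\sigma_{\max}^{2*}$ emerges. For the dynamic/Markov case the conditional density is a $T$-variate Gaussian built from the transitions, and the identical completing-the-square/tail-dominance argument applies to its quadratic form, with $\sigma_{\max}^{2*}$ replaced by the largest innovation variance. The main obstacle I anticipate is the two-sided tightness needed to isolate the exact constant $2\sigma_{\max}^{2*}$ -- the upper bound for \emph{if} and the matching lower bound for \emph{only if} must agree on this constant -- together with uniformity over the non-identified direction $\bs\theta_h$; in the dynamic case one must additionally verify that the precision matrices are ordered so that a single scalar threshold still governs convergence, whereas the covariate terms, which merely shift conditional means, are comparatively routine once one conditions on $\bs X$ or imposes a light-tail condition.
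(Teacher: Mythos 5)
Your proposal is correct and follows essentially the same route as the paper's proof: the same score expression on $\Theta^*_{\bs\vartheta_{M_0+1},2h}$, the same reduction to $\E[(f(\bs W;\bs\theta_h)/g_{M_0}^*)^2]$ via boundedness of $f(\bs W;\bs\theta_{M_0}^*)/g_{M_0}^*$, the same sandwiching of the mixture denominator by its largest-variance component on the tail, and the same completing-the-square argument yielding the threshold $\sigma_h^2<2\max_j\sigma_j^{2*}$, with the mixture and dynamic cases handled by bounding the inner normal mixture between single normals and by conditioning on the past (law of iterated expectations), respectively.
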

Because the restriction on the values of $\sigma_j^2$ in Proposition \ref{prop:infinite_fisher} is difficult to justify and not easy to enforce in practice, we focus on testing $H_{01}$.

Partition  $H_{01}$ as $H_{01}=\cup_{h=1}^{M_0}H_{0,1h}$, where $H_{0,1h}: \bs\theta_{h}=\bs\theta_{h+1}$ with
$\mu_1<\cdots < \mu_h=\mu_{h+1}<\cdots < \mu_{M_0+1}$. We impose these inequality constraints on $\mu_j$ for component identification.
There are $M_0$ ways to describe the $M_0$ component null model in the space of $(M_{0}+1)$ component models, and each way corresponds to the null hypothesis of $H_{0,1h}: \bs\theta_{h}=\bs\theta_{h+1}$ for $h=1,2,..., M_0$. Testing $H_{0,1h}: \bs\theta_{h}=\bs\theta_{h+1}$ in the $M_0$-component null models is similar to  testing $H_{01}: \bs\theta_1=\bs\theta_2$ in the one-component null model in Section \ref{sec:LRT1}.

Define the subset of $\Theta_{\vartheta_{M_0+1}}$ corresponding to  $H_{0,1h}$ as
\begin{equation}\label{eq:Upsilon_1}
	\Theta_{\bs \vartheta_{M_0+1},1h}^*  : =
	 \left\{
\begin{split}\bs{\vartheta}_{M_0+1} \in \Theta_{\vartheta_{M_0+1}}: \alpha_h,\alpha_{h+1}>0; \alpha_{h} + \alpha_{h+1} = \alpha_{h}^* \text{ and } \bs{\theta}_{h} = \bs{\theta}_{h+1} = \bs{\theta}_{h}^*   ;  \alpha_j = \alpha_{j}^* % \qquad
	\\
 \text{ and } \bs{\theta}_j = \bs{\theta}_j^{*} \text{ for } 1 \le j < h ; \alpha_j = \alpha_{j-1}^* \text{ and } \bs{\theta}_j = \bs{\theta}_{j-1}^* \text{ for } h +1 \le j \le M_0 + 1
\end{split} \right\}
\end{equation}
for $h = 1,\ldots,M_0$.   The set $\Theta_{\bs \vartheta_{M_0+1},1}^*  := \cup_{h=1}^{M_0} \Theta_{\bs \vartheta_{M_0+1},1h}^*$ corresponds to $H_{01}=\cup_{h=1}^{M_0}H_{0,1h}$.

We first develop a test statistic for testing test $H_{0,1h}$ by restricting the estimators under the $(M_0 + 1)$-component  model to be in a neighbourhood of $\Theta_{\bs \vartheta_{M_0+1},1}^*$.

Recall that $\mu_1^{*} < \mu_2^* \ldots < \mu_{M_0}^*$. Let $ \underline{\Theta}_{\mu}$ and $\overline{\Theta}_{\mu}$ denote the lower and upper bounds of $\Theta_{\mu}$, respectively.  Define
\begin{equation}\label{theta-partition}
\Theta_{\bs\theta,h}^* =
\begin{cases}
  [\underline{\Theta}_\mu, \frac{\mu_1^* + \mu_2^*}{2}] \times \Theta_\beta \times \Theta_{\sigma^2}, & h = 1 \\
  [\frac{\mu_{h-1}^* + \mu_h^*}{2}, \frac{\mu_h^* + \mu_{h+1}^*}{2}] \times \Theta_\beta \times \Theta_{\sigma^2}, & 2 \leq h \leq M_0 - 1 \\
  [\frac{\mu_{M_0-1}^* + \mu_{M_0}^*}{2}, \overline{\Theta}_\mu] \times \Theta_\beta \times \Theta_{\sigma^2}, & h = M_0
\end{cases}
\end{equation}
%Define $\Theta_{\bs\theta,1}^* = [\underline{\Theta}_{\mu}, \frac{\mu_1^* + \mu_2^* }{2}] \times \Theta_{\beta} \times \Theta_{\sigma^2}$,  $\Theta_{\bs\theta,h}^* = [\frac{\mu_{h-1}^* + \mu_{h}^* }{2}, \frac{\mu_{h}^* + \mu_{h+1}^* }{2}] \times \Theta_{\beta} \times \Theta_{\sigma^2}$ for $h = 2,\ldots,M_0 - 1$, $\Theta_{\bs\theta,M_0 }^* = [\frac{\mu_{M_0-1}^* + \mu_{M_0}^* }{2}, \overline{\Theta}_{\mu}] \times \Theta_{\beta} \times \Theta_{\sigma^2}$.
  Then, $\{\Theta_{\bs\theta,h}^*\}_{h=1}^{M_0}$ is a partition of $\Theta_{\bs\theta}$ such that
  $\Theta_{\bs\theta,h}^*$ is a neighbourhood containing $\bs\theta_h^{*}$ but not $\bs\theta_j^{*}$ for $j \neq h$.
For $h=1,\ldots M_0$,  define a restricted parameter space  $\bs{\Psi}_h^* \subset \bar\Theta_{\bs\vartheta_{M_0 + 1}}(\bs c)$ as
\begin{equation}\label{Omega}
	\bs{\Psi}_h^* = \left\{ \begin{split} \bs\vartheta_{M_0+1}\in\bar\Theta_{\bs\vartheta_{M_0 + 1}}(\bs c):
		\sum_{j=1}^{M_0 + 1} \alpha_j = 1;  \bs{\theta}_j \in \Theta_{\bs\theta,j}^* \text{ for } j=1,\ldots,h-1;  \\
		 \bs{\theta}_h,\bs{\theta}_{h+1} \in \Theta^*_{\bs\theta,h};
        \bs{\theta}_j \in \Theta^*_{\bs\theta,j-1} \text{ for } j=h+2,\ldots,M_0+1.
	\end{split}  \right\}
\end{equation}
Note that $\bs{\Psi}_h^* \cap \Theta_{\bs \vartheta_{M_0+1},1h}^*  \neq \emptyset$ and $\bs{\Psi}_h^* \cap \Theta_{\bs \vartheta_{M_0+1},1l}^*  = \emptyset$ if $h \neq l$, and $\cup_{h=1}^{M_0}\bs{\Psi}_h^*= \bar\Theta_{\bs\vartheta_{M_0 + 1}}(\bs c)$.

Let $\widehat{\bs{\Psi}}_h^* $ and $\widehat{\Theta}_{\bs \theta,h}^* $ be consistent estimators of ${\bs{\Psi}}_h^* $ and ${\Theta}^*_{\bs\theta,h}$, which can be constructed from a consistent estimator of $\bs{\vartheta}_{M_0}^*$ in the $M_0$-component model. We test $H_{0,1h}: \bs\theta_{h}=\bs\theta_{h+1}$ by estimating the $(M_0+1)$-component model under the restriction that $\bs\vartheta^{M_0+1}\in \widehat{\bs{\Psi}}_h^*$.

For $h=1,2,..., M_0$,  define the \textit{local} MLE that maximizes the log-likelihood function of the $(M_0+1)$-component model under the constraint that $\bs\vartheta_{M_0+1}\in \widehat{\bs\Psi}_h^*$ in (\ref{Omega}) by
\[
 \ell^{M_0+1}_n (\widehat{\bs{\vartheta}}_{M_0 + 1}^h)   = \arg\sup_{\bs{\vartheta}_{M_0 + 1}  \in \widehat{\bs\Psi}_h^* }   \ell^{M_0+1}_n (\bs{\vartheta}_{M_0+1}).
\]
%where
%$$
% L_{M,n}(\bs{\vartheta}_{M})  : = \sum_{i=1}^n \log g_{M}(\bs{W}_i;\bs{\vartheta}_{M})\quad\text{and}\quad \tilde p_n(\bs\vartheta_{M})  :=\sum_{j=1}^{M} p_n (\sigma_j^2;  \widehat{\sigma}_{0,j}^2). %quad\text{for $M=2,3,...$}.
% $$
% with
%% Following \cite{chenli09as} and \cite{Kasahara2012}, we set the penalty functions to be
%\begin{equation}\label{penalty2}
%p_n(\sigma_j^2; \widehat{\sigma}_{0,j}^2) := - a_n \{ {\widehat{\sigma}_{0,j}^2}/{\sigma_j^2} + \log({\sigma_j^2}/{\widehat{\sigma}_{0,j}^2}) - 1\},
%\end{equation}
%where $\widehat{\sigma}_{0,j}^2$ is a root-$n$ consistent estimator of $\sigma_{0,j}^2$ from the $M_0$-component model under the null hypothesis. Because $\widehat{\sigma}_j^2 - \sigma_{0,j}^2 = O_p(n^{-1/4})$ under the null hypothesis (cf. Proposition \ref{prop:t2_distribution}(a)),  $p_n(\widehat\sigma_j^2; \widehat{\sigma}_{0,j}^2) =o_p(1)$ when  $a_n$ is chosen to be $o(n^{1/4})$.

Under $H_{0}: M=M_0$, $\bs\Psi_h^*$ contains a set of  parameters $ \Theta_{\bs \vartheta_{M_0+1},1h}^*$ defined in (\ref{eq:Upsilon_1}) such that $g_{M_0 + 1}(\bs{w}; \bs{\vartheta}_{M_0 + 1})$ is equal to the true density $g_{M_0}(\bs{w}; \bs{\vartheta}_{M_0 }^*)$ for any $\bs{\vartheta}_{M_0 +1}\in \Theta_{\bs \vartheta_{M_0+1},1h}^*$. Then, by the analogous argument  in the proof of Proposition \ref{prop:consistency}, the local MLEs are consistent, i.e., under  $H_0: M = M_0$,   for $h=1,2,...,M_0$, $\inf_{\bs{\vartheta}_{M_0+1} \in  \Theta_{\bs \vartheta_{M_0+1},1h}^*} |\widehat{\bs{\vartheta}}_{M_0+1}^h - \bs{\vartheta}_{M_0+1}| \overset{p}{\to} 0$.

%\begin{proposition}\label{prop:vartheta_convergence_M}
%Suppose that Assumption \ref{assumption:1} holds. Then, under the null hypothesis $H_0: M = M_0$,  $\inf_{\bs{\vartheta}_{M_0+1} \in \Upsilon_{h}^* } |\widehat{\bs{\vartheta}}_{M_0+1}^h - \bs{\vartheta}_{M_0+1}| \overset{p}{\to} 0$ for $h=1,2,...,M_0$.
%%Suppose that Assumption \ref{assumption:1} holds and $\alpha_j^*\in[\epsilon,1-\epsilon]$ for $j=1,2,...,M_0$. Then, under the null hypothesis $H_0: M = M_0$,  $\widehat{\bs{\vartheta}}_{M_0} \overset{p}{\to} \bs{\vartheta}_{M_0}$, $\inf_{\bs{\vartheta}_{M_0+1} \in \Theta_{M_0+1}^*} |\widehat{\bs{\vartheta}}_{M_0+1} - \bs{\vartheta}_{M_0+1}| \overset{p}{\to} 0$, and  $\inf_{\bs{\vartheta}_{M_0+1} \in \bs\Psi_h^*} |\widehat{\bs{\vartheta}}_{M_0+1}^h - \bs{\vartheta}_{M_0+1}| \overset{p}{\to} 0$ for $h=1,2,...,M_0$.
%\end{proposition}

Consider the local LRTS for testing $H_{0,1h}: \bs\vartheta_h = \bs\vartheta_{h+1}$ defined by
\begin{equation*}
  % \label{eq:LR_Mh}
LR^{M_0,h}_n := 2\left\{ \ell^{M_0+1}_n (\widehat{\bs{\vartheta}}_{M_0+1}^h) - \ell^{M_0}_n (\widehat{\bs{\vartheta}}_{M_0})\right\}\quad\text{for $h=1,2,...,M_0$},
\end{equation*}
where  $LR^{M_0,h}_n$ converges in distribution to the random variable   $(\widehat{\bs{t}}_{\bs\lambda}^h )\t \bs{\mathcal{I}}^h_{\bs\lambda,\bs\nu}  \widehat{\bs{t}}^h_{\bs\lambda}$, as defined in (\ref{eq:t_h}), which is analogous to
 $(\widehat{\bs{t}}_{\bs\lambda} )\t \bs{\mathcal{I}}_{\bs\lambda,\bs\nu}  \widehat{\bs{t}}_{\bs\lambda}$ in (\ref{eq:t_lambda_def}) for testing $H_0: M=1$.
%where $\widehat{\bs{\vartheta}}_{M_0}  = \arg \max_{\bs{\vartheta}_{M_0 } \in \Theta_{\bs{\vartheta}_{M_0 } } } L_{0,n}(\bs{\vartheta}_{M_0 })  + \sum_{j=1}^{M_0} p_n( \sigma_j^2)$ with $L_{0,n}(\bs{\vartheta}_{M_0 }):= \sum_{i=1}^n \log g_{M_0}(\bs{W}_i;\bs{\vartheta}_{M_0})$.

Because $\cup_{h=1}^{M_0}\bs{\Psi}_h^*= \bar\Theta_{\bs\vartheta_{M_0 + 1}}(\bs c)$, the LRTS is identical to the maximum of the local LRTS over $h=1,2,...,M_0$:
\begin{align}
LR^{M_0}_n &:= 2\left\{ \ell^{M_0+1}_n (\hat{\bs{\vartheta}}_{M_0+1}) - \ell_n^{M_0}(\widehat{\bs{\vartheta}}_{M_0})\right\}=\max\{LR^{M_0,1}_n,...,LR_n^{M_0,M_0}\},\label{eq:LR_M0_max}
\end{align}
where $\hat{\bs{\vartheta}}_{M_0+1}$ is the MLE defined by (\ref{eq:MLE}).
Then, because $LR^{M_0,h}_n\overset{d}{\rightarrow}(\widehat{\bs{t}}_{\bs\lambda}^h )\t \bs{\mathcal{I}}^h_{\bs\lambda,\bs\nu}  \widehat{\bs{t}}^h_{\bs\lambda}$, the asymptotic distribution of $LR^{M_0}_n$ is characterized by the maximum of $M_0$ random variables.%, each of which is expressed as $(\widehat{\bs{t}}_{\bs\lambda}^h )\t \bs{\mathcal{I}}^h_{\bs\lambda,\bs\nu}  \widehat{\bs{t}}^h_{\bs\lambda}$.

\begin{assumption}\label{assumption:LRT2} (a) Assumption \ref{assumption:LRT1}(a)(b) hold, (b) $\bs{\tilde{\mathcal{I}}}$ defined in (\ref{eq:I_m0}) is finite and non-singular, (c)   The true parameter $\bs\vartheta_{M_0}^*$ is an interior of the parameter space $\bar\Theta_{\bs\vartheta_{M_0}}(\bs c)$. %(c) $a_n$ in (\ref{penalty2}) satisfies  $a_n =O(1)$. %{\color{blue} I think this is a typo}.
\end{assumption}

\begin{proposition}\label{prop:tm0_distribution}
 Suppose that Assumptions \ref{assumption:K}, \ref{assumption:consistency}, and \ref{assumption:LRT2} hold. Under the null hypothesis $H_0: M = M_0$,  for the model with the component-specific distribution (\ref{eq:fm}) with (\ref{eq:f1}) or  (\ref{eq:f1-mixture}), or   (\ref{eq:fm-dynamic}) with (\ref{eq:f1-dynamic}) or  (\ref{eq:f1-dynamic-mixture}), $LR_n^{M_0}  \overset{d}{\to} \max\{
(\widehat{\bs{t}}_{\bs\lambda}^1 )\t \bs{\mathcal{I}}^1_{\bs\lambda,\bs\nu}  \widehat{\bs{t}}^1_{\bs\lambda},...,(\widehat{\bs{t}}_{\bs\lambda}^{M_0} )\t \bs{\mathcal{I}}^{M_0}_{\bs\lambda,\bs\nu}  \widehat{\bs{t}}^{M_0}_{\bs\lambda}\}$,
where $(\widehat{\bs{t}}_{\bs\lambda}^h )\t \bs{\mathcal{I}}^h_{\bs\lambda,\bs\nu}  \widehat{\bs{t}}^h_{\bs\lambda}$ for $h=1,...,M_0$ is defined in (\ref{eq:t_h}).
	\end{proposition}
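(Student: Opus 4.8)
The plan is to build directly on the decomposition already recorded in (\ref{eq:LR_M0_max}), namely $LR_n^{M_0} = \max\{LR_n^{M_0,1},\ldots,LR_n^{M_0,M_0}\}$, which holds because the restricted parameter spaces $\{\bs\Psi_h^*\}_{h=1}^{M_0}$ in (\ref{Omega}) partition $\bar\Theta_{\bs\vartheta_{M_0+1}}(\bs c)$. Given this identity, the proof reduces to two tasks: (a) establishing the \emph{joint} weak limit of the vector $(LR_n^{M_0,1},\ldots,LR_n^{M_0,M_0})$, and (b) applying the continuous mapping theorem, since $\max\{\cdot\}$ is continuous. The substantive content is task (a), and within it the marginal convergence $LR_n^{M_0,h}\overset{d}{\to}(\widehat{\bs t}_{\bs\lambda}^h)\t\bs{\mathcal{I}}^h_{\bs\lambda,\bs\nu}\widehat{\bs t}^h_{\bs\lambda}$ for each fixed $h$, which I would derive as a direct transplant of the argument behind Proposition \ref{prop:t2_distribution} into the localized neighborhood $\widehat{\bs\Psi}_h^*$.

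For a fixed $h$ I would localize around $\Theta_{\bs\vartheta_{M_0+1},1h}^*$, where components $h$ and $h+1$ coalesce at $\bs\theta_h^*$ while the partition (\ref{theta-partition}) keeps every other component confined to its own separated neighborhood $\Theta_{\bs\theta,j}^*$. Restricting to the coalescing pair, I would apply the reparameterization (\ref{eq:m0_repar2}), writing $\bs\lambda_h := \bs\theta_h-\bs\theta_{h+1}$ and the conditional weighted mean $\bs\nu_h := \tfrac{\alpha_h}{\alpha_h+\alpha_{h+1}}\bs\theta_h + \tfrac{\alpha_{h+1}}{\alpha_h+\alpha_{h+1}}\bs\theta_{h+1}$, while retaining the remaining $M_0-1$ components together with $\alpha_h+\alpha_{h+1}=\alpha_h^*$ as regular, first-order-identified nuisance parameters. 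The structural feature inherited from Section \ref{sec:LRT1} is that $\nabla_{\bs\lambda_h}\log g_{M_0+1}|_{\bs\lambda_h=\bs 0}=\bs 0$, so the splitting direction is first-order degenerate and must be identified through the second-order score $\bs s_{\bs\lambda\bs\lambda}$ built from $\tfrac12\widetilde{\text{vech}}_2(\nabla_{\bs\theta\bs\theta\t}f^*/f^*)$ as in (\ref{eq:s_1}). Using the local consistency stated after (\ref{Omega}) to confine the analysis to a shrinking neighborhood, I would then establish a fourth-order Taylor expansion of $\ell_n^{M_0+1}$ about the true $M_0$-component value, linear in the regular nuisance scores and quadratic in $v(\bs\lambda_h)$ for the degenerate direction; Assumption \ref{assumption:LRT1}(a) (finite ninth moments of $\bs X$) and Assumption \ref{assumption:LRT2}(b) (non-singularity of $\bs{\tilde{\mathcal{I}}}$) are precisely what control the remainder and make the quadratic form well-posed.

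Profiling the regular nuisance parameters out of this quadratic form leaves a concentrated objective in $v(\bs\lambda_h)$ whose curvature is the Schur-complement information $\bs{\mathcal{I}}^h_{\bs\lambda,\bs\nu}=\bs{\mathcal{I}}^h_{\bs\lambda\bs\lambda}-\bs{\mathcal{I}}^h_{\bs\lambda\bs\nu}(\bs{\mathcal{I}}^h_{\bs\nu})^{-1}\bs{\mathcal{I}}^h_{\bs\nu\bs\lambda}$, exactly as in (\ref{eq:information}). Maximizing this objective as $v(\bs\lambda_h)$ ranges over the cone $\Lambda_{\bs\lambda}$ yields the projection $\widehat{\bs t}^h_{\bs\lambda}$ of the limiting Gaussian vector $\bs G^h_{\bs\lambda,\bs\nu}$ onto $\Lambda_{\bs\lambda}$, reproducing the local distribution $(\widehat{\bs t}^h_{\bs\lambda})\t\bs{\mathcal{I}}^h_{\bs\lambda,\bs\nu}\widehat{\bs t}^h_{\bs\lambda}$ of (\ref{eq:t_h}). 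For the \emph{joint} limit in task (a), I would note that all $M_0$ local statistics are continuous functionals of a single normalized score process $n^{-1/2}\sum_i\bs s^h(\bs W_i)$ evaluated at the common sample; a multivariate central limit theorem for the stacked scores delivers joint Gaussianity of $(\bs G^1_{\bs\lambda,\bs\nu},\ldots,\bs G^{M_0}_{\bs\lambda,\bs\nu})$, and since the cone projections and quadratic forms are continuous in these Gaussian limits, the vector $(LR_n^{M_0,1},\ldots,LR_n^{M_0,M_0})$ converges jointly. Applying the continuous mapping theorem to (\ref{eq:LR_M0_max}) then gives the stated maximum.

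The main obstacle I anticipate is the uniform control of the fourth-order Taylor remainder over the local parameter space together with the verification that profiling isolates exactly $\bs{\mathcal{I}}^h_{\bs\lambda,\bs\nu}$ rather than a contaminated information matrix. Concretely, one must show that the second-order splitting score for component $h$ is linearly independent in $L_2$ of the first-order scores of the well-separated components $j\neq h,h+1$, so that $\bs{\tilde{\mathcal{I}}}$ in Assumption \ref{assumption:LRT2}(b) is non-singular and the cross-products across non-coalescing components vanish at the required rate under the moment conditions. This is where the panel product/Markov structure is decisive: as established in Proposition \ref{prop:linear_independence} of Section \ref{sec:regularity}, the factorization of $f(\bs w;\bs\theta_j)$ across periods removes the higher-order degeneracy (problem (iii)) that in the cross-sectional case forces an eighth-root expansion, so the quadratic-in-$v(\bs\lambda_h)$ approximation is already adequate and no further fourth-order identification of $\bs\lambda_h$ is needed. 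The generalized differentiability-in-quadratic-mean machinery of \cite{liushao03as} and \cite{kasahara2018arXiv} would supply the rigorous remainder bounds underlying the expansion.
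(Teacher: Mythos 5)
Your proposal is correct and follows essentially the same route as the paper's own proof: decompose $LR_n^{M_0}$ via (\ref{eq:LR_M0_max}) into the local statistics $LR_n^{M_0,h}$, apply for each $h$ the reparameterization $(\bs\lambda_h,\bs\nu_h,\tau)$ with $\tau=\alpha_h/(\alpha_h+\alpha_{h+1})$ and transplant the quadratic-form expansion and cone-projection argument of Proposition \ref{prop:t2_distribution}, then obtain the joint limit from the multivariate CLT applied to the stacked scores $(\bs S_n^1,\ldots,\bs S_n^{M_0})$ with covariance $\tilde{\bs{\mathcal{I}}}$ and conclude by continuity of the maximum. Your identification of the key obstacle — uniform remainder control and the non-singularity of the stacked information matrix, which the paper handles via Assumption \ref{assumption:LRT2}(b) and Lemma \ref{lemma:expansion} — matches the paper's treatment.
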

	
In practice, we implement parametric bootstrap to obtain the bootstrap p-value for  testing $H_0: M= m$ against $H_1: M= m+1$.

\section{Sequential hypothesis testing}\label{sec:sht}
 To estimate  the number of components, we sequentially test $H_0: M=m$ against $H_1: M=m+1$  starting from $m=1$, and then $m=2,\ldots,\overline  M$, where $\overline M$ is the  upper bound for the number of components, which is assumed to be known and larger than $M_0$.
 The first value for $r$ that leads to a nonrejection of $H_0$ gives our estimate for $M_0$.  \cite{Robin2000} develop a similar sequential hypothesis test for estimating the rank of a matrix.

For $m=1, \ldots, \overline M$, let $c^{m}_{1-q_n}$ denote the $100(1-q_n)$ percentile of the  cumulative distribution function of a random variable $ \max \{(\widehat{\bs{t}}^1_{\bs\lambda})^\top \bs{\mathcal{I}}^1_{\bs\lambda,\bs\eta}  \widehat{\bs{t}}^1_{\bs\lambda} ,   \ldots, (\widehat{\bs{t}}^{m}_{\bs\lambda})^\top \bs{\mathcal{I}}^{m}_{\bs\lambda,\bs\eta}  \widehat{\bs{t}}^{m}_{\bs\lambda}  \}$ for testing $H_0: M=m$ in Propositions \ref{prop:tm0_distribution}. Let $\widehat c^{m}_{1-q_n}$ be a consistent estimator of $c^{m}_{1-q_n}$. Then, our estimator based on sequential hypothesis testing (SHT, hereafter) is defined as
\begin{align}\label{M-hat}
\widehat M_{\text{LRT}} & = \min_{M \in \{0, \ldots ,\bar M\}} \{M: LR_n^m \geq \widehat c^m_{1-q_n} \text{ for } m=1, \ldots ,M-1, \text{ and } LR_n^M < \widehat c^M_{1-q_n} \}.%\nonumber\\
%\widehat M_{\text{EM}} & = \min_{M \in \{0, \ldots ,\bar M\}} \{M: EM_n(r) \geq \widehat c^r_{1-q_n}, r=0, \ldots ,M-1, EM_n({M}) \ < \widehat c^M_{1-q_n} \}.
\end{align}
The estimator  $\widehat M_{\text{LRT}}$  depends on the choice of the significance level $q_n$. The following proposition states that $\widehat M_{\text{LRT}}$ converge to $M_0$ in probability as $n \rightarrow \infty$ when $-n^{-1}\log q_n=o(1)$ and  $q_n=o(1)$.

 Let
 \[
 Q_n^M( \bs{\vartheta}_M):=n^{-1} \sum_{i=1}^n \log g_M(\bs{W}_i; \bs{\vartheta}_M) \text{ and } Q^M( \bs{\vartheta}_M):= \mathbb{E}[\log g_M(\bs{W}_i; \bs{\vartheta}_M)].
 \] %where $g_M(\bs{W}_i; \bs{\vartheta}_M)$ is defined in (\ref{eq:fm}) for $M=1,...,\bar M$.

 \begin{assumption} \label{assumption:Q_M} For $M=1,...,M_0$,
(a)  $Q^M( \bs{\vartheta}_M)$ has a unique maximum at $\bs{\vartheta}_M^*$ in $\Theta_{\bs{\vartheta}_M}$;  (b) $\Theta_{\bs{\vartheta}_M}$ is compact;
(c) $\bs{\vartheta}_M^*$ is interior to $\Theta_{\bs{\vartheta}_M}$; (d) $B^M(\bs{\vartheta}_M^*):= \E \left\{\nabla_{\bs{\vartheta}_M}  \log g_M(\bs{W}_i; \bs{\vartheta}_M) \nabla_{\bs{\vartheta}_M^\top} \log g_M(\bs{W}_i; \bs{\vartheta}_M)\right\}$ is non-singular;  (e) $A^M(\bs{\vartheta}_M^*):= \E \left\{\nabla_{\bs{\vartheta}_M\bs{\vartheta}_M^\top}  \log g_M(\bs{W}_i; \bs{\vartheta}_M)\right\}$ has a constant rank in some open neighborhood of $\bs{\vartheta}_M^*$; and
(f)  $Q^{M+1}( \bs{\vartheta}_{M+1}^*)  - Q^{M}( \bs{\vartheta}_M^*) > 0$ for $M=1,...,M_0-1$.
\end{assumption}

\begin{proposition}\label{prop:sht}
Suppose that $M_0<\overline M$ and Assumptions \ref{assumption:K}, \ref{assumption:LRT1}, \ref{assumption:LRT2}, and \ref{assumption:Q_M} hold. If  we choose $q_n$ such that  $-n^{-1}\log q_n=o(1)$ and  $q_n=o(1)$, then $\widehat M_{\text{LRT}}-M_0=o_p(1)$.
\end{proposition}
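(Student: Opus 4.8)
The plan is to show the sequential procedure stops exactly at $M_0$ with probability tending to one, by bounding the two ways it can fail,
\[
\Pr(\widehat M_{\text{LRT}} \neq M_0) \leq \Pr(\widehat M_{\text{LRT}} < M_0) + \Pr(\widehat M_{\text{LRT}} > M_0).
\]
Under-estimation ($\widehat M_{\text{LRT}} < M_0$) requires a premature non-rejection, i.e.\ $LR_n^m < \widehat c^m_{1-q_n}$ for some $m < M_0$, so $\{\widehat M_{\text{LRT}} < M_0\}\subseteq \bigcup_{m=1}^{M_0-1}\{LR_n^m < \widehat c^m_{1-q_n}\}$. Over-estimation ($\widehat M_{\text{LRT}} > M_0$) requires a spurious rejection at the truth, so $\{\widehat M_{\text{LRT}} > M_0\}\subseteq\{LR_n^{M_0}\geq \widehat c^{M_0}_{1-q_n}\}$. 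By the union bound it then suffices to prove (i) $\Pr(LR_n^m < \widehat c^m_{1-q_n})\to 0$ for each fixed $m<M_0$, and (ii) $\Pr(LR_n^{M_0}\geq \widehat c^{M_0}_{1-q_n})\to 0$.

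For (i), the power against under-fitting, I would first show $LR_n^m$ diverges at rate $n$. Writing $LR_n^m = 2\{\ell_n^{m+1}(\widehat{\bs\vartheta}_{m+1}) - \ell_n^m(\widehat{\bs\vartheta}_m)\}$, a uniform law of large numbers over the compact space (Assumption~\ref{assumption:Q_M}(b)) combined with consistency of the level-$m$ and level-$(m+1)$ MLEs toward the unique population maximizers $\bs\vartheta_m^*,\bs\vartheta_{m+1}^*$ (Assumption~\ref{assumption:Q_M}(a),(c)) gives $n^{-1}\ell_n^m(\widehat{\bs\vartheta}_m)\overset{p}{\to}Q^m(\bs\vartheta_m^*)$, hence $n^{-1}LR_n^m\overset{p}{\to}2\{Q^{m+1}(\bs\vartheta_{m+1}^*)-Q^m(\bs\vartheta_m^*)\}=:\delta_m$, which is strictly positive for $m<M_0$ by Assumption~\ref{assumption:Q_M}(f). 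I would then show the critical value is negligible relative to $n$: since the limiting null law in Proposition~\ref{prop:tm0_distribution} is the maximum of finitely many quadratic forms in cone-projected Gaussian vectors, its upper tail decays exponentially, so its $(1-q_n)$-quantile obeys $c^m_{1-q_n}=O(\log(1/q_n))$; the rate condition $-n^{-1}\log q_n=o(1)$ yields $c^m_{1-q_n}=o(n)$, and consistency of $\widehat c^m_{1-q_n}$ gives $n^{-1}\widehat c^m_{1-q_n}\overset{p}{\to}0$. Combining, $\Pr(LR_n^m<\widehat c^m_{1-q_n})=\Pr(n^{-1}LR_n^m<n^{-1}\widehat c^m_{1-q_n})\to\Pr(\delta_m\leq 0)=0$.

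For (ii), size control at the truth, Proposition~\ref{prop:tm0_distribution} gives $LR_n^{M_0}\overset{d}{\to}W^{M_0}$ under $H_0: M=M_0$, so $LR_n^{M_0}=O_p(1)$ is tight. Because $q_n=o(1)$ and $W^{M_0}$ has unbounded support, its $(1-q_n)$-quantile diverges, $c^{M_0}_{1-q_n}\to\infty$, and by consistency $\widehat c^{M_0}_{1-q_n}\overset{p}{\to}\infty$; tightness of $LR_n^{M_0}$ then forces $\Pr(LR_n^{M_0}\geq\widehat c^{M_0}_{1-q_n})\to 0$. Equivalently, since $\Pr(W^{M_0}\geq c^{M_0}_{1-q_n})=q_n$ by definition of the quantile and $LR_n^{M_0}\overset{d}{\to}W^{M_0}$, the rejection probability at the truth is asymptotically $q_n\to 0$. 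Summing the finitely many terms from (i) with the single term from (ii) gives $\Pr(\widehat M_{\text{LRT}}\neq M_0)\to 0$, i.e.\ $\widehat M_{\text{LRT}}-M_0=o_p(1)$.

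The main obstacle is the tension between the two directions mediated by the single sequence $q_n$: direction (i) needs $q_n$ to vanish slowly enough that the critical values stay $o(n)$ and are dominated by the diverging statistics, while (ii) needs $q_n$ to vanish so the size goes to zero. Reconciling these is exactly why the \emph{exponential} (rather than merely polynomial) tail of the limiting null distribution matters, as it is what makes $c^m_{1-q_n}=O(\log(1/q_n))$ compatible with the mild requirement $\log(1/q_n)=o(n)$. The most delicate steps are therefore the uniform exponential tail bound on the maximum of the cone-projected Gaussian quadratic forms, and verifying that the estimated critical value $\widehat c^m_{1-q_n}$ inherits both the $o(n)$ growth (for (i)) and the divergence to infinity (for (ii)); the rate-$n$ divergence of $LR_n^m$ itself is comparatively routine once the uniform law of large numbers and consistency of the level-$m$ MLEs are in hand.
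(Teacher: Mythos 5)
Your proposal is correct and follows essentially the same route as the paper: divergence of $n^{-1}LR_n^m$ to a positive constant for $m<M_0$ via consistency of the level-$m$ MLEs and Assumption~\ref{assumption:Q_M}(f), an exponential-tail (Chernoff-type) bound on the limiting null law to get $c^m_{1-q_n}=o(n)$ from $-n^{-1}\log q_n=o(1)$ (the paper's Lemma~\ref{lemma: sht}), and tightness of $LR_n^{M_0}$ against the diverging critical value for size control. The only cosmetic difference is that the paper derives the exponential tail from a weighted-sum-of-$\chi^2$ representation rather than directly from the cone-projected Gaussian quadratic forms, but the mechanism is identical.
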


Assumptions \ref{assumption:Q_M}(a)--(e) ensure the consistency and asymptotic normality of $\widehat{\bs{\vartheta}}_M$ for $M\leq M_0$, where (c)--(e) correspond to Assumption A6 of \cite{white82em}.  Per Assumption \ref{assumption:Q_M}(f), the Kullback--Leibler information criterion of the model relative to the true $M_0$-component model strictly decreases as the number of components $M$ increases for $M < M_0$.

%Selecting the significance level $q_n$ in finite samples is challenging. Due to the complexity of analyzing the optimal choice of $q_n$, we leave this for future research and recommend presenting results across conventional levels (1\%, 5\%, 10\%).
%
%To compare BIC and LRT performance across different $q_n$ values, we recommend conducting  a simulation exercise similar to  a parametric bootstrap. Researchers can generate datasets from the estimated model with $M$ components, apply both methods, and analyze the frequency distribution of estimated components as shown in Tables \ref{tab:empirical_test_normal}--\ref{tab:empirical_test_mixture} in Section \ref{sec:simulation}.

Selecting the significance level $q_n$ in finite samples is challenging. Due to the complexity of analyzing the optimal choice of $q_n$, we leave this for future research and recommend presenting results across conventional levels (1\%, 5\%, 10\%).

To compare BIC and LRT performance across different $q_n$ values, we also recommend conducting a simulation exercise similar to a parametric bootstrap; researchers can generate datasets from the estimated model with $M$ components, apply methods with different values of $q_n$, and analyze the frequency distribution of estimated components as shown in Tables \ref{tab:empirical_test_normal}--\ref{tab:empirical_test_mixture} in Section \ref{sec:simulation}.

%The selection of the significance level $q_n$ in finite sample is an important yet challenging practical issue. Due to the complexity involved in explicitly analyzing the optimal choice of $q_n$ in finite samples, we leave this as an area for future research. For practical purposes, we recommend that researchers present results across multiple conventional significance levels (e.g., 1\%, 5\%, and 10\%).
%
%Furthermore, to evaluate the comparative performance of the BIC and the LRT  across various choices of $q_n$, we recommend empirical researchers perform a simulation exercise similar to a parametric bootstrap. Specifically, researchers should repeatedly generate datasets from the estimated model with the identified number of components $M$, and then apply both BIC and LRT  methods. Analyzing the frequency distribution of the estimated number of components across multiple simulations, as illustrated in Tables \ref{tab:empirical_test_normal}-{tab:empirical_test_mixture}, will offer practical guidance on interpreting the results.

\section{Consistent estimation of the number of components by the penalized likelihood method}\label{sec:bic}

%We assume that the upper bound of the number of components is given by $\overline{M}$.
%Define the MLE that maximizes the log-likelihood function of the $M$-component model by
%\[
%\hat{\bs{\vartheta}}_{M}   = \max_{\bs{\vartheta}_{M}  \in \Theta_{\bs{\vartheta}_M} }    \sum_{i=1}^n \log g_{M}(\bs{W}_i;\bs{\vartheta}_{M})
%\]
%and let
%$$
%W_{n}(M) =   \sum_{i=1}^n \log g_{M}(\bs{W}_i;\bs{\vartheta}_{M}) - p_{n,k_M}%k_M \log_2(n)
%$$
%where $ p_{n,k_M}$ is a penalization term while $k_M:= \text{dim}(\bs{\vartheta}_{M})$ is the number of estimated parameters for  mixture models (\ref{eq:fm}).  %We set $p_{n,k_M}=k$ for AIC and $p_{n,k_M}=(k_M/2)\log(n)$ for BIC.

We may also estimate the number of components by the penalized maximum likelihood estimator
\[
\widehat{M}_{PL} = \arg\max_{M\in\{1,2,...,\overline{M}\}}  p\ell_{n}^M(\hat{\bs{\vartheta}}_{M}),
\]
where
\[
p\ell_{n}^M(\hat{\bs{\vartheta}}_{M}):=\ell_n^M(\hat{\bs{\vartheta}}_{M}) - p_{n,k_M},
\]
and $\hat{\bs{\vartheta}}_{M}$ is the MLE defined by (\ref{eq:MLE}) while $ p_{n,k_M}$ is a penalization term with $k_M:= \text{dim}(\bs{\vartheta}_{M})$ representing the number of estimated parameters for  mixture models (\ref{eq:fm}).  %We set $p_{n,k_M}=k$ for AIC and $p_{n,k_M}=(k_M/2)\log(n)$ for BIC.

The following proposition states that the penalized maximum likelihood estimator of the number of components is consistent under the regularity condition.
\begin{assumption}\label{assumption: penalty}
For all \( n > 1 \), the penalty function $p_{n,k}$ satisfies (a) $p_{n,k+1}\geq p_{n,k}>0$, (b) $\lim_{n \to \infty} p_{n,k} = \infty$,  (c) $p_{n,k} = o(n)$,
and (d)
$\lim_{n \to \infty} \frac{p_{n,k'}}{p_{n,k}} > 1$ for $k<k'\leq \overline{M}$.
\end{assumption}
\begin{assumption}\label{assumption:bound}
For $M\in\{M_0+1,M_0+2,...,\overline M\}$, $\ell_{n}^M(\hat{\bs\vartheta}_{M})-\ell_n^{M_0}(\hat{\bs\vartheta}_{M_0})=O_p(1)$.
\end{assumption}

\begin{proposition}[Consistent estimation by the maximum penalized likelihood estimator]\label{proposition:bic}
Suppose that Assumptions \ref{assumption:Q_M}-\ref{assumption:bound} hold. Then, $\widehat{M}_{PL} \overset{p}{\rightarrow} M_0$.
\end{proposition}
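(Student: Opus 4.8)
The plan is to prove consistency by a pairwise-comparison argument. For each $M\in\{1,\ldots,\overline M\}$ with $M\neq M_0$, I would show that
\[
\Delta_n(M):=p\ell_n^{M_0}(\widehat{\bs\vartheta}_{M_0})-p\ell_n^{M}(\widehat{\bs\vartheta}_{M})>0
\]
with probability tending to one. Since the index set $\{1,\ldots,\overline M\}$ is finite, a union bound over the at most $\overline M-1$ alternatives then gives $\Pr(\widehat M_{PL}=M_0)\to1$, which is exactly $\widehat M_{PL}\overset{p}{\to}M_0$. The under-fitting regime $M<M_0$ and the over-fitting regime $M>M_0$ are dominated by different terms, so I would treat them separately.

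For $M<M_0$, I would write $\Delta_n(M)=\bigl\{\ell_n^{M_0}(\widehat{\bs\vartheta}_{M_0})-\ell_n^{M}(\widehat{\bs\vartheta}_{M})\bigr\}-\bigl\{p_{n,k_{M_0}}-p_{n,k_M}\bigr\}$. Under Assumption \ref{assumption:Q_M}(a)--(b) the quasi-MLE $\widehat{\bs\vartheta}_M$ is consistent for the pseudo-true value $\bs\vartheta_M^*$ maximizing $Q^M$, so a uniform law of large numbers yields $n^{-1}\ell_n^M(\widehat{\bs\vartheta}_M)\overset{p}{\to}Q^M(\bs\vartheta_M^*)$ for every $M\le M_0$. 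Telescoping Assumption \ref{assumption:Q_M}(f) across $M,M+1,\ldots,M_0$ gives $Q^{M_0}(\bs\vartheta_{M_0}^*)-Q^M(\bs\vartheta_M^*)=:c_M>0$, so the likelihood brace equals $c_M\,n(1+o_p(1))$ and grows linearly in $n$, while the penalty brace is $o(n)$ by Assumption \ref{assumption: penalty}(c). Hence $\Delta_n(M)=c_M n(1+o_p(1))-o(n)\overset{p}{\to}+\infty$, and in particular $\Pr(\Delta_n(M)>0)\to1$.

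For $M>M_0$, I would instead rewrite $\Delta_n(M)=-\bigl\{\ell_n^{M}(\widehat{\bs\vartheta}_{M})-\ell_n^{M_0}(\widehat{\bs\vartheta}_{M_0})\bigr\}+\bigl\{p_{n,k_M}-p_{n,k_{M_0}}\bigr\}$. By Assumption \ref{assumption:bound} the likelihood-ratio brace is $O_p(1)$ and so cannot absorb a diverging penalty. Since $k_M>k_{M_0}$, Assumption \ref{assumption: penalty}(a) makes the penalty difference nonnegative, and writing $p_{n,k_M}-p_{n,k_{M_0}}=p_{n,k_{M_0}}\bigl(p_{n,k_M}/p_{n,k_{M_0}}-1\bigr)$ and combining the divergence $p_{n,k_{M_0}}\to\infty$ of Assumption \ref{assumption: penalty}(b) with the ratio limit $\lim p_{n,k_M}/p_{n,k_{M_0}}>1$ of Assumption \ref{assumption: penalty}(d) shows the penalty difference tends to $+\infty$. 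Thus $\Delta_n(M)=-O_p(1)+(\text{diverging deterministic term})\overset{p}{\to}+\infty$, again giving $\Pr(\Delta_n(M)>0)\to1$. The genuinely hard part is not this bookkeeping but Assumption \ref{assumption:bound} itself: establishing that the over-specified log-likelihood ratio is only $O_p(1)$ rather than divergent—recall that Proposition \ref{prop:unbounded_likelihood} warns the unrestricted LRTS can blow up—is precisely what the earlier bounds on $\sigma_j^2$ and on $\alpha$, together with the strong-identifiability and quadratic-expansion machinery of Sections \ref{sec:LRT1}--\ref{sec:LRT2}, are designed to secure. Once that bound is granted, the two regimes combine through the finite union to complete the proof.
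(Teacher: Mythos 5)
Your proposal is correct and follows essentially the same route as the paper's proof: both regimes are handled by the same decomposition into a likelihood difference and a penalty difference, with the under-fitting case settled by consistency, the uniform LLN, and Assumption \ref{assumption:Q_M}(f) against the $o(n)$ penalty, and the over-fitting case settled by Assumption \ref{assumption:bound} against the diverging penalty gap from Assumptions \ref{assumption: penalty}(b) and (d). The only cosmetic difference is normalization (the paper divides the over-fitting comparison by $p_{n,k_{M_0}}$ and shows the limit is strictly negative, while you show the unnormalized difference diverges), and your closing remark correctly identifies that the substantive work lives in verifying Assumption \ref{assumption:bound}, which the paper defers to Proposition \ref{prop:bound} and Appendix \ref{app:bic}.
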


%Assumption \ref{assumption: penalty} corresponds to Assumption (C1) of \cite{keribin00sankhya} and specifies the conditions for penalty functions, where the BIC penalty function $p_{n,k}=\frac{k}{2}\log(n)$ satisfies this assumption while the AIC penalty function $p_{n,k}=\frac{k}{2}$ does not satisfy Assumption \ref{assumption: penalty}(b).

Assumption \ref{assumption: penalty} corresponds to Assumption (C1) of \cite{keribin00sankhya}, specifying conditions for penalty functions. The BIC penalty function, given by \( p_{n,k} = \frac{k}{2}\log(n) \), satisfies this assumption, whereas the AIC penalty function, \( p_{n,k} = k\), does not satisfy Assumption \ref{assumption: penalty}(b).

%Assumption \ref{assumption:bound} is a key high-level condition preventing overestimation. In Appendix \ref{app:bic}, we provide explicit conditions under which Assumption \ref{assumption:bound} holds by extending the previous asymptotic analysis for testing \( H_{0}: M = M_0 \) against \( H_{1}: M = M_0 + 1 \) to the more general setting of testing \( H_{0}: M = M_0 \) against \( H_{1}: M = M_1 \), where \( M_1 > M_0+1, \) to derive conditions for \( \ell_{n}^{M}(\hat{\bs\vartheta}_{M}) - \ell_n^{M_0}({\bs{\vartheta}}_{M_0}^*) = O_p(1) \). The primary challenge in extending our earlier analysis is that the degree of singularity in the Fisher Information Matrix may become more severe, necessitating the higher-order expansions beyond the second-order for an asymptotic analysis.

Assumption \ref{assumption:bound} is a key high-level condition preventing overestimation. In Appendix \ref{app:bic}, we provide explicit conditions under which Assumption \ref{assumption:bound} holds by extending the previous asymptotic analysis, originally developed for testing $H_{0}: M = M_0$ against $H_{1}: M = M_0 + 1$, to a more general setting of testing $H_{0}: M = M_0$ against $H_{1}: M = M_1$, where $M_1 > M_0 + 1$. Under this extension, we derive conditions ensuring that $\ell_{n}^{M_1}(\hat{\bs\vartheta}_{M_1}) - \ell_n^{M_0}({\bs{\vartheta}}_{M_0}^*) = O_p(1)$ for $M>M_0+1$. The primary challenge encountered in generalizing our earlier analysis is the increased severity of the singularity in the Fisher Information matrix, which necessitates expansions of order higher than second-order in the asymptotic analysis.

To address the singularity of the Fisher information matrix, we reparameterize the model and expand the density ratio
$
 {g_{M}(\bs{w}; \bs\vartheta_{M})}/{g_{M_0}(\bs{w}; \bs\vartheta_{M_0}^*)} - 1$
beyond the second order. This expansion yields a quadratic-form approximation of the log-likelihood difference \( \ell_n^{M}(\hat{\bs{\vartheta}}_{M}^j) - \ell^{M_0}_n( {\bs{\vartheta}}_{M_0}^*) \) that captures higher-order terms necessary, building on a generalized version of Le Cam’s differentiable in quadratic mean (DQM) expansion  \citep{liushao03as,kasahara2018arXiv}. We then demonstrate that \( \ell_n^{M}(\hat{\bs{\vartheta}}_{M}^j)-\ell^{M_0}_n({\bs{\vartheta}}_{M_0}^*)=O_p(1) \), analogously to arguments used in the proof of Proposition \ref{prop:tm0_distribution}. Appendix \ref{app:bic} contains the detailed discussion and Assumption \ref{assumption:BIC} presents the conditions for obtaining $\ell_n^{M}(\hat{\bs{\vartheta}}_{M}^j)-\ell^{M_0}_n({\bs{\vartheta}}_{M_0}^*)=O_p(1)$.   Assumption \ref{assumption:BIC} (b) is a key condition requiring that the unique elements of  ${\nabla_{\bs\theta_h} f(\bs{w}; \bs\theta_h^*)}/{g_{M_0}(\bs{w}; \bs\vartheta_{M_0}^*)}$,
${\nabla_{\bs\theta_h\otimes \bs\theta_h} f(\bs{w}; \bs\theta_h^*)}/{g_{M_0}(\bs{w}; \bs\vartheta_{M_0}^*)}$, ...,
${\nabla_{\bs\theta_h^{\otimes  p_h}} f(\bs{w}; \bs\theta_h^*)}/{g_{M_0}(\bs{w}; \bs\vartheta_{M_0}^*)}$ are linearly independent and their expectation is finite for $h=1,2,...,M_0$, where the value of $p_h$ indicates a necessary order for  identification.

The following proposition formalizes the results derived in Appendix \ref{app:bic}.

\begin{proposition}\label{prop:bound} Suppose that Assumptions \ref{assumption:K}, \ref{assumption:consistency}, \ref{assumption:Q_M}, \ref{assumption: penalty}, and \ref{assumption:BIC} holds.  Then, Assumption \ref{assumption:bound} holds, and $\widehat{M}_{PL} \overset{p}{\rightarrow} M_0$.
\end{proposition}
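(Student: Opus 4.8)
The plan is to establish Assumption \ref{assumption:bound} and then simply invoke Proposition \ref{proposition:bic}: once Assumption \ref{assumption:bound} is verified, all of Assumptions \ref{assumption:Q_M}--\ref{assumption:bound} hold, and Proposition \ref{proposition:bic} delivers $\widehat{M}_{PL}\overset{p}{\to}M_0$ with no further work. So the entire task reduces to showing, for each finite $M\in\{M_0+1,\ldots,\overline M\}$, that $\ell_n^{M}(\hat{\bs\vartheta}_M)-\ell_n^{M_0}(\hat{\bs\vartheta}_{M_0})=O_p(1)$. I would write this difference as $[\ell_n^{M}(\hat{\bs\vartheta}_M)-\ell_n^{M_0}(\bs\vartheta_{M_0}^*)]-[\ell_n^{M_0}(\hat{\bs\vartheta}_{M_0})-\ell_n^{M_0}(\bs\vartheta_{M_0}^*)]$. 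The second bracket is $O_p(1)$ by the standard MLE quadratic expansion under Assumption \ref{assumption:Q_M} (it is an ordinary likelihood-ratio statistic for a correctly specified $M_0$-component model). Hence it suffices to bound the first bracket.

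To control $\ell_n^{M}(\hat{\bs\vartheta}_M)-\ell_n^{M_0}(\bs\vartheta_{M_0}^*)$, I would first localize using consistency: by Proposition \ref{prop:consistency}, $\hat{\bs\vartheta}_M$ lies in a shrinking neighborhood of the set $\Theta_{\bs\vartheta_M}^*$ of parameters for which the $M$-component density reproduces the true $M_0$-component density. This set decomposes into finitely many \emph{collapse patterns}, each specifying how the $M-M_0$ surplus components either merge onto one of the $M_0$ true components or carry vanishing mixing weight. It is enough to bound the log-likelihood difference within a neighborhood of each pattern and then take the maximum over the finitely many patterns. For a fixed pattern, because merging components annihilate the first-order (and, when several surplus components coalesce onto a single true component, higher-order) score directions---exactly the degeneracy seen in (\ref{eq:dependence})---I would reparameterize the deviations and carry the linear quadratic-mean expansion to the order $p_h$ needed to identify the deviation at each true component $h$. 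Following the generalized Le Cam DQM framework of \cite{liushao03as} and \cite{kasahara2018arXiv}, I would write $\sqrt{g_M(\bs w;\bs\vartheta_M)/g_{M_0}(\bs w;\bs\vartheta_{M_0}^*)}-1$ as $\bs t\t\bs s(\bs w)$ plus a DQM remainder, where $\bs t$ collects the reparameterized deviation terms up to the required order and $\bs s(\bs w)$ is the generalized score built from the normalized derivatives $\nabla_{\bs\theta_h}f(\bs w;\bs\theta_h^*)/g_{M_0}(\bs w;\bs\vartheta_{M_0}^*),\ldots,\nabla_{\bs\theta_h^{\otimes p_h}}f(\bs w;\bs\theta_h^*)/g_{M_0}(\bs w;\bs\vartheta_{M_0}^*)$. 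Assumption \ref{assumption:BIC}(b) makes these linearly independent with finite second moments, so the associated generalized information matrix $\tilde{\bs{\mathcal{I}}}$ is finite and non-singular.

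Substituting the DQM expansion into the log-likelihood then yields, uniformly over the local neighborhood of the pattern,
\begin{equation*}
2\{\ell_n^{M}(\bs\vartheta_M)-\ell_n^{M_0}(\bs\vartheta_{M_0}^*)\}=2\,\bs t\t\bs\nu_n-\bs t\t\tilde{\bs{\mathcal{I}}}\,\bs t+o_p\!\left(1+\|\bs t\|^2\right),
\end{equation*}
where $\bs\nu_n:=n^{-1/2}\sum_{i=1}^n\bs s(\bs W_i)\overset{d}{\to}\mathcal N(\bs 0,\tilde{\bs{\mathcal{I}}})$ since $\bs s$ has mean zero. For the boundedness claim I only need an \emph{upper} bound, which is cleaner than the exact cone projection required for the null distribution in Proposition \ref{prop:tm0_distribution}: maximizing the right-hand side over the admissible cone for $\bs t$ is bounded above by the unconstrained maximum, giving $\ell_n^{M}(\hat{\bs\vartheta}_M)-\ell_n^{M_0}(\bs\vartheta_{M_0}^*)\le\tfrac12\,\bs\nu_n\t\tilde{\bs{\mathcal{I}}}^{-1}\bs\nu_n+o_p(1)=O_p(1)$, where the quadratic form is $O_p(1)$ precisely because $\tilde{\bs{\mathcal{I}}}$ is non-singular and hence $\bs t\t\tilde{\bs{\mathcal{I}}}\,\bs t$ is coercive. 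Taking the maximum over the finitely many patterns preserves $O_p(1)$, which verifies Assumption \ref{assumption:bound}.

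The hard part will be the uniform control of the DQM remainder over the overspecified parameter space. When $M>M_0+1$, several surplus components may coalesce onto a single true component, raising the order of the singularity and hence the expansion order $p_h$; the reparameterization and the remainder bound must be adapted component-by-component, and one must verify that the remainder is genuinely $o_p(1+\|\bs t\|^2)$ uniformly rather than merely pointwise. This is exactly where Assumption \ref{assumption:BIC}(b)---linear independence and finite expectation of the normalized derivatives up to order $p_h$---is indispensable: it both pins down the correct expansion order and secures the non-singularity of $\tilde{\bs{\mathcal{I}}}$ needed for the final $O_p(1)$ bound, which is the mechanism that rules out overestimation. The remaining bookkeeping (enumeration of collapse patterns, compactness from Assumption \ref{assumption:Q_M}(b), and boundary handling from the restricted parameter space $\bar\Theta_{\bs\vartheta_M}(\bs c)$) is routine once this uniform expansion is secured, after which $\widehat{M}_{PL}\overset{p}{\to}M_0$ follows from Proposition \ref{proposition:bic}.
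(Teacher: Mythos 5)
Your proposal follows essentially the same route as the paper: the same decomposition into $[\ell_n^M(\hat{\bs\vartheta}_M)-\ell_n^{M_0}(\bs\vartheta_{M_0}^*)]-[\ell_n^{M_0}(\hat{\bs\vartheta}_{M_0})-\ell_n^{M_0}(\bs\vartheta_{M_0}^*)]$, the same localization to finitely many merging configurations (the paper's $J_M$ partitions $\{\mathcal{T}^j\}$ and restricted spaces $\bs\Psi_{\mathcal{T}^j}^*$ in Appendix \ref{app:bic}), the same higher-order DQM expansion with generalized score and non-singular $\tilde{\bs{\mathcal{I}}}$ from Assumption \ref{assumption:BIC}, the same unconstrained quadratic upper bound (formalized in Propositions \ref{Ln_thm1} and \ref{Ln_thm2}, whose part (a) supplies the $O_p(n^{-1/2})$ rate for $\bs t$ that you correctly flag as the hard uniformity step), and the same final appeal to Proposition \ref{proposition:bic}. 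The only nit is that, because $\bar\Theta_{\bs\vartheta_M}(\bs c)$ bounds the mixing proportions away from zero, the vanishing-weight configurations you mention do not arise, so only the merging patterns need to be enumerated.
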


  \begin{corollary}[Consistency of BIC and Inconsistency of AIC] \label{cor:bic}
Suppose Assumptions \ref{assumption:K}, \ref{assumption:consistency}, \ref{assumption:Q_M}, and \ref{assumption:BIC} hold. Then, (a)  If $p_{n,k}=\frac{k}{2}\log(n)$ (BIC penalty), we have $\widehat{M}_{PL} \overset{p}{\rightarrow} M_0$. (b)  If $p_{n,k}=k$ (AIC penalty), then $\lim_{n\rightarrow\infty} \Pr(\widehat{M}_{PL} > M_0) > 0$.
\end{corollary}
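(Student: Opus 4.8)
The plan is to dispatch part (a) by verifying the hypotheses of Proposition \ref{prop:bound}, and to prove part (b) by exploiting the nondegenerate limiting distribution of the likelihood ratio statistic supplied by Proposition \ref{prop:tm0_distribution}. For part (a), I would first check that the BIC penalty $p_{n,k}=\frac{k}{2}\log n$ satisfies every part of Assumption \ref{assumption: penalty}: monotonicity in $k$ and positivity for $n>1$ give (a); $\frac{k}{2}\log n\to\infty$ gives (b); $\frac{k}{2}\log n=o(n)$ gives (c); and $\lim_n p_{n,k'}/p_{n,k}=k'/k>1$ for $k<k'$ gives (d). Under the corollary's Assumptions \ref{assumption:K}, \ref{assumption:consistency}, \ref{assumption:Q_M}, and \ref{assumption:BIC}, together with this verification of Assumption \ref{assumption: penalty}, all hypotheses of Proposition \ref{prop:bound} are met, so $\widehat{M}_{PL}\overset{p}{\to}M_0$ follows at once.

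For part (b), the whole point is that the AIC penalty $p_{n,k}=k$ violates Assumption \ref{assumption: penalty}(b) because it does not diverge with $n$, so Proposition \ref{prop:bound} no longer yields consistency; I would turn this failure into a proof of inconsistency. I split the complement $\{\widehat{M}_{PL}\neq M_0\}$ into underfitting and overfitting. Underfitting is excluded exactly as in the consistency argument: for $M<M_0$, Assumption \ref{assumption:Q_M}(f) makes $\ell_n^M(\widehat{\bs\vartheta}_M)-\ell_n^{M_0}(\widehat{\bs\vartheta}_{M_0})$ diverge to $-\infty$ at rate $n$, whereas the AIC penalty gap $k_M-k_{M_0}$ is a fixed constant, so $\Pr(\widehat{M}_{PL}<M_0)\to 0$.

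For the overfitting side I would compare the $(M_0+1)$- and $M_0$-component fits. Writing $\Delta k:=k_{M_0+1}-k_{M_0}>0$ for the number of extra parameters, the penalized-likelihood gap between the two models equals $\tfrac12 LR_n^{M_0}-\Delta k$, since $LR_n^{M_0}=2\{\ell^{M_0+1}_n(\widehat{\bs\vartheta}_{M_0+1})-\ell^{M_0}_n(\widehat{\bs\vartheta}_{M_0})\}$. By Proposition \ref{prop:tm0_distribution}, whose regularity conditions are subsumed by Assumption \ref{assumption:BIC}, $LR_n^{M_0}\overset{d}{\to} T:=\max_{1\le h\le M_0}(\widehat{\bs t}^h_{\bs\lambda})\t\bs{\mathcal{I}}^{h}_{\bs\lambda,\bs\nu}\widehat{\bs t}^h_{\bs\lambda}$. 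On the event $\{LR_n^{M_0}>2\Delta k\}$ the $(M_0+1)$-component model strictly dominates the $M_0$-component model in penalized likelihood, and combined with the vanishing underfitting probability this forces $\widehat{M}_{PL}>M_0$; the set $(2\Delta k,\infty)$ being open, the portmanteau lemma gives $\liminf_n\Pr(\widehat{M}_{PL}>M_0)\ge\Pr(T>2\Delta k)$.

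The crux is then to show $\Pr(T>2\Delta k)>0$. Each summand $(\widehat{\bs t}^h_{\bs\lambda})\t\bs{\mathcal{I}}^{h}_{\bs\lambda,\bs\nu}\widehat{\bs t}^h_{\bs\lambda}$ is a nonnegative quadratic form in the projection $\widehat{\bs t}^h_{\bs\lambda}$ of the Gaussian vector $\bs{G}_{\bs\lambda,\bs\nu}\sim\mathcal{N}(\bs 0,(\bs{\mathcal{I}}_{\bs\lambda,\bs\nu})^{-1})$ onto the cone $\Lambda_{\bs\lambda}$; because the Gaussian has full support and the cone is nontrivial, $T$ has unbounded support on $[0,\infty)$, so $\Pr(T>c)>0$ for every finite $c$, in particular $c=2\Delta k$. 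I expect the main obstacle to be the bookkeeping around the maximization defining $\widehat{M}_{PL}$: I must ensure that a strict win of the $(M_0+1)$-model over the $M_0$-model, together with the negligible underfitting probability, genuinely delivers $\{\widehat{M}_{PL}>M_0\}$ rather than merely $\{\widehat{M}_{PL}\neq M_0\}$, and that ties in the definition of $\widehat{M}_{PL}$ are handled (for instance, by breaking ties toward smaller $M$, which only sharpens the inequality we need).
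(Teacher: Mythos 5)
Your proposal is correct and follows essentially the same route as the paper's proof: part (a) by verifying Assumption \ref{assumption: penalty} for the BIC penalty and invoking Proposition \ref{prop:bound}, and part (b) by ruling out underfitting as in Proposition \ref{proposition:bic} and then reducing overfitting to the event $\{LR_n^{M_0}>2(k_{M_0+1}-k_{M_0})\}$, whose limiting probability is positive by Proposition \ref{prop:tm0_distribution}. Your explicit argument that the limit $T$ has unbounded support (via the projection of the full-support Gaussian $\bs{G}_{\bs\lambda,\bs\nu}$ onto the nontrivial cone $\Lambda_{\bs\lambda}$), and your use of the portmanteau inequality for the open set $(2\Delta k,\infty)$, make explicit two steps the paper's proof leaves implicit, so no gap remains.
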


\cite{leroux92as} established that, under conditions similar to Assumption \ref{assumption: penalty}, the maximum penalized likelihood method produces an estimator that asymptotically does not underestimate the number of components. Building upon the work of \cite{dacunha99as}, \cite{keribin00sankhya} derived regularity conditions necessary for the consistency of the maximum penalized likelihood estimator. Our Assumption \ref{assumption:BIC}(b) corresponds to condition (P2) in \cite{keribin00sankhya}. However, Keribin's condition (P2) is not satisfied when the number of components \( M \) is moderately larger than the true number \( M_0 \), as it relies on the sufficiency of a second-order expansion of the the density ratio for identification. In contrast, our Assumption \ref{assumption:BIC}(b) accommodates settings where parameter identification requires a higher-order rank condition, thereby extending the applicability of the consistency result.

\section{Nonparametric estimation of the lower bound of the number of components by the  rank test }\label{sec:rank}
%\subsection{Models without covariates}

%\subsection{The rank test of \cite{Kleibergen06}}
We also estimate the lower bound of the number of components without imposing the parametric assumption on error distributions by extending a method proposed by \cite{Kasahara2014} which in turn is based on the rank test of \cite{Kleibergen06}.  %We first consider models without covariates.

We partition the support of $Y_{t}\in\mathcal{Y}_t$ into $|\Delta_{t}|$ mutually exclusive and exhaustive subsets $\Delta_{t}=\{\delta_{1}^t, \ldots ,\delta_{|\Delta_{t}|}^t\}$ so that $\mathcal{Y}_t=\cup_{i=1}^{|\Delta_{t}|}\delta_{i}$ and $\delta_{i}\cap \delta_{j}=\emptyset$ for $i\neq j$. %Define $\bs{\Delta} := \otimes_{t=1}^T \Delta_t$, where $\otimes$ represents the tensor product.

For each $k\in \mathcal{T}:=\{1,2,...,T\}$, define $\mathcal{T}_{-k}:=\{s \in \mathcal{T}: s\neq t\}=\{1,..,t-1,t+1,...,T\}$ and let $\mathbf{Y}_{\mathcal{T}_{-k}}=(Y_1,..,Y_{k-1},Y_{k_1},...,Y_T)^\top$ be  a vector of $Y_t$s in the group $\mathcal{T}_{-k}$. Partition the support of $\mathbf{Y}_{\mathcal{T}_{-k}}$ into $|\Delta_{\mathcal{T}_{-k}}|$ mutually exclusive and exhaustive subsets $\Delta_{\mathcal{T}_{-k}}=\{\bs\delta_{1}^{\mathcal{T}_{-k}}, \ldots ,\bs\delta_{|\Delta_{\mathcal{T}_{-k}}|}^{\mathcal{T}_{-k}}\}$. In particular, we construct this partition by unfolding the tensor product $\otimes_{t\neq k} \Delta_{t}$ using the  the Khatri-Rao product denoted by $\odot$ as $\Delta_{\mathcal{T}_{-k}}=\odot_{t\neq k}   \Delta_{t}= \Delta_{1}\odot \cdots \odot \Delta_{k-1}\odot \Delta_{k+1}\odot \cdots\odot \Delta_T$ so that $|\Delta_{\mathcal{T}_{-k}}|=|\Delta_t|^{T-1}$.

 For each $k\in   \mathcal{T}$, we construct  a $|\Delta_{t}| \times |\Delta_{\mathcal{T}_{-k}}|$ bivariate probability matrix $\bs P_{k}$ by arranging $\Pr(Y_k \in \delta_{a},\bs Y_{-k} \in \bs\delta_{b}^{\mathcal{T}_{-k}})$ for partition level $(a,b)=(1,1),\ldots,(|\Delta_{t}|,| \Delta_{{\mathcal{T}_{-k}}}|)$ as
\begin{equation}
\bs P_k =  \left[
\begin{array}{ccc}
\Pr(Y_k \in \delta_{1},\bs Y_{\mathcal{T}_{-k}} \in \bs\delta_{1}^{\mathcal{T}_{-k}}) & \cdots & \Pr(Y_{k} \in \delta_{1},\bs Y_{\mathcal{T}_{-k}} \in \bs \delta_{|\Delta_{\mathcal{T}_{-k}}|}^{\mathcal{T}_{-k}})  \\
\vdots  & \ddots  & \vdots  \\
\Pr( Y_{k} \in \delta_{|\Delta_{t}|},\bs Y_{\mathcal{T}_{-k}} \in \bs \delta_{1}^{\mathcal{T}_{-k}}) & \cdots &\Pr(  Y_{k} \in \delta_{|\Delta_{t}|},\bs Y_{-k} \in \bs\delta_{|\Delta_{\mathcal{T}_{-k}}|}^{\mathcal{T}_{-k}})\end{array}
\right]. \label{P*_defn}
\end{equation}

Collect the marginal probability distribution of ${Y}_{k}$ and $\mathbf{Y}_{-k}$ over $\Delta_{t}$ and $\Delta_{\mathcal{T}_{-k}}$ conditional on being from the $j$-th component into a vector as
\begin{align*}
\bs p_k^j :&= (\Pr({Y}_{k}\in \delta_1|D=j),\ldots,\Pr({Y}_{k}\in \delta_{|\Delta_{t}|}|D=j))^\top\quad\text{and}\\
\bs q_k^j & := (\Pr(\mathbf{Y}_{-k}\in \bs\delta_1^{\mathcal{T}_{-k}}|D=j),\ldots,\Pr(\mathbf{Y}_{k}\in \bs\delta_{|\Delta_{\mathcal{T}_{-k}}|}^{\mathcal{T}_{-k}}|D=j))^\top,\quad\text{respectively}.
 \end{align*}

Then, under the conditional independence assumption  as in the mixture model (\ref{eq:fm}) but without imposing parametric restrictions,  we may represent $\bs P_k$ as
\[
\bs P_k  = \sum_{j=1}^M \alpha^j \bs p_k^j (\bs q_k^j)^\top.
\]
\cite{Kasahara2014} shows the rank of $\bs P_k$ identifies the lower bound of the number of components and develop a sequential hypothesis testing procedure for estimating the rank of $\bs P_k$ when the empirical quantile of the $Y_t$'s are used to construct the partition. Because  there are $T$ possible ways to pick different $k$'s out of $\{1,...,T\}$, we test the maximum of the ranks of $\bs P_k$ across $k=1,...,T$.

%each of which provides different  marginal probability distributions

%We make the following modification to a procedure proposed by \cite{Kasahara2014}. First, given that the column  dimension of $\bs P_k$ may be extremely large, we test the rank of $\bs Q_k:= \bs P_k (\bs P_k)^\top$ instead of the rank of $\bs P_k$ iteself. Second, because we may pick different $k$'s out of $\{1,...,T\}$, we test the maximum of the ranks of $\bs Q_k$ across $k=1,...,T$.

We first develop a rk-statistic of  \cite{Kleibergen06} for testing the null hypothesis of $\text{rank}(\bs P_k)=r$. Write
the eigenvalue decomposition of  ${\bs P}_k$ as
\[
{\bs P}_k = \mathbf{U}^k\mathbf{S}^k(\mathbf{U}^k)^\top = \begin{bmatrix}
\mathbf{U}^k_{11} & \mathbf{U}^k_{12}\\
\mathbf{U}^k_{21} & \mathbf{U}^k_{22}
\end{bmatrix}
\begin{bmatrix}
\mathbf{S}^k_1 & 0\\
0 & \mathbf{S}^k_2
\end{bmatrix}
\begin{bmatrix}
\mathbf{U}^k_{11} & \mathbf{U}^k_{12}\\
\mathbf{U}^k_{21} & \mathbf{U}^k_{22}
\end{bmatrix}^\top,
\]
where $\mathbf{U}^k$ is a $|\Delta_t| \times |\Delta_t|$ orthonormal matrix  and $\mathbf{S}^k$ is a diagonal matrix containing the eigenvalues of $\mathbf{P}_k$ in decreasing order. In the partition of $\mathbf{U}^k$ and $\mathbf{S}^k$ on the right-hand side, $\mathbf{U}^k_{11}$ and $\mathbf{S}^k_1$ are $r \times r$,
and the dimensions of the other submatrices are defined conformably. Then, the null hypothesis
$\mathcal{H}_0: \mathrm{rank}(\mathbf{Q}_k) = r$ is equivalent to $\mathcal{H}_0: \mathbf{S}^k_2 = 0$. % because the rank of a matrix is equal to the number of non-zero eigenvalues.
The statistic of \cite{Kleibergen06} is based on an orthogonal transformation of $\mathbf{S}^k_2$
given by $\mathbf{\Lambda}^k_r = (\mathbf{A}^k_{r })^\top\mathbf{Q}_k\mathbf{A}^k_{r }$, where
\[
\mathbf{A}^k_{r } = \begin{bmatrix}
\mathbf{U}^k_{12}\\
\mathbf{U}^k_{22}
\end{bmatrix}
(\mathbf{U}^k_{22} )^{-1}(\mathbf{U}^k_{22}(\mathbf{U}^k_{22})^\top)^{1/2}.
\]

Let $\widehat{\bs P}_k$ be a sample analogue estimator for $\bs P_k$ for which we have $\sqrt{n}\text{vec}(\widehat{\bs P}_k-\bs P_k)\overset{d}{\rightarrow} N(0,\bs\Sigma_k)$.  The following proposition follows from Theorem 1 of \cite{Kleibergen06}.
\begin{proposition}\label{rk}%[\cite{Kleibergen06} , theorem 1]
Suppose that $\sqrt{n}\text{vec}(\widehat{\bs P}_k-\bs P_k)\overset{d}{\rightarrow} N(0,\bs\Sigma_k)$ and that $\boldsymbol{\Omega}_r^k : = (\mathbf{A}_{r }^k \otimes \mathbf{A}_{r }^k)^\top \boldsymbol{\Sigma}_k (\mathbf{A}_{r }^k \otimes \mathbf{A}_{r }^k)$ is non-singular. If $\mathrm{rank}(\mathbf{P}_k) =  r$, then
$\sqrt{n} \widehat{\bs{\lambda}}_r^k \rightarrow_d
\mathcal{N}(0, \boldsymbol{\Omega}_r^k)$ as $n \rightarrow \infty$, where $\widehat{\bs{\lambda}}_r^k =\text{diag}((\widehat{\mathbf{A}}^k_{r })^\top \widehat{\mathbf{P}}_k \widehat{\mathbf{A}}_{r }^k)$. % and $\widehat{\mathbf{\Lambda}}^k_r = $.
\end{proposition}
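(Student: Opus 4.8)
The plan is to verify the regularity conditions of Theorem~1 of \cite{Kleibergen06} and then invoke it, while making explicit the two structural facts that deliver a Gaussian limit with covariance $\bs\Omega_r^k$. First I would record the population counterpart of the statistic. By construction $\mathbf{A}_r^k$ is assembled from the eigenvectors $\mathbf{U}_{12}^k,\mathbf{U}_{22}^k$ associated with the $|\Delta_t|-r$ smallest eigenvalues, so that $(\mathbf{A}_r^k)\t \bs P_k \mathbf{A}_r^k$ isolates the trailing block $\mathbf{S}_2^k$. Under $\mathrm{rank}(\bs P_k)=r$ we have $\mathbf{S}_2^k=\mathbf 0$, hence the population value $\bs\lambda_r^k:=\text{diag}\big((\mathbf{A}_r^k)\t \bs P_k \mathbf{A}_r^k\big)=\mathbf 0$. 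This both centers $\widehat{\bs\lambda}_r^k$ at zero and fixes $\sqrt n$ as the correct scaling.

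Next I would set up a first-order expansion of the map $\bs P \mapsto \text{diag}\big((\mathbf{A}_r(\bs P))\t \bs P \,\mathbf{A}_r(\bs P)\big)$, where $\mathbf{A}_r(\bs P)$ is the rotation obtained from the eigendecomposition of $\bs P$. Differentiability at $\bs P_k$ follows from a spectral-gap argument: under $\mathrm{rank}(\bs P_k)=r$ the top $r$ eigenvalues are separated from the trailing zero eigenvalues, so the associated eigenprojections---and therefore $\mathbf{A}_r(\cdot)$---vary smoothly in a neighborhood of $\bs P_k$. Writing $\widehat{\bs P}_k=\bs P_k+n^{-1/2}\mathbf{\Delta}_n$ with $\mathbf{\Delta}_n=O_p(1)$, I would expand both the explicit appearance of $\bs P$ and the implicit dependence through $\mathbf{A}_r(\bs P)$.

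The crux---and the step I expect to be the main obstacle---is to show that the implicit dependence through $\widehat{\mathbf{A}}_r^k$ contributes nothing at first order, so that $\sqrt n\,\widehat{\bs\lambda}_r^k=(\mathbf{A}_r^k\otimes\mathbf{A}_r^k)\t\,\sqrt n\,\text{vec}(\widehat{\bs P}_k-\bs P_k)+o_p(1)$, with $\widehat{\bs\lambda}_r^k$ collecting the transformed entries of $(\mathbf{A}_r^k)\t(\widehat{\bs P}_k-\bs P_k)\mathbf{A}_r^k$. In the differential of $(\mathbf{A}_r)\t \bs P \mathbf{A}_r$ the two terms carrying $d\mathbf{A}_r$ appear only as $(d\mathbf{A}_r)\t \bs P_k\,\mathbf{A}_r^k$ and $(\mathbf{A}_r^k)\t \bs P_k\, d\mathbf{A}_r$. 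But $\mathbf{A}_r^k$ spans precisely the eigenspace on which $\bs P_k$ acts through the trailing block $\mathbf{S}_2^k=\mathbf 0$; substituting the eigendecomposition gives $\bs P_k\,\mathbf{A}_r^k=\mathbf 0$ and $(\mathbf{A}_r^k)\t \bs P_k=\mathbf 0$ under $\mathrm{rank}(\bs P_k)=r$. Hence both $d\mathbf{A}_r$ terms drop and only $(\mathbf{A}_r^k)\t \mathbf{\Delta}_n\,\mathbf{A}_r^k$ survives. Establishing this rigorously requires the eigenvector-perturbation calculus of \cite{Kleibergen06}, together with the observation that the normalization $(\mathbf{U}_{22}^k)^{-1}(\mathbf{U}_{22}^k(\mathbf{U}_{22}^k)\t)^{1/2}$ in the definition of $\mathbf{A}_r^k$ is exactly what makes the estimated and population rotations agree to the required order on the relevant subspace.

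Finally, with the linearization in hand, the assumed central limit theorem $\sqrt n\,\text{vec}(\widehat{\bs P}_k-\bs P_k)\overset{d}{\to}\mathcal N(0,\bs\Sigma_k)$, the consistency $\widehat{\mathbf{A}}_r^k\overset{p}{\to}\mathbf{A}_r^k$ (from continuity of the eigendecomposition under the spectral gap), and Slutsky's theorem yield $\sqrt n\,\widehat{\bs\lambda}_r^k\overset{d}{\to}\mathcal N\big(0,(\mathbf{A}_r^k\otimes\mathbf{A}_r^k)\t\bs\Sigma_k(\mathbf{A}_r^k\otimes\mathbf{A}_r^k)\big)=\mathcal N(0,\bs\Omega_r^k)$. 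The imposed non-singularity of $\bs\Omega_r^k$ guarantees a non-degenerate limit, completing the reduction to Theorem~1 of \cite{Kleibergen06}.
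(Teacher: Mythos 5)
Your proposal is correct and follows essentially the same route as the paper, which simply invokes Theorem 1 of Kleibergen (2006); you additionally (and correctly) reconstruct the internal mechanics of that theorem — the population value $\bs\lambda_r^k=\bs 0$ under the rank hypothesis, the vanishing of the eigenvector-perturbation terms because $\bs P_k\mathbf{A}_r^k=\bs 0$, and the resulting linearization plus Slutsky. No gap to report.
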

 \cite{Kleibergen06} proposed the statistic called the rk-statistic:
\[
\text{rk}^k(r) = n (\widehat{\bs{\lambda}}_r^k)^\top (\widehat{\boldsymbol{\Omega}}^k_r) ^{-1}\widehat{\bs{\lambda}}_r^k,
\]
where $\widehat{\boldsymbol{\Omega}}^k_r$ is a consistent estimator for $\boldsymbol{\Omega}^k_r$. If the assumptions of proposition \ref{rk} hold, $\mathrm{rk}(r)$
converges in distribution to a $\chi^2(|\Delta_t|-r)$ random variable
as $n \rightarrow \infty$.

 When $T \geq 3$, we test the null hypothesis that $\text{rank}(\bs P_k) \leq r$ for each $k = 1,\dots,T$. By selecting partitions such that $|\Delta_t| = r + 1$, we define  the following ave- and  max-rk test statistics:
\begin{align*}
\text{ave-rk}(r) = \frac{1}{T} \sum_{k=1}^T \mathrm{rk}^k(r)\quad\text{and}\quad
\text{max-rk}(r) = \max\{ \mathrm{rk}^1(r),...,\mathrm{rk}^T(r)\}.
\end{align*}
See Section 3.4 of \cite{Kasahara2014} for the asymptotic distribution of these statistics.

%In practice, we the  Bayesian bootstrap  to obtain  the bootstrap p-value for testing the rank, where we reject the null hypothesis that \(\text{rank}(\bs P_k)\leq r\) for all \(k=1,\dots,T\) at significance level \(\alpha\) if the bootstrap p-value is strictly less than \(\alpha\). See Appendix \ref{sec:bootstrap}. The estimate for the lower bound for the number of components is obtained by applying the sequential hypothesis procedure as discussed in Section 3.2 of  \cite{Kasahara2014}.

In practice, we use the Bayesian bootstrap to obtain the bootstrap p-value for testing the rank, rejecting the null hypothesis $\text{rank}(\bs P_k)\leq r$ for all $k=1,\dots,T$ at significance level $\alpha$ if the bootstrap p-value is strictly less than $\alpha$; see Appendix \ref{sec:bootstrap}. We estimate the lower bound for the number of components by applying the sequential hypothesis testing procedure described in Section 3.2 of \cite{Kasahara2014}.

\section{Simulation}\label{sec:simulation}

%\section{Simulation Section}
This section evaluates the relative performance of several approaches—rank tests based on ave-rk and max-rk statistics, LRT , AIC, and BIC—in estimating the number of components across different models through simulation studies. For our simulation and empirical analyses, we set the constraint that $\alpha_j, \tau_{jk} \geq 0.05$ and $\sigma_j \geq 0.05 \hat{\sigma}_0$, effectively setting $c_1=0.05$ and $c_2=0.05$. We estimate the parameter by the MLE under these constraints using the EM algorithm as described in Appendix \ref{sec:em}. 

%For our simulation and empirical analyses, we set the constraint that $\alpha_j,\tau {jk} \geq 0.05$ and $\sigma_j \geq 0.001 \hat{\sigma}_0$, effectively setting $c_1=0.05$ and $c_2=0.001$. The simulation results are not so sensitive to changing these tuning parameters as shown in Tables \ref{tab:sensitivity-1}-\reftab:sensitivity-3} in Appendix \ref{sec:additional}. %where  $\hat{\sigma}_0$ is an estimate for $\sigma$ for one-component model.

\begin{table}[h!]
\centering
\begin{threeparttable}
\caption{Rejection frequencies (Size) for testing $H_0: M=2$ against $H_1: M=3$ by the ave-rk, the max-rk, and the LRTs at the 5\% significance level and selection frequencies by the AIC and the BIC with \textbf{data generated from two-component models}}
\label{tab:size_test_2}
\begin{tabular*}{0.9\linewidth}{@{\extracolsep{\fill}}ll|ccccc@{}}
\toprule
$n$ & $(\mu_1,\mu_2)$ & \textbf{ave-rk} & \textbf{max-rk} & \textbf{LR} & \textbf{AIC} & \textbf{BIC} \\ \midrule
$200$ &  $(-1, 1)$    & 2.6          & 2.4          & 4.6          & 40.8         & 0.2          \\
$200$ & $(-0.5, 0.5)$ & 0.0            & 0.2          & 5.4          & 36.8         & 0.0            \\
$400$ & $(-1, 1)$  & 5.2          & 6.0            & 6.2          & 46.0           & 0.2          \\
$400$ & $(-0.5, 0.5)$ & 0.0            & 0.0            & 4.8          & 50.2         & 0.2          \\ \bottomrule
\end{tabular*}

% Footnotes
\begin{tablenotes}
\footnotesize
 \textbf{Notes:}  Based on 500 simulation repetitions, each simulated dataset consists of $(n,T)$ panel observations with $T=3$. The dataset is generated from the two-component model (\ref{eq:fm}) with (\ref{eq:f1}) without covariates, varying the mean parameters $(\mu_1,\mu_2)$ while holding the mixing proportion fixed at $\alpha=0.5$ and standard deviations constant at $(\sigma_1,\sigma_2) = (0.8, 1.2)$ across all simulations. The reported values represent rejection frequencies (in percentages) at the 5\% significance level for the ave-rk, max-rk, and LRTs under the null hypothesis $H_0: M=2$. For the AIC and BIC methods, we report the percentages of simulations in which these criteria selected the three-component model over the true two-component model.
\end{tablenotes}
\end{threeparttable}
\end{table}

\begin{table}[h!]
\centering
\begin{threeparttable}
\caption{Rejection frequencies (Power) for testing $H_0: M=2$ against $H_1: M=3$ by the ave-rk, the max-rk, and the LRTs  at the 5\% significance level and selection frequencies by the AIC and the BIC with \textbf{data generated from three-component models}}
\label{tab:power_test_2}
\begin{tabular*}{0.9\linewidth}{@{\extracolsep{\fill}}ll|ccccc@{}}
\toprule
$n$ & $(\mu_1,\mu_2,\mu_3)$ & \textbf{ave-rk} & \textbf{max-rk} & \textbf{LR} & \textbf{AIC} & \textbf{BIC} \\ \midrule
$200$ & $(-0.5, 0, 1.5)$ & 0.0   & 0.6  & 11.2 & 46.8 & 0.6  \\
$200$ & $(-1.5, 0, 1.5)$ & 53.2 & 36.4 & 99.8 & 100  & 95.4 \\
$400$ & $(-0.5, 0, 1.5)$ & 3.2  & 3.0    & 13.6 & 60.6 & 0.4  \\
$400$ & $(-1.5, 0, 1.5)$ & 91.8 & 78   & 100  & 100  & 100  \\ \bottomrule
\end{tabular*}

\begin{tablenotes}
\footnotesize
 \textbf{Notes:} Based on 500 simulation repetitions, each simulated dataset consists of $(n,T)$ panel observations with $T=3$.  The dataset is generated from the $3$-component model (\ref{eq:fm}) with (\ref{eq:f1}) without covariates, using various values of $(\mu_1,\mu_2,\mu_3)$ while keeping $(\alpha_1,\alpha_2,\alpha_3)=(1/3, 1/3, 1/3)$ and $(\sigma_1,\sigma_2,\sigma_3) =(1, 1, 1)$ fixed across all simulations.   The reported values represent the rejection frequencies (in percentages) at the 5\% significance level for each of the ave-rk, max-rk, and LRTs under the null hypothesis $H_0: M=2$.  For the AIC and BIC methods, we report the percentages of simulations in which these criteria selected the true three-component model over the two-component model.

\end{tablenotes}
\end{threeparttable}
\end{table}

Tables \ref{tab:size_test_2} and \ref{tab:power_test_2} summarize size and power performance of statistical methods for testing $H_0: M=2$ against $H_1: M=3$ in models with conditionally independent errors without covariates, as defined by equations (\ref{eq:fm}) and (\ref{eq:f1}), based on $500$ simulations with $T=3$. Results compare ave-rk, max-rk, LR, AIC, and BIC methods, where we report the selection frequencies between two- and three-components models for AIC and BIC.

Table \ref{tab:size_test_2} presents rejection frequencies at the 5\% significance level, using data from two-component models without covariates. The ave-rk and max-rk tests achieve near-nominal levels for larger samples ($N=400$) and distinct means ($\mu=(-1,1)$) but underperform with smaller samples or closer means ($\mu=(-0.5,0.5)$). The LRT consistently maintains near-nominal performance. AIC often overestimates component counts, while BIC remains conservative, rarely overestimating.

% \textbf{[To Jasmine: What are the values of $(\alpha_1,\alpha_2,\alpha_3)$ and $(\sigma_1,\sigma_2,\sigma_3) $?]}

Table \ref{tab:power_test_2} evaluates power using data from three-component models. The LRT exhibits excellent power (100\%) with clearly separated means ($\mu=(-1.5,0,1.5)$) for both sample sizes ($N=200,400$), outperforming ave-rk and max-rk tests even when mean separations narrow ($\mu=(-0.5,0,1.5)$).  While the BIC closely matches LRT performance under clear mean separation, it becomes overly conservative as the means become closer, selecting the true three-component model less frequently than the LRT. The AIC chooses the three-component models more often than other methods, reflecting its tendency to  overestimate the number of components.

\begin{table}[h!]
    \centering
    \begin{threeparttable}
         \caption{Selection frequencies when data generated from an estimated three-component model with \textbf{normal error density} ($M_0=3$, $\bs{\mathcal{K}=1}$, $T=3$)}
            \label{tab:empirical_test_normal}
            \begin{tabular*}{0.9\linewidth}{@{\extracolsep{\fill}}l | cccccc@{}}
                \toprule
             \multicolumn{7}{c}{ \textbf{Panel A:\  Sample Size of $\bs{n=50}$}}\\  \midrule
               \textbf{Methods ($q_n$, Error Density) }  & \textbf{M=1} & \textbf{M=2} & \textbf{M=3} & \textbf{M=4} & \textbf{M=5} & \textbf{M=6} \\
                \midrule
              %  \textbf{\underline{Correctly-specified density}}&\multicolumn{6}{c}{}\\
                \textbf{AIC (Normal)} & 0 & 4 & 84 & 12 & 0 & 0 \\
                \textbf{BIC (Normal)} & 0 & 25 & 73 & 2 & 0 & 0 \\ %\hline
                \textbf{LR ($0.01$, Normal)} & 0 & 23 & 74 & 3 & 0 & 0 \\
                \textbf{LR ($0.05$, Normal)} & 0 & 11 & 82 & 7 & 0 & 0 \\
                \textbf{LR ($0.10$, Normal)} & 0 & 9 & 83 & 8 & 0 & 0 \\  \midrule
                \textbf{AIC (Mixture)} & 0 & 4 & 63 & 22 & 9 & 2 \\
                \textbf{BIC (Mixture)} & 0 & 58 & 40 & 2 & 0 & 0 \\
                \textbf{LR ($0.01$,  Mixture)} & 1 & 74 & 25 & 0 & 0 & 0 \\
                \textbf{LR ($0.05$,  Mixture)} & 0 & 54 & 46 & 0 & 0 & 0 \\
                \textbf{LR ($0.10$,  Mixture)} & 0 & 34 & 64 & 2 & 0 & 0 \\  \midrule
                \textbf{ave-rk ($0.05$)} & 10 & 90 & 0 & 0 & 0 & 0 \\
                \textbf{max-rk ($0.05$)} & 11 & 89 & 0 & 0 & 0 & 0 \\  \midrule  \midrule
             \multicolumn{7}{c}{ \textbf{Panel B: \ Sample Size of $\bs{n=225}$}}\\  \midrule
               \textbf{Methods ($q_n$, Error Density) }  & \textbf{M=1} & \textbf{M=2} & \textbf{M=3} & \textbf{M=4} & \textbf{M=5} & \textbf{M=6} \\
                \midrule 
               %                 \textbf{\underline{Correctly-specified density}}\\
                \textbf{AIC (Normal)} & 0 & 0 & 81 & 19 & 0 & 0 \\
                \textbf{BIC (Normal)} & 0 & 0 & 100 & 0 & 0 & 0 \\
                \textbf{LR ($0.01$, Normal)} & 0 & 0 & 99 & 1 & 0 & 0 \\
                \textbf{LR ($0.05$, Normal)} & 0 & 0 & 94 & 6 & 0 & 0 \\
                \textbf{LR ($0.10$, Normal)} & 0 & 0 & 89 & 11 & 0 & 0 \\ \midrule
               %    \textbf{\underline{Over-specified density}}\\
                \textbf{AIC (Mixture)} & 0 & 0 & 70 & 27 & 3 & 0 \\
                \textbf{BIC (Mixture)} & 0 & 0 & 100 & 0 & 0 & 0 \\
                \textbf{LR ($0.01$, Mixture)} & 0 & 0 & 100 & 0 & 0 & 0 \\
                \textbf{LR ($0.05$, Mixture)} & 0 & 0 & 95 & 5 & 0 & 0 \\
                \textbf{LR ($0.10$, Mixture)} & 0 & 0 & 90 & 10 & 0 & 0 \\  \midrule
                \textbf{ave-rk ($0.05$)} & 0 & 69 & 31 & 0 & 0 & 0 \\
                \textbf{max-rk ($0.05$)} & 0 & 80 & 20 & 0 & 0 & 0 \\
                \bottomrule
            \end{tabular*}
            \begin{tablenotes}
                \footnotesize
 \textbf{Notes:}    Based on 100 simulation repetitions, each simulated dataset consists of $(n,T)$ panel observations with $n=50$ or $225$ and $T=3$. The datasets are generated from the three-component model (\ref{eq:fm}) with normal error density (\ref{eq:f1}), without covariates, using parameter values estimated from Chilean fabricated metal products industry data: mixing probabilities $\bs\alpha = [0.352, 0.402, 0.245]$, means $\bs\mu = [-1.01, -0.557, -0.242]$, and standard deviations $\bs\sigma = [0.464, 0.187, 0.195]$. The true number of components is $M_0=3$. The reported values represent the percentages of simulations in which each criterion—AIC, BIC, ave-rk, max-rk, and LRTs—selected a given number of components.
        \end{tablenotes}

    \end{threeparttable}
\end{table}

\begin{table}[h!]
\centering
    \begin{threeparttable}
         \caption{Selection frequencies when data generated from an estimated three-component model with \textbf{normal mixture error density} ($M_0=3$, $\bs{\mathcal{K}=2}$,  $T=3$)}
    \label{tab:empirical_test_mixture}
    \begin{tabular*}{0.9\linewidth}{@{\extracolsep{\fill}}l | cccccc@{}}
        \toprule
                     \multicolumn{7}{c}{ \textbf{Panel A:\  Sample Size of $\bs{n=50}$}}\\  \midrule
           \textbf{Methods ($q_n$, Error Density) } & \textbf{M=1} & \textbf{M=2} & \textbf{M=3} & \textbf{M=4} & \textbf{M=5} & \textbf{M=6} \\  \midrule
            %    \textbf{\underline{Under-specified density}}\\
            \textbf{AIC (Normal)} & 0 & 9 & 80 & 10 & 1 & 0 \\
            \textbf{BIC (Normal)} & 0 & 32 & 68 & 0 & 0 & 0 \\
            \textbf{LR ($0.01$, Normal)} & 0 & 33 & 66 & 1 & 0 & 0 \\
            \textbf{LR ($0.05$, Normal)} & 0 & 17 & 79 & 3 & 1 & 0 \\
            \textbf{LR ($0.10$, Normal)} & 0 & 13 & 79 & 7 & 1 & 0 \\  \midrule
          %      \textbf{\underline{Correctly-specified density}}\\
            \textbf{AIC (Mixture)} & 0 & 16 & 44 & 32 & 5 & 3 \\
            \textbf{BIC (Mixture)} & 1 & 44 & 50 & 5 & 0 & 0 \\
            \textbf{LR ($0.01$,  Mixture)} & 1 & 75 & 24 & 0 & 0 & 0 \\
            \textbf{LR ($0.05$,  Mixture)} & 0 & 52 & 44 & 4 & 0 & 0 \\
            \textbf{LR ($0.10$,  Mixture)} & 0 & 37 & 56 & 7 & 0 & 0 \\  \midrule
            \textbf{ave-rk ($0.05$)} & 11 & 89 & 0 & 0 & 0 & 0 \\
            \textbf{max-rk ($0.05$)} & 10 & 90 & 0 & 0 & 0 & 0 \\
              \midrule  \midrule
                                 \multicolumn{7}{c}{ \textbf{Panel B:\  Sample Size of $\bs{n=225}$}}\\  \midrule
           \textbf{Methods ($q_n$, Error Density) } & \textbf{M=1} & \textbf{M=2} & \textbf{M=3} & \textbf{M=4} & \textbf{M=5} & \textbf{M=6} \\
           \midrule
           %     \textbf{\underline{Under-specified density}}\\
            \textbf{AIC (Normal)} & 0 & 0 & 44 & 54 & 2 & 0 \\
            \textbf{BIC (Normal)} & 0 & 0 & 96 & 4 & 0 & 0 \\
            \textbf{LR ($0.01$, Normal)} & 0 & 0 & 84 & 16 & 0 & 0 \\
            \textbf{LR ($0.05$, Normal)} & 0 & 0 & 68 & 32 & 0 & 0 \\
            \textbf{LR ($0.10$, Normal)} & 0 & 0 & 54 & 46 & 0 & 0 \\ \midrule
          %      \textbf{\underline{Correctly-specified density}}\\
            \textbf{AIC (Mixture)} & 0 & 0 & 56 & 37 & 6 & 1 \\
            \textbf{BIC (Mixture)} & 0 & 0 & 96 & 4 & 0 & 0 \\
            \textbf{LR ($0.01$,  Mixture)} & 0 & 0 & 89 & 11 & 0 & 0 \\
            \textbf{LR ($0.05$,  Mixture)} & 0 & 0 & 82 & 18 & 0 & 0 \\
            \textbf{LR ($0.10$,  Mixture)} & 0 & 0 & 76 & 23 & 1 & 0 \\ \midrule
            \textbf{ave-rk ($0.05$)} & 0 & 58 & 42 & 0 & 0 & 0 \\
            \textbf{max-rk ($0.05$)} & 0 & 74 & 26 & 0 & 0 & 0 \\
            \bottomrule
            \end{tabular*}
            \begin{tablenotes}
                \footnotesize
                                \textbf{Notes:}    Based on 100 simulation repetitions, each simulated dataset consists of $(n,T)$ panel observations with $n=50$ or $225$ and $T=3$.  The datasets are generated from the three-component model (\ref{eq:fm}) with normal mixture error density (\ref{eq:f1-mixture}), without covariates, using parameter values estimated from Chilean fabricated metal products industry data: mixing probabilities $\bs\alpha =  [0.275, 0.247, 0.478]$,  normal mixture error distribution mixing probabilities $\bs\tau=       \{(0.818, 0.182), (0.621, 0.379), (0.105, 0.895)\}$, means $\bs\mu= \{(-1.099, -1.04), (-0.249, -0.232), (-0.992, -0.539)\}$, standard deviations $\bs\sigma=[0.465, 0.197, 0.178]$.  The simulation imposes that $\alpha$ (mixing probabilities) is bounded between 0.05 and 0.95, and $\tau$ (within-component mixing probabilities) is bounded between 0.05 and 0.95. The true number of components is $M_0=3$. The reported values represent the percentages of simulations in which each criterion—AIC, BIC, ave-rk, max-rk, and LRTs—selected a given number of components.
            \end{tablenotes}
    \end{threeparttable}
\end{table}

We also evaluate sequential testing procedures under realistic conditions by simulating datasets from three-component models estimated from real data. Specifically, we consider different sample sizes and various specifications for density functions and stochastic processes governing the error terms. This approach enables us to assess how model misspecification impacts the accuracy of estimating the number of components.

Table \ref{tab:empirical_test_normal} summarizes the selection frequencies (in percentages) for identifying the number of components (from 1 to 6) using the AIC, BIC, LRT, and ave-rk and max-rk tests in simulations for the sample sizes of $n=50$ and $225$ with $T=3$. Data are generated from a three-component model without covariates under \textit{normal error density}, as defined by equations (\ref{eq:fm}) and (\ref{eq:f1}), estimated using Chilean fabricated metal products industry data. The model features mixing probabilities $\bs\alpha = [0.352, 0.402, 0.245]$, means $\bs\mu = [-1.01, -0.557, -0.242]$, and standard deviations $\bs\sigma = [0.464, 0.187, 0.195]$. Results are reported for correctly specified normal errors ("Normal") and over-specified two-component normal mixtures ("Mixture"), with LRTs evaluated at significance levels $q_n=0.01$, $0.05$, and $0.10$ while the ave- and max-rk tests implemented at 5\% significance level. The parameters are estimated by the MLE under the constraint $\alpha_j, \tau_{jk} \geq 0.05$ and $\sigma_j \geq 0.05 \hat{\sigma}_0$ but the simulation results are not so sensitive to changing these tuning parameters as shown in Tables \ref{tab:sensitivity-1}--\ref{tab:sensitivity-3} in Appendix \ref{sec:additional}.

% \sout{\textbf{[To Jasmine: Let's report the corresponding table for food products and textile with the actual sample size in the appendix]}}

Panel A of Table \ref{tab:empirical_test_normal} reports the results for a small sample size of $n=50$, showing that both the BIC and LR tests effectively identify the correct number of components when the model is correctly specified with normal error density. Specifically, the LR tests select $M = 3$ most frequently, with accuracy improving as the significance level increases: 74\% at $q_n = 0.01$, 82\% at $q_n = 0.05$, and 83\% at $q_n = 0.10$, whereas BIC selects the correct model ($M = 3$) in 73\% of cases. Thus, BIC tends to underestimate the number of components relative to the LR tests under correct specification. On the other hand, AIC selects the correct number of components in 84\% of cases, although it overestimates the number in 12\% of cases.

Thus, for the correctly specified models, the LR test outperforms BIC with similar component means and small samples, as BIC underestimates component numbers. While BIC is simpler to implement, bootstrapping makes the LR test equally easy. Using all three methods provides a useful range: BIC as the lower bound, AIC as the upper bound, and LR test as a middle ground. Since rank tests typically underestimate component numbers, findings of $M\geq 2$ provide strong evidence for heterogeneity.
%The results show that the BIC and LRTs remain effective in identifying the correct number of components when the model is correctly specified with normal error density.  The LRTs with normal errors identify $M=3$ most frequently, with accuracy improving as the significance level increases: 74\% at $q_n=0.01$, 82\% at $q_n=0.05$, and 83\% at $q_n=0.10$ while BIC selects the correct model ($M=3$) in 73\% of cases. Thus, BIC tends to underestimate the number of components than the LRTs for the correctly specified model.
%On the other hand, AIC selects the correct number of components in 84\% of cases although it overestimates the number in 12\% of cases. % Both criteria occasionally select models with fewer components, with BIC choosing $M=2$ in 25\% of cases and AIC in 4\%.

 Under the over-specified scenario with two-component normal mixture error density specification, both BIC and the LR test select $M=2$ over $M=3$ more frequently compared to the correctly specified normal-density scenario. The LR test is particularly conservative at lower significance levels, choosing $M=2$ in 74\% of cases at $q_n=0.01$, but its accuracy for selecting $M=3$ improves to 64\% at $q_n=0.10$. The BIC selects $M=2$ in 58\% of cases and $M=3$ in 40\%. AIC consistently favours more complex models, selecting $M=4$ or higher in 33\% of cases.

The ave-rk and max-rk tests are highly conservative, selecting $M=2$ in nearly all cases (90\% and 89\%, respectively), and rarely identifying the true number of components.

Overall, the LR test and BIC perform comparably, though BIC sometimes underestimates component numbers when mixture components have similar means and sample sizes are small. While BIC is simpler to implement, bootstrapping makes the LR test equally easy. Using all three methods provides a useful range, especially with small samples: BIC as the lower bound, AIC as the upper bound, and the LR test as a middle ground. Furthermore, since rank tests typically underestimate component numbers, findings of $M\geq 2$ also provide strong evidence for heterogeneity.

Panel B of Table \ref{tab:empirical_test_normal}  presents results for the sample size matches the actual data at $n=225$ instead of $n=50$, where we observe a substantial improvement in the performance of BIC and LRT as the sample size increases from $n=50$ to $n=225$. Both methods correctly select $M=3$ in approximately 90\% or more of cases, even under the over-specified mixture density scenario. In contrast, the AIC continues to favor more complex models than BIC or LRT, while the ave-rk and max-rk tests tend to select $M=2$ over $M=3$ in the majority of cases.

Table \ref{tab:empirical_test_mixture} examines the case in which the true data-generating process is a three-component model with a \textit{normal mixture error density}, as described in (\ref{eq:fm}) with  (\ref{eq:f1-mixture}), rather than a simple normal error density. Data are simulated using parameter values estimated from the Chilean fabricated metal products industry, as detailed in the table notes, for sample sizes of $n=50$ and $225$.

Similar to the results obtained with the normal error density DGP, both BIC and the LR test tend to frequently select $M=2$ over $M=3$  under the normal mixture error density specification, compared to a normal error density scenario at the small sample size of $n=50$. As the sample size increases from $n=50$ to $n=225$, the accuracy of BIC and LR tests significantly improves, particularly for BIC and the LR test at $q_n=0.01$. At $n=225$, however, the LR test under the incorrect normal error density specification tends to select $M=4$ over $M=3$ more frequently than the correctly-specified normal mixture scenario, suggesting that incorrectly imposing a normal error density may lead to overestimating the number of components with the sufficiently large sample size.  The AIC has a higher tendency to select $M=4$ over $M=3$ than the BIC or the LRT while the rank-based tests continue to under-estimate the number of components.

In the Appendix, Tables~\ref{tab:empirical_test_normal_food_textile}–\ref{tab:empirical_test_mixture_food_textile} present simulation results based on estimated parameters for the food products and textiles industries, revealing patterns similar to those in Tables~\ref{tab:empirical_test_normal}–\ref{tab:empirical_test_mixture}.
 
 Appendix   \ref{sec:additional} also presents the simulation results for dynamic panel mixture models when the error terms follow AR(1) processes whose innovations are drawn from either a normal distribution or a two-component normal mixture, as defined by equations (\ref{eq:fm-dynamic}) with (\ref{eq:f1-dynamic}) or (\ref{eq:f1-dynamic-mixture}), respectively, where both BIC and LRT performs well under the realistic DGP based on an estimated model from  the Chilean fabricated metal products industry. See Tables \ref{tab:sequential_test_ar1_normal}-\ref{tab:sequential_test_ar1_mixture}.\footnote{Furthermore, simulation results using data generated from the estimated models with labor-augmented technological changes are presented in Tables~\ref{tab:sequential_test_ar1_lat_normal} and~\ref{tab:sequential_test_ar1_lat_mixture}.}

  % in Appendix  \ref{sec:additional}.% sample size.

%The BIC most frequently selects the true three-component model (68\% for Normal, 50\% for Mixture), while the AIC tends to select more complex models, with the correct model chosen 80\% (Normal) and 44\% (Mixture) of the time. The LRT's accuracy improves with correct specification and lower significance levels: for the Normal error, the true model is selected 66\% ($q_n=0.01$), 79\% ($q_n=0.05$), and 79\% ($q_n=0.10$) of the time; for the Mixture error, selection rates are lower (24\%, 44\%, and 56\%, respectively). Rank-based tests (ave-rk and max-rk), which are robust to error specification, almost always underestimate the number of components, selecting two components in 89\% and 90\% of cases, respectively.

%\textbf{[To Jasmine: Could you also do simulation based on the estimate of labor-augmented technological changes with $M=2$? We would like to know what would be the reasonable choice for $q_n$ here.]}
%\textcolor{blue}{\textbf{Note: I have added the sed on the estimate of labor-augmented technological changes. \ref{tab:sequential_test_ar1_lat_normal} and \ref{tab:sequential_test_ar1_lat_mixture}, for the simulation based on the estimate of labor-augmented technological changes with $M=2$. }}

\section{Empirical analysis: plant heterogeneity in production function}\label{sec:empirics}

In this section, we investigate heterogeneity beyond Hicks-neutral technological differences using finite mixture specifications (\ref{eq:spec1}), (\ref{eq:spec2}), and (\ref{eq:spec3}), as illustrated in Examples \ref{example-1}, \ref{example-2}, and \ref{example-3}. Specifically, we estimate the number of latent technology types characterized by variations in input elasticities and labor-augmented technological changes through the application of the Likelihood Ratio Test (LRT), Akaike Information Criterion (AIC), Bayesian Information Criterion (BIC), and the nonparametric rank test based on ave-rk statistics.

  \begin{table}[h!]
  \centering
  \caption{Descriptive Statistics for Chilean Plants}
  \label{tab:sum}
   \begin{threeparttable}
    \begin{tabularx}{\textwidth}{@{\extracolsep{\fill}} lcccccccc}
      \toprule
      \multicolumn{9}{c}{\textbf{Chilean Plants}} \\
      \midrule
      \multicolumn{9}{c}{\textbf{Panel A: \ $\bs{T=3}$}} \\
      \midrule
      \textbf{Industry} & \textbf{NT} & \textbf{N} & import share & $\frac{P_{V,t} V_{it}}{P_{Y,t} Y_{it}}$ & $\frac{P_{V,t} V_{it}}{P_{L,t} L_{it}}$ & $\log K_{it}$ & $\log L_{it}$ & $\log Y_{it}$ \\
      \midrule
      Fabricated Metal  & 675 & 225 & 0.100 & 0.573 & 2.464 & 7.986 & 3.966 & 12.926 \\
         &     &     & (0.197) & (0.216) & (2.905) & (1.857) & (0.975) & (1.380) \\
      Food Products & 2382 & 794 & 0.023 & 0.603 & 4.033 & 7.420 & 3.997 & 12.685 \\
      &      &     & (0.083) & (0.145) & (4.525) & (2.296) & (1.035) & (1.656) \\
      Textiles & 588 & 196 & 0.176 & 0.572 & 3.568 & 8.280 & 3.944 & 12.914 \\
         &     &     & (0.246) & (0.213) & (5.120) & (1.880) & (1.098) & (1.344) \\
      \midrule
      \multicolumn{9}{c}{\textbf{Panel B: \ $\bs{T=5}$}} \\
      \midrule
      Fabricated Metal  & 515 & 103 & 0.086 & 0.555 & 2.768 & 7.775 & 3.880 & 12.766 \\
         &     &     & (0.182) & (0.219) & (4.097) & (1.909) & (0.992) & (1.364) \\
      Food Products & 2195 & 439 & 0.026 & 0.602 & 4.878 & 7.610 & 4.034 & 12.834 \\
      &      &     & (0.096) & (0.149) & (5.460) & (2.361) & (1.008) & (1.693) \\
      Textiles & 465 & 93 & 0.209 & 0.563 & 4.314 & 8.417 & 3.932 & 12.947 \\
       &     &     & (0.265) & (0.215) & (5.288) & (2.163) & (1.164) & (1.448) \\
      \bottomrule
    \end{tabularx}

    \begin{tablenotes}
      \footnotesize
      \item \textbf{Notes:} \textit{Panel A:} Data are from 1994 to 1996. \textit{Panel B:} Data are from 1992 to 1996. Summary statistics are based on Chilean plant-level data. Only plants with continuous data for all $T$ years are included (balanced panel). Observations with $\log (V_{it} / Y_{it}) \leq -3$ or $\log (V_{it} / Y_{it}) > \log(2)$ are excluded.
      NT: Number of total observations; N: Number of unique plants.
      Means are reported; standard deviations are in parentheses below each mean.
      ``Import share'' is the share of imported materials in total materials.
      $\frac{P_{V,t} V_{it}}{P_{Y,t} Y_{it}}$ is the revenue share of material input; $\frac{P_{V,t} V_{it}}{P_{L,t} L_{it}}$ is the ratio of material input to labor input, both using average prices at time $t$. $K_{it}$, $L_{it}$, and $Y_{it}$ denote capital, labor, and output, respectively.
    \end{tablenotes}
    \end{threeparttable}
\end{table}

For our analysis, we use plant-level panel data from Chilean manufacturing plants covering the period 1992--1996.\footnote{Please refer to \cite{KASAHARA2008} for details on the dataset.} We focus on the three largest industries---Food Products, Fabricated Metal Products, and Textiles---and include observations of plants with continuous data entry for either $T=3$ or $T=5$ consecutive years within each industry. We use data from 1994--1996 for estimating the model with conditionally independent errors, while data from 1992--1996 are used for dynamic panel mixture models. The selected panel lengths correspond to the minimum length required for non-parametric identification of the respective panel mixture models.\footnote{Non-parametric identification requires a panel length of $T=3$ for models with conditionally independent errors \citep{Kasahara2009} and $T=5$ for dynamic panel mixture models \citep{Hu2012,Kasahara2022esri}.} Table~\ref{tab:sum} presents summary statistics for the revenue share of materials and the log of gross output among other variables in these industries. The note section of Table~\ref{tab:sum} further explains the sample selection criteria.

 \subsection{Plant heterogeneity in output elasticity of material input}

Empirical studies often assume a Cobb-Douglas production technology
$\log O_{it} = \gamma_0 + \gamma_v V_{it} + \gamma_\ell \log L_{it} + \gamma_k \log K_{it} +\gamma_z \log Z_{it} + \omega_{it}$, where  $O_{it}$, $V_{it}$,  $L_{it}$, $K_{it}$, and $Z_{it}$ represent output, material input, labor,  capital, and other observed characteristics, respectively.  Because  input elasticities  are constant under Cobb-Douglass specification,  the first-order condition for profit maximization (\ref{y-foc}) in Example \ref{example-1} implies
\begin{equation}\label{foc-cd}
\frac{P_{V,t} V_{it}}{P_{O,t} O_{it}}=\gamma_v,
\end{equation} i.e., the ratios of material input to output is constant, where  $P_{O,t}$ and $P_{V,t}$ are output and material input prices, respectively.

The implication of Equation (\ref{foc-cd}) that material input shares are identical across plants can be empirically tested. Figure \ref{fig:empirical_all} displays histograms illustrating the distribution of plant-level material input shares in the Food Products, Fabricated Metal Products, and Textiles industries, clearly revealing substantial variability across plants within each industry. In Column 4 of Table \ref{tab:sum}, standard deviations of the revenue shares attributable to material costs are considerable, ranging from $0.145$ to $0.216$, even within narrowly defined industries. Consequently, the hypothesis that the ratio of material input to output is constant across plants is overwhelmingly rejected by the data.

While the substantial variability in material input shares across plants suggests potential plant heterogeneity in the output elasticity of material input, the implication derived from equation (\ref{foc-cd}) is strictly valid only under the Cobb-Douglas assumption. Under more general production function specifications, such as CES or translog forms, output elasticities with respect to inputs naturally depend on the levels of materials, labor, capital, and other observable plant characteristics, even in the absence of technological heterogeneity. Additionally, the relationship between material input-to-output ratios and the coefficient $\gamma_v$ may be influenced by idiosyncratic shocks.\footnote{As \cite{Gandhi2020} demonstrate, if the production function is specified as
$\log O_{it} = \gamma_0 + \gamma_v V_{it} + \gamma_\ell \log L_{it} + \gamma_k \log K_{it} +\gamma_z \log Z_{it} + \omega_{it}+\epsilon_{it}$, where $\epsilon_{it}$ is an i.i.d. mean zero random variable whose realization is unknown when the intermediate input $V$ is selected, then the first-order condition for profit maximization yields $\frac{P_{V,t} V_{it}}{P_{O,t} O_{it}}=\gamma_v \mathbb{E}[e^{\epsilon_{it}}]e^{-\epsilon_{it}}$. Consequently, the relationship between material input-to-output ratios and the coefficient $\gamma_v$ is subject to idiosyncratic shocks.}

To investigate persistent plant heterogeneity in output elasticities of material input within a more general production function framework while accounting for transitory idiosyncratic shocks, we specify that the logarithm of the output elasticity of material input is linearly related to observed plant characteristics ($\mathbf{x}_{it}$), such as the log of capital and other plant-specific attributes, conditional on the latent technological type of the plant, denoted by $D_i = j$, as
\begin{equation}\label{spec-4}
\log\left(\frac{P_{V,t} V_{it}}{P_{O,t} O_{it}}\right) = \mathbf{x}_{it}^\top \boldsymbol{\beta}_j + \epsilon_{it}\quad\text{for $j=1,2,...,M$},
\end{equation}
where $\boldsymbol{\beta}_j$ represents parameters specific to the $j$-th latent technology type, and $\epsilon_{it}$ denotes an idiosyncratic shock to plant-level input elasticities. The distribution of $\epsilon_{it}$ is type-specific and can be represented as either a single-component normal distribution or a mixture of normals:
\[
\epsilon_{it} \overset{iid}{\sim} \mathcal{N}(\mu_{j},\sigma_j^2) \quad \text{or} \quad \epsilon_{it} \overset{iid}{\sim} \sum_{k=1}^{{\cal{K}}} \tau_{jk}\, \mathcal{N}(\mu_{jk},\sigma_j^2)\quad\text{given $D_i = j$},
\]
where ${\cal{K}}$ represents the number of components in normal mixture error distributions.

This formulation leads directly to the finite mixture model given in equation (\ref{eq:fm}) with component density functions as outlined in equations (\ref{eq:f1}) or (\ref{eq:f1-mixture}). By estimating the number of latent technology types ($M$), we can quantify the degree of heterogeneity in output elasticities of material input among plants, even after controlling for observable plant characteristics.

\begin{figure}[ht!]
	\centering
	% First figure
	\subfigure[Fabricated Metal Products Industry]{\includegraphics[width=0.32\textwidth]{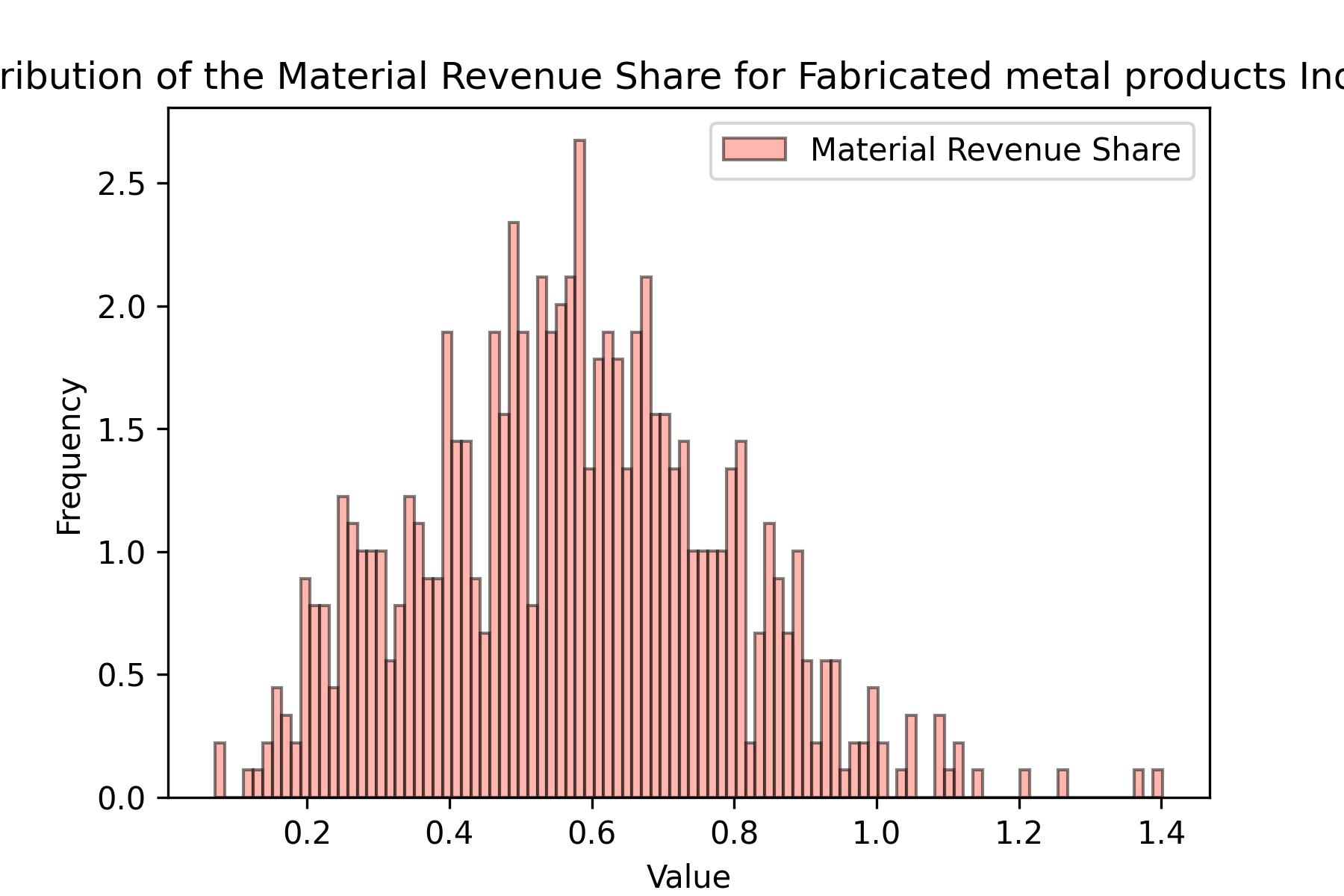 }\label{fig:metal_empirical}
	}
	\subfigure[Food Products Industry]{\includegraphics[width=0.32\textwidth]{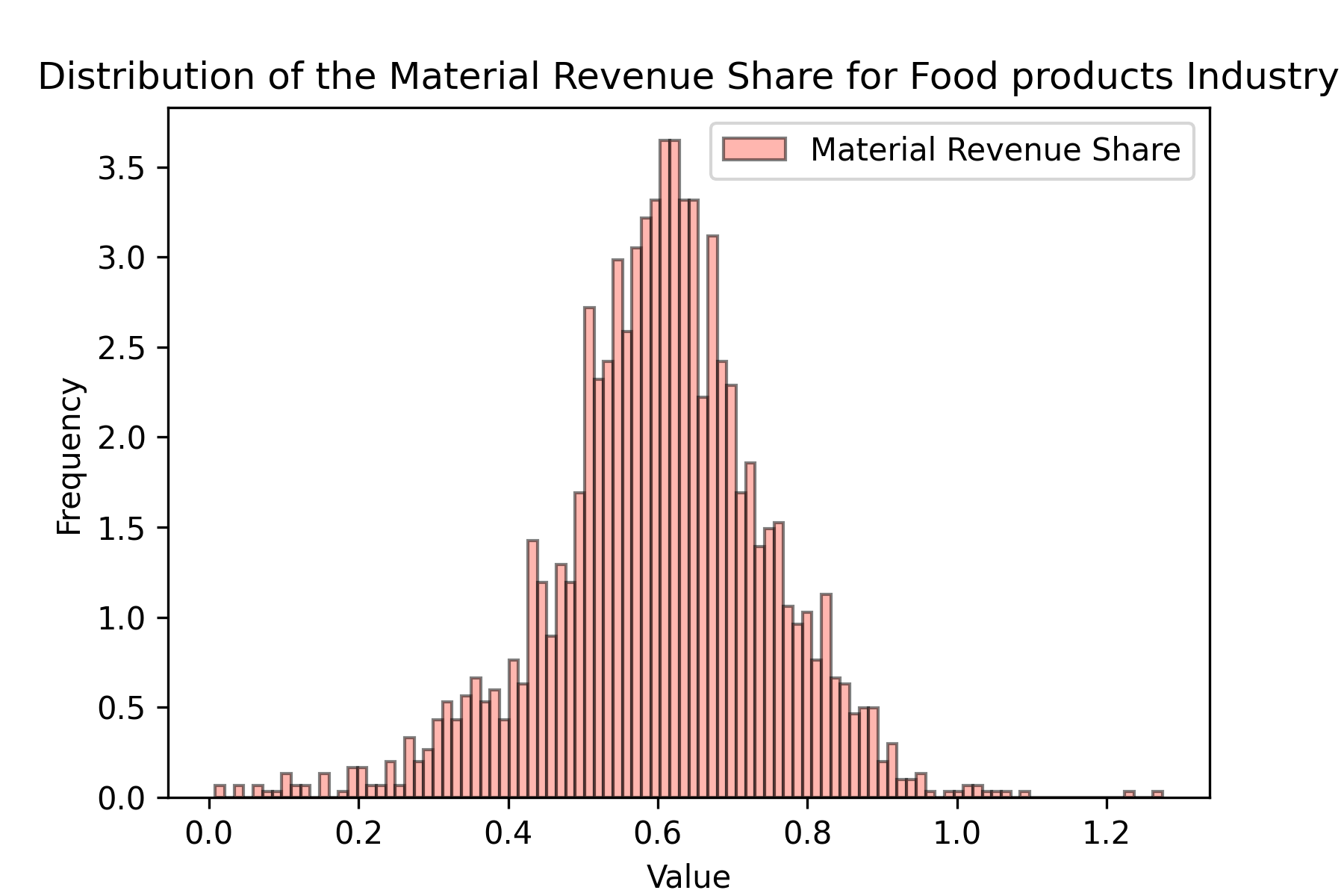  }\label{fig:food_empirical}
	}
	\subfigure[Textiles Industry]{\includegraphics[width=0.32\textwidth]{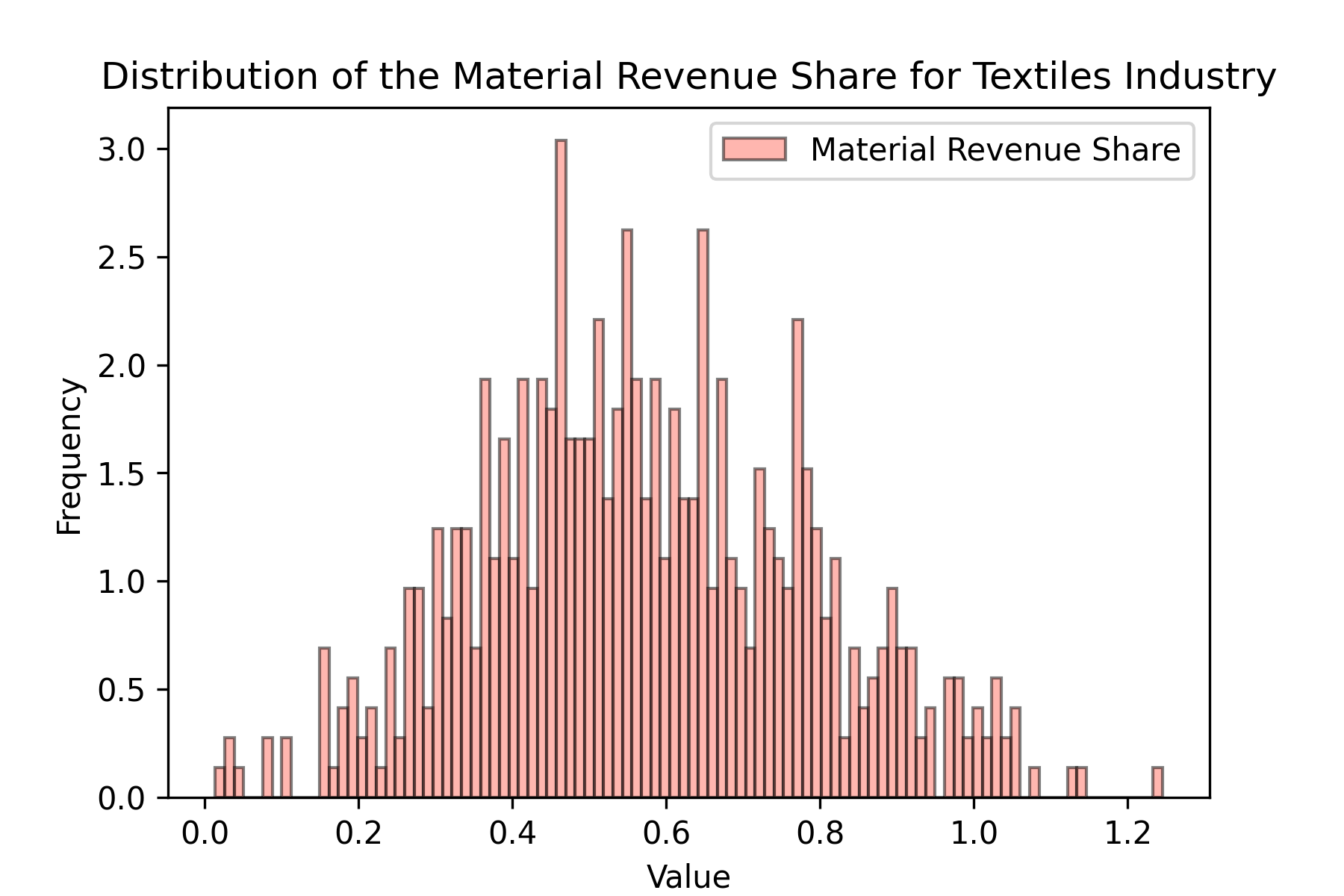}\label{fig:textiles_empirical}
	}
	\caption{Histograms of the Material Revenue Share across three industries (Metal, Food, Textiles).}
	\label{fig:empirical_all}
\end{figure}

% \begin{figure}[ht!]
% 	\centering
% 	% First figure
% 	\subfigure[Fabricated Metal Products Industry]{\includegraphics[width=0.32\textwidth]{empirical_plot_Fabricated metal products.png}\label{fig:metal_empirical}
% 	}
% 	\subfigure[Food Products Industry]{\includegraphics[width=0.32\textwidth]{empirical_plot_Food products.png}\label{fig:food_empirical}
% 	}
% 	\subfigure[Textiles Industry]{\includegraphics[width=0.32\textwidth]{empirical_plot_Textiles.png}\label{fig:textiles_empirical}
% 	}
% 	\caption{Histograms of the Material Revenue Share across three industries (Metal, Food, Textiles).}
% 	\label{fig:empirical_all}
% \end{figure}

%	Explain $Z$ being import share and dummies for finer industry classification.
%
%	[Check: auto-correlation of $V/L$. Decompose the variance of $V/L$ into across plants and within-plant variation.]
%

To determine the number of latent technology types, we conduct the sequential hypothesis test using the LRT, as outlined in Section \ref{sec:sht}, setting $q_n=0.01$ guided by the simulation results presented in Section \ref{sec:simulation} although the results are not  sensitive to an alternative choice of $q_n=0.05$.  Additionally, we apply the AIC, the BIC, and the nonparametric rank test  in determining the number of technology types.

We conduct analyses both without covariates and with covariates including the log of capital ($\log K$), imported material shares (import), and 4-digit industry classification dummies (CIIU4). To allow for flexible error structures and relax the normality assumption, we model the error terms using normal distributions as well as two-component and three-component normal mixtures.

Panel A of Table \ref{tab:model_comparison} provides a comparative analysis of the estimated number of technology types across the three industries, utilizing the AIC, BIC, LRT, and nonparametric rank test for models with conditionally independent errors as specified in equation (\ref{eq:fm}). As the flexibility of the error term structure increases from normal to multi-component mixtures, the number of components estimated generally decreases for the AIC, BIC, and LRT. For example, without covariates and assuming normally distributed error terms, the LRT estimates a high number of components—8 for Metal, 10+ for Food, and 9 for Textiles. However, when employing a more flexible three-component normal mixture error density specification, the number of components reduces to 4, 8, and 6 for Metal, Food, and Textiles, respectively. This is consistent with our simulation results reported in Table \ref{tab:empirical_test_mixture}, indicating that a less flexible parametric assumption on the error term distribution may lead to over-estimation of the number of components. Including covariates such as $\log K$, imported material shares, and controls for 4-digit industry classifications (CIIU4) further decreases the estimated number of components for the LRT to 4, 5, and 5 for Metal, Food, and Textiles, respectively. 
% although the number of components estimated by the LRT remains unchanged between models with and without covariates under the 3- or 4-component mixture specification. %This indicates that the impact of covariates on reducing the number of latent technology types is less substantial for models with more flexible error distributions.

Comparing the results of the LRT with those from the AIC and the BIC, we find that the BIC estimates a number of types similar to the LRT, whereas the AIC typically estimates a higher number. The number of technology types suggested by both the LRT and the BIC remains high for models with covariates and normal mixture error densities. Furthermore, as reported in the first row of Table \ref{tab:model_comparison}, the lower bound of the number of components estimated by the nonparametric rank test is 3 for all industries. Overall, these results provide strong evidence of substantial and persistent plant heterogeneity in material input elasticities across years.

%Figure \ref{fig:distribution_all} shows histograms of estimated error terms along with component-specific error distributions (solid lines) for models with covariates controlling for differences in $\log K$, import shares, and 4-digit industry dummies, using iid error distributions specified as three-component normal mixtures (${\cal K}=3$). In the figures, plant observations are classified into distinct technology types based on posterior probabilities derived via Bayes' theorem, with each type represented by a unique color. Guided by the LRT model selection criterion, we present a 4-component model for Fabricated Metal, a 5-component model for Food Products, and a 5-component model for Textile industries. The figure provides strong evidence for the non-normality of component-specific error distributions while the flexible normal mixture specification for error distributions successfully addresses non-normality within each component, highlighting the benefits of our flexible modeling approach.

Figure \ref{fig:distribution_all} presents histograms of estimated error terms alongside component-specific error distributions (solid lines) for models incorporating covariates that control for differences in $\log K$, import shares, and 4-digit industry dummies. The models employ iid error distributions specified as three-component normal mixtures ($\mathcal{K}=3$). Plant observations are classified into distinct technology types based on posterior probabilities calculated using Bayes' theorem, with each type indicated by a unique color. Following the LRT model selection criterion, we present a 4-component model for Fabricated Metal, a 5-component model for Food Products, and a 5-component model for Textiles. The results demonstrate clear evidence of non-normality in component-specific error distributions, while the flexible normal mixture specification effectively captures this non-normality within each component, illustrating the advantages of our modeling approach.

%To better understand the nature of heterogeneity,
%Figures \ref{fig:parameter-normal} and \ref{fig:parameter-mixture} present the estimated parameters for the two-component models with covariates, where the error densities are modeled as normal and two-component normal mixtures, respectively, where we focus on the estimated parameters  for two-component models because the confidence intervals for some parameters become long for models with more than three components.\footnote{The parameter estimates for three-component models are reported in the Appendix.}

To better understand the nature of heterogeneity, Figures \ref{fig:parameter-normal} and \ref{fig:parameter-mixture} present the estimated parameters and the 95\% confidence intervals (CI) for two-component models with covariates, where the error densities are modeled using normal and two-component normal mixture distributions, respectively.  The CI is computed based on the robust standard errors computed via the sandwich variance estimator to
account for potential misspecification. We specifically focus on the two-component models because confidence intervals for some parameters become excessively wide in models with three or more components.\footnote{Parameter estimates for three-component models are provided in Figures \ref{fig:parameter-normal-3} and \ref{fig:parameter-mixture-3} in Appendix \ref{sec:additional}.}

In Figure \ref{fig:parameter-normal}, under the normality assumption, the two latent types differ significantly in their estimated mixing proportions, means, and variances, whereas no statistically significant differences are found in the estimated coefficients for $\log K_{it}$ and import shares. Specifically, compared to the first type, the second type exhibits higher mixing proportions, higher means, and lower variances consistently across all three industries. The absence of statistical evidence for differences in $\beta_{\log K}$ and $\beta_{\text{import}}$ suggests that the Cobb-Douglas specification with two-component mixtures adequately captures the observed persistence in material shares.

Figure \ref{fig:parameter-mixture} presents parameter estimates for two-component models under the normal mixture density, reporting averaged mean parameters and average standard deviation parameters as $\hat\mu_j = \sum_{k=1}^2 \hat\tau_{jk}\hat\mu_{jk}$ and $\widehat{Var}(\epsilon_{it}| D_i=j)=\hat\sigma_j^2+\hat\tau_{j1}(1-\hat\tau_{j1}) (\hat\mu_{j1}-\hat\mu_{j2})^2$, respectively, where bootstrap is used to construct the confidence intervals for these two parameters. Mixing proportions and mean differences between latent types resemble those under normal density, while $\beta_{\log K}$ and $\beta_{\text{import}}$ are insignificant for both types. Wider confidence intervals reflect model complexity.  

\begin{table}
  \centering
  \caption{The estimated number of technology types using the AIC, the BIC, the LRT, and rank test}
  \label{tab:model_comparison}
  \begin{adjustbox}{width=\textwidth}
    \begin{tabular}{lccccccccc}
      \toprule
  \textbf{Model Specification }		& \multicolumn{3}{c}{\textbf{AIC}} & \multicolumn{3}{c}{\textbf{BIC}} & \multicolumn{3}{c}{\textbf{LRT} / \textbf{Rank Test} } \\
      \cmidrule(lr){2-4} \cmidrule(lr){5-7} \cmidrule(lr){8-10}
  \textbf{Covariates / Error Distribution Specification}	& \textbf{Metal} & \textbf{Food} & \textbf{Textiles}
      & \textbf{Metal} & \textbf{Food} & \textbf{Textiles}
      & \textbf{Metal} & \textbf{Food} & \textbf{Textiles} \\
      \midrule\midrule
      \textbf{Panel A: Mixture Models under Conditional Independence (\ref{spec-4})} \\
      \midrule
      Nonparametric Rank  Test (ave-rk)& - & - & - & - & - & - & 3 & 3 & 3 \\
      \midrule
            \textbf{Covariates / Error Distribution Specification} \\
      No Covariate / Normal & 8 & 10 & 9 & 5 & 8 & 6 & 8 & 10+ & 9 \\
      No Covariate / two-component normal mixture & 6 & 10 & 10 & 5 & 8 & 6 & 5 & 8 & 6 \\
      No Covariate / three-component normal mixture & 5 & 10 & 8 & 4 & 8 & 4 & 4 & 8 & 6 \\

      \midrule
      $\log K$, Import, CIIU4 / Normal   & 7 & 10 & 9 & 4 & 8 & 6 & 4 & 10+ & 7  \\
      $\log K$, Import, CIIU4 / two-component normal   mixture & 4 & 8 & 9 & 4 & 6 & 6 & 4 & 5 & 5  \\
      $\log K$, Import, CIIU4 / three-component normal   mixture  & 4 & 7 & 6 & 4 & 6 & 6 & 4 & 5 & 5  \\\midrule\midrule

      \textbf{Panel B: Mixture Models under AR(1) specification (\ref{model-ar1})} \\
        \midrule
          \textbf{Covariates / Innovation Distribution Specification} \\
%              \multicolumn{10}{l}{\textit{Panel length $T=3$}} \\
%              No Covariate / Normal & 10+ & 9 & 6 & 3 & 4 & 3 & 4 & 5 & 3 \\
%              No Covariate / two-component normal  mixture & 3 & 6 & 4 & 2 & 4 & 3 & 2 & 2 & 2 \\
%              \midrule
%              $\log K$, Import, CIIU4 / Normal    &	3 &	6	& 4	& 2&	2 &	3	& 2	& 4	& 3 \\
%              $\log K$, Import, CIIU4 / two-component normal   mixture  & 4	&8	&6	&2	& 5	&3	 &2	& 3	& 2 \\
%              %$\log K$, Import, CIIU4 / three-component normal   mixture  & 7	&5	& 6	& 2	& 4	& 3	& 1	& 2	& 2 \\
              \midrule
      %        \multicolumn{10}{l}{\textit{Panel length $T=5$}} \\
              No Covariate / Normal & 9 & 10 & 10 & 4 & 6 & 4 & 4 & 7 & 4 \\
              No Covariate / two-component normal   mixture & 5 & 7 & 3 & 5 & 6 & 3 & 1 & 2 & 3 \\
              \midrule
              $\log K$, Import, CIIU4 / Normal    & 4 & 7 & 5 & 3 & 4 & 4 & 3 & 5 & 4 \\
              $\log K$, Import, CIIU4 / two-component normal   mixture  & 3 & 5 & 3 & 2 & 5 & 3 & 2 & 2 & 2 \\
              %\log K$, Import, CIIU4 / three-component normal   mixture & 3 & 6 & 5 & 1 & 3 & 2 & 1 & 2 & 1 \\
              \bottomrule
              \end{tabular}
  \end{adjustbox}

  \begin{tablenotes}
    \footnotesize
  \textbf{Notes:}  The table reports the selected number of components by AIC, BIC, LRT, and rank test for different model specifications for Fabricated Metal Products, Food Products, and Textile industries in Chile.
  \end{tablenotes}
\end{table}

%	\begin{figure}[ht!]
%    \centering
%    \caption{Histograms of the estimated component-specific error distributions for three-component models with no covariates}\vspace{-0.5cm}
%    \label{fig:distribution_all}
%    \subfigure[Fabricated Metal Products Industry]{\includegraphics[width=0.32\textwidth]{figures/empirical_error_plain_Fabricated metal products.png}}
%    \subfigure[Food Products Industry]{\includegraphics[width=0.32\textwidth]{figures/empirical_error_plain_Food products.png}}
%    \subfigure[Textiles Industry]{\includegraphics[width=0.32\textwidth]{figures/empirical_error_plain_Textiles.png}}
%\end{figure}
%

\begin{figure}[ht!]
    \centering
    \caption{Histograms of the estimated component-specific error distributions using three-component models ($\mathcal{K}=3$) with $\log K$, Import, and CIIU4 as covariates}\vspace{-0.5cm}  \label{fig:distribution_all}
    \subfigure[Fabricated Metal Products ]{\includegraphics[width=0.32\textwidth]{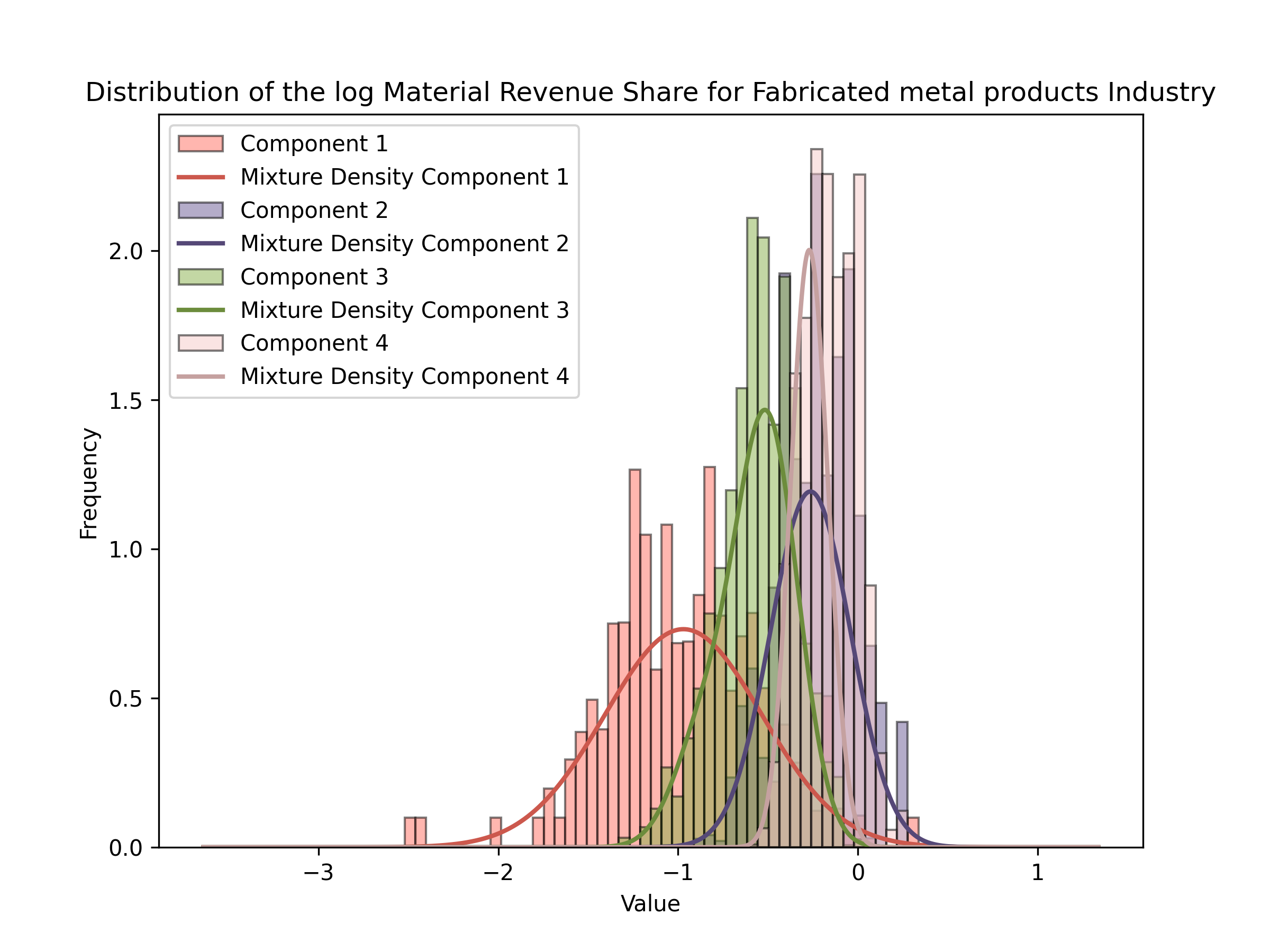}}
    \subfigure[Food Products ]{\includegraphics[width=0.32\textwidth]{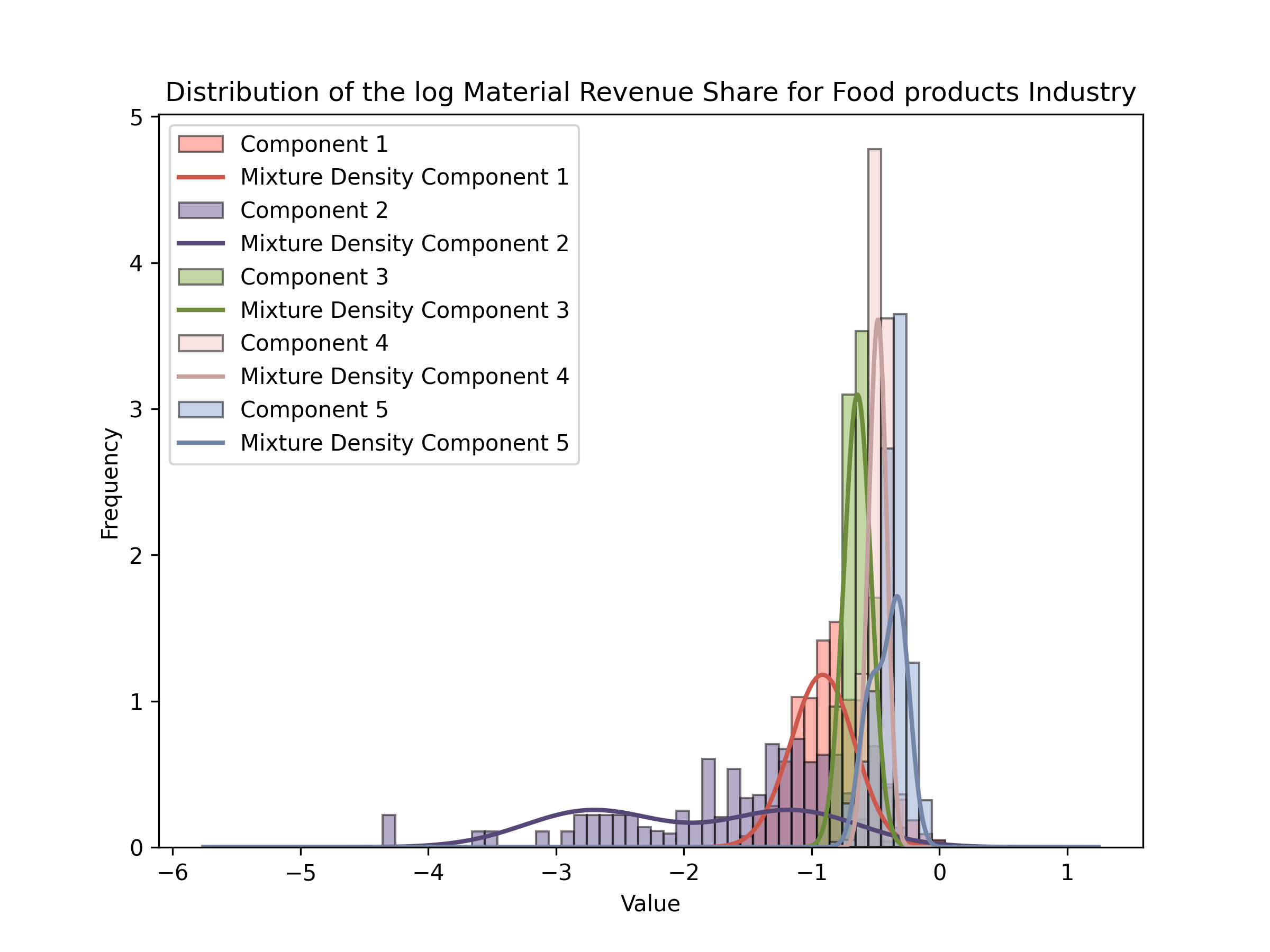}}
    \subfigure[Textiles  ]{\includegraphics[width=0.32\textwidth]{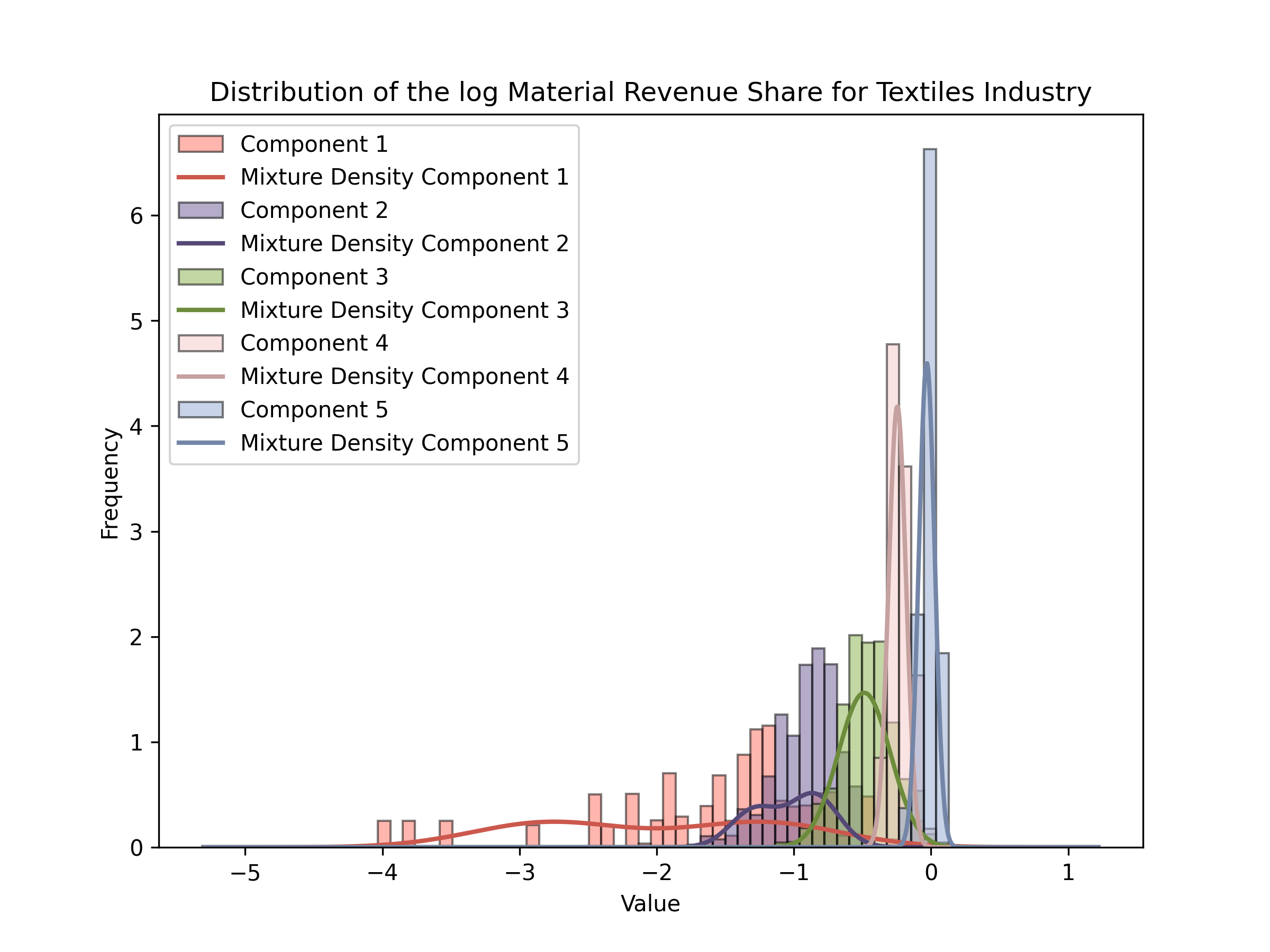}}
\end{figure}

\begin{figure}[ht!]
    \centering
        \caption{Estimated parameter values for two-component mixture models under conditional independence (\ref{spec-4}) with normal error density and covariates  ($M=2, \mathcal{K}=1$)  } \label{fig:parameter-normal}
    % Updated figure
    \subfigure[$\hat\alpha_j$]{\includegraphics[width=0.32\textwidth]{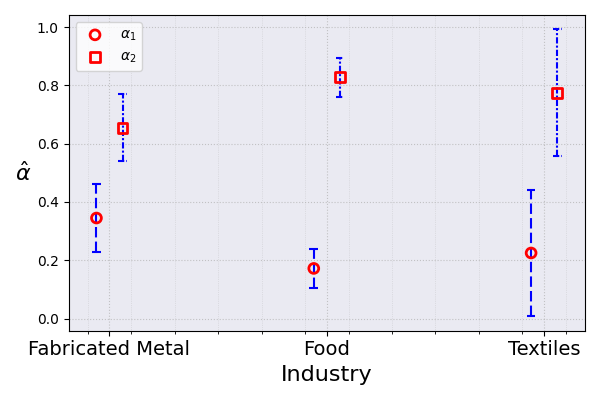}}
    \subfigure[$\hat\mu_j$]{\includegraphics[width=0.32\textwidth]{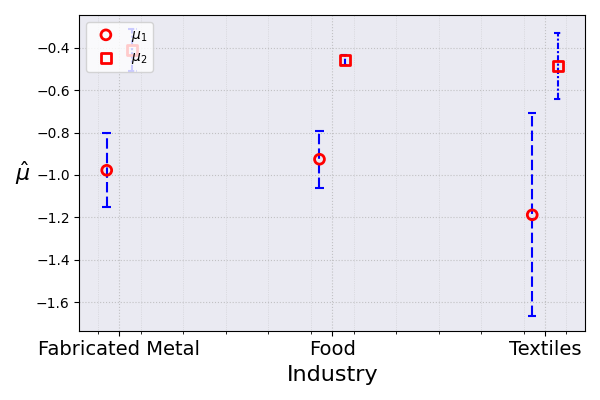}}
    \subfigure[$\hat\sigma_j$]{\includegraphics[width=0.32\textwidth]{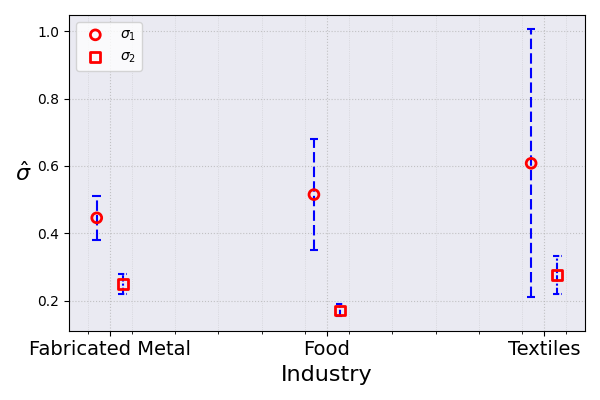}}

    \subfigure[$\hat{\beta}_{\log K, j}$]{\includegraphics[width=0.32\textwidth]{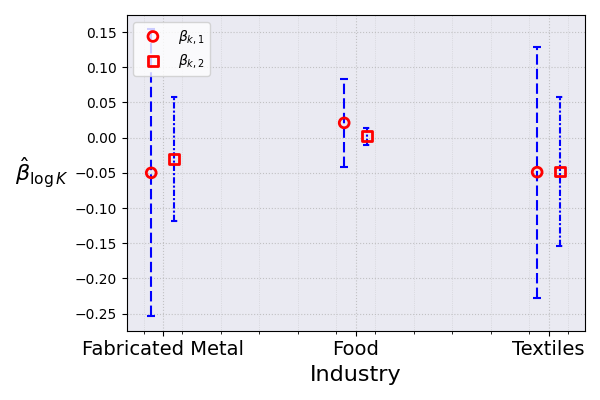}}
    \subfigure[$\hat{\beta}_{\text{Import},j}$]{\includegraphics[width=0.32\textwidth]{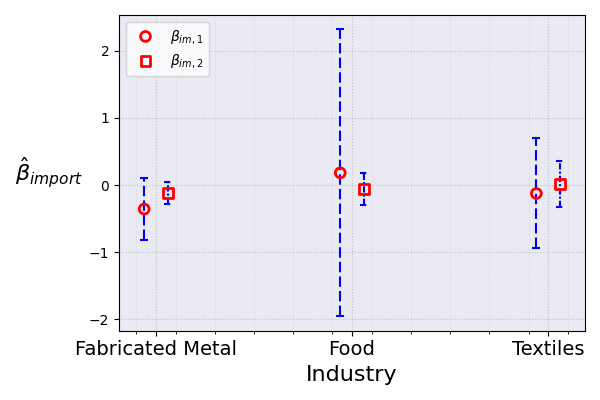}}
 \end{figure}

%\begin{figure}[ht!]
%    \centering
%    % Updated figure
%        \caption{Estimated $\alpha_{j}$ with 95\% CIs for models with normal error density with $\log K$, Import, and CIIU4 as covariates \textbf{[To Jasmine:]} }
%    \subfigure[Fabricated Metal Products Industry]{\includegraphics[width=0.32\textwidth]{figures/empirical_alpha_with_error_bars_stat_normal_Fabricated metal products.png}}
%    \subfigure[Food Products Industry]{\includegraphics[width=0.32\textwidth]{figures/empirical_alpha_with_error_bars_stat_normal_Food products.png}}
%    \subfigure[Textiles Industry]{\includegraphics[width=0.32\textwidth]{figures/empirical_alpha_with_error_bars_stat_normal_Textiles.png}}
%\end{figure}

\begin{figure}[ht!]
    \centering
        \caption{Estimated parameter values for two-component mixture models under conditional independence (\ref{spec-4}) with two-components normal mixture error density and covariates  ($M=2, \mathcal{K}=2$) }\label{fig:parameter-mixture}
    % Updated figure
    % \subfigure[Fabricated Metal Products ($\hat\mu_{jk}$)]{\includegraphics[width=0.32\textwidth]{figures/empirical_mu_with_error_bars_stat_mixture_Fabricated metal products.png}}
    % \subfigure[Food Products ($\hat\mu_{jk}$)]{\includegraphics[width=0.32\textwidth]{figures/empirical_mu_with_error_bars_stat_mixture_Food products.png}}
    % \subfigure[Textiles ($\hat\mu_{jk}$)]{\includegraphics[width=0.32\textwidth]{figures/empirical_mu_with_error_bars_stat_mixture_Textiles.png}}
    % \subfigure[Fabricated Metal Products ($\hat\alpha_j$)]{\includegraphics[width=0.32\textwidth]{figures/empirical_alpha_with_error_bars_stat_mixture_Fabricated metal products.png}}
    % \subfigure[Food Products ($\hat\alpha_j$)]{\includegraphics[width=0.32\textwidth]{figures/empirical_alpha_with_error_bars_stat_mixture_Food products.png}}
    % \subfigure[Textiles ($\hat\alpha_j$)]{\includegraphics[width=0.32\textwidth]{figures/empirical_alpha_with_error_bars_stat_mixture_Textiles.png}}
    \subfigure[$\hat\alpha_j$]{\includegraphics[width=0.32\textwidth]{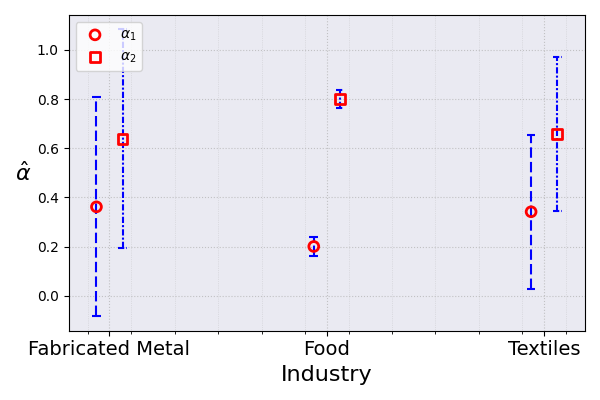}}
%    \subfigure[($\hat\mu_{jk}$)]{\includegraphics[width=0.32\textwidth]{figures/stationary_mixture_kmshare_ciiu_mu_across_industries_m2.png}}
    \subfigure[$\hat{\mu}_j$]{\includegraphics[width=0.32\textwidth]{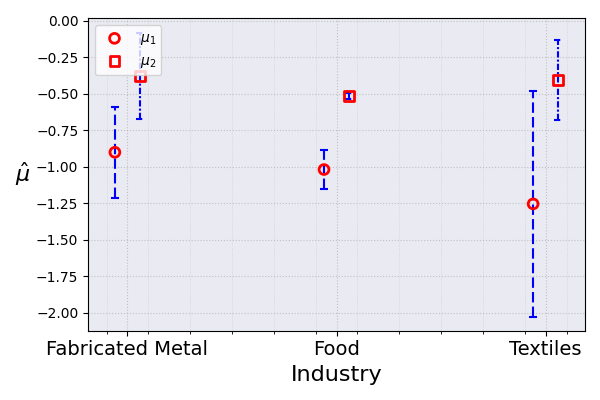}}
    \subfigure[$\widehat{Var}(\epsilon_{it}| D_i=j)$]{\includegraphics[width=0.32\textwidth]{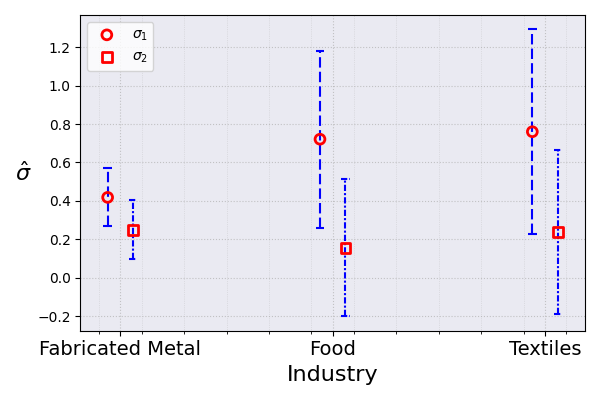}}

    \subfigure[$\hat{\beta}_{\log K, j}$]{\includegraphics[width=0.32\textwidth]{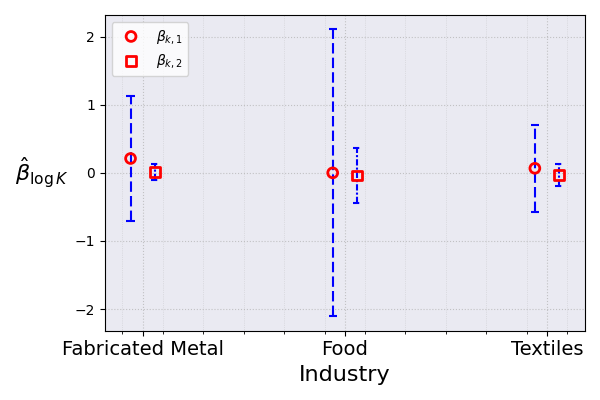}}
    \subfigure[$\hat{\beta}_{\text{Import},j}$]{\includegraphics[width=0.32\textwidth]{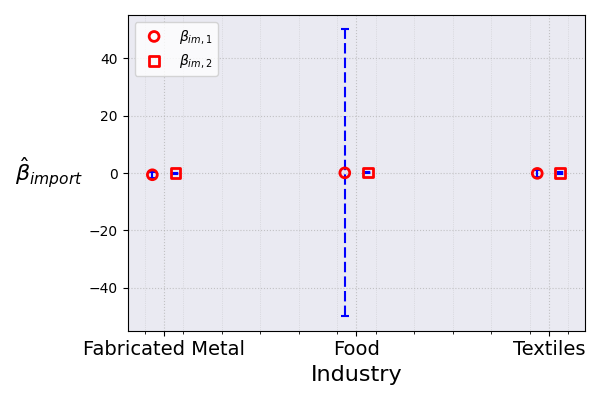}}
    \end{figure}

%% AR(1) Normal Model with $\log K$, Import and CIIU
%\begin{figure}[ht!]
%    \centering
%        \caption{Estimated parameter values for two-component mixture models under AR(1) specification with normal error density and covariates  ($M=2, \mathcal{K}=1$)  } \label{fig:parameter-ar1-normal}
%%    \caption{AR(1) Normal Model with $\log K$, Import and CIIU Across Industries ($M_0 = 2$)}
%   % \begin{subfigure}[$\hat\alpha_j$]%{0.32\textwidth}
%    %    \centering
%    \subfigure[$\hat\alpha_j$]{\includegraphics[width=0.32\textwidth]{figures/ar1_normal_kmshare_ciiu_alpha_across_industries_m2.png}}
%      \subfigure[$\hat\mu_j$]{\includegraphics[width=0.32\textwidth]{figures/ar1_normal_kmshare_ciiu_mu_across_industries_m2.png}}     \subfigure[$\hat\sigma_j$]{\includegraphics[width=0.32\textwidth]{figures/ar1_normal_kmshare_ciiu_sigma_across_industries_m2.png}}
%
%    \subfigure[$\hat\rho_j$]{\includegraphics[width=0.32\textwidth]{figures/ar1_normal_kmshare_ciiu_rho_across_industries_m2.png}}
%    \subfigure[$\hat{\beta}_{\log K, j}$]{\includegraphics[width=0.32\textwidth]{figures/ar1_normal_kmshare_ciiu_beta_k_across_industries_m2.png}}
%    \subfigure[$\hat{\beta}_{\text{Import},j}$]{\includegraphics[width=0.32\textwidth]{figures/ar1_normal_kmshare_ciiu_beta_im_across_industries_m2.png}}
%
%\end{figure}

 \begin{figure}[ht!]
    \centering
        \caption{ Estimated parameter values for two-component mixture models under AR(1) specification with normal mixture error density and covariates  ($M=2, \mathcal{K}=2$).  }
        \label{fig:parameter-ar1-mixture}
%    \caption{AR(1) Normal Model with $\log K$, Import and CIIU Across Industries ($M_0 = 2$)}
   % \begin{subfigure}[$\hat\alpha_j$]%{0.32\textwidth}
    %    \centering
    \subfigure[$\hat\alpha_j$]{\includegraphics[width=0.32\textwidth]{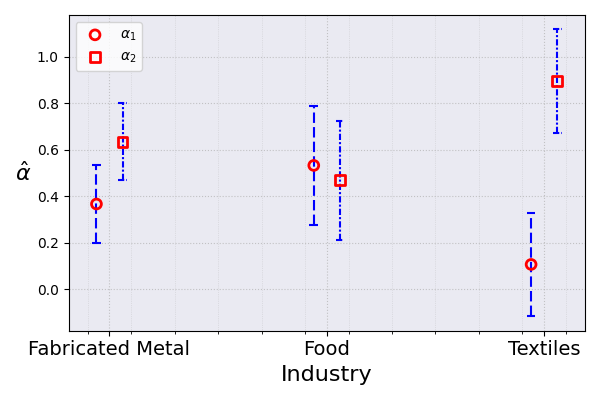}}
      \subfigure[$\hat\mu_j$]{\includegraphics[width=0.32\textwidth]{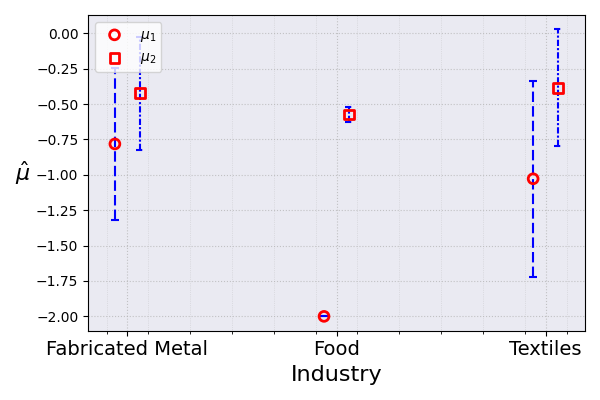}}     \subfigure[$\widehat{Var}(\epsilon_{it}| D_i=j)$]{\includegraphics[width=0.32\textwidth]{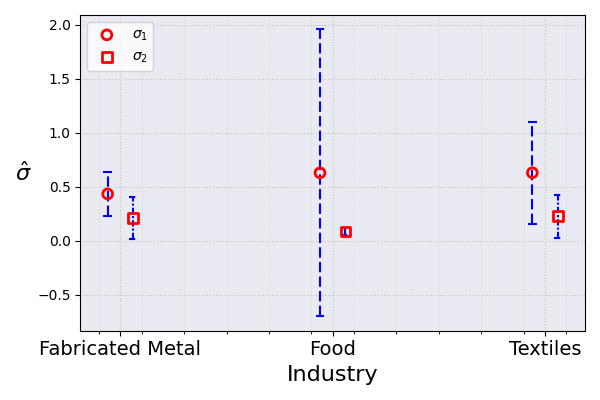}}

    \subfigure[$\hat\rho_j$]{\includegraphics[width=0.32\textwidth]{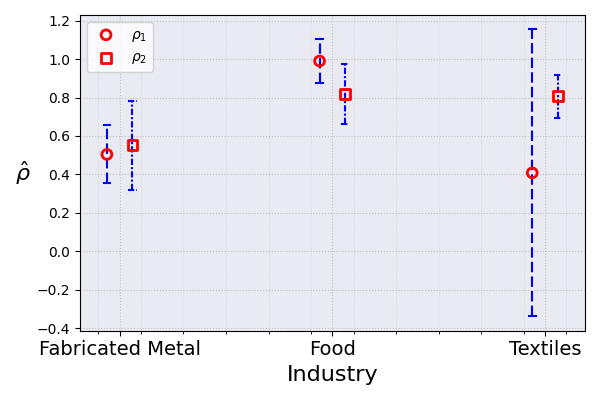}}
    \subfigure[$\hat{\beta}_{\log K, j}$]{\includegraphics[width=0.32\textwidth]{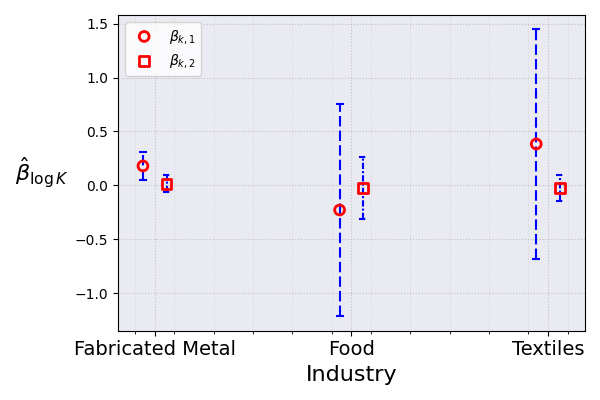}}
    \subfigure[$\hat{\beta}_{\text{Import},j}$]{\includegraphics[width=0.32\textwidth]{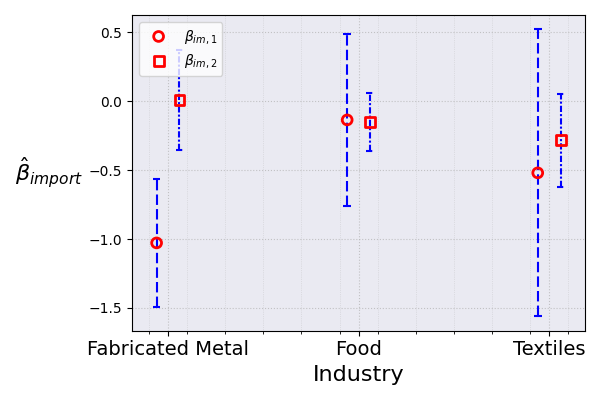}}
\end{figure}

%		\begin{figure}[ht!]
%			\centering
%			% First figure
%			\subfigure[Food Products Industry]{\includegraphics[width=0.49\textwidth]{distributions_plot_Food products.png}\label{fig:food_distribution}
%			}
%			\subfigure[Fabricated Metal Products Industry]{\includegraphics[width=0.49\textwidth]{distributions_plot_Fabricated metal products.png}\label{fig:metal_distribution}
%			}
%			\subfigure[Textiles Industry]{\includegraphics[width=0.49\textwidth]{distributions_plot_Textiles.png}\label{fig:textiles_distribution}
%			}
%			\caption{Histograms of the log Material Revenue Share across three industries (Food, Metal, Textiles) with 4-component mixture density. \textbf{[Elimitate or shorten the figure title for each figure]}}
%			\label{fig:distribution_all}
%
%		\end{figure}

As discussed in Example \ref{example-2}, we also consider the case where the error term $\epsilon_{it}$ in equation (\ref{spec-4}) follows an AR(1) process with $\epsilon_{it}=\rho_j \epsilon_{it-1}+\xi_{it}$ to capture potential persistence in shocks to output elasticities of material input. This leads to the following specification:
\begin{align}\label{model-ar1}
\log\left(\frac{P_{V,t} V_{it}}{P_{O,t} O_{it}}\right)
=   \rho_j\log\left(\frac{P_{V,t-1} V_{it-1}}{P_{O,t-1} O_{it-1}}\right)
+ \bs{\beta}_j^\top \bs{x}_{it} -\rho_j\bs\beta_j^\top \bs{x}_{it-1} + \xi_{it}\quad\text{for $t=2,3,...,T$,}
\end{align}
where $\xi_{it}$ is drawn from either a normal distribution or multi-component normal mixtures as specified in equation (\ref{xi}). By further specifying the initial distribution of $\log\left(\frac{P_{V,t} V_{it}}{P_{O,t} O_{it}}\right)$ at $t=1$ as:
\begin{equation}\label{model-ar1-2}
\log\left(\frac{P_{V,1} V_{i1}}{P_{O,1} O_{i1}}\right)
=    \bs{\beta}_{1,j}^\top \bs{x}_{i1}+\epsilon_{i1},
\end{equation}
where $\epsilon_{i1} \overset{iid}{\sim} \mathcal{N}(\mu_{1,j},\sigma_{1,j}^2)$ or $\sum_{k=1}^{{\cal{K}}} \tau_{1,jk}\, \mathcal{N}(\mu_{1,jk},\sigma_{1,j}^2)$ given latent technology type $D_i = j$, this specification yields the conditional density of $\{y_t\}_{t=1}^T$ and $\{\bs{x}_{t}\}_{t=1}^T$ in the form of equation (\ref{eq:fm}) with (\ref{eq:fm-dynamic}), where the component-specific densities are given by either equation (\ref{eq:f1-dynamic}) or (\ref{eq:f1-dynamic-mixture}). In this framework, each component-specific technology type represents a distinct stochastic process governing the output elasticities of material input. %, capturing differences in persistence levels (represented by $\rho_j$) and variances of the innovation term $\xi_{it}$.

Panel B of Table \ref{tab:model_comparison} reports the estimated number of components selected by the AIC, BIC, and LRT for models specified in equations (\ref{model-ar1})–(\ref{model-ar1-2}).\footnote{Parameters are estimated using the EM algorithm detailed in Appendix \ref{sec:em-algorithm-dynamic}, explicitly incorporating the constraint implied by specification (\ref{model-ar1}).} Due to the incorporation of persistence in the error term $\epsilon_{it}$ within the AR(1) framework, the results indicate fewer estimated components compared to models assuming conditionally independent errors. Nonetheless, there remains clear evidence of heterogeneity in the stochastic processes governing the error term $\epsilon_{it}$ especially for Food and Textiles industries; under the specification with covariates and two-component normal mixture innovation densities, the LRT identifies 2, 2, and 2 components for the Metal, Food, and Textiles industries, respectively, whereas the BIC selects 2, 5, and 2 components. %These findings indicate latent heterogeneity in the error term's stochastic process, reflecting differences either in persistence levels (captured by the AR(1) coefficient $\rho_j$) or in the distribution of the iid innovation term $\xi_{it}$.

Figure \ref{fig:parameter-ar1-mixture} reports parameter estimates for the AR(1) specification with covariates and two-component normal mixture innovation densities.\footnote{For estimating models with an AR(1) specification, we imposed a restriction that the parameter $\mu_j$ lies within the interval $[-2, 2]$, as the estimated value of $\mu_1$ for the food industry was implausibly low when the model was estimated without this bound.}  For Fabricated Metal Products and Textile industries, relative to the first type, the second type exhibits higher mixing proportions, higher means, lower variances, and higher persistence (as captured by the AR(1) coefficient $\rho_j$). Reflecting model complexity, the confidence intervals are generally wide, and differences between latent types are often not statistically significant.
%	 under the normality assumption, the two latent types differ significantly in their estimated mixing proportions, means, and variances, whereas no statistically significant differences are found in the estimated coefficients for $\log K_{it}$ and import shares. Specifically, compared to the first type, the second type exhibits higher mixing proportions, higher means, and lower variances consistently across all three industries. The absence of statistical evidence for differences in $\beta_{\log K}$ and $\beta_{\text{import}}$ suggests that the Cobb-Douglas specification with two-component mixtures adequately captures the observed persistence in material shares.

%We examine the nature of heterogeneity by examining the estimated parameter values.  We focus on the estimated parameters  for two-component models because the confidence intervals for some parameters become long for models with more than three components.
%

 \subsection{Plant heterogeneity in the stochastic process of factor-augmented technological changes }

% \textbf{[To Jasmine: (i) Any Updated Estimates. (ii) Need to report the estimated value of $\rho_j^\delta$. (iii) Also, let's report the estimated distribution for $\eta_{it}^\delta$.]}

Factor-augmented technological changes are widely used to capture heterogeneity in production functions beyond the standard Hicks-neutral approach  \citep{doraszelski2018measuring,Zhang2019,Raval2019}.
The heterogeneity in the stochastic process of factor-augmented technological changes  can be effectively modeled with finite mixture dynamic panel data models.

 As discussed in Example \ref{example-3},  we extend the production function (\ref{prod}) by incorporating labor-augmented technological change \(\delta_{it}\) \citep[cf.,][]{Demirer2022}:
\[
O_{it}=e^{\omega_{it}}\bar F_{jt}(\chi_{jt}(V_{it}, e^{\delta_{it}} L_{it}), K_{it}, Z_{it},\epsilon_{it})\ \text{ with }\
%\]
%where
%\[
\chi_{jt}(V_{it}, e^{\delta_{it}} L_{it}) := \left[\alpha_{V,jt} V_{it}^{1/\varsigma_j} + \alpha_{L,jt} (e^{\delta_{it}} L_{it})^{1/\varsigma_j}\right]^{\varsigma_j},
\]
and \(\delta_{it}\) evolves according to an AR(1) process:
\[
\delta_{it} = \rho^{\delta}_j \delta_{it-1} + \eta^{\delta}_{it},
\]
where \(\eta^{\delta}_{it}\) follows either a normal or a multi-component normal mixture distribution.

Assuming both material and labor inputs are flexibly chosen after observing relevant prices and shocks, profit maximization implies a dynamic relationship for log input ratios:
\begin{equation}\label{eq:factor-augmenting}
\log(V_{it}/L_{it})\ = \mu^{\delta}_{jt} + \rho^{\delta}_j \log(V_{it-1}/L_{it-1})\ + \tilde{\eta}^{\delta}_{it},
\end{equation}
where  $\tilde\eta_{it}^\delta = \eta^{\delta}_{it}/(1-\varsigma_j)$ and $\mu^{\delta}_{jt}$ depends on $\alpha_{V,jt}$, $\alpha_{L,jt}$, $\varsigma_j$, and the input prices. By specifying the density functions of  the initial values \( \log(V_{it}/L_{it}) \) at $t=1$,
this specification leads to conditional densities as described in equations (\ref{eq:fm}) and (\ref{eq:fm-dynamic}), with component-specific densities detailed in either (\ref{eq:f1-dynamic}) or (\ref{eq:f1-dynamic-mixture}).

%\textbf{[The following will be revised after the results are updated based on $T=4$ or $5$ data.]}
		Table \ref{tab:model_comparison_mlshare} presents the results of model selection for the dependent variable $\log(V/L)$ (the share of material inputs over labor) across three industries—Metal, Food, and Textiles—using AIC, BIC, and Likelihood Ratio (LR) metrics. The table compares various model specifications, including those without covariates, those with conditioning variables ($\log K$, import shares, and 4-digit industry classification dummies). Both static models and models with autocorrelated error terms (AR1) are analyzed, and the error term structures range from normal distributions to more flexible two-component, three-component, and 4-component normal mixtures.

	% 1. **Baseline Models (Nonparametric and Plain)**:

   For the model without mixture components or covariates, AIC overestimates the number of components, predicting $10$ components for all industries. In contrast, BIC is more conservative, predicting $3$ components for Metal and Textiles, and $6$ for Food.
%    AIC favors more complex models, while BIC provides more parsimonious results, particularly for the Metal and Textiles industries.
% 2. **Impact of Mixture Error Structures (Plain Mixture Models)**:
	More flexible mixture error structures generally lead to fewer predicted components, particularly under BIC and LR metrics.
	Introducing mixture error structures reduces the number of predicted components across all metrics.
   For example, under the Plain Mixture 2-Component model, AIC predicts $6$ components for Metal, $2$ for Food, and $2$ for Textiles, while BIC predicts $3$, $2$, and $2$ components, respectively. LR aligns more closely with BIC, predicting $2$, $4$, and $2$ components, respectively.
	% 3. **Effect of Conditioning Variables ($\log K$) and CIIU4 Controls**:
   Including covariates ($\log K$) and CIIU4 controls in the models reduces the predicted number of components in most cases, particularly when using BIC and LR.
  % \sout  {For example, under the $lnK$, mi\_share 2-Component Control CIIU4 model, BIC predicts $1$ components for Metal, $4$ for Food, and $4$ for Textiles, compared to $3$, $2$, and $2$ under the corresponding Plain Mixture model.}
  For example, under the $lnK$, Import, 2-Component Control CIIU4 model, LR predicts $1$ components for Metal, $3$ for Food, and $1$ for Textiles, compared to $2$, $4$, and $2$ under the corresponding Plain Mixture model.

	% 4. **Autocorrelated Error Terms (AR1 Models)**:
  %  \sout{Compare to static models, introducing autocorrelated error terms (AR1) further reduces the predicted number of components across all metrics.
  %  Further, adding more flexible error structures (two-component or three-component) in AR1 models also leads to fewer predicted components.
  %  For instance, under the AR1 Plain three-component model, AIC predicts $3$ components for Metal, $4$ for Food, and $2$ for Textiles, while BIC predicts $2$ components for all industries. LR also predicts $1$ or $2$ components in this case, suggesting a significant reduction in complexity.
  %  Similarly, the AR1 $lnK$ mi\_share three-component Control CIIU4 model predicts the lowest number of components across all specifications, with BIC predicting $1$ component for Metal, $3$ for Food, and $2$ for Textiles.}
	% 5. **Comparison of AIC, BIC, and LR Metrics**:
  %  \sout{Overall, AIC tends to overestimate the number of components for most models while BIC provides more conservative estimates, often aligning closely with the results of the LRT statistics.}

%\textbf{[To Jasmine: shall we use $q_n=0.05$ or $0.10$ for AR(1) specification in view of the simulation result?]}
%\textbf{[To Jasmine: presenting the parameter estimates? ]}

			\begin{table}[ht]
			\centering
			\caption{The estimated number of factor-augmented technological change AR(1) processes for model (\ref{eq:factor-augmenting}), selected by AIC, BIC, LR, and rank tests}
			\label{tab:model_comparison_mlshare}
			\begin{adjustbox}{width=\textwidth}
				\begin{tabular}{lccccccccc}
					\toprule
					\textbf{Model Type} & \multicolumn{3}{c}{\textbf{AIC}} & \multicolumn{3}{c}{\textbf{BIC}} & \multicolumn{3}{c}{\textbf{LR}} \\
					\cmidrule(lr){2-4} \cmidrule(lr){5-7} \cmidrule(lr){8-10}
			\textbf{Covariates / Innovations Distribution}			& \textbf{Metal} & \textbf{Food} & \textbf{Textiles}
					& \textbf{Metal} & \textbf{Food} & \textbf{Textiles}
					& \textbf{Metal} & \textbf{Food} & \textbf{Textiles} \\ \midrule
          No Covariate / Normal & 10 & 10 & 10 & 3 & 6 & 3 & 3 & 6 & 3 \\
          No Covariate  / two-component normal   mixture & 6 & 2 & 2 & 3 & 2 & 2 & 2 & 4 & 2 \\

          \midrule

          $\log K$, Import, CIIU4 / Normal  & 5 & 3 & 4 & 2 & 3 & 2 & 3 & 3 & 3 \\
          $\log K$, Import, CIIU4  / two-component normal   mixture  & 10 & 10 & 4 & 3 & 3 & 3 & 1 & 3 & 1 \\
			 	%	  $\log K$, Import, CIIU4 / three-component normal   mixture & 3 & 6 & 5 & 1 & 3 & 2 & 1 & 2 & 1 \\
					\bottomrule
				\end{tabular}
			\end{adjustbox}
		\end{table}

\section{Conclusion} 
This paper develops statistical methods for determining the number of components in panel data finite mixture regression models with normally distributed or more flexible normal mixture errors. We establish that panel data structures eliminate higher-order degeneracy problems present in cross-sectional normal mixture models while retaining issues of unbounded likelihood and infinite Fisher information. We address these challenges and derive the asymptotic null distribution of the LRT statistic, demonstrate the consistency of BIC and inconsistency of AIC, and propose a sequential hypothesis testing approach for consistent component selection.

The empirical application to Chilean manufacturing data provides compelling evidence of substantial plant heterogeneity in production technology. Using finite mixture models with conditionally independent normal mixture errors, we identify 4 or 5 distinct technology types across three major industries: Fabricated Metal Products, Food Products, and Textiles. Our analysis reveals significant variation in output elasticities of material inputs across plants within narrowly defined industries, with nonparametric rank tests consistently identifying at least three components across all industries. Dynamic panel analysis incorporating lagged dependent variables as covariates suggests that while some heterogeneity reflects transitory variations, substantial permanent technological differences persist across plants.

These findings contrast sharply with standard production function estimation methods that impose homogeneous coefficients across plants. Our results highlight the importance of explicitly accounting for unobserved plant heterogeneity beyond Hicks-neutral technological differences, with significant implications for policy analysis where homogeneous assumptions may mischaracterize productivity distributions.

The methodological framework extends beyond production function analysis to other contexts with latent group structures in panel data. Future research could extend the analysis to more general production function specifications, develop methods for simultaneous component and distributional selection, and investigate relationships between estimated technological types and observable plant characteristics. Our work contributes both methodologically and empirically to understanding plant heterogeneity, providing  useful tools for uncovering latent structures while demonstrating substantial technological diversity that challenges common assumptions in empirical production analysis.

\newpage

\bibliographystyle{apalike}
\bibliography{references}

\begin{thebibliography}{}

\bibitem[Ackerberg et~al., 2015]{Ackerberg2015}
Ackerberg, D.~A., Caves, K., and Frazer, G. (2015).
\newblock Identification properties of recent production function estimators.
\newblock {\em Econometrica}, 83(6):2411--2451.

\bibitem[Amengual et~al., 2022]{Amengual2022}
Amengual, D., Bei, X., Carrasco, M., and Sentana, E. (2022).
\newblock {Score-type tests for normal mixtures}.
\newblock Working Papers wp2022\_2213, CEMFI.

\bibitem[Ando and Bai, 2016]{AndoBai16jae}
Ando, T. and Bai, J. (2016).
\newblock Panel data models with grouped factor structure under unknown group
  membership.
\newblock {\em Journal of Applied Econometrics}, 31(1):163--191.

\bibitem[Andrews, 1999]{Andrews1999}
Andrews, D. (1999).
\newblock {Estimation When a Parameter is on a Boundary}.
\newblock {\em Econometrica}, 67(6):1341--1383.

\bibitem[Andrews, 1994]{andrews94hdbk}
Andrews, D. W.~K. (1994).
\newblock Empirical process methods in econometrics.
\newblock In {\em Handbook of Econometrics}, volume~4, pages 2247--2794.
  North-Holland, Amsterdam.

\bibitem[Andrews, 2001]{andrews01em}
Andrews, D. W.~K. (2001).
\newblock Testing when a parameter is on the boundary of the maintained
  hypothesis.
\newblock {\em Econometrica}, 69:683--734.

\bibitem[Andrews and Currim, 2003]{Andrews2003}
Andrews, R.~L. and Currim, I.~S. (2003).
\newblock {Retention of latent segments in regression-based marketing models}.
\newblock {\em International Journal of Research in Marketing}, 20(4):315--321.

\bibitem[Aza\"{\i}s et~al., 2009]{azais09esaim}
Aza\"{\i}s, J.-M., Gassiat, E., and Mercadier, C. (2009).
\newblock The likelihood ratio test for general mixture models with or without
  structural parameter.
\newblock {\em ESAIM: Probability and Statistics}, 13:301---327.

\bibitem[Balat et~al., 2019]{Balat19mimeo}
Balat, J., Brambilla, I., and Sasaki, Y. (2019).
\newblock Heterogeneous firms: Skilled-labor productivity and the destination
  of exports.

\bibitem[Bonhomme and Manresa, 2015]{Bonhomme15ecma}
Bonhomme, S. and Manresa, E. (2015).
\newblock Grouped patterns of heterogeneity in panel data.
\newblock {\em Econometrica}, 83(3):1147--1184.

\bibitem[Cameron and Heckman, 1998]{Cameron1998}
Cameron, S.~V. and Heckman, J.~J. (1998).
\newblock {Life cycle schooling and dynamic selection bias: models and evidence
  for five cohorts of American males}.
\newblock {\em Journal of Political Economy}, 106(2):262--333.

\bibitem[Chamberlain, 1984]{Chamberlain1984}
Chamberlain, G. (1984).
\newblock Panel data.
\newblock In Griliches†, Z. and Intriligator, M.~D., editors, {\em Handbook
  of Econometrics}, volume~2, chapter~22, pages 1247--1318. Elsevier, 1
  edition.

\bibitem[Chen and Chen, 2001]{chenchen01cjstat}
Chen, H. and Chen, J. (2001).
\newblock The likelihood ratio test for homogeneity in finite mixture models.
\newblock {\em Canadian Journal of Statistics}, 29:201--215.

\bibitem[Chen and Chen, 2003]{chenchen03sinica}
Chen, H. and Chen, J. (2003).
\newblock Tests for homogeneity in normal mixtures in the presence of a
  structural parameter.
\newblock {\em Statistica Sinica}, 13:351--365.

\bibitem[Chen et~al., 2004]{cck04jrssb}
Chen, H., Chen, J., and Kalbfleisch, J.~D. (2004).
\newblock Testing for a finite mixture model with two components.
\newblock {\em Journal of the Royal Statistical Society, Series B}, 66:95--115.

\bibitem[Chen, 1995]{chen95as}
Chen, J. (1995).
\newblock Optimal rate of convergence for finite mixture models.
\newblock {\em Annals of Statistics}, 23(1):221--233.

\bibitem[Chen and Li, 2009]{chenli09as}
Chen, J. and Li, P. (2009).
\newblock Hypothesis test for normal mixture models: The {EM} approach.
\newblock {\em Annals of Statistics}, 37:2523--2542.

\bibitem[Chen et~al., 2012]{Chen2012jasa}
Chen, J., Li, P., and Fu, Y. (2012).
\newblock Inference on the order of a normal mixture.
\newblock {\em Journal of the American Statistical Association},
  107(499):1096--1105.

\bibitem[Chen et~al., 2014]{Chen14joe}
Chen, X., Ponomareva, M., and Tamer, E. (2014).
\newblock {Likelihood inference in some finite mixture models}.
\newblock {\em Journal of Econometrics}, 182(1):87--99.

\bibitem[Chernoff and Lander, 1995]{chernofflander95jspi}
Chernoff, H. and Lander, E. (1995).
\newblock Asymptotic distribution of the likelihood ratio test that a mixture
  of two binomials is a single binomial.
\newblock {\em Journal of Statistical Planning and Inference}, 43:19--40.

\bibitem[Dacunha-Castelle and Gassiat, 1999]{dacunha99as}
Dacunha-Castelle, D. and Gassiat, E. (1999).
\newblock Testing the order of a model using locally conic parametrization:
  Population mixtures and stationary {ARMA} processes.
\newblock {\em Annals of Statistics}, 27:1178--1209.

\bibitem[Deb and Trivedi, 1997]{Deb1997}
Deb, P. and Trivedi, P.~K. (1997).
\newblock {Demand for medical care by the elderly: a finite mixture approach}.
\newblock {\em Journal of Applied Econometrics}, 12(3):313--336.

\bibitem[Demirer, 2022]{Demirer2022}
Demirer, M. (2022).
\newblock Production function estimation with factor-augmenting technology: An
  application to markups.
\newblock Working Paper, Revise and Resubmit at Econometrica.

\bibitem[Doraszelski and Jaumandreu, 2018]{doraszelski2018measuring}
Doraszelski, U. and Jaumandreu, J. (2018).
\newblock Measuring the bias of technological change.
\newblock {\em Journal of Political Economy}, 126(3):1027--1084.

\bibitem[Foutz and Srivastava, 1977]{Foutz77as}
Foutz, R.~V. and Srivastava, R.~C. (1977).
\newblock The performance of the likelihood ratio test when the model is
  incorrect.
\newblock {\em The Annals of Statistics}, 5(6):1183--1194.

\bibitem[Gandhi et~al., 2020]{Gandhi2020}
Gandhi, A., Navarro, S., and Rivers, D.~A. (2020).
\newblock {On the Identification of Gross Output Production Functions}.
\newblock {\em Journal of Political Economy}, 128(8):2973--3016.

\bibitem[Garel, 2001]{garel01jspi}
Garel, B. (2001).
\newblock Likelihood ratio test for univariate {Gaussian} mixture.
\newblock {\em Journal of Statistical Planning and Inference}, 96:325--350.

\bibitem[Garel, 2005]{garel05jspi}
Garel, B. (2005).
\newblock Asymptotic theory of the likelihood ratio test for the identification
  of a mixture.
\newblock {\em Journal of Statistical Planning and Inference}, 131:271--296.

\bibitem[Ghosh and Sen, 1985]{ghoshsen85book}
Ghosh, J.~K. and Sen, P.~K. (1985).
\newblock On the asymptotic performance of the log-likelihood ratio statistic
  for the mixture model and related results.
\newblock In Le~Cam, L. and Olshen, R., editors, {\em Proceedings of the
  Berkeley Conference in Honor of Jerzy Neyman and Jack Kiefer}, volume~2,
  pages 789--806. Belmont, CA: Wadsworth.

\bibitem[Hartigan, 1985]{Hartigan1985}
Hartigan, J. (1985).
\newblock Failure of log-likelihood ratio test.
\newblock In {\em Proceedings of the Berkeley Conference in Honor of Jerzy
  Neyman and Jack Kiefer}, volume~2, pages 807--810. University of California
  Press 2. Berkeley.

\bibitem[Hathaway, 1985]{hathaway85as}
Hathaway, R.~J. (1985).
\newblock A constrained formulation of maximum-likelihood estimation for normal
  mixture distributions.
\newblock {\em Annals of Statistics}, 13(2):795--800.

\bibitem[Heckman and Singer, 1984]{Heckman1984}
Heckman, J. and Singer, B. (1984).
\newblock {A method for minimizing the impact of distributional assumptions in
  econometric models for duration data}.
\newblock {\em Econometrica}, 52(2):271--320.

\bibitem[Hu and Shum, 2012]{Hu2012}
Hu, Y. and Shum, M. (2012).
\newblock Nonparametric identification of dynamic models with unobserved state
  variables.
\newblock {\em Journal of Econometrics}, 171(1):32--44.

\bibitem[Kamakura and Russell, 1989]{Kamakura1989}
Kamakura, W. and Russell, G. (1989).
\newblock {A probabilistic choice model for market segmentation and elasticity
  structure}.
\newblock {\em Journal of Marketing Research}, 26(4):379--390.

\bibitem[Kasahara and Rodrigue, 2008]{KASAHARA2008}
Kasahara, H. and Rodrigue, J. (2008).
\newblock Does the use of imported intermediates increase productivity?
  plant-level evidence.
\newblock {\em Journal of Development Economics}, 87(1):106--118.

\bibitem[Kasahara et~al., 2022]{Kasahara2022esri}
Kasahara, H., Schrimpf, P., and Suzuki, M. (2022).
\newblock Identification and estimation of production function with unobserved
  heterogeneity.
\newblock Technical report, ESRI Discussion Paper Series No.368.

\bibitem[Kasahara and Shimotsu, 2009]{Kasahara2009}
Kasahara, H. and Shimotsu, K. (2009).
\newblock {Nonparametric Identification of Finite Mixture Models of Dynamic
  Discrete Choices}.
\newblock {\em Econometrica}, 77(1):135--175.

\bibitem[Kasahara and Shimotsu, 2012]{Kasahara2012}
Kasahara, H. and Shimotsu, K. (2012).
\newblock {Testing the number of components in finite mixture models}.

\bibitem[Kasahara and Shimotsu, 2014]{Kasahara2014}
Kasahara, H. and Shimotsu, K. (2014).
\newblock Non-parametric identification and estimation of the number of
  components in multivariate mixtures.
\newblock {\em Journal of the Royal Statistical Society. Series B (Statistical
  Methodology)}, 76(1):97--111.

\bibitem[Kasahara and Shimotsu, 2015]{kasaharashimotsu15jasa}
Kasahara, H. and Shimotsu, K. (2015).
\newblock Testing the number of components in normal mixture regression models.
\newblock {\em Journal of the American Statistical Association},
  110(512):1632--1645.

\bibitem[Kasahara and Shimotsu, 2018]{kasahara2018arXiv}
Kasahara, H. and Shimotsu, K. (2018).
\newblock Testing the number of regimes in markov regime switching models.

\bibitem[Kasahara and Shimotsu, 2019]{Kasahara2019}
Kasahara, H. and Shimotsu, K. (2019).
\newblock {Testing the Order of Multivariate Normal Mixture Models}.

\bibitem[Keane and Wolpin, 1997]{Keane1997}
Keane, M.~P. and Wolpin, K.~I. (1997).
\newblock {The career decisions of young men}.
\newblock {\em Journal of Political Economy}, 105(3):473--522.

\bibitem[Keribin, 2000]{keribin00sankhya}
Keribin, C. (2000).
\newblock Consistent estimation of the order of mixture models.
\newblock {\em Sankhya Series A}, 62:49--62.

\bibitem[Kiefer and Wolfowitz, 1956]{kieferwolfowitz56ams}
Kiefer, J. and Wolfowitz, J. (1956).
\newblock Consistency of the maximum likelihood estimator in the presence of
  infinitely many incidental parameters.
\newblock {\em Annals of Mathematical Statistics}, 27(4):887--906.

\bibitem[Kleibergen and Paap, 2006]{Kleibergen06}
Kleibergen, F. and Paap, R. (2006).
\newblock Generalized reduced rank tests using the singular value
  decomposition.
\newblock {\em Journal of Econometrics}, 133(1):97--126.

\bibitem[Lemdani and Pons, 1997]{lemdanipons97spl}
Lemdani, M. and Pons, O. (1997).
\newblock Likelihood ratio tests for genetic linkage.
\newblock {\em Statistics and Probability Letters}, 33:15--22.

\bibitem[Leroux, 1992]{leroux92as}
Leroux, B.~G. (1992).
\newblock Consistent estimation of a mixing distribution.
\newblock {\em The Annals of Statistics}, 20:1350--1360.

\bibitem[Levinsohn and Petrin, 2003]{levinsohn2003estimating}
Levinsohn, J. and Petrin, A. (2003).
\newblock {E}stimating {P}roduction {F}unctions {U}sing {I}nputs to {C}ontrol
  for {U}nobservables.
\newblock {\em Review of Economic Studies}, pages 317--341.

\bibitem[Li and Sasaki, 2017]{LiSasaki17arxiv}
Li, T. and Sasaki, Y. (2017).
\newblock Constructive identification of heterogeneous elasticities in the
  {C}obb-{D}ouglas production function.

\bibitem[Lin and Ng, 2012]{LinNg12jem}
Lin, C.-C. and Ng, S. (2012).
\newblock Estimation of panel data models with parameter heterogeneity when
  group membership is unknown.
\newblock {\em Journal of Econometric Methods}, 1:42--55.

\bibitem[Lindsay, 1995]{Lindsay1995}
Lindsay, B.~G. (1995).
\newblock {Mixture Models: Theory, Geometry and Applications NSF-CBMS Regional
  Conference Series in Probability and Statistics}.
\newblock {\em Source: NSF-CBMS Regional Conference Series in Probability and
  Statistics}, 5:1--163.

\bibitem[Liu and Shao, 2003]{liushao03as}
Liu, X. and Shao, Y. (2003).
\newblock Asymptotics for likelihood ratio tests under loss of identifiability.
\newblock {\em Annals of Statistics}, 31:807--832.

\bibitem[Lu and Su, 2017]{LuSu17qe}
Lu, X. and Su, L. (2017).
\newblock Determining the number of groups in latent panel structures with an
  application to income and democracy.
\newblock {\em Quantitative Economics}, 8(3):729--760.

\bibitem[McLachlan and Peel, 2004]{McLachlan2000}
McLachlan, G. and Peel, D. (2004).
\newblock {\em {Finite Mixture Models}}.

\bibitem[Mundlak, 1978]{Mundlak1978}
Mundlak, Y. (1978).
\newblock On the pooling of time series and cross section data.
\newblock {\em Econometrica}, 46(1):69--85.

\bibitem[Newey and McFadden, 1994]{Newey1994}
Newey, W.~K. and McFadden, D. (1994).
\newblock {Large sample estimation and hypothesis testing}.
\newblock In {\em Handbook of Econometrics}, volume~4, pages 2111--2245.
  Elsevier.

\bibitem[Olley and Pakes, 1996]{olley1996dynamics}
Olley, G.~S. and Pakes, A. (1996).
\newblock {T}he {D}ynamics of {P}roductivity in the {T}elecommunications
  {E}quipment {I}ndustry.
\newblock {\em Econometrica}, 6(64):1263--1297.

\bibitem[Pearson, 1894]{Pearson1894}
Pearson, K. (1894).
\newblock Contributions to the mathematical theory of evolution.
\newblock {\em Philosophical Transactions of the Royal Society of London.
  (A.)}, 185:71--110.

\bibitem[Pollard, 1990]{pollard90book}
Pollard, D. (1990).
\newblock {\em Empirical Processes: Theory and Applications}, volume~2 of {\em
  CBMS Conference Series in Probability and Statistics}.
\newblock Institute of Mathematical Statistics, Hayward, CA.

\bibitem[Raval, 2019]{Raval2019}
Raval, D. (2019).
\newblock The micro elasticity of substitution and non-neutral technology.
\newblock {\em The RAND Journal of Economics}, 50(1):147--167.

\bibitem[Robin and Smith, 2000]{Robin2000}
Robin, J.-M. and Smith, R.~J. (2000).
\newblock Tests of rank.
\newblock {\em Econometric Theory}, 16(2):151--175.

\bibitem[Su et~al., 2016]{SuShiPhillips16ecma}
Su, L., Zhentao, S., and Phillips, P. (2016).
\newblock Identifying latent structures in panel data.
\newblock {\em Econometrica}, 84:2215--2264.

\bibitem[Titterington et~al., 1985]{Titterington1985}
Titterington, D.~M., Smith, A.~F., and Makov, U.~E. (1985).
\newblock {\em {Statistical Analysis of Finite Mixture Distributions}}.
\newblock Wiley.

\bibitem[White, 1982]{white82em}
White, H. (1982).
\newblock Maximum likelihood estimation of misspecified models.
\newblock {\em Econometrica}, 50(1):1--25.

\bibitem[Zhang, 2019]{Zhang2019}
Zhang, H. (2019).
\newblock Non-neutral technology, firm heterogeneity, and labor demand.
\newblock {\em Journal of Development Economics}, 140:145--168.

\bibitem[Zhu and Zhang, 2004]{zhuzhang04jrssb}
Zhu, H.-T. and Zhang, H. (2004).
\newblock Hypothesis testing in mixture regression models.
\newblock {\em Journal of the Royal Statistical Society, Series B}, 66:3--16.

\end{thebibliography}

\newpage

\appendix
\section{Appendix}

\renewcommand{\thefigure}{\Alph{section}\arabic{figure}}
\setcounter{figure}{0}

\renewcommand{\thetable}{\Alph{section}\arabic{table}}
\setcounter{table}{0}

\subsection{Likelihood ratio test for testing $H_{02}: \alpha(1-\alpha)=0$} \label{sec:LRT3}

We derive the asymptotic distribution of the LRT statistic for testing the hypothesis $H_{02}: \alpha(1 - \alpha) = 0$ for testing $H_0: M=1$ against $H_1: M=2$. Our analysis focuses on the null hypothesis $\alpha = 0$, noting that the case $\alpha = 1$ follows by symmetry.

Consider the following restricted parameter space in which $\bs\theta_1$ and $\bs\theta_2$ are bounded away from each other while $\alpha$ can take the value of $0$ and $1$:
\[
 {\widetilde\Theta}_{\bs\vartheta_2}(\bs c) := \{ \bs\vartheta_2 \in  \Theta_{\bs\vartheta_2} :  ||\bs\theta_1-\bs\theta_2||\geq c_1, \ \min_j \sigma_j^2\geq c_2 \sum_{k=1}^2  \alpha_k\sigma_k^2,\   \sum_{k=1}^2 \alpha_k \sigma_k^2\geq c_3 \}.
 \]
Define the MLE $\tilde{\bs\vartheta}_2$ in this restricted parameter space  by
\begin{equation}\label{eq:MLE-2}
\ell_n^2(\tilde{\bs\vartheta}_2) = \max_{{\bs{\vartheta}}_2 \in {\widetilde\Theta}_{\bs\vartheta_2}(c)} \ell_n^2({\bs\vartheta}_2).
\end{equation}
Following the proof of  Proposition (\ref{prop:consistency}), $\tilde{\bs\vartheta}_2$   can be shown to be consistent under  Assumptions \ref{assumption:K} and \ref{assumption:consistency}.

 To simplify the asymptotic representation and regularity conditions, we use the parameter ${\bs\lambda}:=\bs\theta_1-\bs\theta_2$  and reparameterize $(\bs\theta_1, \bs\theta_2)$ to $( {\bs\lambda},\bs \theta_2)$ so that the model parameter is $\bs\psi=(\bs\theta_2,\bs\lambda,\alpha)\in\Theta_{\bs\psi}$. Under $H_0: M=1$, the two-components model replicates the true one-component model when $\bs\theta_2=\bs\theta^*$ and $\alpha=0$. Given that $\bs\lambda$ is unidentified when $\alpha=0$, we follow \citet{andrews01em}  by deriving the limit of the LRT statistic for each $ {\bs\lambda}:=\bs\theta_1-\bs\theta_2$ in $\Theta_{\ {\bs\lambda}}(c_1):=\{ {\bs\lambda}\in  \Theta_{ {\bs\lambda}}: || {\bs\lambda}||\geq c_1\}$ for $c_1>0$.

Define the reparameterized log-density as $\log g_2(\bs w;\bs\theta_2,\bs\lambda,\alpha)=\log( \alpha f(\bs w;\bs\theta_2+\bs\lambda)+(1-\alpha) f(\bs w;\bs\theta_2))$. Collect the partial derivative of $\log g_2(\bs w;\bs\theta_2,\bs\lambda,\alpha)$ with respect to $\bs\theta_2$ and its right partial derivative with respect to $\alpha$ evaluated at $(\bs\theta_2,\bs\lambda,\alpha)=(\bs\theta^*,\bs\lambda,0)$ as
\begin{align}\label{eq:s_1}
	 &\bs{s}(\bs{w};\bs\lambda): = \begin{pmatrix}
		\bs{s}_{\bs{\bs\theta_2}}(\bs w)   \\
		 {s}_{\alpha}(\bs w;\bs\lambda)
\end{pmatrix}:=
\begin{pmatrix}
\nabla_{\bs\theta_2} \log g_2(\bs w;\bs\theta^*,\bs\lambda,0)\\
\nabla_{\alpha} \log g_2(\bs w;\bs\theta^*,\bs\lambda,0)\\
\end{pmatrix}
=
\begin{pmatrix}
 \frac{\nabla_{\bs{\theta}} f(\bs w;\bs\theta^*) }{f(\bs w;\bs\theta^*)} \\
  \frac{f(\bs w;\bs\theta^*+\bs\lambda) - f(\bs w;\bs\theta^*)}{f^*(\bs w;\bs\theta^*)}
\end{pmatrix}.
\end{align}

Define $\bs{\mathcal{I}}(\bs\lambda) := E[\bs{s}(\bs{W};\bs\lambda)\bs{s}(\bs{W};\bs\lambda)\t]$. Analogously to (\ref{eq:information}),  define
\begin{equation}\label{eq:information-2}
\begin{split}
\bs{\mathcal{I}}(\bs\lambda) =
\begin{pmatrix}
 \bs{\mathcal{I}}_{\bs\theta_2}  & \bs{\mathcal{I}}_{\bs\theta_2\alpha}(\bs\lambda)  \\
 \bs{\mathcal{I}}_{\alpha\bs\theta_2}(\bs\lambda)  & \bs{\mathcal{I}}_{\alpha}(\bs\lambda)
\end{pmatrix},
\quad \bs{\mathcal{I}}_{\bs\theta_2}  = \mathbb{E}[\bs s_{\bs\theta_2}(\bs W)  \bs s_{\bs\theta_2}(\bs W)\t], \quad \bs{\mathcal{I}}_{\alpha\bs\theta_2}(\bs\lambda)  = \mathbb{E}[\bs s_{\alpha }(\bs W;\bs\lambda)  \bs s_{\bs\theta_2}(\bs W)\t ],  \\
\bs{\mathcal{I}}_{\bs\theta_2\alpha}(\bs\lambda)  = \bs{\mathcal{I}}_{\alpha\bs\theta_2}(\bs\lambda) \t, \quad \bs{\mathcal{I}}_{\alpha}(\bs\lambda)  =\mathbb{E}[\bs s_{\alpha }(\bs W)^2 ],  \ \text{and}\ \ \bs{\mathcal{I}}_{\alpha,\bs\theta_2}(\bs\lambda)  = \bs{\mathcal{I}}_{\alpha}(\bs\lambda)  - \bs{\mathcal{I}}_{\alpha\bs\theta_2}(\bs\lambda)  \bs{\mathcal{I}}_{\bs\theta_2} ^{-1} \bs{\mathcal{I}}_{\bs\theta_2\alpha} (\bs\lambda).  \quad\quad
\end{split}
\end{equation}
Let $\{\bs{S}(\bs\lambda)=(\bs S_{\bs\theta_2},\bs S_\alpha(\bs\lambda)): \bs\lambda\in\widetilde\Theta_{\bs\lambda}(c_1)\}$ be a mean zero vector-valued Gaussian process such that  $\mathbb{E}[\bs{S}( \bs\lambda) \bs{S}(\bs\lambda)^\top]=\bs{\mathcal{I}}(\bs\lambda)$, where $\bs S_{\bs\theta_2}$ is independent of $\bs\lambda$, $\bs S_\alpha(\bs\lambda)$ is $1\times 1$, and $\bs{\mathcal{I}}(\bs\lambda)$ is defined in (\ref{eq:information-2}).
Let $\bs{S}_{\alpha,\bs\theta_2}(\bs\lambda) := \bs S_\alpha(\bs\lambda)-\bs{\mathcal{I}}_{\alpha\bs\theta_2}(\bs\lambda) \bs{\mathcal{I}}_{\bs\theta_2} ^{-1} \bs S_{\bs\theta_2}$.

Define the LRT statistics for testing $H_{02}:\alpha(1-\alpha)=0$ by
\begin{align}\label{eq:LR_def-2}
	\widetilde{LR}^2_n(\bs c) &:=   2 \left\{\ell_n^2(\widetilde{\bs{\vartheta}}_2) - \ell_n^1(\hat{\bs\theta}_0)  \right\} =
	2\left\{ \max_{(\bs\theta_2,\bs\lambda,\alpha)\in \Theta_{\bs\theta} \times \tilde\Theta_{\bs\lambda}(c_1)\times [0,1/2]} \ell_n^2(\bs\theta_2+\bs\lambda,\bs\theta_2,\alpha) -  \ell_n^1(\hat{\bs\theta}_0)  \right\}.
\end{align}

\begin{assumption}\label{assumption:h02} (a) $\theta^*$ is in the interior of $\Theta_{\theta}$. (b) $f(\bs w;\bs\theta)$ is twice continuosly differentiable on $\Theta_{\bs\theta}$.
 (c) $\bs{\mathcal{I}}(\bs\lambda)$ defined in (\ref{eq:information-2}) is finite and positive definite uniformly in $\Theta_{\bs\lambda}(c_1)$. (d) Assumption \ref{assumption:consistency} holds when we replace ${\bar \Theta}_{\bs\vartheta_{2}}(\bs c)$ with ${\tilde \Theta}_{\bs\vartheta_{2}}(\bs c)$.

\end{assumption}

\begin{proposition}\label{prop:h02}
 Assumptions \ref{assumption:K} and \ref{assumption:h02} hold. Then, (a)  $\inf_{\boldsymbol{\vartheta}_2 \in \Theta_{\bs\vartheta_2}^*} ||\widetilde{\boldsymbol{\vartheta}}_2 - \boldsymbol{\vartheta}_2|| \to 0$ almost surely, and (b)
$\widetilde{LR}^2_n(\bs c) \xrightarrow{d} \sup_{\bs\lambda\in\Theta_{\bs\lambda}(c_1)} \left(\max\{0, \bs{\mathcal{I}}_{\alpha,\bs\theta_2}(\bs\lambda)^{-1/2} \bs{S}_{\alpha,\bs\theta_2}(\bs\lambda)\}\right)^2$.
\end{proposition}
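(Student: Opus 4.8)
The plan is to treat part (a) and part (b) separately, with part (a) a routine adaptation of earlier results and part (b) an application of the boundary-plus-unidentified-nuisance framework of \citet{andrews01em}. For part (a), I would mirror the proof of Proposition \ref{prop:consistency}: under Assumptions \ref{assumption:K} and \ref{assumption:h02}(d), the map $\bs\vartheta_2 \mapsto \E[\log g_2(\bs W; \bs\vartheta_2)]$ is maximized, up to the boundary/label degeneracy encoded in $\Theta_{\bs\vartheta_2}^*$, over the compact space $\widetilde\Theta_{\bs\vartheta_2}(\bs c)$, so a uniform law of large numbers and the standard Wald argument deliver $\inf_{\bs\vartheta_2 \in \Theta_{\bs\vartheta_2}^*}\|\widetilde{\bs\vartheta}_2 - \bs\vartheta_2\| \to 0$ almost surely. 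The only novelty relative to Proposition \ref{prop:consistency} is that $\alpha$ may sit on the boundary, but this does not disturb identification because the constraint $\|\bs\theta_1 - \bs\theta_2\| \geq c_1$ keeps the two component densities separated.

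For part (b), the key observation is that for each fixed $\bs\lambda \in \Theta_{\bs\lambda}(c_1)$ the model $g_2(\bs w; \bs\theta_2, \bs\lambda, \alpha)$ is a \emph{regular} one-sided testing problem in $(\bs\theta_2, \alpha)$ with $\alpha \geq 0$: the nuisance $\bs\theta_2$ is interior while $\alpha$ is at the boundary of $[0,1/2]$. Writing $LR_n(\bs\lambda) := 2\{\max_{(\bs\theta_2,\alpha)} \ell_n^2(\bs\theta_2+\bs\lambda, \bs\theta_2, \alpha) - \ell_n^1(\hat{\bs\theta}_0)\}$ for the profiled statistic, the stated LRT is exactly $\widetilde{LR}_n^2(\bs c) = \sup_{\bs\lambda} LR_n(\bs\lambda)$, so the proof reduces to (i) the pointwise limit of $LR_n(\bs\lambda)$ and (ii) upgrading that limit to the supremum over $\bs\lambda$.

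For the pointwise limit I would localize as $\bs\theta_2 = \bs\theta^* + \bs h_\theta/\sqrt{n}$ and $\alpha = h_\alpha/\sqrt{n}$ with $h_\alpha \geq 0$. Assumptions \ref{assumption:h02}(b)(c) and the finite, positive-definite information $\bs{\mathcal{I}}(\bs\lambda)$ of \eqref{eq:information-2} yield the quadratic (LAN) expansion $\ell_n^2 - \ell_n^1(\bs\theta^*) = \bs h^\top \nu_n(\bs\lambda) - \tfrac12 \bs h^\top \bs{\mathcal{I}}(\bs\lambda)\bs h + o_p(1)$ on compacta, where $\bs h = (\bs h_\theta^\top, h_\alpha)^\top$ and $\nu_n(\bs\lambda) = n^{-1/2}\sum_i \bs s(\bs W_i; \bs\lambda)$ for the score in \eqref{eq:s_1}. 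Concentrating out the unconstrained $\bs h_\theta$ replaces the $\alpha$-score by the efficient score $\bs S_{\alpha,\bs\theta_2}(\bs\lambda)$ with variance $\bs{\mathcal{I}}_{\alpha,\bs\theta_2}(\bs\lambda)$, and maximizing the remaining scalar quadratic over the half-line $h_\alpha \geq 0$ gives $\tfrac12(\max\{0, \bs{\mathcal{I}}_{\alpha,\bs\theta_2}(\bs\lambda)^{-1/2}\bs S_{\alpha,\bs\theta_2}(\bs\lambda)\})^2$; the factor of two in the LRT then produces $LR_n(\bs\lambda) \to_d (\max\{0, \bs{\mathcal{I}}_{\alpha,\bs\theta_2}(\bs\lambda)^{-1/2}\bs S_{\alpha,\bs\theta_2}(\bs\lambda)\})^2$, with the case $\alpha \to 1$ handled by symmetry.

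To pass to the supremum I would argue that the constraint $\|\bs\lambda\| \geq c_1$ together with the variance bounds defining $\widetilde\Theta_{\bs\vartheta_2}(\bs c)$ keeps the densities separated and the information finite uniformly in $\bs\lambda$, so that $\{\bs s(\cdot;\bs\lambda): \bs\lambda \in \Theta_{\bs\lambda}(c_1)\}$ is Donsker and $\nu_n(\cdot)$ converges weakly in $\ell^\infty(\Theta_{\bs\lambda}(c_1))$ to the Gaussian process $\bs S(\cdot)$ of Assumption \ref{assumption:h02}(c). Combining this weak convergence with the uniform validity of the quadratic expansion and the continuous-mapping theorem for the supremum functional gives the claimed limit. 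The hard part will be this uniformity in $\bs\lambda$: establishing the Donsker property and, more delicately, that the $o_p(1)$ remainder in the LAN expansion is uniform over $\Theta_{\bs\lambda}(c_1)$ so that the localization holds at $\sqrt{n}$-rate uniformly. This hinges on controlling the tails of $f(\bs w; \bs\theta^* + \bs\lambda)/f(\bs w; \bs\theta^*)$ — precisely the ratio whose second moment diverges once $\sigma_1^2 > 2\sigma^{*2}$, as in Proposition \ref{prop:infinite_fisher} — so the variance bounds in the restricted space are indispensable, and verifying stochastic equicontinuity near the boundary $\|\bs\lambda\| = c_1$ is the crux of the argument.
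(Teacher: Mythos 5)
Your proposal matches the paper's proof in both structure and substance: part (a) is handled exactly as you suggest (the paper simply notes it follows the argument of Proposition \ref{prop:consistency}), and part (b) is the Andrews (2001) boundary-parameter-with-unidentified-nuisance argument you describe — a quadratic expansion in $(\bs\theta_2,\alpha)$ for each fixed $\bs\lambda$, concentration of the interior nuisance $\bs\theta_2$ to obtain the efficient score $\bs S_{\alpha,\bs\theta_2}(\bs\lambda)$, projection onto the half-line $h_\alpha\geq 0$, and a supremum over $\bs\lambda\in\Theta_{\bs\lambda}(c_1)$ justified by weak convergence of the score process. The stochastic-equicontinuity step you correctly flag as the crux is discharged in the paper by verifying Andrews' Assumption 3 via Theorem 2 of \citet{andrews94hdbk}, using that $\nabla_{(\bs\theta_2^\top,\alpha)^\top}\log g_2(\cdot;\bs\theta_2,\bs\lambda,\alpha)$ is a Lipschitz family indexed by the finite-dimensional $\bs\lambda$, with finiteness of $\bs{\mathcal{I}}(\bs\lambda)$ uniformly over $\Theta_{\bs\lambda}(c_1)$ imposed directly in Assumption \ref{assumption:h02}(c) rather than derived from the variance bounds.
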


Assumption \ref{assumption:h02}(b) can be verified for a class of mixture models we consider in this paper.

A necessary condition for Assumption \ref{assumption:h02}(c) is $\sup_{\bs\lambda \in\widetilde\Theta_{\bs\lambda}(c_1)} \mathbb{E}[\nabla_{\alpha} \log g_2(\bs w;\bs\theta^*,\bs\lambda,0)^2]<\infty$, which is violated for the finite mixture normal regression models when $\sigma_j^2> 2\sigma^{*2}$.

%One approach is to impose a restriction on the parameter space for $(\sigma_1^2,\sigma_2^2)$ in the two-component model based on the estimated variance $\hat\sigma_0^2$ from the one-component model, such that:

One approach is to impose a restriction that $\sigma_{1}^2 \leq 2 \sigma^{*2}$. However, a difficulty arises in imposing the constraint $\sigma_{1}^2 \leq 2 \sigma^{*2}$ when estimating a two-component model, given that $\sigma^{*2}$ is unknown. A possible solution is to use the estimated variance from the one-component model, setting:
$$
\sigma_j^2 \in \left[0 ,\, 2\hat\sigma_0^2-c\right] \text{for $j=1,2$},
$$
where $\hat\sigma_0^2$ is the estimated variance from the one-component model and $c$ is a small positive constant. This ensures the asymptotic validity of the condition $\sigma_j^2 < 2\sigma^{*2}$. However, in finite samples, there remains a positive probability that values of $\sigma_j^2$ within this random interval may violate the constraint. %For this reason, we focus on testing $H_{01

%where $c$ is  a small positive constant. This ensures that the condition $\sigma_j^2 < 2\sigma^{*2}$ holds asymptotically. However, in finite samples, there remains a positive probability that some values of $\sigma_j^2$ within the specified random interval may violate this condition.

\subsection{Asymptotic analysis of the penalized maximum likelihood estimator in Section \ref{sec:bic}} \label{app:bic}

In this appendix, we extend the previous asymptotic analysis for testing \( H_{0}: M = M_0 \) against \( H_{1}: M = M_0 + 1 \) to the more general setting of testing \( H_{0}: M = M_0 \) against \( H_{1}: M = M_1 \), where \( M_1 > M_0+1, \) to derive conditions for \( \ell_{n}^{M}(\hat{\bs\vartheta}_{M}) - \ell_n^{M_0}({\bs{\vartheta}}_{M_0}^*) = O_p(1) \).

There are two primary challenges in extending our earlier analysis. First, while there are exactly \( M_0 \) ways for the \( (M_0 + 1) \)-component model to replicate the \( M_0 \)-component model, the number of ways for an \( M_1 \)-component model to replicate an \( M_0 \)-component model can be substantially larger when $M_1 > M_0+1$. Second, the degree of singularity in the Fisher Information Matrix may become more severe, necessitating the higher-order expansions beyond the second-order for an asymptotic analysis.

To deal with the first challenge, given any integer $M$ that is greater than $M_0$, we extend the definition of the restricted parameter space (\ref{Omega}) by partitioning the \( M \)  component-specific parameters into \( M_0 \) ordered, non-empty subsets. Explicitly, we consider a partition \( \mathcal{T} = (T_1, T_2, \ldots, T_{M_0}) \) of the index set \( \{1, 2, \ldots, M\} \), subject to the following conditions:
(i) \( T_h \cap T_{h'} = \emptyset \) for \( h \neq h' \), and \( |T_h| \geq 1 \),
(ii) \( \bigcup_{h=1}^{M_0} T_h = \{1, 2, \ldots, M\} \),
(iii) If \( j \in T_h \) and \( k \in T_{h'} \) with \( h < h' \), then \( j < k \). The total number of such ordered, non-empty, consecutive partitions is \( J_{M} := \binom{M - 1}{M_0 - 1} \). Let \( \{\mathcal{T}^j\}_{j=1}^{J_{M}} \) denote the set of all these partitions, where each \( \mathcal{T}^j = (T_1^j, T_2^j, \ldots, T_{M_0}^j) \).

Additionally, consider a partition \( \{\Theta_{\bs\theta,h}^*\}_{h=1}^{M_0} \) of the parameter space \( \Theta_{\bs\theta} \), defined by (\ref{theta-partition}), such that each \( \Theta_{\bs\theta,h}^* \) forms a neighborhood containing \( \bs\theta_h^{*} \) but excluding \( \bs\theta_j^{*} \) for all \( j \neq h \). Then, for each partition \( \mathcal{T}^j \), \( j = 1, 2, \dots, J_{M} \), the restricted parameter space is defined as:
\[
\bs{\Psi}_{\mathcal{T}^j}^* = \left\{ \bs{\vartheta}_{M} \in \bar\Theta_{\bs{\vartheta}_{M}}(\bs{c}) : \sum_{j=1}^{M} \alpha_j = 1, \quad \bs{\theta}_j \in \Theta_{\bs{\theta},h}^* \text{ for all } j \in T_h^j, \quad h = 1, \dots, M_0 \right\}.
\]
Note that each \( \bs{\Psi}_{\mathcal{T}^j}^* \) uniquely represents a distinct configuration in which the components from an \( M \)-component model are distributed across \( M_0 \) components. Furthermore, the union of  \( \bs{\Psi}_{\mathcal{T}^j}^* \) over $j=1,2,...,J_M$ covers the parameter space $ \bar\Theta_{\bs\vartheta_{M}}(\bs c)$.

Define the \textit{local} MLE that maximizes the log-likelihood function of the $M$-component model under the constraint that $\bs\vartheta_{M}\in  {\bs\Psi}_{\mathcal{T}^j}^*$ by
\[
 \ell^{M}_n(\widehat{\bs{\vartheta}}_{M}^j)   := \arg\sup_{\bs{\vartheta}_{M}  \in \bs {\Psi}_{\mathcal{T}^j}^*}   \ell_n^{M}(\bs{\vartheta}_{M}).
\]
Because $\cup_{j=1}^{J_{M}}\bs {\Psi}_{\mathcal{T}^j}^*= \bar\Theta_{\bs\vartheta_{M}}(\bs c)$, \begin{equation}
\ell_n^{M}(\hat{\bs{\vartheta}}_{M}) - \ell^{M_0}_n( {\bs{\vartheta}}_{M_0}^*) =\max_{j=1,2,...,J_{M}} \{\ell_n^{M}(\hat{\bs{\vartheta}}_{M}^j) - \ell^{M_0}_n( {\bs{\vartheta}}_{M_0}^*) \}. \label{eq:M0_M1_max}
\end{equation}
In view of (\ref{eq:M0_M1_max}), $\ell_n^{M}(\hat{\bs{\vartheta}}_{M}) - \ell^{M_0}_n( {\bs{\vartheta}}_{M_0}^*)=O_p(1)$ follows if we are able to show that $\ell_n^{M}(\hat{\bs{\vartheta}}_{M}^j) - \ell^{M_0}_n( {\bs{\vartheta}}_{M_0}^*)=O_p(1)$ for any $j=1,...,J_{M}$.

%By expanding $\frac{g_{M}(\bs{w};\bs{\psi}_{M}^{j},\bs\tau^j)}{g_{M_0}(\bs{w}; \bs\vartheta_{M_0}^*)}-1$,
%we obtain the quadratic-form approximation of the log-likelihood function  $\ell_n^{M}(\hat{\bs{\vartheta}}_{M}^j) - \ell^{M_0}_n( {\bs{\vartheta}}_{M_0}^*)$ that involves the higher-order expansion beyond the second-order using a generalized version of Le Cam's differentiable in quadratic mean (DQM) expansion that expands the likelihood ratio under the loss of identifiability while adopting the reparameterization and higher-order expansion \citep{liushao03as,kasahara2018arXiv}.
%Here is a clearer and more polished version of your sentence, improving structure, readability, and precision while preserving all technical content:
%
%---

To deal with the singularity of the Fisher Information Matrix for the $M$-component density function  $g_{M}(\bs w;\bs\vartheta_M)$, we consider the following one-to-one reparameterization of $\bs\vartheta_M$. Specifically, for $j = 1, 2, \dots, J_{M}$,  we reparameterize \( \bs\alpha = (\alpha_1, \dots, \alpha_{M-1}) \) by \( \beta_h^j = \sum_{k \in T_h^j} \alpha_k \), $\bs\tau_h^j = (\tau_{h,1}^j,...,\tau^j_{h,|T_h^j|-1})\t$, and \( \tau_{h,k}^j = \alpha_k / \beta_h^j \) for \( k \in T_h^j \)  and  \( h = 1, \dots, M_0 \). Given the index set \( \mathcal{T}^j = (T_1^j, \dots, T_{M_0}^j) \) with \( T_h^j = (k_h, k_h+1, \dots, k_h+|T_h^j|-1) \), and applying an additional reparameterization of \( \{\bs\theta_k\}_{k=1}^{M} \) defined by (\ref{eq:repara-2}), the reparameterized density function of the \( M \)-component model is expressed as:
\begin{align}\label{density-M}
 g_{M}(\bs w;\bs\psi_{M}^j,\bs\tau^j) &= \sum_{h=1}^{M_0} \beta_h^j \left( \sum_{k \in T_h^j} \tau_k^j f(\bs w;\tilde{\bs\theta}_k^j) \right),
\end{align}
where
\begin{equation}\label{eq:repara-2}
\begin{pmatrix}
\tilde{\bs\theta}_{k_h}^j \\
\tilde{\bs\theta}_{k_h+1}^j \\
\vdots \\
\tilde{\bs\theta}_{k_h+|T_h^j|-1}^j
\end{pmatrix}
=
\begin{pmatrix}
\bs\nu_h^j - \sum_{\ell=2}^{|T_h^j|} \tau_{h,\ell}^j \bs\lambda_{h,\ell}^j \\
\bs\nu_h^j + \tau_{h,1}^j \bs\lambda_{h,2}^j \\
\vdots \\
\bs\nu_h^j + \tau_{h,1}^j \bs\lambda_{h,|T_h^j|-1}^j
\end{pmatrix}, \quad h=1,\dots,M_0.
\end{equation}

Collecting the reparameterized parameters, we set \( \bs\psi_{M}^j := (\bs\eta^j, \bs\lambda^j)^\top \), \( \bs\tau^j := (\bs\tau_1^j, \dots, \bs\tau_{M_0}^j)^\top \), \( \bs\eta^j := (\bs\beta^j, \bs\nu^j) \), \( \bs\beta^j := (\beta_1^j, \dots, \beta_{M_0-1}^j)^\top \), \( \bs\nu^j := (\bs\nu_1^j, \dots, \bs\nu_{M_0}^j)^\top \), and \( \bs\lambda^j := (\bs\lambda_1^j, \dots, \bs\lambda_{M_0}^j)^\top \), where \( \bs\nu_h^j := (\bs\nu_{h,2}^j, \dots, \bs\nu_{h,|T_h^j|}^j)^\top \), \( \bs\lambda_h^j := (\bs\lambda_{h,2}^j, \dots, \bs\lambda_{h,|T_h^j|}^j)^\top \), and \( \bs\tau_h^j := (\tau_{h,1}^j, \dots, \tau_{h,|T_h^j|-1}^j) \) for \( h=1, \dots, M_0 \). When the data is generated from the $M_0$-components model density  \( g_{M_0}(\bs w;\bs\vartheta_{M_0}^*) := \sum_{h=1}^{M_0} \alpha_h^* f(\bs w;\bs\theta_h^*) \), we have \( \tilde{\bs\theta}_k^{j*} = \bs\theta_h^* \) for all \( k \in T_h^j \), hence \( \bs\nu_h^{j*} = \bs\theta_h^* \) and \( \bs\lambda_h^{j*} = \bs{0} \) while \( \bs\tau^j \) is not identified. In this scenario, since \( \beta_h^{j*} = \alpha_h^* \) and \( \sum_{k \in T_h^j} \tau_k^j f(\bs w;\tilde{\bs\theta}_k^{j*}) = f(\bs w;\bs\theta_h^*) \) in (\ref{density-M}), the density function \( g_{M}(\bs w;\bs\psi_{M}^{j*},\bs\tau^j) \) coincides with the true \( M_0 \)-component model density \( g_{M_0}(\bs w;\bs\vartheta_{M_0}^*)\).

With this reparameterization, the first-order derivatives satisfy:
\begin{equation}\label{eq:g_nabla}
\nabla_{\bs\nu_h^j} \log g_{M}(\bs{w};\bs{\psi}_{M}^{j*},\bs\tau^j) = \frac{\nabla_{\bs\theta_h} f(\bs{w}; \bs\theta_h^*)}{g_{M_0}(\bs{w}; \bs\vartheta_{M_0}^*)}, \quad
\nabla_{\bs\lambda_h^j} \log g_{M}(\bs{w};\bs{\psi}_{M}^{j*},\bs\tau^j) = \bs{0}, \quad h=1,\dots,M_0.
\end{equation}
Because  $\nabla_{\bs{\lambda}_h^j} \log g_{M}(\bs{w};\bs{\psi}_{M}^{j*},\bs\tau^j)   = \bs{0}$, the Fisher information matrix is singular, and the standard quadratic approximation fails. As analyzed in Section \ref{sec:LRT1}, when  $M=M_0+1$, the unique elements of $\nabla_{\bs\lambda_h^j\bs\lambda_h^j}\log g_{M}(\bs{w};\bs{\psi}^{j*}_{M},\bs\tau^j)$ plays the role of score function in (\ref{eq:s_1}) to identify  $\bs{\lambda}_h^j$.  However, when $M$ is much larger than $M_0$,  we need higher order derivatives beyond the second order derivatives to identify ${\bs\lambda}_h^j$.

Denote the density ratio by
\[
l_{\bs\psi_{M}^j\bs\tau^j,i} :=\frac{ g_{M}(\bs W_i;\bs\psi_{M}^j,\bs\tau^j) }{g_{M_0}(\bs W_i;\bs\vartheta_{M_0}^*)}
\]
so that $\ell_n(\bs\psi_{M}^j,\bs\tau^j)-\ell_n(\bs\psi_{M}^{j*},\bs\tau^j)=\sum_{i=1}^n \log l_{\bs\psi_{M}^j\bs\tau^j,i}$.

Let $\bs\lambda_{h,\ell_1}^j\otimes\bs\lambda_{h,\ell_2}^j\otimes \cdots \otimes\bs\lambda_{h,\ell_p}^j$ denote the tensor containing all interactions among the elements of the vectors $\bs\lambda_{h,\ell_1}^j$, $\bs\lambda_{h,\ell_2}^j$, $\cdots$, $\bs\lambda_{h,\ell_p}^j$,  where \(\otimes\) denotes the Kronecker product.
We use the higher-order derivatives of $l_{\bs\psi_{M}^j\bs\tau^j,i} $ with respect to $\bs\lambda_h^j $ evaluated at $\bs\psi_M=\bs\psi^*_M$ to identify $ \bs\lambda_h^j $: for $h=1,2...,M_0$ and for $p\geq 2$,
 \begin{align*}
%\nabla_{\bs\lambda_j\otimes  \bs\lambda_k}l_{\bs\psi_M^*\bs\alpha,i}   &=  \gamma_{jk}^2(\bs\alpha)\frac{\nabla_{\bs\theta\otimes  \bs\theta} f^*_i}{f^*_i} \quad \text{for $(j,k)\in\{2,...,M\}^2$},\\
%\nabla_{ \bs\lambda_j \otimes \bs\lambda_k\otimes \bs\lambda_\ell } l_{\bs\psi_M^*\bs\alpha,i}  &=  \gamma_{jk\ell}^3(\bs\alpha) \frac{\nabla_{ \bs\theta\otimes  \bs\theta\otimes \bs\theta} f^*_i}{f^*_i} \quad \text{for $(j_1,j_2,\cdots,j_p)\in\{2,...,M\}^3$},\\
%\vdots\\
\nabla_{\bs\lambda_{h,\ell_1}^j\otimes\bs\lambda_{h,\ell_2}^j\otimes \cdots \otimes\bs\lambda_{h,\ell_p}^j} l_{\bs\psi_{M}^j\bs\tau^j,i} &=  \gamma_{h,\ell_1\ell_2\cdots \ell_p}^j(\bs\tau_h^j)\frac{\alpha_h^* \nabla_{ \bs\theta^{\otimes  p}} f^*_{h,i}}{g^*_{M_0,i}} \quad \text{for $(\ell_1,...,\ell_p)\in\{2,...,|T_h^j|\}^p$},
\end{align*}
where $\bs\theta^{\otimes  p}$ represents $\bs\theta\otimes \bs\theta\cdots \otimes \bs\theta$ (p times), $f^*_{h,i}:=f(W_i;\bs\theta_h^*)$, and $g^*_{M_0,i}:=g_{M_0}(\bs W_i;\bs\vartheta_{M_0}^*)$. For $p=2,3$, we have $\gamma_{h,\ell_1\ell_2}^j(\bs\tau_h^j)=\tau_{h,1}^j\tau_{h,\ell_1}^j(\tau_{h,\ell_2}^j+ \delta_{\ell_1\ell_2})$ and $\gamma_{h,\ell_1\ell_2\ell_3}^j(\bs\tau_h^j)= \tau_{h,1}^j\tau_{h,\ell_1}^j(\delta_{\ell_1\ell_2}\delta_{\ell_2\ell_3}(\tau_{h,1}^j)^2-\tau_{h,\ell_2}^j \tau_{h,\ell_3}^j )$  with $\delta_{ij}$ being Kronecker delta.
The elements of $\nabla_{\bs\lambda_{h,\ell_1}^j\otimes\bs\lambda_{h,\ell_2}^j\otimes \cdots \otimes\bs\lambda_{h,\ell_p}^j} l_{\bs\psi_{M}^j\bs\tau^j,i}$ are ave-zero random variables.

Let $q:=\text{dim}(\bs{\theta})$. Let $\widetilde{\text{vech}}_p({\nabla_{\bs\theta^{\otimes  p}} f^*}/{g^*})$ extract all \textit{unique} elements of $q$-dimensional symmetric array ${\nabla_{\bs\theta^{\otimes  p}} f^*}/{g^*}$, multiply by its frequency, and stacks them into a vector.   For example,  for $p=3$ and $q\geq 3$,  $\widetilde{\text{vech}}_3\Bigl({\nabla_{\bs\theta\otimes  \bs\theta\otimes  \bs\theta} f^*}/{g^*}\Bigr)$ with $\bs\theta=(\theta_1,...,\theta_q)^\top$ will contain (i) \(q\) elements of the form \({\nabla_{\theta_i^3} f^*}/{g^*}\), (ii) $q(q-1)$ elements of the form $3{\nabla_{\theta_i^2\theta_j} f^*}/{g^*}\) for $i\neq j$, and (iii) $\binom{q}{3}$ of the form $6{\nabla_{\theta_i\theta_j\theta_k} f^*}/{g^*}\) for $i\neq j\neq k$.

 Choose $p_h^j$ so that the dimension of the unique elements of $\frac{\nabla_{\bs\theta\otimes  \bs\theta} f_{h,i}^*}{g_{M_0,i}^*}$, $\frac{\nabla_{ \bs\theta\otimes  \bs\theta\otimes \bs\theta}  f_{h,i}^*}{g_{M_0,i}^*}$, ...., $\frac{\nabla_{ \bs\theta^{\otimes  p_M}} f_{h,i}^*}{g_{M_0,i}^*} $ is at least as large as the number of elements in $\bs\lambda_h^j $ but no larger than necessary:
\begin{align}
& \dim\left(\widetilde{\textnormal{vech}}_2\left(\frac{\alpha_h^* \nabla_{\bs\theta\otimes  \bs\theta}  f_{h,i}^*}{g_{M_0,i}^*}\right)\right)+ \cdots + \dim\left(\widetilde{\textnormal{vech}}_{p_h^j}\left(\frac{\alpha_h^* \nabla_{\bs\theta^{\otimes  {p_h^j}}}  f_{h,i}^*}{g_{M_0,i}^*}\right) \right)\ \geq\ \dim(\bs\lambda_h^j )\nonumber\\
&\quad> \
 \dim\left(\widetilde{\textnormal{vech}}_2\left(\frac{\alpha_h^* \nabla_{\bs\theta\otimes  \bs\theta}  f_{h,i}^*}{g_{M_0,i}^*}\right) \right)+ \cdots + \dim\left(\widetilde{\textnormal{vech}}_{p_h^j-1}\left(\frac{\alpha_h^* \nabla_{\bs\theta^{\otimes  (p_h^j-1)}}  f_{h,i}^*}{g_{M_0,i}^*}\right) \right).\label{eq:lambda}
\end{align}

The value of $p_h^j$ indicates a necessary order of local expansions for identifying $\bs\lambda_h^j$.
For example, consider  the case of testing $H_0: M=1$ against $H_1: M=M_1$ for any $M_1>1$. Then, $g_{M_0,i}^*=f^*_i$ and $J_{M_1}=1$. In this case,  for $p=2$, because $ \dim\left(\widetilde{\textnormal{vech}}_2\left(\frac{\nabla_{\bs\theta\otimes  \bs\theta} f_i^*}{f_i^*}\right)\right)=\frac{q(q+1)}{2}$ and $\dim({\bs\lambda})=q(M_1-1)$,
(\ref{eq:lambda}) is satisfied if $\frac{(q+1)}{2}\geq M_1-1$.  For $p=3$ and $q\geq 3$, the condition is
$\frac{q(q+1)}{2} + \left(q + q(q-1) + \frac{q(q-1)(q-2)}{6}\right) \geq q(M_1 - 1)> \frac{q(q+1)}{2}$, which simplifies to
$\frac{(q+5)(q+1)}{6}+1  \geq M_1>\frac{q+3}{2}$.
For $p=4$ and $q\geq 4$,  the condition is given by $\frac{(q+5)(q+1)}{6}+ \frac{q^3 - 5q^2 + 10q - 4}{2}+1\geq M_1>\frac{(q+5)(q+1)}{6}+1$, where ${\bs\lambda}$ is identified for up to $M_1=18$ if $q=4$.

For $h=1,...,M_0$, let
% \[
%  \bs v_{\bs\lambda^j\bs\tau^j}:=
%  \begin{pmatrix}
%   \bs v_{{\bs\lambda}_1^j\bs\tau_1^j}\\
%   \vdots\\
%   \bs v_{{\bs\lambda}_{M_0}^j\bs\tau_{M_0}^j}
%  \end{pmatrix}
%  \]
%  where
 \[
 \bs v_{{\bs\lambda}_h^j\bs\tau_h^j}:=
 \begin{pmatrix}
 \text{vech}_2\left( \sum_{\ell_1=2}^{|T_h^j|}\sum_{\ell_2=2}^{|T_h^j|} \gamma_{h,\ell_1\ell_2}^j(\bs\tau_h^j) \bs\lambda_{h,\ell_1}^j\otimes \bs\lambda_{h,\ell_2}^j\right) \\
% \text{vech}_3( \sum_{\ell_1=2}^{|T_h^j|}\sum_{\ell_2=2}^{|T_h^j|}\sum_{\ell_3=2}^{|T_h^j|} \gamma_{h,\ell_1\ell_2\ell_3}^j(\bs\tau_h^j)\bs\lambda_{h,\ell_1}^j \otimes \bs\lambda_{h,\ell_2}^j  \otimes \bs\lambda_{h,\ell_3}^j)\\
 \vdots\\
 \text{vech}_{p_h^j}\left( \sum_{\ell_1=2}^{|T_h^j|} \cdots \sum_{\ell_{p_M}=2}^{|T_h^j|} \gamma_{h,\ell_1\ell_2\cdots \ell_{p_{M}}}^j(\bs\tau_h^j)\bs\lambda_{h,\ell_1}^j \otimes \bs\lambda_{h,\ell_2}^j  \otimes  \cdots \otimes \bs\lambda_{h,\ell_{p_h^j}}^j\right)
% \vdots\\
%\text{vech}_p(\alpha_1\sum_{j_1=2}^M\cdots \sum_{j_p=2}^M\bigotimes_{q=1}^p  \alpha_{j_q}
% \bs\lambda_{j_q})
%\alpha_1\sum_{j_1=2}^M\cdots \sum_{j_p=2}^M \alpha_{j_1}\cdots \alpha_{j_p} \text{vech}_p(\bs\lambda_{j_1} \otimes \cdots  \otimes \bs\lambda_{j_p})
 \end{pmatrix},
 \]
 where  $\text{vech}_p(\cdot)$ is defined similarly to $\widetilde{\text{vech}}_p(\cdot)$ as an operator that extracts all unique elements of $q$-dimensional symmetric array, but without multiplying each element by its frequency, so that the elements of $\text{vech}_p(\bs\lambda_{h,\ell_1} \otimes \cdots \otimes \bs\lambda_{h,\ell_p})$ are conformable to those of $\widetilde{\text{vech}}_p({\nabla_{\bs\theta^{\otimes  p}} f_h^*}/{g_{M_0}^*})$.

 Collect the relevant normalized reparameterized parameters and define $ \bs{t}_{\bs{\psi}_{M}^j\bs\tau^j}$ as
\begin{equation}
 \label{eq:t_bic}
 \bs{t}_{\bs{\psi}_{M}^j\bs\tau^j} :=   \begin{pmatrix}
  \bs{\bs\eta} - \bs{\bs\eta}^*\\
   \bs v_{{\bs\lambda}_1^j\bs\tau_1^j}\\
   \vdots\\
   \bs v_{{\bs\lambda}_{M_0}^j\bs\tau_{M_0}^j}
 \end{pmatrix}.
\end{equation}

For $j=1,...,J_{M}$, define the vector $\bs{s}^{j}(\bs{W})$ as
\begin{align}\label{eq:s_bic}
	 &\bs{s}^{j}(\bs{W}) = \begin{pmatrix}
		\bs{s}_{\bs{\bs\eta}}^{j}(\bs W)   \\
		\bs{s}_{\bs{\lambda}}^{j}(\bs W)
\end{pmatrix}\quad \text{with}  \quad \bs{s}_{\bs{\bs\eta} }^{j}(\bs W) :=
\begin{pmatrix}
\bs{s}_{\bs{\alpha}}^{j}(\bs W) \\
\bs{s}_{\bs{\nu}}^{j}(\bs W)
\end{pmatrix}\quad\text{and}\quad \bs{s}_{\bs{\lambda}}^{j}(\bs W) :=
\begin{pmatrix}
\bs{s}_{\bs{\lambda}_1^j}^{j}(\bs W) \\
\vdots\\
\bs{s}_{\bs{\lambda}_{M_0}^j}^{j}(\bs W)
\end{pmatrix},
\end{align}
where
\begin{align}
\bs{s}_{\bs{\alpha}}^{j}(\bs W) =
\begin{pmatrix}
\frac{f_{1}(\bs W;\bs\theta_1^*)-f_{M_0}(\bs W;\bs\theta_{M_0}^*)}{g_{M_0}(\bs W;\bs\vartheta_{M_0}^*)}\\
\vdots\\
\frac{f_{M_0-1}(\bs W;\bs\theta_{M_0-1}^*)-f_{M_0}(\bs W;\bs\theta_{M_0}^*)}{g_{M_0}(\bs W;\bs\vartheta_{M_0}^*)}
\end{pmatrix},\quad
\bs{s}_{\bs{\nu}}^{j}(\bs W) =
\begin{pmatrix}
\frac{\alpha_1^* \nabla_{  \bs\theta} f(\bs W;\bs\theta_1^*)}{g_{M_0}(\bs W;\bs\vartheta_{M_0}^*)}\\
\vdots\\
\frac{\alpha_{M_0}^* \nabla_{ \bs\theta} f(\bs W;\bs\theta_{M_0}^*)}{g_{M_0}(\bs W;\bs\vartheta_{M_0}^*)}\\
\end{pmatrix},\quad\text{and}
\end{align}
\begin{align}
\bs{s}_{\bs{\lambda}_h^j}^{j}(\bs W)=
\begin{pmatrix}
\widetilde{\textnormal{vech}}_2\left(\frac{\alpha_h^* \nabla_{\bs\theta\otimes  \bs\theta}  f(\bs W;\bs\theta_h^*)}{g_{M_0}(\bs W;\bs\vartheta_{M_0}^*)}\right)\\
\vdots\\
\widetilde{\textnormal{vech}}_{p_h^j}\left(\frac{\alpha_h^* \nabla_{\bs\theta^{\otimes  {p_h^j}}}   f(\bs W;\bs\theta_h^*)}{g_{M_0}(\bs W;\bs\vartheta_{M_0}^*)}\right)
\end{pmatrix}\quad\text{for $h=1,...,M_0$}.
\end{align}

We  denote $\bs s_i^j:=\bs s^j(\bs W_i)$. We assume that $l_{\bs\psi_{M}^j\bs\tau^j,i}$ can be expanded around $l_{\bs\psi_{M}^j\bs\tau^j,i}=1$ as follows. Let $ P_n({\bs s}^j_i({\bs s}^j_i)\t):=n^{-1}\sum_{i=1}^n {\bs s}^j_i({\bs s}^j_i)\t$ and $\nu_n( {\bs s}^j_i):= n^{-1/2}\sum_{i=1}^n [ {\bs s}^j_i- \mathbb{E}_{\bs\theta^*}[ {\bs s}^j_i]]$.

For \( \epsilon > 0 \), define the neighborhood of the set \( \bs\psi_{M}^j \) corresponding to the null hypothesis \( H_0: M = M_0 \) as follows:
\[
\mathcal{N}_\epsilon = \{\bs\psi_{M}^j \in \Theta_{\bs\psi_{M}^j} : \sup_{\bs\tau^j \in \Theta_{\bs\tau^j,c_1}} \|\bs{t}_{\bs{\psi}_{M}^j \bs\tau^j}\| < \epsilon\},
\]
where \( \Theta_{\bs\tau^j,c_1} \) denotes the parameter space of \( \bs\tau^j \), as implied by the parameter space \( \Theta_{\bs\alpha,c_1} \) of \( \bs\alpha \), which restricts the values of \( \tau_h^j \) to be bounded away from 0 and 1.

\begin{assumption}\label{assumption:BIC} For $M=M_0+1,...,\bar M$, the following assumption holds when the data is generated from the $M_0$-component model.
For all $j=1,...,J_{M}$ and
$i=1,...,n$, $l_{\bs\psi_{M}^j\bs\tau^j,i}-1$ admits an expansion
\begin{equation}\label{density-ratio}
l_{\bs\psi_{M}^j\bs\tau^j,i} -1 = \left(  \bs{t}_{\bs{\psi}_{M}^j\bs\tau^j}\right)\t \bs s_i^j + r_{\bs\psi_{M}^j\bs\tau^j,i},
\end{equation}
where  $\bs{t}_{\bs{\psi}_{M}^j\bs\tau^j}$ satisfies $\bs\psi_{M}^j\rightarrow \bs\psi_{M}^{j*}$ if $\sup_{\bs\tau^j\in \Theta_{\bs\tau^j,c_1}}|| \bs{t}_{\bs{\psi}_{M}^j\bs\tau^j}||\rightarrow 0$, and $(\bs s_i^j, r_{\bs\psi_{M}^j\bs\tau^j,i})$ satisfy, for some $C\in (0,\infty)$, $\delta>0$, $\epsilon>0$, and $\rho\in(0,1)$, (a) $\mathbb{E}_{\bs\vartheta_{M_0}^*}||\bs s_i^j||^{2+\delta}<C$, (b) $||P_n(\bs{s}_i^j(\bs{s}_i^j)\t) - \bs{\mathcal{I}}^j|| = o_p(1)$, where $\bs{\mathcal{I}}^j:= \mathbb{E}[\bs{s}^j_i (\bs{s}^j_i) ^\top]$ is finite and non-singular, (c) $\mathbb{E}_{\bs\theta^*}[\sup_{{\bs\psi_{M}^j} \in \mathcal{N}_\varepsilon} |r_{\bs\psi_{M}^j\bs\tau^j,i}/(||\bs{t}_{\bs{\psi}_{M}^j\bs\tau^j}||||\bs\psi_{M}^j-\bs\psi_{M}^{j*}||)|^2] < \infty$, (d) $\sup_{{\bs\psi_{M}^j} \in \mathcal{N}_\varepsilon} [ \nu_n(r_{\bs\psi_{M}^j\bs\tau^j,i})/(||{\bs{t}_{\bs{\psi}_{M}^j\bs\tau^j}}||||\bs\psi_{M}^j-\bs\psi_{M}^{j*}||)] = O_p(1)$,  and  (e) $\sup_{{\bs\psi_{M}^j} \in \mathcal{N}_{\varepsilon}} ||\nu_n(\bs{s}_i^j)|| =O_p(1)$.

\end{assumption}

Assumption \ref{assumption:BIC}(b) is a key regularity condition that requires   the unique elements of ${\nabla_{\bs\theta} f_h^*}/{g_{M_0}^*}$, ${\nabla_{\bs\theta\otimes  \bs\theta} f_h^*}/{g_{M_0}^*}$, ...., ${\nabla_{ \bs\theta^{\otimes  p_h}} f_h^*}/{g_{M_0}^*} $ are linearly independent and their expectation is finite.

\begin{proposition} \label{Ln_thm1}
Suppose that Assumption \ref{assumption:BIC} holds. Then,  for any $c>0$,
\begin{equation*}
\sup_{\bs\psi_{M}^j \in  \mathcal{N}_{c/\sqrt{n}}}\left|\ell_n(\bs\psi_{M}^j,\bs\tau^j)-\ell_n(\bs\psi_{M}^{j*},\bs\tau^j) - \sqrt{n} \bs t^{\top}_{\bs\psi_{M}^j\bs\tau^j} \nu_n (\bs s_i^j) + n \bs t^{\top}_{\bs\psi_{M}^j\bs\tau^j} \bs{\mathcal{I}}^j \bs t_{\bs\psi_{M}^j\bs\tau^j}/2 \right| = o_{p }(1).
\end{equation*}
\end{proposition}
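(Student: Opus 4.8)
The plan is to turn the local log-likelihood difference into a sum of logarithms of the density ratios and to feed the first-order expansion of Assumption~\ref{assumption:BIC} into a second-order Taylor expansion of $\log(1+x)$. Abbreviate $\bs t:=\bs t_{\bs\psi_{M}^j\bs\tau^j}$, $r_i:=r_{\bs\psi_{M}^j\bs\tau^j,i}$, and $x_i:=l_{\bs\psi_{M}^j\bs\tau^j,i}-1=\bs t\t\bs s_i^j+r_i$, so that $\ell_n(\bs\psi_{M}^j,\bs\tau^j)-\ell_n(\bs\psi_{M}^{j*},\bs\tau^j)=\sum_{i=1}^n\log(1+x_i)$. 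The first fact I would record is that both the score and the remainder are mean-zero under $\bs\vartheta_{M_0}^*$: since $\E_{\bs\vartheta_{M_0}^*}[l_{\bs\psi_{M}^j\bs\tau^j,i}]=\int g_{M}(\bs w;\bs\psi_{M}^j,\bs\tau^j)\,d\bs w=1$ and each coordinate of $\bs s_i^j$ integrates to zero against $g_{M_0}(\cdot;\bs\vartheta_{M_0}^*)$, taking expectations in the defining identity gives $\E[r_i]=0$. This removes the $O(n)$ bias from the linear part and is essential given the slow rate of the degenerate directions.

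Next I would track orders using the two scales $a:=\|\bs t\|$ and $b:=\|\bs\psi_{M}^j-\bs\psi_{M}^{j*}\|$. On $\mathcal{N}_{c/\sqrt n}$ we have $a<c/\sqrt n$, hence $na^2\le c^2=O(1)$, while the property that $\bs\psi_{M}^j\to\bs\psi_{M}^{j*}$ whenever $\|\bs t\|\to 0$ forces $b\to 0$ (slower than $n^{-1/2}$, since the degenerate blocks $\bs v_{\bs\lambda_h^j\bs\tau_h^j}$ enter $\bs t$ only through monomials of degree $\ge 2$ in $\bs\lambda_h^j$). For the linear term, $\sum_i x_i=\sqrt n\,\bs t\t\nu_n(\bs s_i^j)+\sum_i r_i$, and since $\E[r_i]=0$ we have $\sum_i r_i=\sqrt n\,\nu_n(r_i)$; by Assumption~\ref{assumption:BIC}(d) this is $\sqrt n\cdot ab\cdot O_p(1)\le c\,b\,O_p(1)=o_p(1)$. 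For the quadratic term, $\sum_i x_i^2=\bs t\t\bigl(\sum_i\bs s_i^j(\bs s_i^j)\t\bigr)\bs t+2\sum_i(\bs t\t\bs s_i^j)r_i+\sum_i r_i^2$; Assumption~\ref{assumption:BIC}(b) gives $\bs t\t\bigl(\sum_i\bs s_i^j(\bs s_i^j)\t\bigr)\bs t=n\bs t\t\bs{\mathcal{I}}^j\bs t+na^2\,o_p(1)=n\bs t\t\bs{\mathcal{I}}^j\bs t+o_p(1)$, Assumption~\ref{assumption:BIC}(c) together with the law of large numbers gives $\sum_i r_i^2\le n(ab)^2O_p(1)\le c^2b^2O_p(1)=o_p(1)$, and the cross term is $o_p(1)$ by Cauchy--Schwarz. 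Hence $-\tfrac12\sum_i x_i^2=-\tfrac{n}{2}\bs t\t\bs{\mathcal{I}}^j\bs t+o_p(1)$.

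It then remains to bound the third-order Taylor remainder $\sum_i R_i$, where $|R_i|\le C|x_i|^3$ holds once $\max_i|x_i|<\tfrac12$. I would obtain the uniform negligibility of the maximum from the moment conditions: Assumption~\ref{assumption:BIC}(a) and the standard maximal inequality give $\max_i\|\bs s_i^j\|=o_p(n^{1/(2+\delta)})$, so $\max_i|\bs t\t\bs s_i^j|\le a\max_i\|\bs s_i^j\|=o_p(1)$, while the second-moment envelope of (c) gives $\max_i|r_i|=ab\cdot o_p(\sqrt n)=o_p(1)$; together $\max_i|x_i|=o_p(1)$, which also makes $1+x_i>0$ eventually so the logarithm is well defined. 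Then $\sum_i|x_i|^3\le\max_i|x_i|\cdot\sum_i x_i^2=o_p(1)\cdot O_p(1)=o_p(1)$, whence $\sum_i R_i=o_p(1)$. Collecting the three pieces yields the claimed expansion, and uniformity over $\bs\psi_{M}^j\in\mathcal{N}_{c/\sqrt n}$ and $\bs\tau^j$ follows because each bound above is stated as a supremum over the neighborhood in Assumption~\ref{assumption:BIC}(b)--(e).

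The main obstacle is not the Taylor bookkeeping but the mismatch of scales created by the degeneracy. Because the information is singular in the $\bs\lambda$ directions, the natural rate of $\bs\psi_{M}^j$ is slower than $n^{-1/2}$, and the expansion must be organized around the packaged vector $\bs t$, whose higher-order blocks $\bs v_{\bs\lambda_h^j\bs\tau_h^j}$ absorb the monomials $\bs\lambda_h^{j}\otimes\cdots\otimes\bs\lambda_h^{j}$. The whole argument hinges on normalizing the remainder by the product $ab=\|\bs t\|\,\|\bs\psi_{M}^j-\bs\psi_{M}^{j*}\|$ rather than by $a^2$ or $b^2$, so that the surplus factor $b\to 0$ is precisely what drives the cross and remainder terms to $o_p(1)$ while $na^2$ stays $O(1)$; the substantive analytic work lies in verifying the uniform empirical-process control of this normalized remainder (Assumption~\ref{assumption:BIC}(c)--(d)), which is taken here as a hypothesis.
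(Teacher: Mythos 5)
Your argument is correct and arrives at the stated expansion, but it takes a more elementary route than the paper. The paper works with the square roots $h_i=\sqrt{l_{\bs\psi_{M}^j\bs\tau^j,i}}-1$, writes $\ell_n(\bs\psi_{M}^j,\bs\tau^j)-\ell_n(\bs\psi_{M}^{j*},\bs\tau^j)=2\sum_i\log(1+h_i)=nP_n(2h_i-[1+o_p(1)]h_i^2)$, and then proves two separate approximations: $nP_n(h_i^2)=n\bs t\t\bs{\mathcal{I}}^j\bs t/4+o_p(1)$, via the exact identity $4h_i^2=4(l_i-1)^2/(\sqrt{l_i}+1)^2$ and the same decomposition of $(l_i-1)^2$ into score, cross, and remainder pieces that you use; and $nP_n(h_i)=\sqrt n\,\bs t\t\nu_n(\bs s_i^j)/2-n\bs t\t\bs{\mathcal{I}}^j\bs t/8+o_p(1)$, via the exact identity $h_i=(l_i-1)/2-h_i^2/2$. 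You instead expand $\log(1+x_i)$ directly to third order in $x_i=l_i-1$ and kill the cubic remainder through $\sum_i|x_i|^3\le\max_i|x_i|\cdot\sum_ix_i^2$. Both proofs consume exactly the same hypotheses --- mean-zero score and remainder, the convergence of $P_n(\bs s_i^j(\bs s_i^j)\t)$ in (b), the $\|\bs t\|\,\|\bs\psi_{M}^j-\bs\psi_{M}^{j*}\|$-normalized control of $r_i$ in (c)--(d), and Lemma \ref{max_bound} to force the maximal summand to $o_p(1)$ --- so on the $c/\sqrt n$-neighborhood they are interchangeable, and your version is arguably cleaner. What the square-root device buys is the companion result, Proposition \ref{Ln_thm2}: there the nonnegativity of $h_i^2$ and the inequality $\log(1+h)\le h$ yield one-sided bounds on the much larger set $A_{\varepsilon_n}(\eta)$ without any smallness of $\max_i|x_i|$, and a direct Taylor expansion in $x_i$ would not extend to that setting.

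One small imprecision: from $\E\|\bs s_i^j\|^{2+\delta}<C$ and Lemma \ref{max_bound} with $q=2$ you obtain $\max_i\|\bs s_i^j\|=o_p(n^{1/2})$, not $o_p(n^{1/(2+\delta)})$ (the latter would require a $(2+2\delta)$-th moment). This is harmless, since $\|\bs t_{\bs\psi_{M}^j\bs\tau^j}\|\max_i\|\bs s_i^j\|\le(c/\sqrt n)\,o_p(n^{1/2})=o_p(1)$ is all your argument needs.
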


The following proposition expands $\ell_n(\bs\psi_{M}^j,\bs\tau^j)$ in $A_{\varepsilon_n}(\eta) := \{\bs\psi_{M} \in \mathcal{N}_{\varepsilon_n}: \ell_n(\bs\psi_{M}^j,\bs\tau^j)-\ell_n(\bs\psi_{M}^{j*},\bs\tau^j) \geq -\eta \}$ for some $\eta \in [0,\infty)$ and for $\varepsilon_n\rightarrow 0$ slowly to ensure $\Pr( \hat{\bs\psi}_M \in \mathcal{N}_{\varepsilon_n})=1$. % Let $\hat t_{\bs\psi_M\bs\alpha}$ be the reparameterized MLE. %Let $A_{n\varepsilon c}(\eta) := A_{n\varepsilon}(\eta) \cup \mathcal{N}_{c/\sqrt{n}}$ for any $c>0$.
\begin{proposition} \label{Ln_thm2}
Suppose that Assumption \ref{assumption:BIC} holds.
Then, for all $j=1,...,J_{M}$, for any $\eta>0$, and for any $\{\varepsilon_n: n=1,2,...\}$ such that $\varepsilon_n\rightarrow 0$ but $\Pr( \hat{\bs\psi}_M \in \mathcal{N}_{\varepsilon_n})=1$, \\(a) $ \sup_{\bs\psi_M \in A_{\varepsilon_n}(\eta)} \left|\bs{t}_{\bs{\psi}_{M}^j\bs\tau^j}\right| = O_{p   }(n^{-1/2})$,  (b) for any $c>0$,
\begin{equation*}
\sup_{\bs\psi_M \in A_{\varepsilon_n}(\eta) \cup \mathcal{N}_{c/\sqrt{n}} }\left|\ell_n(\bs\psi_{M}^j,\bs\tau^j)-\ell_n(\bs\psi_{M}^{j*},\bs\tau^j) - \sqrt{n} \bs{t}_{\bs{\psi}_{M}^j\bs\tau^j} \nu_n (\bs s_i^j) + n \bs{t}_{\bs{\psi}_{M}^j\bs\tau^j} \bs{\mathcal{I}}^j \bs{t}_{\bs{\psi}_{M}^j\bs\tau^j}/2 \right| = o_{p }(1),
\end{equation*}
and (c) $\sup_{\bs\psi_M \in A_{\varepsilon_n}(\eta)} \left|\ell_n(\bs\psi_{M}^j,\bs\tau^j)-\ell_n(\bs\psi_{M}^{j*},\bs\tau^j) \right| =O_p(1)$.
\end{proposition}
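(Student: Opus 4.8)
The plan is to bootstrap the local quadratic expansion of Proposition~\ref{Ln_thm1}, which is valid only on the root-$n$ neighborhood $\mathcal{N}_{c/\sqrt n}$, up to the larger random set $A_{\varepsilon_n}(\eta)$ on which the log-likelihood difference is not too small. The backbone of every expansion is the density-ratio representation $l_{\bs\psi_{M}^j\bs\tau^j,i}-1 = \bs t_{\bs\psi_{M}^j\bs\tau^j}^\top \bs s_i^j + r_{\bs\psi_{M}^j\bs\tau^j,i}$ of Assumption~\ref{assumption:BIC}, together with the two mean-zero facts $\mathbb{E}_{\bs\vartheta_{M_0}^*}[\bs s_i^j]=\bs 0$ and $\mathbb{E}_{\bs\vartheta_{M_0}^*}[l_{\bs\psi_{M}^j\bs\tau^j,i}-1]=0$ (the latter because $g_{M}/g_{M_0}$ integrates to one under $g_{M_0}$). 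Applying $\log(1+x)=x-\tfrac12 x^2+O(|x|^3)$ to $\log l_{\bs\psi_{M}^j\bs\tau^j,i}$ and summing over $i$ yields, after invoking $\mathbb{E}[\bs s_i^j]=\bs 0$ and Assumption~\ref{assumption:BIC}(b) (so that $P_n(\bs s_i^j(\bs s_i^j)^\top)\to\bs{\mathcal{I}}^j$), the representation
\[
\ell_n(\bs\psi_{M}^j,\bs\tau^j)-\ell_n(\bs\psi_{M}^{j*},\bs\tau^j) = \sqrt n\,\bs t_{\bs\psi_{M}^j\bs\tau^j}^\top \nu_n(\bs s_i^j) - \tfrac{n}{2}\,\bs t_{\bs\psi_{M}^j\bs\tau^j}^\top \bs{\mathcal{I}}^j \bs t_{\bs\psi_{M}^j\bs\tau^j} + R_n(\bs\psi_{M}^j),
\]
where the remainder $R_n$ collects the $r_i$-terms and the cubic-and-higher terms. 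Assumptions~\ref{assumption:BIC}(c)--(d) bound the $r_i$ contribution by a multiple of $\sqrt n\,\|\bs t_{\bs\psi_{M}^j\bs\tau^j}\|\,\|\bs\psi_{M}^j-\bs\psi_{M}^{j*}\|$, which is $o_p\!\big(n\|\bs t_{\bs\psi_{M}^j\bs\tau^j}\|^2\big)$ since $\|\bs\psi_{M}^j-\bs\psi_{M}^{j*}\|\to 0$ uniformly on $\mathcal{N}_{\varepsilon_n}$, while Assumption~\ref{assumption:BIC}(a) handles the cubic terms analogously.

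First I would prove part~(a) by a peeling (shell) argument. Fix $\eta>0$ and, for a large constant $K$, partition the portion of $A_{\varepsilon_n}(\eta)$ with $\|\bs t_{\bs\psi_{M}^j\bs\tau^j}\|>K/\sqrt n$ into shells $C_{k}=\{2^{k-1}K/\sqrt n<\|\bs t_{\bs\psi_{M}^j\bs\tau^j}\|\le 2^{k}K/\sqrt n\}$ for $k\ge 1$, up to the outer radius $\varepsilon_n$. On $C_k$ the linear term is at most $\sqrt n\,\|\bs t_{\bs\psi_{M}^j\bs\tau^j}\|\,\|\nu_n(\bs s_i^j)\|\le 2^k K\,\|\nu_n(\bs s_i^j)\|$, while non-singularity of $\bs{\mathcal{I}}^j$ (Assumption~\ref{assumption:BIC}(b)) bounds the quadratic term below by $\tfrac12\lambda_{\min}(\bs{\mathcal{I}}^j)\,n\|\bs t_{\bs\psi_{M}^j\bs\tau^j}\|^2\ge\tfrac18\lambda_{\min}(\bs{\mathcal{I}}^j)\,2^{2k}K^2$. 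Since $\|\nu_n(\bs s_i^j)\|=O_p(1)$ by Assumption~\ref{assumption:BIC}(e) and $R_n=o_p(n\|\bs t_{\bs\psi_{M}^j\bs\tau^j}\|^2)$ uniformly, the quadratic term dominates once $K$ is large, forcing $\ell_n-\ell_n^*\le-\tfrac{1}{16}\lambda_{\min}(\bs{\mathcal{I}}^j)2^{2k}K^2<-\eta$ on each $C_k$ with probability approaching one. Summing the geometrically decaying shell probabilities shows that, for $K$ large, $\Pr\big(\sup_{A_{\varepsilon_n}(\eta)}\|\bs t_{\bs\psi_{M}^j\bs\tau^j}\|>K/\sqrt n\big)$ is arbitrarily small, which is exactly $\sup_{A_{\varepsilon_n}(\eta)}\|\bs t_{\bs\psi_{M}^j\bs\tau^j}\|=O_p(n^{-1/2})$.

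Parts~(b) and~(c) then follow quickly. Part~(a) implies that for some large $C$ we have $A_{\varepsilon_n}(\eta)\subseteq\mathcal{N}_{C/\sqrt n}$ with probability tending to one, so applying Proposition~\ref{Ln_thm1} with $c=C$ delivers the uniform $o_p(1)$ quadratic approximation over $A_{\varepsilon_n}(\eta)\cup\mathcal{N}_{c/\sqrt n}$, establishing part~(b). For part~(c), substitute $\|\bs t_{\bs\psi_{M}^j\bs\tau^j}\|=O_p(n^{-1/2})$ into that approximation: the linear term is $\sqrt n\,O_p(n^{-1/2})\,O_p(1)=O_p(1)$ and the quadratic term is $n\,O_p(n^{-1})=O_p(1)$, so $\sup_{A_{\varepsilon_n}(\eta)}|\ell_n(\bs\psi_{M}^j,\bs\tau^j)-\ell_n(\bs\psi_{M}^{j*},\bs\tau^j)|=O_p(1)$.

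The main obstacle is the uniform control of the remainder $R_n$ across the shells. The delicate point is that $\bs\tau^j$ is unidentified under the null, so the maximal inequalities in Assumptions~\ref{assumption:BIC}(c)--(e) must be applied after taking the supremum over $\bs\tau^j\in\Theta_{\bs\tau^j,c_1}$ (as encoded in the definition of $\mathcal{N}_\epsilon$), and the $r_i$-bound must be of genuinely smaller order than the quadratic term $n\|\bs t_{\bs\psi_{M}^j\bs\tau^j}\|^2$ on \emph{every} shell, not merely at the root-$n$ scale where Proposition~\ref{Ln_thm1} applies. Establishing this $o_p(n\|\bs t\|^2)$ relative bound uniformly, and verifying that the higher-order terms of the $\log(1+x)$ expansion are absorbed, is the technical heart; the positive definiteness of $\bs{\mathcal{I}}^j$ is ultimately what lets the quadratic term win against both the linear term and the remainder.
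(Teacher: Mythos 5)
Your overall architecture matches the paper's: part (a) is the crux, and once $\sup_{A_{\varepsilon_n}(\eta)}\|\bs t_{\bs\psi_{M}^j\bs\tau^j}\|=O_p(n^{-1/2})$ is in hand, parts (b) and (c) do follow by embedding $A_{\varepsilon_n}(\eta)$ in $\mathcal{N}_{C/\sqrt n}$ and invoking Proposition \ref{Ln_thm1}, exactly as the paper does. Your dyadic peeling over shells is also a legitimate substitute for the paper's single completion-of-the-square in $\bs T_n=(\bs{\mathcal{I}}^j)^{1/2}\sqrt n\,\bs t_{\bs\psi_{M}^j\bs\tau^j}$. The problem is the expansion you feed into the shells.

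The genuine gap is your claim that $\log(1+x_i)$ with $x_i=l_{\bs\psi_{M}^j\bs\tau^j,i}-1$ admits a two-term expansion with remainder $o_p(n\|\bs t_{\bs\psi_{M}^j\bs\tau^j}\|^2)$ \emph{uniformly over} $\mathcal{N}_{\varepsilon_n}$. On $\mathcal{N}_{c/\sqrt n}$ one has $\max_i\sup|x_i|=o_p(1)$ (this is display (\ref{B2}) in the paper, via Lemma \ref{max_bound}), so the expansion is valid there — that is Proposition \ref{Ln_thm1}. But on $\mathcal{N}_{\varepsilon_n}$ with $\varepsilon_n\to 0$ arbitrarily slowly, the density ratio $l_{\bs\psi_{M}^j\bs\tau^j,i}$ is not uniformly close to one, and the error in your expansion has a definite unfavorable sign: for $x>0$, $\log(1+x)-(x-x^2/2)>0$ and grows like $x^2/2$ for large $x$, so the observations with large positive $x_i$ can return essentially all of the negative curvature $-\tfrac12\sum_i x_i^2$ that your shell argument relies on. Assumption \ref{assumption:BIC}(a) (a $(2+\delta)$ moment bound on $\bs s_i^j$) does not by itself rule this out, so your asserted upper bound $\ell_n-\ell_n^*\le \sqrt n\,\bs t^\top\nu_n(\bs s_i^j)-\tfrac n2(1-o_p(1))\bs t^\top\bs{\mathcal{I}}^j\bs t$ on the outer shells is unproved, and the peeling collapses. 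The paper's proof is built precisely to avoid this: it works with $h_i=\sqrt{l_{\bs\psi_{M}^j\bs\tau^j,i}}-1$, uses the global inequality $2\log(1+h_i)\le 2h_i=(l_i-1)-h_i^2$ (no Taylor remainder to control), and then recovers curvature from $P_n(h_i^2)\ge(\sqrt\kappa+1)^{-2}P_n(\mathbb{I}\{l_i\le\kappa\}(l_i-1)^2)$, showing via H\"older and the $(2+\delta)$ moment that the truncated information $P_n(\mathbb{I}\{l_i>\kappa\}\bs s_i^j(\bs s_i^j)^\top)$ is uniformly negligible for large $\kappa$. This yields only a fixed fraction $\tau/2=(\sqrt\kappa+1)^{-2}$ of $\bs t^\top\bs{\mathcal{I}}^j\bs t$, but a fraction of a positive-definite form is all that is needed. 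To repair your proof you would need to insert this square-root-plus-truncation step (or an equivalent uniform lower bound on the curvature over $\mathcal{N}_{\varepsilon_n}$) before running the shells; as written, the "technical heart" you correctly identify in your closing paragraph is asserted rather than established.
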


Because a consistent MLE is in $ A_{\varepsilon_n}(\eta)$ by definition, Proposition \ref{Ln_thm2} implies that $\ell_n(\hat{\bs\psi}_{M}^j,\bs\tau^j)-\ell_n(\bs\psi_{M}^{j*},\bs\tau^j)  =O_p(1)$ for a consistent MLE.

\subsection{Bootstrap procedure for the ave- and max-rk statistic}\label{sec:bootstrap}

We consider the following Bayesian bootstrap by drawing $\{\omega^{(b)}_i\}_{i=1}^n$ for $b=1,...,B$, where  $\omega_{i}^{(b)} = \tilde\omega_i^{(b)}/\sum_{j=1}^n \tilde\omega_j^{(b)}$ with
$\tilde\omega_i^{(b)}\overset{iid}{\sim} Exp(1)$ for $i=1,...,n$, and compute the bootstrap version of $\widehat{\bs P}_k$ as
\[
\widehat{\bs P}_k^{(b)} =  \left[
\begin{array}{ccc}
 \sum_{i=1}^n \omega_i^{(b)}  \mathbb{I}(Y_k \in \delta_{1},\bs Y_{\mathcal{T}_{-k}} \in \bs\delta_{1}^{\mathcal{T}_{-k}}) & \cdots &  \sum_{i=1}^n \omega_i^{(b)}  \mathbb{I}(Y_{k} \in \delta_{1},\bs Y_{\mathcal{T}_{-k}} \in \bs \delta_{|\Delta_{\mathcal{T}_{-k}}|}^{\mathcal{T}_{-k}})  \\
\vdots  & \ddots  & \vdots  \\
 \sum_{i=1}^n \omega_i^{(b)} \mathbb{I}( Y_{k} \in \delta_{|\Delta_{t}|},\bs Y_{\mathcal{T}_{-k}} \in \bs \delta_{1}^{\mathcal{T}_{-k}}) & \cdots & \sum_{i=1}^n \omega_i^{(b)} \mathbb{I}(  Y_{k} \in \delta_{|\Delta_{t}|},\bs Y_{-k} \in \bs\delta_{|\Delta_{\mathcal{T}_{-k}}|}^{\mathcal{T}_{-k}})\end{array}
\right] %\label{P*_defn}
\]
for $k=1,...,T$. Note that we use the same weights across different $k$s to accomodate the dependencies across $k$s so that the bootstrapped correlation between $\widehat{\bs P}_k^{(b)}$ and $\widehat{\bs P}_\ell^{(b)}$ for $k\neq \ell$ correctly captures the corresponding sample correlation between $\widehat{\bs P}_k$ and $\widehat{\bs P}_\ell$.

With this bootstrapped $\{\widehat{\bs P}_k^{(b)} \}_{k=1}^T$ for $b=1,...,B$, we construct the bootstrapped rk-statistics as
\begin{equation}\label{bootstrap-1}
\text{rk}^{k(b)}(r) = n (\widehat{\bs{\lambda}}_r^{k(b)} -\widehat{\bs{\lambda}}_r^{k} )^\top (\widehat{\boldsymbol{\Omega}}^{k(b)}_r) ^{-1}(\widehat{\bs{\lambda}}_r^{k(b)}-\widehat{\bs{\lambda}}_r^{k})\quad\text{for $k=1,...,T$},
\end{equation}
where $\widehat{\bs{\lambda}}_r^{k(b)}$ and $\widehat{\boldsymbol{\Omega}}^{k(b)}_r$ are computed as described in Proposition \ref{rk} but we compute $\widehat{\mathbf{A}}_{r }^k$ and $\widehat{\bs\Sigma}_k$ using $\widehat{\bs P}_k^{(b)}$ in place of $\widehat{\bs P}_k$.

Based on $\{\text{rk}^{k(b)}(r)  \}_{k=1}^T$ derived as above, we can construct bootstrapped ave- and max-rk-statistics as:
\begin{align*}
\text{ave-rk}^{(b)}(r) = \frac{1}{T} \sum_{k=1}^T \mathrm{rk}^{k(b)}(r)\ \text{ and }\ \text{max-rk}^{(b)}(r) = \max\{ \mathrm{rk}^{1(b)}(r),...,\mathrm{rk}^{T(b)}(r)\} \ \text{ for $b=1,2,...,B$.} %\quad\text{and}\quad
%\text{max-rk}(r) = \max\{\mathrm{rk}^1(r),\mathrm{rk}^2(r),...,\mathrm{rk}^T(r)\},
\end{align*}

We then compute the bootstrap p-value as the empirical proportion of the bootstrapped test statistics \(\text{max-rk}^{(b)}(r)\) that equal or exceed the observed statistic \(\text{max-rk}(r)\): $\text{bootstrap p-value} = \frac{1}{B} \sum_{b=1}^{B} \mathbb{I}\left(\text{max-rk}^{(b)}(r) \geq \text{max-rk}(r)\right)$.
We reject the null hypothesis that \(\text{rank}(\bs P_k)\leq r\) for all \(k=1,\dots,T\) at significance level \(\alpha\) if the bootstrap p-value is strictly less than \(\alpha\).

\subsection{Asymptotic distribution under local alternatives}\label{sec:local}
We derive the asymptotic distribution of the LRTS under local alternatives. For brevity, we focus on testing $H_0: M=1$ against $H_A: M=2$. Consider the following local alternative to the homogeneous model $f(\bs w;\bs\gamma^*,\bs\theta^*)$ with $\bs\theta^*=(\mu^*,\sigma^{*2},(\bs{\beta}^*)\t)\t$. For brevity, we omit the common parameter $\bs \gamma$ in this section.
In a reparameterized parameter, $\bs{\Psi}^* = ((\bs\nu^*)\t, (\bs{\lambda}^*)\t)\t$.
For $\alpha^*\in (0,1)$  and a local parameter $\bs h=(\bs {h}_{\bs\nu}\t,\bs {h}_{\bs\lambda}\t)\t$ with $\bs h_{\bs \lambda}\in v(\bs\Theta_{\bs\lambda})$,  we consider a sequence of contiguous local alternatives
$(\alpha_n,\bs{\psi}_n\t)\t = (\alpha_n,\bs\nu_n\t,\bs\lambda_n\t)\in  \bs{\Theta}_\alpha\times \bs{\Theta}_{\bs\nu}\times\bs{\Theta}_{\bs\lambda}$ such that, with $\bs{t}_{\bs\lambda}(\bs\lambda,\alpha)$ given by (\ref{eq:t_1}),
\begin{equation}\label{eq:h}
 %\bs {h_\lambda} =\sqrt{n} \bs{t}_{\bs\lambda}(\bs\lambda_n,\alpha_n).
\bs {h_\nu}= \sqrt{n}(\bs\nu_n -\bs\nu^*),\quad \bs {h_\lambda} =\sqrt{n} \bs{t}_{\bs\lambda}(\bs\lambda_n,\alpha_n),\quad \text{and} \quad \alpha_n = \alpha^* + o(1).
\end{equation}
Equivalently, the non-reparameterized contiguous local alternatives are given by
\begin{equation}\label{eq:local-alt}
\bs\theta_{1,n} =\bs \nu_n + (1-\alpha_n) \bs\lambda_n\quad\text{and}\quad \bs\theta_{2,n} = \bs\nu_n - \alpha_n \bs\lambda_n
\end{equation}
for  $\bs\nu_n = \bs \nu^* + n^{-1/2} \bs{h}_{\nu}$ and
$\bs\lambda_n= (\lambda_{1,n},\lambda_{2,n},....,\lambda_{q,n})\t$ with
\begin{align*}
\lambda_{j,n} = n^{-1/4} (\alpha_n(1-\alpha_n))^{-1/2} h_{\lambda,j}\quad \text{for $j=1,...,q$},
\end{align*}
where $\bs{h}_{\bs\lambda}=(h_{\lambda,1}^2,2h_{\lambda,1}h_{\lambda,2},h_{\lambda,1}^2,...,2h_{\lambda,q-1}h_{\lambda,q},h_{\lambda,q}^2)\t$. The local alternatives are of order $n^{1/4}$ rather than $n^{1/2}$. See
  the discussion following Proposition \ref{prop:t2_distribution}.

The following proposition provides the asymptotic distribution of the LRT  test statistics under contiguous local alternatives.
\begin{proposition}\label{prop:local-power}
Suppose that the assumptions in Proposition \ref{prop:tm0_distribution} hold for $M_0=1$. Consider a sequence of contiguous local alternatives $\bs\vartheta_{2,n} = (\alpha_n,\bs\theta_{1,n}^\top,\bs\theta_{2,n}^\top)\t$ given in (\ref{eq:local-alt}), where  $\alpha_n$ and $\bs\lambda_n$ satisfy (\ref{eq:h}). Then, under $H_{1,n}: \bs\vartheta=  \bs\vartheta_{2,n}$, we have $LR_n\overset{d}{\rightarrow} (\tilde{\bs{t}}_{\bs\lambda} )\t \bs{\mathcal{I}}_{\bs\lambda,\bs\eta}  \tilde{\bs{t}}_{\bs\lambda}$,
where $\tilde{\bs t}_{\bs\lambda}$ has the same distribution as $\widehat {\bs t}_{\bs\lambda}$ in Proposition \ref{prop:t2_distribution} but $\bs{G}_{\bs{\lambda,\bs{\eta}}}$ is replaced with $ ( \bs{\mathcal{I}}_{\bs\lambda,\bs\eta}  )^{-1} \bs{S}_{\bs\lambda,\bs\eta} + \bs h_{\bs\lambda}$.
\end{proposition}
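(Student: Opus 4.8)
The plan is to establish a generalized local asymptotic normality (LAN) expansion for the log-likelihood ratio between the local alternative and the null, deduce contiguity, and then transport the null limit of Proposition~\ref{prop:t2_distribution} to the alternative by a version of Le~Cam's third lemma. The essential feature is that, because $\bs\lambda_n$ shrinks at the nonstandard rate $n^{-1/4}$ (the first-order $\bs\lambda$-score vanishes under $H_0$, as recorded by $\nabla_{\bs{\lambda}} \log g_2(\bs w;\alpha,\bs\nu+(1-\alpha)\bs\lambda,\bs\nu-\alpha\bs\lambda)|_{\bs\lambda=\bs 0}=\bs 0$), the local parameter enters the likelihood only through the quadratic map $v(\cdot)$ of \eqref{eq:v}: by construction in \eqref{eq:h}, $\bs h_{\bs\lambda}=\sqrt n\,\bs{t}_{\bs\lambda}(\bs\lambda_n,\alpha_n)$ is $O(1)$ and pairs with the second-order score $\bs s_{\bs\lambda\bs\lambda}$ of \eqref{eq:information} rather than with a first-order score.

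First I would recall the expansion underlying Proposition~\ref{prop:t2_distribution} and run it along the reparameterized path. Writing $\bs s=(\bs s_{\bs\eta}^\top,\bs s_{\bs\lambda\bs\lambda}^\top)^\top$ as in \eqref{eq:information} (the block denoted $\bs\nu$ there being identified with $\bs\eta$ here) and $\nu_n(\bs s):=n^{-1/2}\sum_{i=1}^n(\bs s(\bs W_i)-\E_{\bs\theta^*}\bs s)$, the generalized DQM framework of \citet{liushao03as,kasahara2018arXiv} yields, with $\bs h=(\bs h_{\bs\nu}^\top,\bs h_{\bs\lambda}^\top)^\top$,
\begin{equation}\label{eq:LAN-local}
\Lambda_n:=\sum_{i=1}^n\log\frac{g_2(\bs W_i;\bs\vartheta_{2,n})}{f(\bs W_i;\bs\theta^*)}=\bs h^\top\nu_n(\bs s)-\tfrac12\bs h^\top\bs{\mathcal{I}}\,\bs h+o_p(1).
\end{equation}
Under $H_0$, $\nu_n(\bs s)\overset{d}{\to}\mathcal N(\bs 0,\bs{\mathcal{I}})$, so \eqref{eq:LAN-local} gives $\Lambda_n\overset{d}{\to}\mathcal N(-\tfrac12\bs h^\top\bs{\mathcal{I}}\bs h,\ \bs h^\top\bs{\mathcal{I}}\bs h)$, and by Le~Cam's first lemma the laws under $\{\bs\vartheta_{2,n}\}$ are mutually contiguous with those under $\bs\theta^*$. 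Establishing \eqref{eq:LAN-local} requires controlling the fourth-order Taylor remainder uniformly over the localized path, exactly as in the proof of Proposition~\ref{prop:t2_distribution}; this is where the bulk of the technical work lies and is the main obstacle, since the $n^{-1/4}$ rate forces the leading term to come from the quadratic score while ruling out the usual first-order argument.

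Next I would apply Le~Cam's third lemma to the jointly asymptotically normal pair $(\nu_n(\bs s),\Lambda_n)$. From \eqref{eq:LAN-local}, $\mathrm{Cov}(\nu_n(\bs s),\Lambda_n)\to\bs{\mathcal{I}}\bs h$, so under the local alternatives $\nu_n(\bs s)\overset{d}{\to}\mathcal N(\bs{\mathcal{I}}\bs h,\bs{\mathcal{I}})$: every coordinate of the score process acquires the deterministic drift $\bs{\mathcal{I}}\bs h$ while its covariance is unchanged. The null limit of Proposition~\ref{prop:t2_distribution} is a continuous functional of $\nu_n(\bs s)$ (the cone projection \eqref{eq:t_lambda_def} is continuous), so the continuous mapping theorem transports it with $\nu_n(\bs s)$ replaced by its shifted limit.

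It then remains to compute the induced drift in the profiled score $\bs S_{\bs\lambda,\bs\eta}:=\nu_n(\bs s_{\bs\lambda\bs\lambda})-\bs{\mathcal{I}}_{\bs\lambda\bs\eta}\bs{\mathcal{I}}_{\bs\eta}^{-1}\nu_n(\bs s_{\bs\eta})$. Partitioning $\bs{\mathcal{I}}\bs h$ according to \eqref{eq:information} and using the Schur-complement identity $\bs{\mathcal{I}}_{\bs\lambda,\bs\eta}=\bs{\mathcal{I}}_{\bs\lambda\bs\lambda}-\bs{\mathcal{I}}_{\bs\lambda\bs\eta}\bs{\mathcal{I}}_{\bs\eta}^{-1}\bs{\mathcal{I}}_{\bs\eta\bs\lambda}$, the $\bs h_{\bs\nu}$ contributions cancel and the drift of $\bs S_{\bs\lambda,\bs\eta}$ reduces to $\bs{\mathcal{I}}_{\bs\lambda,\bs\eta}\bs h_{\bs\lambda}$. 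Hence the Gaussian vector $\bs G_{\bs\lambda,\bs\eta}=(\bs{\mathcal{I}}_{\bs\lambda,\bs\eta})^{-1}\bs S_{\bs\lambda,\bs\eta}$ picks up exactly the shift $\bs h_{\bs\lambda}$, that is, under the local alternatives its limit is $(\bs{\mathcal{I}}_{\bs\lambda,\bs\eta})^{-1}\bs S_{\bs\lambda,\bs\eta}+\bs h_{\bs\lambda}$, as claimed. Feeding this shifted variable into the cone-projection map \eqref{eq:t_lambda_def} defines $\tilde{\bs t}_{\bs\lambda}$ and yields $LR_n\overset{d}{\to}(\tilde{\bs t}_{\bs\lambda})^\top\bs{\mathcal{I}}_{\bs\lambda,\bs\eta}\tilde{\bs t}_{\bs\lambda}$. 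Once the expansion \eqref{eq:LAN-local} is in hand, the contiguity and drift computations are routine.
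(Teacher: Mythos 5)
Your argument is correct and is exactly the route the paper's machinery is set up to deliver: the quadratic expansion (\ref{eq:LR0}) from Lemma \ref{lemma:expansion} evaluated along the localized path gives your LAN-type expansion with remainder $o_p(1)$ since $\sqrt{n}\bs t(\bs\psi_n,\alpha_n)=O(1)$, Le Cam's first and third lemmas give contiguity and the drift $\bs{\mathcal{I}}\bs h$ in $\nu_n(\bs s)$, and the Schur-complement cancellation correctly reduces the drift of the profiled score to $\bs{\mathcal{I}}_{\bs\lambda,\bs\nu}\bs h_{\bs\lambda}$, i.e.\ a shift of $\bs h_{\bs\lambda}$ in $\bs G_{\bs\lambda,\bs\nu}$. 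The paper does not print a separate proof of this proposition, so there is nothing to contrast with; the one step worth stating explicitly in a final write-up is that contiguity is also what transports the $o_p(1)$ remainder bounds, the consistency/$O_p(n^{-1/2})$ rate of $\bs t(\hat{\bs\psi}_\alpha,\alpha)$, and the \cite{Andrews1999} cone-projection argument of Proposition \ref{prop:t2_distribution} from the null to the alternative sequence, which your continuous-mapping step uses implicitly.
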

Importantly, a set of contiguous local alternatives considered in (\ref{eq:local-alt}) excludes a sequence such that $\alpha_n\rightarrow 0$ or $1$.

\subsection{Proof of Propositions}

\begin{proof}[Proof of Proposition \ref{prop:unbounded_likelihood}]

	We first consider a model with an intercept parameter and a variance parameter but without covariates with  $\mathbf W_i = \{ y_{it} \}_{t=1}^T$ for the model (\ref{eq:fm}) with normal density (\ref{eq:f1}).

		Define
	\[
	s_i^2 = \frac{1}{T-1}\sum_{t=1}^T (Y_{it}-\bar Y_i)^2 \quad\text{with}\quad \bar Y_i = \frac{1}{T} \sum_{t=1}^T Y_{it},
	\]
	where $(T-1)s_i^2/\sigma^{*2}$ follows a chi-square distribution with $T-1$ degrees of freedom.
Let $i^* = \arg\min_{i=1,\ldots,n} \{s_i^2\}$ so that  $s_{i^*}^2  = \min\{s_1^2,\ldots,s_n^2\}$ is the minimum of $s_i^2$ across all values of $i$.
We consider a sequence of parameters $\bs\vartheta_{2,n}=(\alpha_n,\bs\theta_{1,n}\t,\bs{\theta}_{2,n}\t)\t$ with $\alpha_n = 1/n$,  $\bs\theta_{1,n}=(\mu_{1,n},\sigma_{1,n}^2)\t=(\bar Y_{i^*},s_{i^*}^2)\t$, and  $\bs\theta_{2,n}=\bs\theta^*=(\mu^*,\sigma^*)\t$ for all $n$. %Because $LR_n^*(\bs\vartheta_{2,n})\leq LR_n^*(\tilde{\bs\vartheta}_{2,n})$,
It suffices to show that $LR_n^*(\bs\vartheta_{2,n})$ is unbounded in probability.

	Define \[
	\ell(\bs{W}_i;\bs\theta) : = \log f(\bs{W}_i;\bs\theta) = -\frac{T}{2}\log \sigma^2 - \frac{T}{2}\log(2\pi) - \frac{1}{2}\sum_{t=1}^T \left(\frac{Y_{it}-\mu}{\sigma}\right)^2.
	\]
	Then, the LRT statistic for a two-component mixture is written as
		\begin{align}
		LR_n^*(\bs\vartheta_{2,n}) & = 2 \left\{\sum_{i=1}^n \log\left(\alpha_n \prod_{t=1}^T \frac{1}{\sigma_{1,n}}\phi\left(   \frac{Y_{it}-\mu_{1,n}}{\sigma_{1,n}}\right)
	+(1-\alpha_n) \prod_{t=1}^T \frac{1}{\sigma^*}\phi\left(  \frac{Y_{it}-\mu^*}{\sigma^*}\right)\right)-  \sum_{i=1}^n\ell(\bs{W}_i;\bs\theta^*) \right\}\nonumber\\
	%&=\sum_{i=1}^n \log\left(   \exp(\log\alpha_n+\ell(\bs{W}_i;{\bs\theta}_{1,n})) + \exp(\log (1-\alpha_n)+\ell(\bs{W}_i;{\bs\theta}_{2,n}))\right)-\ell(\bs{W}_i;\bs\theta^*)\\
	& =
	2\sum_{i\neq i^*} \left\{\log\left(   \exp(\log\alpha_n+\ell(\bs{W}_i;{\bs\theta}_{1,n})) + \exp(\log (1-\alpha_n)+\ell(\bs{W}_i;{\bs\theta}^*))\right)-\ell(\bs{W}_i;\bs\theta^*)\right\} \nonumber \\
	&\quad +2 \left\{\log\left(   \exp(\log\alpha_n+\ell(\bs{W}_{i^*};{\bs\theta}_{1,n})) + \exp(\log (1-\alpha_n)+\ell(\bs{W}_{i^*};{\bs\theta}^*))\right)-\ell(\bs{W}_{i^*};\bs\theta^*)\right\} \label{lr-1}.
	\end{align}

	The first term on the right-hand side of (\ref{lr-1}) can be rewritten as
	\begin{align*}
	 = 2(n-1) \log \left(\frac{n-1}{n}\right)
	+ 2	\sum_{i\neq i^*}\log\left(1+\frac{1}{n-1} \exp(\ell(\bs{W}_i;{\bs\theta}_{1,n})-\ell(\bs{W}_i;{\bs\theta}^*))\right),
	\end{align*}
	which is bounded from below by $-2$ as $n\rightarrow \infty$ because  $\lim_{n\rightarrow \infty} 2(n-1) \log \left(\frac{n-1}{n}\right)  =-2$ and $\log\left(1+\frac{1}{n-1} \exp(\ell(\bs{W}_i;{\bs\theta}_{1,n})-\ell(\bs{W}_i;{\bs\theta}^*))\right)\geq 0$ for all $n$.

The second term  on the right-hand side of (\ref{lr-1}) is written as
\begin{align}
2\{ - \log n + \ell(\bs{W}_{i^*};{\bs\theta}_{1,n})\} + 2 \log\left(1 + (n-1)  \exp(\ell(\bs{W}_{i^*};{\bs\theta}^*)-\ell(\bs{W}_{i^*};{\bs\theta}_{1,n}))\right) - 2 \ell(\bs{W}_{i^*};\bs\theta^*),\label{lr-2}
\end{align}
where  $2\{ - \log n + \ell(\bs{W}_{i^*};{\bs\theta}_{1,n})\}$ diverges to infinity as $n\rightarrow
\infty$ by   Lemma \ref{lemma:unbounded_likelihood}, the second term in (\ref{lr-2}) is bounded below from zero, and the third term is bounded in probability because $\ell(\bs{W}_{i^*};\bs\theta^*)=O_p(1)$. Therefore,    for any $M<\infty$, we have
	$\Pr \Big(LR_n^*(\bs\vartheta_{2,n}) \le M \Big) \to 0$ as $n \to \infty$. % The stated result follows from $LR_n^*(\bs\vartheta_{2,n})\leq LR_n^*(\tilde{\bs\vartheta}_{2,n})$ for all $n$.

For a model with covariates for the model (\ref{eq:fm}) with normal density (\ref{eq:f1}), we can consider a sequence of parameters $\bs\vartheta_{2,n}=(\alpha_n,\bs\theta_{1,n}\t,\bs{\theta}_{2,n}\t)\t$ with $\alpha_n = 1/n$,  $\bs\theta_{1,n}=(\mu_{1,n},\sigma_{1,n}^2,\bs\beta_{1,n}\t)\t=(\bar Y_{i^*}, s_{i^*}^2,\bs 0\t)\t$ with    $\bs\theta_{2,n}=\bs\theta^*=(\mu^*,\sigma^*,(\bs\beta^*)\t)\t$. Then, repeating the above argument, the stated result follows.

For the model (\ref{eq:fm}) with normal mixture density (\ref{eq:f1-mixture}), suppose that $\mu_{j1}<\mu_{j2}$ for $j=1,2$. Then, in veiw of (\ref{bound-mixture-density}),  we can bound the log-likelihood function from below as
\begin{align*}
&\sum_{i=1}^n \log\left(\alpha \prod_{t=1}^T \sum_{k=1}^{{\cal{K}}} \tau_{1k} \frac{1}{\sigma_{1}}\phi\left(   \frac{Y_{it}-\mu_{1k}}{\sigma_{1}}\right)
	+(1-\alpha) \prod_{t=1}^T \sum_{k=1}^{{\cal{K}}} \tau_{2k}\frac{1}{\sigma_2}\phi\left(  \frac{Y_{it}-\mu_{2k}}{\sigma_2}\right)\right) \\
	\geq &\sum_{i=1}^n \log\left(\alpha \prod_{t=1}^T  \frac{1}{\sigma_{1}}\phi\left(   \frac{Y_{it}-\mu_{11}}{\sigma_{1}}\right)
	+(1-\alpha) \prod_{t=1}^T  \frac{1}{\sigma_2}\phi\left(  \frac{Y_{it}-\mu_{21}}{\sigma_2}\right)\right).
\end{align*}
Therefore, the LRT statistic for a two-component mixture, $LR_n^*(\bs\vartheta_{2,n})$, is bounded below by the right hand side of (\ref{lr-1}).
Then, repeating the argument following (\ref{lr-1}),  the stated result follows from  Lemma \ref{lemma:unbounded_likelihood}. This proves part (i).

For part (ii), consider a model without covariates for the dynamic panel model (\ref{eq:fm-dynamic}) with (\ref{eq:f1-dynamic}). Let $Y_{11}$ denote the first observation at time $t=1$.
Given the realized data of a fixed sample size $n$, consider a sequence of parameters $\{(\alpha_\ell,\bs\theta_{1,\ell}\t,\bs{\theta}_{2,\ell}\t)\t\}_{\ell=1}^\infty$ with $\alpha_\ell = 1/2$, $\bs\theta_{1,\ell}=(\mu_{11,\ell},\mu^*,\rho^*,\sigma_{1,\ell}^{2},\sigma^{*2})\t$, and  $\bs\theta_{2,\ell}=\bs\theta^*=(\mu_{1}^*,\mu^*,\rho^*,\sigma_1^{*2},\sigma^{*2})\t$, where $\{\mu_{11,\ell},\sigma_{1,\ell}^2\}_{\ell=1}^\infty$ satisfies $(\mu_{11,\ell},\sigma_{1,\ell}^2)\rightarrow (Y_{11},0)$ as $\ell\rightarrow\infty$ such that
\[
\frac{
\mu_{11,\ell} - Y_{11}}{\sigma_{1,\ell}} \rightarrow C \quad\text{as $\ell\rightarrow\infty$,}
\]
where $C$ is some constant. Then, for any positive and large constant $B>0$, we have $LR_n^*(\bs\vartheta_{2,\ell})>B$ by taking a sufficiently large value of $\ell$ because
\[
\lim_{\ell\rightarrow\infty} \frac{1}{\sigma_{1,\ell}} \phi\left(\frac{
\mu_{11,\ell} - Y_{11}}{\sigma_{1,\ell}} \right)  = \phi\left(C\right) \times \lim_{\sigma_{1,\ell}\rightarrow 0}  \frac{1}{\sigma_{1,\ell}}= \infty.
\]
The analogous argument applies to the model (\ref{eq:fm-dynamic}) with normal mixture density (\ref{eq:f1-dynamic-mixture}), and the stated result of part (ii) follows.

\end{proof}

\begin{proof}[Proof of Proposition \ref{prop:linear_independence}]
%Under \(H_{01}:\;\theta_{1}=\theta_{2}=(\mu^{*},\sigma^{*2})\) with \(\bar\alpha\in(c,1-c)\), the mixture density becomes
%\[
%g_{2}(\{Y_{t}\};\bar\alpha,\theta^{*},\theta^{*})
%\;=\;
%f^{*},
%\quad
%f^{*}
%:=\prod_{t=1}^{T}\frac{1}{\sigma^{*}}\;\phi\!\Bigl(\tfrac{Y_{t}-\mu^{*}}{\sigma^{*}}\Bigr),
%\]
%so that all derivatives are proportional to \(f^{*}\).  
Note that, for $Z=\tfrac{Y-\mu}{\sigma}$,
\[\frac{\partial}{(\partial\mu)}
\Bigl[\tfrac{1}{\sigma}\,\phi(Z)\Bigr]
\;=\;
\tfrac{Z}{\sigma}\;\frac{1}{\sigma}\,\phi(Z),
\quad
\frac{\partial^{2}}{(\partial\mu)^{2}}
\Bigl[\tfrac{1}{\sigma}\,\phi(Z)\Bigr]
\;=\;
\tfrac{Z^{2}-1}{\sigma^{2}}\;\frac{1}{\sigma}\,\phi(Z), 
\]
\[\text{and}\quad
\frac{\partial}{\partial\sigma^{2}}
\Bigl[\tfrac{1}{\sigma}\,\phi(Z)\Bigr]
\;=\;
\tfrac{1}{2\,\sigma^{2}}\,(Z^{2}-1)\;\frac{1}{\sigma}\,\phi(Z).
\]

Hence, for either \(j=1,2\), a straightforward calculation gives
\begin{equation} \label{eq:prop-2} 
\left.\frac{\partial^{2}g_{2}}{(\partial\mu_{j})^{2}}\right|_{(\bar\alpha.\theta^{*},\theta^*)}
\;=\bar\alpha\;
f^{*}\,\frac{1}{\sigma^{*2}}\,
\Bigl[\,(Z_{1}+\cdots+Z_{T})^{2}-T\Bigr]\quad\text{and}\quad
\left.\frac{\partial\,g_{2}}{\partial\sigma_{j}^{2}}\right|_{(\bar\alpha.\theta^{*},\theta^*)}
\;=\bar\alpha\;
f^{*}\,\frac{1}{2\,\sigma^{*2}}
\sum_{t=1}^{T}\bigl(Z_{t}^{2}-1\bigr),
\end{equation}
where 
\[
Z_{t}:=\frac{Y_{t}-\mu^{*}}{\sigma^{*}},
\qquad
t=1,\dots,T.
\]

Suppose, for contradiction, that real constants \(a,b\) exist such that
\[
\left.\frac{\partial^{2}g_{2}}{(\partial\mu_{j})^{2}}\right|_{(\bar\alpha.\theta^{*},\theta^*)}
\;=\;
a
\;+\;
b\,
\left.\frac{\partial\,g_{2}}{\partial\sigma_{j}^{2}}\right|_{(\bar\alpha.\theta^{*},\theta^*)}
\]
with positive probability.  Since \(\bar\alpha  f^{*}/\sigma^{*2}>0\) almost surely, substituting (\ref{eq:prop-2}) into the above equation, and dividing both sides by \(\bar\alpha f^{*}/\sigma^{*2}\) gives
\[
(Z_{1}+\cdots+Z_{T})^{2}\;-\;T
\;=\;
\frac{a\,\sigma^{*2}}{\bar\alpha f^{*}}
\;+\;
\frac{b}{2}\sum_{t=1}^{T}\bigl(Z_{t}^{2}-1\bigr),
\]
or equivalently
\[
(Z_{1}+\cdots+Z_{T})^{2}
\;=\;
\frac{b}{2}\sum_{t=1}^{T}Z_{t}^{2}
\;+\;\Bigl(\frac{a\,\sigma^{*2}}{\bar\alpha f^{*}}-\tfrac{b\,T}{2}\Bigr)
\;+\;T.
\]
But
\[
(Z_{1}+\cdots+Z_{T})^{2}
\;=\;
\sum_{t=1}^{T}Z_{t}^{2}
\;+\;2\sum_{1\le i<j\le T}Z_{i}Z_{j},
\]
so this would force
\[
\Bigl(1-\tfrac{b}{2}\Bigr)\sum_{t=1}^{T}Z_{t}^{2}
\;+\;
2\sum_{1\le i<j\le T}Z_{i}Z_{j}
\;=\;
C,
\quad
C \;:=\; \frac{a\,\sigma^{*2}}{\bar\alpha f^{*}} \;-\;\tfrac{b\,T}{2} \;+\; T,
\]
as a polynomial identity in \((Z_{1},\dots,Z_{T})\).  Since the coefficient of each cross‐term \(Z_{i}Z_{j}\) on the left is \(2\), but on the right must be \(0\), no choice of \(b\) can satisfy this identity except on a set of probability zero.  Hence no such constants \(a,b\) exist, and
\[
\Pr\Bigl[\, \left.
\tfrac{\partial^{2}g_{2}}{(\partial\mu_{j})^{2}}\right|_{(\bar\alpha.\theta^{*},\theta^*)}
\;=\;
a
\;+\; \left.
b\,\tfrac{\partial\,g_{2}}{\partial\sigma_{j}^{2}}\right|_{(\bar\alpha.\theta^{*},\theta^*)}
\Bigr]
\;=\;0.
\] 
\end{proof}

\begin{proof}[Proof of Proposition \ref{prop:consistency}]

Our proof closely follows the proof of Theorem 3.3 in \citet{hathaway85as} by verifying Assumptions 1, 2, 3, and 5 of \cite{kieferwolfowitz56ams}.

We first consider a model with the component-specific density function (\ref{eq:fm}) with (\ref{eq:f1}).
Because our model has additional free parameters $\bs\beta_j$s, as in the proof of \cite{kasaharashimotsu15jasa},
we consider the joint density of $m_{q}:=M(q+1)$ observations instead of $M+1$ observations in \citet[][p.\ 798]{hathaway85as}, where $q:=\text{dim}(\bs\beta)$. The joint density function of $m_{q}$ observations is itself a mixture of $M^{m_{q}}$ components, where each component is given by $\prod_{j = 1}^{m_{q}}\prod_{t=1}^T f(Y_{jt}; \mu_{i_j} + \bs{X}_{jt}^{\top}\bs{\beta}_{i_j} ,\sigma_{i_j})$ for some choices $i_j\in \{1,\ldots,M\}$, with the density of $N(\mu,\sigma^2)$ denoted by $f(y; \mu,\sigma): = \frac{1}{\sigma}\phi((y-\mu)/\sigma)$.

% (2\pi\sigma^2)^{-1/2}\exp(-(y-\mu)^2/2\sigma^2)$.

Assumptions 1, 2, and 3 of \citet{kieferwolfowitz56ams} are easily verified for the joint density of $m_{q}$ observations. We verify Assumption 5 of \citet{kieferwolfowitz56ams} for the joint density function of $m_{q}$ observations by showing that
\begin{equation}
E\left[\log \prod_{j = 1}^{m_{q}}\prod_{t=1}^T f(Y_{jt}; \mu_{i_j}^* + \bs{X}_{jt}^{\top}\bs{\beta}_{i_j}^*,\sigma^*_{i_j})\right] > -\infty \label{H3.1}
\end{equation}
for $\bs{\vartheta}^*_M\in\Theta_M^*$ and that
\begin{equation}
E \sup_{\bs{\vartheta}_M \in \Theta_{\bs{\vartheta}_M}(\bs c)} \left[\log \prod_{j=1}^{m_{q}}\prod_{t=1}^T f(Y_{jt}; \mu_{i_j} + \bs{X}_{jt}^{\top}\bs{\beta}_{i_j},\sigma_{i_j})\right]
<\infty \label{H3.2}
\end{equation}
for all component choices $i_j\in \{1,\ldots,M\}$,
which correspond to equations (3.1) and (3.4) in \citet{hathaway85as}, respectively. (\ref{H3.1}) follows from the argument in the proof of Theorem 3.3 of \citet{hathaway85as}.

For (\ref{H3.2}),  because $\bs{\vartheta}_M\in {\bar \Theta}_{\bs\vartheta_M}(\bs c)$, there exists $c\in(0,1]$ such that $\min_{j,k}\sigma_j/\sigma_k>c$.
Proceeding as in \citet[][pp.\ 798--799]{hathaway85as}, we can show that
\begin{align*}
\sup_{\bs{\vartheta}_M \in \Theta_{\bs{\vartheta}_M}(\bs c)}\log \left[ \prod_{j=1}^{m_{q}}\prod_{t=1}^T f(Y_{jt}; \mu_{i_j} + \bs{X}_{jt}^{\top}\bs{\beta}_{i_j},\sigma_{i_j})\right]
\end{align*}
is no larger than, for some $\ell\in\{1,...,M\}$ and $j_1,j_2,...,j_{q+1}\in \{1,...,m_k\}$,
\begin{equation}\label{H3.5}
\sup_{\mu_\ell,\bs\beta_\ell,\sigma_\ell} \log \left[ \delta(\sigma_\ell)
\prod_{r=1}^{q+1}\prod_{t=1}^T f(Y_{j_rt}; \mu_{\ell} + \bs{X}_{j_rt}^{\top}\bs{\beta}_{\ell},\sigma_{\ell})\right],
\end{equation}
where $\delta(\sigma_\ell)=(2\pi)^{-T(M-1)(q+1)/2}(c\sigma_\ell)^{-T(M-1)(q+1)}$, because $f(Y_{jt}; \mu_{i_j} + \bs{X}_{jt}^{\top}\bs{\beta}_{i_j},\sigma_{i_j})=(2\pi)^{-1/2}(\sigma_{i_j})^{-1}\exp(-\{ Y_{jt}-( \mu_{i_j} + \bs{X}_{jt}^{\top}\bs{\beta}_{i_j})\}^2/2\sigma_{i_j}^2 )\leq (2\pi)^{-1/2}(\sigma_{i_j})^{-1}\leq (2\pi)^{-1/2}(c\sigma_\ell)^{-1}$ for $j \notin \{j_1,j_2,...,j_{q+1}\}$.

Note that $\prod_{q=1}^{q+1}\prod_{t=1}^T f(Y_{j_qt}; \mu_{\ell} + \bs{X}_{j_qt}^{\top}\bs{\beta}_{\ell},\sigma_{\ell})$ is the likelihood function of a linear Gaussian model. Therefore,  the maximized value of (\ref{H3.5}) equals $C_1 - C_2\log(SSR)$,
where $C_1$ and $C_2 $ are a finite constant that depends only on $M$, $k$, and $T$ while $SSR$ is the sum of squared residuals obtained from regressing $\{\{Y_{j_qt}\}_{t=1}^T\}_{q=1}^{q+1}$ on $\{\{1,\bs X_{j_qt}\}_{t=1}^T\}_{q=1}^{q+1}$. Because we have $(T-1)(q+1)$ more observations than the number of parameters, the SSR is distributed as $\sigma_\ell^{*2} \chi^2((T-1)(q+1))$.  Since $E\log(\chi^2((T-1)(q+1))) < \infty$, the expected value of (\ref{H3.5}) is finite, and (\ref{H3.2}) holds. This verifies Assumption 5 of \citet{kieferwolfowitz56ams}, and the stated consistency result follows.

When the component-specific density function is given by (\ref{eq:fm}) with (\ref{eq:f1-mixture}), where the number of components  in (\ref{eq:f1-mixture}) is $K$, the joint density function of $m_{q}$ observations can be written as  a mixture of $M^{m_{q}}\times K^{m_qT}$ components,
where each component is given by $\prod_{j = 1}^{m_{q}}\prod_{t=1}^T f(Y_{jt}; \mu_{i_{j}i_k} + \bs{X}_{jt}^{\top}\bs{\beta}_{i_j} ,\sigma_{i_j})$ for some choices $i_j\in \{1,\ldots,M\}$ and $i_{k}\in \{1,...,K\}$. Then, repeating the same argument as above, we can show that the expected value of
$\sup_{\bs{\vartheta}_M \in \Theta_{\bs{\vartheta}_M}(\bs c)}\log \left[ \prod_{j=1}^{m_{q}}\prod_{t=1}^T f(Y_{jt}; \mu_{i_ji_k} + \bs{X}_{jt}^{\top}\bs{\beta}_{i_j},\sigma_{i_j})\right]$ is bounded, and Assumption 5 of \citet{kieferwolfowitz56ams} holds. Because Assumptions 1, 2, and 3 of \citet{kieferwolfowitz56ams} are also easily verified, the consistency result follows.

The proof for the dynamic model with the component-specific density function (\ref{eq:fm-dynamic}) using either (\ref{eq:f1-dynamic}) or (\ref{eq:f1-dynamic-mixture}) follows analogously and is therefore omitted.

\end{proof}

\begin{proof}[Proof of Proposition \ref{prop:t2_distribution}]

The proof  is similar to that of Proposition 3 in \cite{kasaharashimotsu15jasa}.

Given the value of $\bs c$ and $\alpha\in [c_1,1-c_1]$, define the space for reparameterized parameters as
$\bs{\psi} := (\bs{\nu}\t,\bs{\lambda}\t)\t \in \Theta_{\bs\psi_\alpha}$,
where $\Theta_{\bs{\psi}_\alpha} = \{ \bs{\psi}=(\bs\nu\t,\bs\lambda\t)\t:   (\alpha,\bs{\nu} + ( 1 - \alpha) \bs{\lambda}, \bs{\nu} - \alpha \bs{\lambda}) \in \Theta_{\bs\vartheta_2}(\bs c)\}.$  Under the null hypothesis $H_{01}: \bs{\theta}_1  = \bs{\theta}_2 = \bs{\theta}^*$, we have $\bs{\lambda} =\bs 0$ and $\bs{\nu} = \bs{\theta}^*$. We rewrite the reparameterized parameters under the null hypothesis as $\bs{\psi}^{*} = ((\bs{\theta}^*)\t,\bs 0\t)\t$. We denote the reparameterized  density function and its logarithm as
\begin{align}\label{eq:repar}
	g(\bs{w};\bs{\psi},\alpha) & := \alpha f(\bs{w}; \bs{\nu}  + (1 - \alpha) \bs{\lambda}) + (1 - \alpha)  f(\bs{w}; \bs{\nu} - \alpha \bs{\lambda})\ \text{ and }\  l(\bs{w};\bs{\psi},\alpha)    = \log g(\bs{w};\bs{\psi},\alpha).
\end{align}

Let $L_n(\bs{\psi},\alpha):= \sum_{i=1}^n  l(\bs{W}_i;\bs{\psi},\alpha)$ be the reparameterized log-likelihood function.
For each $\alpha\in [c_1,1-c_1]$, define the reparameterized MLE as
\begin{equation}
	  \label{eq:mle}
\hat{\bs{\psi}}_\alpha= \arg \max_{\bs{\psi} \in \Theta_{\bs\psi_\alpha}}  L_n(\bs{\psi},\alpha).
\end{equation}

Collect the relevant normalized reparameterized parameters and define $\bs{t}(\bs{\psi},\alpha)$ as
\begin{equation}
 \label{eq:t_1}
 \bs{t}(\bs{\psi},\alpha)  = \begin{pmatrix}
 \bs{t}_{\bs{\nu}} \\
 \bs{t}_{\bs{\lambda} }(\bs{\lambda},\alpha)
 \end{pmatrix}= \begin{pmatrix}
	 \bs{\nu} - \bs{\nu}^* \\
	  \alpha ( 1- \alpha)  \bs{v} (\bs{\lambda} )
 \end{pmatrix},
\end{equation}
where $v(\bs{\lambda})$  is  given by (\ref{eq:v}).

As discussed in the proof of Lemma \ref{lemma:expansion}, taking the fourth-order Taylor expansion of $L_n(\bs{\psi},\alpha)$  around $(\bs\psi^*,\alpha)$, we may write $2\{L_n(\bs{\psi},\alpha)- L_n(\bs{\psi}^*,\alpha)\}$ as a quadratic function of $\sqrt{n}\bs{t}(\bs{\psi},\alpha)$ as
\begin{align}
2\{L_n(\bs{\psi},\alpha)- &L_n(\bs{\psi}^*,\alpha)\} = 2(\sqrt{n}\bs{t}(\bs{\psi},\alpha))\t \bs S_n - (\sqrt{n}\bs{t}(\bs{\psi},\alpha))\t \bs {\mathcal{I}}_n(\sqrt{n}\bs{t}(\bs{\psi},\alpha)) + R_n(\bs{\psi},\alpha)  \label{eq:LR0} \\
 & \quad=  \bs G_n\t \bs{\mathcal{I}}_n \bs G_n -  \left[ \sqrt{n}\bs{t}(\bs{\psi},\alpha)- \bs G_n\right]\t \bs{\mathcal{I}}_n  \left[ \sqrt{n}\bs{t}(\bs{\psi},\alpha)- \bs G_n\right] + R_n(\bs{\psi},\alpha),\label{eq:LR1}
\end{align}
where   $\bs S_n := n^{-1/2} \sum_{i=1}^n {\bs s}(\bs W_i)$ and $\bs G_n:=\bs{\mathcal{I}}_n^{-1}\bs S_n$, where $\bs {\mathcal{I}}_n$ is the negative of the sample Hessian defined in the proof of Lemma \ref{lemma:expansion}. % Let $\bs{\mathcal{I}}=\E[\bs s(\bs W)\bs s(\bs W)\t]$.

%Therefore, the solution to $\bs a\t \bs s_i = 0$ is $\bs a = \bs 0$, then $\bs s_i$ is not collinear. When $T > 1$ and Assumption \ref{assumption:1} and \ref{assumption:LRT1} hold, then $\bs{\mathcal{I}}$ is non-singular.
% Note that a special case is when $T=1$, $\bs a_{(\mu)}  \bs s_{(\mu)} = 0$ hold for all $\{\bs x_{i},\bs z_{i}, y_{i}\}$ implies that $a_{\mu } =0$ and $(a_{\sigma } + a_{\lambda_{\mu \mu }}) = 0$, $a_{\lambda_{\mu \sigma }} = 0$ and $a_{\lambda_{\sigma \sigma }} = 0$.
% Then there exists a non-zero $\bs a$ such that $\bs a\t \bs s_i = 0$, then $\bs s_i$ is collinear and $\bs{\mathcal{I}} $ is singular.

Noting that $L_n(\bs{{\psi}}^*,\alpha)= L_{0,n}(\bs{\gamma}^*_0,\bs{\theta}^*_0)$,
write
\begin{align}
LR_n & =   \max_{\alpha \in [c_1, 1- c_2]} 2 \{  L_n(\bs{\hat{\psi}}_\alpha,\alpha)  -   L_n(\bs{{\psi}}^*,\alpha)  \} -  2\{ L_{0,n}(\hat{\bs{\gamma}}_0,\hat{\bs{\theta}}_0)   -   L_{0,n}(\bs{\gamma}^*_0,\bs{\theta}^*_0)   \}. \label{eq:split}
\end{align}
%We will show that
%\begin{align}
%2   \{ L_n(\bs{\hat{\psi}}_\alpha,\alpha)  -   L_n(\bs{\hat{\psi}}^*,\alpha)  ) \} &\overset{d}{\rightarrow}    (\hat{\bs t}_{\bs\lambda})\t \bs{\mathcal{I}}_{\bs \lambda,\bs\eta} \hat{\bs t}_{\bs\lambda} + {\bs G}_{\bs\eta}\t \bs{\mathcal{I}}_{\bs\eta}^{-1} {\bs G}_{\bs\eta}\label{eq:lr-2}\\
%2 \{ L_{0,n}(\hat{\bs{\gamma}}_0,\hat{\bs{\theta}}_0)   -   L_{0,n}(\bs{\gamma}^*_0,\bs{\theta}^*_0)  \} &\overset{d}{\rightarrow}    {\bs G}_{\bs\eta}\t \bs{\mathcal{I}}_{\bs\eta}^{-1} {\bs G}_{\bs\eta}.\label{eq:lr-1}
%\end{align}
%Then, part (b) follows from (\ref{eq:split})-(\ref{eq:lr-1}).

Define \begin{equation*}
\bs{S}_n = \begin{pmatrix}
\bs{S}_{\bs{\eta} n } \\
\bs{S}_{\bs{\lambda} n }
\end{pmatrix}:=  \begin{pmatrix}
n^{-1/2} \sum_{i=1}^n{\bs s}_{\bs\eta}(\bs W_i)\\
n^{-1/2} \sum_{i=1}^n{\bs s}_{\bs\lambda\bs\lambda}(\bs W_i)
\end{pmatrix}, \quad \begin{matrix}
\bs{S}_{\bs{\lambda}, \bs{\eta} n }  := \bs{S}_{\bs{\lambda} n } - \bs{\mathcal{I}}_{\bs{\lambda} \bs{\eta}}  \bs{\mathcal{I}}_{\bs{\eta}}^{-1} \bs{S}_{\bs{\eta} n } , \quad  \bs{G}_{\bs{\lambda}, \bs{\eta} n } := \bs{\mathcal{I}}_{\bs{\lambda},\bs{\eta}}^{-1} \bs{S}_{\bs{\lambda}, \bs{\eta} n }, \\
\bs{t}_{\bs{\eta}, \bs{\lambda} }   := \bs{t}_{\bs{\eta} } -
\bs{\mathcal{I}}_{\bs{\eta} } \bs{\mathcal{I}}_{\bs{\eta} \bs{\lambda} }^{-1} \bs{t}_{ \bs{\lambda} }(\bs{\lambda},\alpha),
\end{matrix}
\end{equation*}
and split the quadratic form in (\ref{eq:LR0}) to obtain
\begin{align}\label{eq:LR2}
2\{L_n(\bs{\psi},\alpha)-  L_n(\bs{\psi}^*,\alpha)\} =  B_n (\sqrt{n} \bs{t}_{\bs{\eta},\bs{\lambda}} )  + C_n (\sqrt{n} \bs{t}_{\bs{\lambda}}(\bs{\lambda},\alpha) ) +R_n(\bs\psi,\alpha),
\end{align}
where \begin{equation} \label{eq:split2}
\begin{split}
B_n ( \bs{t}_{\bs{\eta},\bs{\lambda}} )  & = 2 \bs{t}_{\bs{\eta},\bs{\lambda}}^\top \bs{S}_{\bs{\eta} n} - \bs{t}_{\bs{\eta},\bs{\lambda}}^\top \bs{\mathcal{I}}_{\bs{\eta}} \bs{t}_{\bs{\eta},\bs{\lambda}},\\
C_n ( \bs{t}_{\bs{\lambda}} ) & = 2 \bs{t}_{\bs{\lambda}}^\top \bs{S}_{\bs{\lambda},\bs{\eta} n} - \bs{t}_{\bs{\lambda}} ^\top \bs{\mathcal{I}}_{\bs{\lambda},\bs{\eta}} \bs{t}_{\bs{\lambda}}\\
& = \bs{G}_{\bs{\lambda}, \bs{\eta} n}^\top \bs{\mathcal{I}}_{\bs{\lambda},\bs{\eta}} \bs{G}_{\bs{\lambda}, \bs{\eta} n} - (\bs{t}_{\bs{\lambda}} - \bs{G}_{\bs{\lambda}, \bs{\eta} n})^\top \bs{\mathcal{I}}_{\bs{\lambda},\bs{\eta}}  (\bs{t}_{\bs{\lambda}}  - \bs{G}_{\bs{\lambda}, \bs{\eta} n}),
\end{split}
\end{equation}
with
$\bs{G}_{\bs{\lambda},\bs{\eta}n}\overset{d}{\rightarrow} \bs{G}_{\bs{\lambda},\bs{\eta}}=(\bs{\mathcal{I}}_{ \bs{\lambda},\bs{\eta}} )^{-1} \bs{S}_{\bs{\lambda},\bs{\eta}  } $ and $\bs{S}_{\bs{\lambda}, \bs{\eta} n } \overset{d}{\to} \bs{S}_{\bs{\lambda},\bs{\eta}  } \sim N(\bs 0, \bs{\mathcal{I}}_{ \bs{\lambda},\bs{\eta}} )$.
In addition, $R_n(\hat{\boldsymbol{\psi}}_\alpha,\alpha) = o_p(1)$ holds from  Lemma \ref {lemma:expansion}(a) and $\sqrt{n}\bs{t}(\hat{\bs\psi}_\alpha,\alpha) = O_p(1)$.

Because $\Delta_{(\gamma,\theta)}f(x; \hat{\bs{\gamma}}_0^*,\hat{\bs{\theta}}_0^*)$ is identical to $\Delta_{\bs\eta} f(x; \bs\psi^*,\alpha)$, a standard analysis gives $2 [L_{0,n}(\hat{\bs{\gamma}}_0,\hat{\bs{\theta}}_0)   -   L_{0,n}(\bs{\gamma}^*_0,\bs{\theta}^*_0)]  = \max_{\bs{t}_{\bs{\eta} }}  B_n (\sqrt{n} \bs{t}_{\bs{\eta} } ) + o_p(1)$.
Note that the possible values of both $\sqrt{n}\bs{t}_{\bs{\eta}}$ and $\sqrt{n}\bs{t}_{\bs{\eta},\bs{\lambda}}$ approach $\R^{q}$. Therefore,  in view of (\ref{eq:LR2}) and (\ref{eq:split2}), we can write equation (\ref{eq:split}) as
\begin{equation}\label{eq:LRTS_expansion_homo_repar}
LR_n  = \max_{\alpha\in [c_1,1-c_1]} C_n (\sqrt{n} \bs{t}_{\bs{\lambda}}(\hat{\bs{\lambda}}_\alpha, \alpha))  + o_p(1),
\end{equation}
where $\hat{\bs{\lambda}}_\alpha$ is as defined in (\ref{eq:mle}).

%<<<<<<< HEAD
%
%=======
%The asymptotic distribution of $LR_n$ follows from applying Theorem 3(c) of \cite[][p.1362]{Andrews1999} to (\ref{eq:LRTS_expansion_homo_repar}).
%\textbf{Jasmine: I'm not sure whether we should apply Corollary 1(c) or Theorem 3(c) of \cite{Andrews1999}.  Here I think we should add a statement: we then show that A2 - A5 of \cite{Andrews1999} hold. If we  use Corollary 1(c), then need to show Assumption 7 , 8 in \cite{Andrews1999} also holds. In addition, I think it would be good to make a statement that $B_T, D l_T(\theta_0),\mathcal{J}_T, Z_T$ and $\mathcal J, Z$ in A99 corresponds to $(n^{1/2}, \sum_{i=1}^n \tilde \nabla_{\zeta} l(\mathbf W_i;\psi^*, \alpha),\mathcal{I}_n, Z_n) $ and $(\mathcal{I}, Z)$.}
%>>>>>>> baf9264156fca775e2d36b4a2c975037bd8687ba
The asymptotic distribution of $LR_n$ follows from applying Theorem 3(c)  of \cite[][p. 1362]{Andrews1999} to (\ref{eq:LR2}) and (\ref{eq:LRTS_expansion_homo_repar}).
First,   Assumption 2 of \cite{Andrews1999} holds because Assumption 2* of \cite{Andrews1999} holds because of Proposition 2(a).
Second, Assumption 3 of \cite{Andrews1999} holds with $B_T = n^{1/2}$  and $T=n$ because $\bs{S}_{\bs{\lambda}, \bs{\eta} n } \overset{d}{\to} \bs{S}_{\bs{\lambda},\bs{\eta}  } \sim N(\bs 0, \bs{\mathcal{I}}_{ \bs{\lambda},\bs{\eta}} )$
 and $\bs{\mathcal{I}}_{ \bs{\lambda},\bs{\eta}} $ is non-singular.
Assumption 4 of \cite{Andrews1999} holds from part (a).
Assumption 5 of \cite{Andrews1999} follows from Assumption 5* and Lemma 3 of \cite{Andrews1999} with $b_T=n^{1/2}$ because $\alpha(1-\alpha)v(\Theta_{\bs\lambda})$ is locally equal to $\Lambda_{\bs{\lambda}}$.
Therefore, it follows from Theorem 3(c) of \cite{Andrews1999} that $C_n (\sqrt{n} \bs{t}_{\bs{\lambda}}(\hat{\bs{\lambda}},\alpha) )   \overset{d}{\to} (\hat{\bs{t}}_{\bs\lambda} )^\top \bs{\mathcal{I}}_{\bs\lambda,\bs\eta} \hat{\bs{t}}_{\bs\lambda}$,
where $\hat{\bs{t}}_{\bs\lambda}$ is defined by (\ref{eq:t_lambda_def}).

\end{proof}

\begin{proof}[Proof of Proposition \ref{prop:infinite_fisher}]
	Under $H_{2,0}$,  we obtain $\vartheta_{M_0+1} \in \Upsilon_{2h}^*$,
	\begin{equation}
    % \label{eq:h_02_expansion}
		\begin{split}
		& \E [\{ \nabla_{\alpha_h}  \log f_{M_0+1}  (\bs W_i, \vartheta_{M_0+1} )\}^2  ]  \\
		& = \int  \frac{ \{ f( \bs w;   \bs \theta_{h} )   - f( \bs w;  \bs \theta_{M_0}^* )  \}^2 }{\sum_{j=1}^{M_0} \alpha_j^* f( \bs w;  \bs \theta_j^* ) } d \bs w \\
		& = \int  \frac{ \{ f( \bs w; \bs \theta_{h} ) \}^2 }{\sum_{j=1}^{M_0} \alpha_j^* f( \bs w;  \bs \theta_j^* ) } d \bs w  + \int  \frac{ \{  f( \bs w;  \bs \theta_{M_0}^* )  \}^2 }{\sum_{j=1}^{M_0} \alpha_j^*
		 f( \bs w;  \bs \theta_j^* ) } d \bs w -   2  \int  \frac{  f( \bs w;  \bs \theta_{h} )  f( \bs w;  \bs \theta_{M_0}^* )   }{\sum_{j=1}^{M_0} \alpha_j^* f( \bs w;  \bs \theta_j^* ) } d \bs w .
	\end{split} \label{eq:infinite_fisher}
 \end{equation}
The latter two terms on the right-hand side of (\ref{eq:infinite_fisher}) are bounded because
$f( \bs w;  \bs \theta_{M_0}^* )  / \sum_{j=1}^{M_0} \alpha_j^* f( \bs w;  \bs \theta_j^* )  \le (1 / \alpha_{M_0}^*)$ for any $\bs w$ and $f(\bs w; \bs \theta)$ integrates to one.
Therefore, the left-hand side of (\ref{eq:infinite_fisher}) goes to infinity if and only if the first term on the right-hand side of (\ref{eq:infinite_fisher}) goes to infinity.

Because $\max_j \alpha_j \le \sum_{j}^{M_0} \alpha_j \le M_0 \max_j \alpha_j$, we obtain
\[  \frac{1}{M_0}  \frac{ \{  f( \bs w;  \bs \theta_{h} )  \}^2 }{ \max_j  \{ \alpha_j^* f( \bs w;  \bs \theta_j^* )  \} }  \le   \frac{ \{  f( \bs w;  \bs \theta_{h} )  \}^2 }{ \sum_{j=1}^{M_0} \alpha_j^* f( \bs w;  \bs \theta_j^* ) }  \le   \frac{ \{  f( \bs w;  \bs \theta_{h} )  \}^2 }{ \max_j  \{ \alpha_j^* f( \bs w; \bs \theta_j^* )  \} }. \]

Without loss of generality, we assume that $\sigma_{M_0}^* = \max \{ \sigma_1^*,\ldots, \sigma_{M_0}^* \}$ and that the maximum is unique.
We focus on models without covariates because the law of iterated expectations implies that the stated result also holds for models with covariates if it holds for models without covariates.

We first prove the case for the component-specific density function $f(\boldsymbol{w};\boldsymbol{\theta})=\prod_{t=1}^Tf(y_t;\bs\theta)$, where $f(y_t;\bs\theta)$ is given by (\ref{eq:f1}) or  (\ref{eq:f1-mixture}) with $\bs\beta_j=\bs 0$.  Under the normal density function (\ref{eq:f1}) without covariates, there exists a sufficiently large but finite positive constant $B$, such that $\max_j \{ \alpha_j^* f (\bs w,  \mu_j^*,  \sigma_j^2) \} = \alpha_{M_0}^* f (\bs w,   \mu_{M_0}^*,  \sigma_{M_0}^2)$ when $| y_t| > B$ for all $t = 1,\ldots, T$.
Note that
\begin{equation}\label{bound}
\begin{split}
\frac{ \{  f( \bs w;  \mu_{h},  \sigma_{h} )  \}^2 }{ f( \bs w;  \mu_{M_0}^*,   \sigma_{M_0}^* )  } & = \prod_{t=1}^T \frac{ \sigma_{M_0}^{*} }{(2 \pi )^{1/2} \sigma_h^2 } \exp \left\{ -\frac{1}{\sigma_h^2} (y_t - \mu_h)^2 + \frac{1}{2 (\sigma_{M_0}^{*})^2 } (y_t - \mu_{M_0}^{*})^2 \right\} \\
& = \left(\frac{\sigma_{M_0}^* }{(2 \pi )^{1/2} \sigma_h^2 } \right)^T\exp \left\{ -  \frac{1}{\sigma_h^2}\sum_{t=1}^T( y_t - \mu_h)^2 + \frac{1}{2 (\sigma_{M_0}^{*})^2} \sum_{t=1}^T(y_t - \mu_{M_0}^{*})^2 \right\}.
\end{split}
\end{equation}
Then, the integral of the right-hand side of (\ref{bound})  over $|y_t| \ge B$ for $t=1,...,T$ is infinite if  $\sigma_h^2/ \sigma_{M_0}^{2*} > 2$, and the stated result holds.

When  $f(y_t;\bs\theta)$ is given by normal mixture density (\ref{eq:f1-mixture}), suppose that $\mu_{j1}<\mu_{j2}< \cdots<\mu_{{\cal{K}}}$ for all $j=1,....,M_0$. Then, there exists a constant $B$ such that, when $y_t>B$  for all $t = 1,\ldots, T$, we have $\max_j \{ \alpha_j^* f (\bs w,  \bs\theta^*_j) \} = \alpha_{M_0}^* f (\bs w,\bs\theta^*_{M_0})$ and
\begin{equation}\label{bound-mixture-density}
\frac{1}{\sigma_j}\exp\left(-\frac{1}{2}\left(\frac{y_t-\mu_{{\cal{K}}}}{\sigma_j}\right)^2\right)\geq \sum_{k=1}^{{\cal{K}}}\tau_{jk}\frac{1}{\sigma_j}\exp\left(-\frac{1}{2}\left(\frac{y_t-\mu_{jk}}{\sigma_j}\right)^2\right)\geq \frac{1}{\sigma_j}\exp\left(-\frac{1}{2}\left(\frac{y_t-\mu_{j1}}{\sigma_j}\right)^2\right).
\end{equation}
Then, it follows that
\[
\frac{ \{  f( \bs w;  \bs\theta_{h} )  \}^2 }{ f( \bs w;  \bs\theta_{M_0}^* )   }  \geq  \left(\frac{\sigma_{M_0}^* }{(2 \pi )^{1/2} \sigma_h^2 } \right)^T\exp \left\{ -\frac{1}{\sigma_h^2}\sum_{t=1}^T (y_t - \mu_{1h})^2 + \frac{1}{2 (\sigma_{M_0}^{*})^2 } \sum_{t=1}^T(y_t - \mu_{M_0K}^{*})^2 \right\},
\]
where  the integral of the right-hand side  over $y_t>B$ for $t=1,...,T$  is infinite if  $\sigma_h^2/ \sigma_{M_0}^{2*} > 2$ as in (\ref{bound}).

Next, for dynamic panel models,  suppose that $f(\boldsymbol{w};\boldsymbol{\theta})=f_1(y_1;\bs\theta)\prod_{t=2}^Tf(y_t|y_{t-1};\bs\theta)$ with $f_1(y_1;\bs\theta)$ and $f(y_t|y_{t-1};\bs\theta)$ given in
(\ref{eq:f1-dynamic}) with $\bs\beta_{1,j}=\bs\beta_j=\bs 0$.  Fix the value of $\{y_t\}_{t=1}^{T-1}$. Then,
\begin{align*}
\frac{ \{  f( \bs w;  \bs\theta_{h} )  \}^2 }{ f( \bs w;  \bs\theta_{M_0}^* )   } &= \left(\frac{f_1(y_1;\bs\theta_h)^2\prod_{t=2}^{T-1}f(y_t|y_{t-1};\bs\theta_h)^2}{f_1(y_1;\bs\theta_{M_0}^*)\prod_{t=2}^{T-1}f(y_t|y_{t-1};\bs\theta_{M_0}^*)}\right) \\
&\quad \times \frac{\sigma_{M_0}^* }{(2 \pi )^{1/2} \sigma_h^2 }  \exp \left\{ -  \frac{1}{\sigma_h^2} ( y_T - \mu_h-\rho_h y_{T-1})^2 + \frac{1}{2 (\sigma_{M_0}^{*})^2} (y_T - \mu_{M_0}^{*}-\rho_{M_0}^{*} y_{T-1})^2 \right\},
\end{align*}
where the integral of the right-hand side of this equation over $|y_T| \ge B$  is infinite for a sufficiently large $B$ if  $\sigma_h^2/ \sigma_{M_0}^{2*} > 2$. Therefore, $E\left[\left.\frac{ \{  f( \bs W;  \bs\theta_{h} )  \}^2 }{ f( \bs W;  \bs\theta_{M_0}^* )   } \right|  \{Y_t\}_{t=1}^{T-1}=\{y_t\}_{t=1}^{T-1}  \right]=\infty$, and the stated result follows from the law of iterated expectations. The case for $f_1(y_1;\bs\theta)$ and $f(y_t|y_{t-1};\bs\theta)$ given in
(\ref{eq:f1-dynamic-mixture}) can be proven analogously, and hence we omit it.

%We show that $E\left[\left.\frac{ \{  f( \bs W;  \bs\theta_{h} )  \}^2 }{ f( \bs W;  \bs\theta_{M_0}^* )   } \right|  \{Y_t\}_{t=1}^{T-1}=\{y_t\}_{t=1}^{T-1}  \right]=\infty$ for all $\{y_t\}_{t=1}^{T-1}\in\mathcal{Y}^{T-1}$, where the expectation is taken with respect to $Y_T$.

\end{proof}

\begin{proof}[Proof of Proposition \ref{prop:tm0_distribution}]

Collect the score vector for testing $H_{0,1h}$ for $h = 1,\ldots, M_0$ into one vector as
\begin{equation}
    \label{eq:s_tilde}
\tilde{\bs{s}}(\bs{W}) = \begin{pmatrix}
\tilde{\bs{s}}_{\bs{\eta}} (\bs{W})  \\
\tilde{\bs{s}}_{\bs{\lambda\lambda}} (\bs{W})
\end{pmatrix}, \ \text{ where } \underset{(M_0 +  p + q +1 ) \times 1}{\tilde{\bs{s}}_{\bs{\eta}}(\bs{W})}  = \begin{pmatrix}
\bs{s}_{\bs{\alpha}}(\bs{W})  \\
\bs{s}_{ (\bs{\nu} ) } (\bs{W})
\end{pmatrix}\quad \text{ and } \tilde{\bs{s}}_{\bs{\lambda\lambda}} (\bs{W}) = \begin{pmatrix}
\bs{s}^1_{\bs{\lambda\lambda}} (\bs{W}) \\
\vdots \\
\bs{s}^{M_0}_{\bs{\lambda\lambda}} (\bs{W})
\end{pmatrix},
\end{equation}
where
\begin{equation}
    % \label{eq:s_m0}
 \begin{split}
    \bs{s}_{\bs{\alpha}} (\bs{W}) & = \begin{pmatrix}
    f(\bs{W};\bs{\theta}^*_1) -    f(\bs{W};\bs{\theta}^*_{M_0}) \\
    \vdots \\
       f(\bs{W};\bs{\theta}^*_{M_0-1}) -    f(\bs{W};\bs{\theta}^*_{M_0})
    \end{pmatrix} \Bigg / f_{M_0}(\bs{W}; \bs{\vartheta}_{M_0}^*) ,  \\
    \bs{s}_{ \bs{\nu}  } (\bs{W})   & = \sum_{j=1}^{M_0} \alpha_j^*\frac{
    \nabla_{ \bs{\nu}  }  f(\bs{W};\bs{\theta}^*_{j})}{f_{M_0}(\bs{W}; \bs{\vartheta}_{M_0}^*)},\ \ \text{and}\ \   \bs{s}^h_{\bs{\lambda\lambda}} (\bs{W})   = \frac{\widetilde{ \nabla}_{\bs\theta_h\bs\theta_h\t}  f(\bs{W};\bs{\theta}^*_h) }{ f_{M_0}(\bs{W}; \bs{\vartheta}_{M_0}^*)}\quad\text{for $h=1,2,...,M_0$} ,
    \end{split}
\end{equation}
with
$\widetilde{ \nabla}_{\bs\theta_h \bs\theta\t_h}  f(\bs{W};\bs{\theta}^*_h)  := (c_{11} \nabla_{\theta_{h1}\theta_{h1}} f^*,...,c_{qq}\nabla_{\theta_{hq}\theta_{hq}} f^*,c_{12}\nabla_{\theta_{h1}\theta_{h2}} f^*,...,c_{(q-1)q}\nabla_{\theta_{hq-1}\theta_{hq}} f^*)\t$
for $\bs{\theta}_h:=(\theta_{h1},\theta_{h2},\theta_{h3},...,\theta_{hq})\t:=(\mu_h,\sigma_h^2,\beta_{h1},...,\beta_{hq-2})\t$ and  $c_{jk}=1/2$ for $j\neq k$ and $c_{jk}=1$ for $j=k$.
% = (1 / 2)\alpha_m^* \nabla_{\bs{\theta}^{\otimes 2}} (f(\bs{w};\bs{\theta}^*_{m}) /  f^*_0).
 Define \begin{equation}
  \label{eq:I_m0}
\begin{split}
    \tilde{\bs{\mathcal{I}}} := \E[\bs{\tilde{s}}(\bs{W}) \bs{\tilde{s}}(\bs{W})^\top], \quad  \tilde{\bs{\mathcal{I}}}_{\bs{\eta}} := \E[\bs{\tilde{s}}_{\bs{\eta}}(\bs{W})  \bs{\tilde{s}}_{\bs{\eta}}(\bs{W})^\top ], \quad \tilde{\bs{\mathcal{I}}}_{\bs{\lambda} \bs{\eta}} := \E[\bs{\tilde{s}}_{\bs{\lambda\lambda}}(\bs{W}) \bs{\tilde{s}}_{\bs{\eta}}(\bs{W})^\top ], \\
    \tilde{\bs{\mathcal{I}}}_{\bs{\eta} \bs{\lambda}}  := \tilde{\bs{\mathcal{I}}}_{\bs{\lambda} \bs{\eta}}^\top, \quad \tilde{\bs{\mathcal{I}}}_{\bs{\lambda\lambda}} := \E[\bs{\tilde{s}}_{\bs{\lambda\lambda}}(\bs{W}) \bs{\tilde{s}}_{\bs{\lambda\lambda}}(\bs{W})^\top ], \quad \tilde{\bs{\mathcal{I}}}_{\bs{\bs\lambda,\bs\eta}} := \tilde{\bs{\mathcal{I}}}_{\bs{\lambda \lambda}} - \tilde{\bs{\mathcal{I}}}_{\bs{\lambda \eta}} \tilde{\bs{\mathcal{I}}}_{\bs\eta}^{-1} \tilde{\bs{\mathcal{I}}}_{\bs{\eta \lambda}}.
\end{split}
\end{equation}
Then, the asymptotic distribution of the normalized score function is given by
\[
 \tilde{\bs S}_n: = \frac{1}{\sqrt{n}} \sum_{i=1}^n \tilde{\bs s}(\bs W_i) \overset{d}{\to} \tilde {\bs S} \sim N(\bs 0,   \tilde{\bs{\mathcal{I}}}),
\]
where, in view of (\ref{eq:s_tilde}),  $ \tilde {\bs S} $ may be partitioned as $ \tilde {\bs S} =( \tilde {\bs S}_{\bs\eta}\t, \tilde {\bs S}_{\bs\lambda\bs\lambda}\t)\t$ with $n^{-1/2} \sum_{i=1}^n \tilde{\bs s}_{\bs\eta}(\bs W_i)\overset{d}{\to} \tilde {\bs S}_{\bs\eta}$ and $n^{-1/2} \sum_{i=1}^n \tilde{\bs s}_{\bs\lambda\bs\lambda}(\bs W_i)\overset{d}{\to} \tilde {\bs S}_{\bs\lambda\bs\lambda}$ .

Let $\bs{\tilde{S}}_{\bs\lambda,\bs\eta} := (\bs{{S}}_{\bs\lambda,\bs\eta}^1, \ldots, \bs{{S}}_{\bs\lambda,\bs\eta}^{M_0} )^\top:= \bs{\tilde{S}}_{\bs\lambda\bs\lambda} -
\tilde{\bs{\mathcal{I}}}_{\bs\lambda\bs\eta} \tilde{\bs{\mathcal{I}}}_{\bs\eta} ^{-1} \bs{\tilde{S}}_{\bs\eta}
 \sim N(0,   \tilde{\bs{\mathcal{I}}}_{\bs\lambda,\bs\eta})$ be a $\R^{M_0 (q)(q+1)/2}$-valued random vector. For $h=1,2,...,M_0$,  define $\tilde{\bs{\mathcal{I}}}_{\bs\lambda,\bs\eta}^h := \E[\bs{{S}}_{\bs\lambda,\bs\eta}^h ( \bs{{S}}_{\bs\lambda,\bs\eta}^h)^\top]$ and $\bs{{G}}_{\bs\lambda,\bs\eta}^h := ({\bs{\mathcal{I}}}_{\bs\lambda,\bs\eta}^h)^{-1} \bs{{S}}_{\bs\lambda,\bs\eta}^h$.

%Let $\Theta_{\bs\alpha}(\epsilon) := \{ \bs\alpha \in \Theta_{\bs\alpha} : \alpha^1, \ldots ,\alpha^{M_0} \in [\epsilon,1-\epsilon] \}$, and define the LRT statistic for testing $H_{01}$ subject to $\alpha \in \Theta_{\alpha}(\epsilon)$ as $LR^{M_0}_{n}(\epsilon) : = \max_{\bs\psi \in \Theta_{\psi},\alpha \in \Theta_{\alpha}(\epsilon)} 2 \{  L_n(\bs{\psi},\tau) -  L_{0,n}(\hat{\bs{\vartheta}}_{M_0})\} $.
Define $\hat{\bs t}^h_{\bs\lambda}  $ analogously to $\hat{\bs t}_{\bs\lambda} $ as
\begin{equation}
     \label{eq:t_h}
\begin{split}
r^h_{\bs\lambda} (\hat{\bs{t}}^h_{\bs{\lambda}} ) = \inf_{{\bs{t}}^h_{\bs{\lambda}}\in \Lambda_{\bs\lambda}} r^h({\bs{t}}^h_{\bs{\lambda}}); \quad
r^h_{\bs\lambda} ({\bs{t}}^h_{\bs{\lambda}}) := ({\bs{t}}^h_{\bs{\lambda}} - \bs{{G}}^h_{\bs\lambda,\bs\eta} )^\top  {\bs{\mathcal{I}}}_{\bs\lambda,\bs\eta}^h({\bs{t}}^h_{\bs{\lambda}} - \bs{{G}}^h_{\bs\lambda,\bs\eta} )\quad\text{for $h=1,2,...,M_0$}.
\end{split}
\end{equation}

The local quadratic-form approximation of the log-likelihood function  $LR^{M_0,h}_n$ around $\Theta^*_{\bs\vartheta_{M_0+1},1h} \subset \Theta_{\vartheta_{M_0 + 1}}$  has an identical structure to the approximation that we derive in Section \ref{sec:LRT1} in testing $H_{01}$ in the test of homogeneity.  Consequently, we can show that $LR^{M_0,h}_n\overset{d}{\to} (\hat{\bs{t}}^h_{\bs\lambda})^\top \bs{\mathcal{I}}^h_{\bs\lambda,\bs\eta}  \hat{\bs{t}}^h_{\bs\lambda}$.  Then, given (\ref{eq:LR_M0_max}), the asymptotic null distribution of the LRTS for testing $H_{01}$ is given by the maximum over $(\hat{\bs{t}}^h_{\bs\lambda})^\top \bs{\mathcal{I}}^h_{\bs\lambda,\bs\eta}  \hat{\bs{t}}^h_{\bs\lambda}$s for $h=1,2,..., M_0$. We now proceed with the proof.

 For $h = 1,\ldots, M_0$,  let $\mathcal{N}_h^* \subset \Theta_{\bs{\vartheta}_{M_0 + 1}}(\bs c) $ be a sufficiently small closed neighborhood of $\Theta^*_{\bs\vartheta_{M_0+1},1h}$ such that $\alpha_h,\alpha_{h+1} > 0$ holds and $\Theta^*_{\bs\vartheta_{M_0+1},1k} \not\subset \mathcal{N}_h^*$ if $k \neq h$. Consider the following one-to-one reparameterization from the $(M_0 + 1)$-component model parameter $\bs{\vartheta}_{M_0 + 1} = (\alpha_1,\ldots,\alpha_{M_0},\bs{\theta}^\top_1,\ldots,\bs{\theta}^\top_h,\bs{\theta}^\top_{h+1},\ldots,\bs{\theta}^\top_{M_0 + 1})^\top$.
     Similar to (\ref{eq:m0_repar2}), the one-to-one reparameterization  for testing the null hypothesis $H_{0,1h}$ is given by
    \begin{equation*}
    \begin{pmatrix}
    \bs{\lambda}_h \\
    \bs{\nu}_h
    \end{pmatrix} := \begin{pmatrix}
    \bs{\theta}_{h} - \bs{\theta}_{h+1} \\
    \tau \bs{\theta}_{h} + (1 - \tau) \bs{\theta}_{h+1}
    \end{pmatrix} \text{ so that }
    \begin{pmatrix}
    \bs{\theta}_{h} \\
    \bs{\theta}_{h+1}
    \end{pmatrix} = \begin{pmatrix}
    \bs{\nu} + ( 1- \tau) \bs{\lambda} \\
    \bs{\nu} - \tau \bs{\lambda}
    \end{pmatrix},
    \end{equation*} and  $\alpha_j$ is reparameterized for $j=1,2,...,M_0$ as
    \begin{align*}
     (\pi_1,\ldots,\pi_{h-1},\pi_h,\pi_{h+1},\ldots , \pi_{M_0 - 1}) &= (\alpha_1,\ldots,\alpha_{h-1}, (\alpha_h + \alpha_{h+1}),\alpha_{h+2},\ldots,\alpha_{M_0})\\
 \tau  &=  {\alpha_h}/({\alpha_h + \alpha_{h+1}})
      \end{align*}
    so that $\pi_h  = \alpha_h + \alpha_{h+1}$ and  $\pi_{M_0} = 1 - \sum_{j=1}^{M_0 - 1} \pi_j$.

    Collect the reparameterized parameters except $\tau$ as
\begin{equation*}
  % \label{eq:m0_repar}
 \bs{\psi}_{h,\tau}  = (\bs\eta\t,\bs{\lambda}^\top_h)^\top\quad\text{with}\quad\bs \eta= (\pi_1,\ldots,\pi_{M_0-1},\bs{\theta}^\top_1,\ldots,\bs{\theta}^\top_{h-1},\bs{\nu}^\top_h,\bs{\theta}^\top_{h+2},\ldots, \bs{\theta}^\top_{M_0+1})^\top.
\end{equation*}
 In the reparameterized model, the null restriction $\bs{\theta}_h  = \bs{\theta}_{h+1}$ implied by $H_{0,1h}$ holds if and only if $\bs{\lambda}_h = 0$. Under $H_{0,1h}$, we have $\lambda_h^*=0$ and
 $\eta^*=(\alpha_1^*,...,\alpha_{M_0-1}^*, (\bs{\theta}^*_1)^\top,\ldots, (\bs{\theta}^*_{M_0})^\top)^\top$.
  Define the log-likelihood under the reparameterized parameters as
    \begin{equation*}
    f_{M_0+1}^{h}(\bs{w};\bs{\psi}_{h,\tau},\tau )  = \pi_h g^h(\bs{w},\bs{\psi}_{h,\tau}, \tau) +  \sum_{j=1}^{h-1} \pi_j f(\bs{w};\bs{\theta}_j) +  \sum_{j=h}^{M_0} \pi_{j+1} f(\bs{w};\bs{\theta}_{j+1}),
    \end{equation*}
    where $g^h(\bs{w},\bs{\psi}_{h,\tau}, \tau)$ is defined similarly to (\ref{eq:repar}) as
    \begin{equation}
      % \label{eq:g_repar_m}
    g^h(\bs{w},\bs{\psi}_{h,\tau}, \tau) = \tau f(\bs{w};\bs{\nu}_h + (1 - \tau) \bs{\lambda}_h) + (1 - \tau) f(\bs{w};\bs{\nu}_h - \tau \bs{\lambda}_h).
    \end{equation}

 Define the local  MLE of $\bs{\psi}_{h,\tau}$ by
 \begin{equation}\label{eq:psi_mle}
 \bs{\hat{\psi}}_{h,\tau} := \argmax_{\bs{\psi}_{h,\tau}  \in \mathcal{N}_h^*}  L^h_n(\bs{\psi}_{h,\tau}, \tau),
 \end{equation}
 where $L_n^h(\bs{\psi}_{h,\tau}, \tau) := \sum_{i=1}^N \log g^h(\bs{W}_i;\bs{\psi}_{h,\tau}, \tau) $.
 Because $\bs{\psi}_{h,\tau}^*$ is the only parameter value in $\mathcal{N}_h^*$ that generates the true density, $\bs{\hat{\psi}}_{h,\tau} - \bs{\psi}_{h,\tau}^* = o_p(1)$ follows. % Proposition \ref{prop:t2_distribution}.

Define the LRTS for testing $H_{0,1h}$ as $LR_{n}^{M_0,h}  := \max_{\tau \in [c_1,1-c_1]} 2 ( L_n^h(\hat{\bs{\psi}}_{h,\tau}, \tau)  - L_{0,n}( \hat{\bs{\vartheta}}_{M_0}))$. Then, in view of (\ref{eq:LR_M0_max}), the stated result holds if
 \begin{equation}
   \label{eq:joint_LR}
(LR_{n}^{M_0,1},\ldots,LR_{n}^{M_0,M_0})^\top \overset{d}{\to} (\hat{\bs{t}}^1_{\lambda})^\top \bs{\mathcal{I}}^1_{\eta,\lambda} (\hat{\bs{t}}^1_{\lambda}) ,   \ldots, (\hat{\bs{t}}^{M_0}_{\lambda})^\top \bs{\mathcal{I}}^{M_0}_{\eta,\lambda} (\hat{\bs{t}}^{M_0}_{\lambda}) )^\top.
\end{equation}
Observe that $L^h_n(\bs{\psi}_{h,\tau}, \tau)  -  L^h_n(\bs{\psi}^*_{h,\tau}, \tau) $ admits the same expansion as
 $L_n(\bs{\hat{\psi}},\alpha)  -   L_n(\bs{{\psi}}^*,\alpha)  $ in (\ref{eq:LR1}) and (\ref{eq:LR2}) when
  $(\alpha,\bs{t}(\bs{\psi}, \alpha), \bs{t}_{\bs\lambda}(\bs{\lambda}, \alpha), \bs S_n, \bs G_n, \bs{\mathcal{I}}_n,R_n(\bs\psi,\alpha))$ is replaced with  \\ $(\tau,\bs{t}^h(\bs{\psi}^h, \tau), \bs{t}_{\bs\lambda}^h(\bs{\lambda}^h, \tau),  \bs S_n^h, \bs G_n^h,\bs{\mathcal{I}}_n^h,R_n^h(\bs\psi^h,\tau))$, where $(\bs S_n^h, \bs{\mathcal{I}}_n^h)$ is defined similarly to $(\bs S_n, \bs{\mathcal{I}}_n)$ but
  $(\bs{s}_{\bs{\eta}},\bs{s}_{\bs{\lambda\lambda}})$ is replaced with $(\tilde{\bs{s}}_{\bs{\eta}}, {\bs{s}}^h_{\bs{\lambda\lambda}})$ and $\bs G_n^h := (\bs{\mathcal{I}}_n^h)^{-1} \bs S_n^h$.  Applying the proof of Lemma \ref{lemma:expansion}, we have ${\bs S}_n^h\overset{d}{\to} {\bs S}^h \sim N(\bs 0, \bs{\mathcal{I}}^h)$ and $\bs{\mathcal{I}}_n^h\overset{p}{\to} \bs{\mathcal{I}}^h$. Then, (\ref{eq:joint_LR}) follows from the proof of Lemma \ref{lemma:expansion} and Proposition \ref{prop:t2_distribution} for each local  MLE when $(\bs G_n, \hat{\bs t}_{\bs\lambda},\bs{\mathcal{I}}_{\bs\lambda,\bs\eta})$ is replaced with  $(\bs G_n^h, \hat{\bs t}_{\bs\lambda}^h,\bs{\mathcal{I}}_{\bs\lambda,\bs\eta}^h)$ and the results are collected; note that $(\bs S_n^1,...,\bs S_n^{M_0})\overset{d}{\to} (\bs S^1,...,\bs S^{M_0})$.
\end{proof}

\begin{proof}[Proof of Proposition \ref{prop:sht}]
We first prove that when $M<M_0$, $\Pr( LR_n^M> \hat c^M_{1-q_n})\rightarrow 1$ as $n\rightarrow \infty$. Under Assumption \ref{assumption:Q_M}, by the standard consistency proof \citep[e.g., Theorem 2.5 of][]{Newey1994} give $\hat{\bs{\vartheta}}_M\overset{p}{\rightarrow} \bs{\vartheta}_M^*$ for $M\leq M_0$. Furthermore,   it follows from the argument in Theorem 3.2 of \cite{white82em} that,  for $M\leq M_0$,
 \begin{equation}\label{eq:asynormal}
 \sqrt{n} (\hat{\bs{\vartheta}}_M- \bs{\vartheta}_M^*) \overset{p}{\rightarrow} N(0, A^M(\bs{\vartheta}_M^*)^{-1} B^M(\bs{\vartheta}_M^*)A^M(\bs{\vartheta}_M^*)^{-1}).
 \end{equation}

Then, from  (\ref{eq:asynormal}) and the mean value expansion,  we have $Q_n^M(\hat{\bs{\vartheta}}_M)- Q^M({\bs{\vartheta}}_M^*) = O_p(n^{-1/2})$ and
\begin{align*}
\frac{ LR_n^M}{n} &:= 2\left( Q_n^{M+1}(\hat{\bs{\vartheta}}_{M+1}) -  Q_n^M(\hat{\bs{\vartheta}}_M) \right)= 2\left(Q^{M+1}({\bs{\vartheta}}_{M+1}^*) -Q^M({\bs{\vartheta}}_M^*)\right)  + o_p(1)
\end{align*}
for $M=1,2,...,M_0-1$.

Because $Q^{M+1}({\bs{\vartheta}}_{M+1}^*) -Q^M({\bs{\vartheta}}_M^*)>0$ by Assumption \ref{assumption:Q_M}(f), ${LR_n^M}/{n}\rightarrow \infty$ as $n\rightarrow \infty$. By Lemma \ref{lemma: sht}, $- n^{-1} \log q_n=o(1)$ and $\hat c^M_{1-q_n}-c^M_{1-q_n}=o_p(1)$ implies that $n^{-1} \hat c^M_{1-q_n}=o_p(1)$. Therefore,  when $M<M_0$,
we have $\Pr(  LR_n^M>\hat  c^M_{1-q_n})=\Pr(  LR_n^M/n>\hat c^M_{1-q_n}/n) \rightarrow 1$ as $n\rightarrow \infty$.

When $M=M_0$, because $  LR_n^{M_0} =O_p(1)$ by Proposition \ref{prop:tm0_distribution} and $\hat c^{M}_{1-q_n}\rightarrow\infty$ by $q_n=o(1)$, $ \Pr(   LR_n^{M_0} >\hat c^{M_0}_{1-q_n}) \rightarrow 0$ as $n\rightarrow \infty$.
 \end{proof}

\begin{proof}[Proof of Proposition \ref{proposition:bic}] %The proof follows the proof of Theorem 2.1 of Keribin (2000).

We first prove that $\lim_{n\rightarrow \infty} \Pr(\widehat{M}_{PL}<M_0)=0$.  To show \(\Pr(\hat{M}_{PL} < M_0) \to 0\), it suffices to prove that for each \(M<M_0\),
\[
\Pr\bigl(p\ell_n^M > p\ell_n^{M_0} \bigr) \;\to\; 0.
\]
By definition,
\begin{equation}\label{eq:W}
\frac{p\ell_n^M-p\ell_n^{M_0}}{n} = ( Q_n^M(\hat{\bs{\vartheta}}_{M}) - Q_n^{M_0}(\hat{\bs{\vartheta}}_{M_0}) )- \frac{ p_{n,k_M}- p_{n,k_{M_0}}}{n}.
\end{equation}
From $p_{n,k} = o(n)$ in Assumption \ref{assumption: penalty}(c),
\begin{equation}
 \frac{ p_{n,k_M}- p_{n,k_{M_0}}}{n} \rightarrow 0. \label{eq:conv_p}
 \end{equation}

By the union bound and the triangle inequality, for any $\epsilon > 0$,
\begin{align*}
\Pr\Bigl(&\left| Q_n^M(\hat{\boldsymbol\vartheta}_{M}) - Q_n^{M_0}(\hat{\boldsymbol\vartheta}_{M_0}) - \Bigl(Q^{M}({\boldsymbol\vartheta}_{M}^*) - Q^{M_0}({\boldsymbol\vartheta}_{M_0}^*)\Bigr)\right| > 4\epsilon\Bigr)
\end{align*}
is at most
\begin{align}
&\Pr\Bigl(\left|Q^M(\hat{\boldsymbol\vartheta}_{M}) - Q^{M}({\boldsymbol\vartheta}_{M}^*)\right| > \epsilon\Bigr)
+ \Pr\Bigl(\left|Q^{M_0}(\hat{\boldsymbol\vartheta}_{M_0}) - Q^{M_0}({\boldsymbol\vartheta}_{M_0}^*)\right| > \epsilon\Bigr) \label{eq:bound-1}\\[1mm]
&\quad +\, \Pr\Bigl(\left|Q_n^M(\hat{\boldsymbol\vartheta}_{M}) - Q^{M}(\hat{\boldsymbol\vartheta}_{M})\right| > \epsilon\Bigr)
+ \Pr\Bigl(\left|Q_n^{M_0}(\hat{\boldsymbol\vartheta}_{M_0}) - Q^{M_0}(\hat{\boldsymbol\vartheta}_{M_0})\right| > \epsilon\Bigr). \label{eq:bound-2}
\end{align}

The consistency of $\hat{\boldsymbol\vartheta}_{M}$ and $\hat{\boldsymbol\vartheta}_{M_0}$, together with the Continuous Mapping Theorem, implies that the two terms in \eqref{eq:bound-1} converge to $0$ as $n\rightarrow \infty$.

For small $\delta>0$, let $B(\delta,\boldsymbol\vartheta_M^*)$ and $B(\delta,\boldsymbol\vartheta_{M_0}^*)$ be open balls in the parameter spaces $\Theta_{\boldsymbol\vartheta_M}$ and $\Theta_{\boldsymbol\vartheta_{M_0}}$, respectively. Since $\hat{\boldsymbol\vartheta}_{M}$ and $\hat{\boldsymbol\vartheta}_{M_0}$ are consistent, with probability approaching one, the two terms in \eqref{eq:bound-2} are bounded by
\[
\Pr\Bigl(\sup_{\boldsymbol\vartheta_M\in B(\delta,\boldsymbol\vartheta_M^*)}
\left|Q_n^M(\boldsymbol\vartheta_M) - Q^M(\boldsymbol\vartheta_M)\right| > \epsilon\Bigr)
\quad\text{and}\quad
\Pr\Bigl(\sup_{\boldsymbol\vartheta_{M_0}\in B(\delta,\boldsymbol\vartheta_{M_0}^*)}
\left|Q_n^{M_0}(\boldsymbol\vartheta_{M_0}) - Q^{M_0}(\boldsymbol\vartheta_{M_0})\right| > \epsilon\Bigr).
\]
By applying Lemma 2.4 of \cite{Newey1994} under Assumption \ref{assumption:Q_M}(b), and noting that $\log g_M(\bs w;\bs\vartheta_M)$ is continuous at each $\bs\vartheta_M$ and the envelope function $\sup_{\bs\vartheta_{M}\in B(\delta,\bs\vartheta_M^*)}|\log g_M(\bs w;\bs\vartheta_M)|$ has finite expectation for $M\leq M_0$, these two probabilities converge to 0 by the Uniform Law of Large Numbers. Thus, we obtain
\begin{equation}
Q_n^M(\hat{\boldsymbol\vartheta}_{M}) - Q_n^{M_0}(\hat{\boldsymbol\vartheta}_{M_0})
\overset{p}{\longrightarrow} Q^M(\theta_M^*) - Q^{M_0}(\theta_{M_0}^*). \label{eq:conv_q}
\end{equation}

Therefore, from \eqref{eq:W}, \eqref{eq:conv_p}, and \eqref{eq:conv_q}, and in view of Assumptions \ref{assumption:Q_M}(f) and \ref{assumption: penalty}(c), we have
\[
\frac{p\ell_n^M - p\ell_n^{M_0}}{n}
\overset{p}{\longrightarrow}
Q^M(\theta_M^*) - Q^{M_0}(\theta_{M_0}^*)
< 0.
\]
Hence, $\Pr(p\ell_n^M > p\ell_n^{M_0}) \to 0$ for $M<M_0$, which proves that
\begin{equation}\label{eq:underestimate}
\lim_{n\rightarrow \infty} \Pr(\hat{M}_{PL} < M_0) = 0.
\end{equation}

We proceed to prove that $\lim_{n\rightarrow \infty} \Pr(\widehat{M}_{PL}>M_0)=0$ by showing that, for $M=M_0+1,...,\bar M$,
\[
\Pr(p\ell_n^M > p\ell_n^{M_0} )\rightarrow 0\quad\text{as $n\rightarrow \infty$}.
\]
By definition of $p\ell_n^M$, we have
\[
\Pr( p\ell_n^M > p\ell_n^{M_0} ) = \Pr\left(
\frac{ \ell_n^M(\hat{\boldsymbol\vartheta}_{M}) - \ell_n^{M_0}(\hat{\boldsymbol\vartheta}_{M_0})}{p_{n,M_0}}
+1- \frac{p_{n,k_M}}{p_{n,k_{M_0}}}>0\right).
\]
   For any $\epsilon>0$, $\lim_{n\rightarrow \infty} \Pr\left(
\frac{ \ell_n^M(\hat{\boldsymbol\vartheta}_{M}) - \ell_n^{M_0}(\hat{\boldsymbol\vartheta}_{M_0})}{p_{n,M_0}}
> \epsilon\right)=0$ because $p_{n,M_0}\rightarrow \infty$  and $\ell_n^M(\hat{\boldsymbol\vartheta}_{M}) - \ell_n^{M_0}(\hat{\boldsymbol\vartheta}_{M_0})=O_p(1)$ by  Assumptions \ref{assumption: penalty}(b) and   \ref{assumption:bound}, respectively. Therefore,
\[
\frac{ \ell_n^M(\hat{\boldsymbol\vartheta}_{M}) - \ell_n^{M_0}(\hat{\boldsymbol\vartheta}_{M_0})}{p_{n,M_0}}+1
- \frac{p_{n,k_M}}{p_{n,k_{M_0}}}\overset{p}{\rightarrow} 1- \lim_{n\rightarrow\infty } \frac{p_{n,k_M}}{p_{n,k_{M_0}}},
\]
which is strictly negative by Assumption \ref{assumption: penalty}(d). Thus, $\Pr(p\ell_n^M > p\ell_n^{M_0} )\rightarrow 0\quad\text{as $n\rightarrow \infty$}$, and $\lim_{n\rightarrow \infty} \Pr(\widehat{M}_{PL}>M_0)=0$ follows.

\end{proof}

 \begin{proof}[Proof of Proposition \ref{prop:bound}]
Because \( \ell_n^{M_0}(\hat{\bs\vartheta}_{M_0}) - \ell_n^{M_0}({\bs{\vartheta}}_{M_0}^*) = O_p(1) \) under standard regularity conditions in Assumption \ref{assumption:Q_M}, Assumption \ref{assumption:bound} holds when \( \ell_{n}^{M}(\hat{\bs\vartheta}_{M}) - \ell_n^{M_0}({\bs{\vartheta}}_{M_0}^*) = O_p(1) \) in view of $\ell_{n}^M(\hat{\bs\vartheta}_{M})-\ell_n^{M_0}(\hat{\bs\vartheta}_{M_0})= ( \ell_{n}^{M}(\hat{\bs\vartheta}_{M}) - \ell_n^{M_0}({\bs{\vartheta}}_{M_0}^*) )-(\ell_n^{M_0}(\hat{\bs\vartheta}_{M_0}) - \ell_n^{M_0}({\bs{\vartheta}}_{M_0}^*) )$.
Because a consistent MLE is in $ A_{\varepsilon_n}(\eta)$ defined in Appendix \ref{app:bic}, Propositions \ref{prop:consistency} and \ref{Ln_thm2} imply that $\ell_n(\hat{\bs\psi}_{M}^j,\bs\tau^j)-\ell_n(\bs\psi_{M}^{j*},\bs\tau^j)  =O_p(1)$, and the stated result holds.
 \end{proof}

\begin{proof}[Proof of Corollary \ref{cor:bic}]
It is straightforward to verify that the penalty function $p_{n,k}=\frac{k}{2}\log(n)$ satisfies Assumption \ref{assumption: penalty}. Thus, result (a) directly follows from Proposition \ref{prop:bound}.
For (b), Assumption \ref{assumption: penalty}(c) holds for AIC penalty. Then, repeating  the argument in the proof of Proposition \ref{proposition:bic} up to (\ref{eq:underestimate}) proves that $ p\ell^{M_0}_n(\hat{\bs{\vartheta}}_{M_0}) >\max\left\{p\ell_n^{1}(\widehat{\bs{\vartheta}}_{1}),...,p\ell_n^{M_0-1}(\widehat{\bs{\vartheta}}_{M_0-1})\right\}$ holds with probability approaching one as $n\rightarrow\infty$. This implies that
\begin{align}
\lim_{n\rightarrow\infty} \Pr\left(\widehat{M}_{PL}>M_0\right) & \geq \lim_{n\rightarrow\infty}\Pr\left( p\ell^{M_0+1}_n(\hat{\bs{\vartheta}}_{M_0+1}) >\max\left\{p\ell_n^{1}(\widehat{\bs{\vartheta}}_{1}),...,p\ell_n^{M_0}(\widehat{\bs{\vartheta}}_{M_0})\right\}\right)\nonumber\\
&=  \lim_{n\rightarrow\infty}\Pr\left( p\ell^{M_0+1}_n(\hat{\bs{\vartheta}}_{M_0+1}) > p\ell_n^{M_0}(\widehat{\bs{\vartheta}}_{M_0}) \right)\nonumber\\
&=\lim_{n\rightarrow\infty}\Pr\left(2\left[ \ell^{M_0+1}_n(\hat{\bs{\vartheta}}_{M_0+1}) - \ell_n^{M_0}(\widehat{\bs{\vartheta}}_{M_0})\right] > 2(k_{M_0+1}-k_{M_0})\right),
\label{eq:cor-bic}
\end{align}
where
 $k_{M_0}$ and $k_{M_0+1}$ are the number of parameters for $M_0$ and $(M_0+1)$ components model.   Since the term $2\left[\ell^{M_0+1}_n(\hat{\bs{\vartheta}}_{M_0+1}) - \ell_n^{M_0}(\widehat{\bs{\vartheta}}_{M_0})\right]$ is $O_p(1)$ and converges in distribution as specified in Proposition \ref{prop:tm0_distribution}, and because $k_{M_0}$ and $k_{M_0+1}$ are finite, it follows  from (\ref{eq:cor-bic}) that
$\lim_{n \to \infty} \Pr\left(\widehat{M}_{PL}>M_0\right) > 0,$
and part (b) follows.
\end{proof}

 \begin{proof}[Proof of Proposition \ref{prop:h02}]
The part  (a) can be proven similar to Proposition \ref{prop:consistency}, and thus omitted.

The proof for part (b) closely follows Theorem 2(b) in \citet{andrews01em}. For each $\bs\lambda \in \Theta_{\bs\lambda}(c_1)$, we approximate the log-likelihood function $\ell_n^2(\bs\theta_2, \bs\lambda, \alpha)$ around the true parameter values $(\bs\theta^*, 0)$ using the partial derivative with respect to $\bs\theta_2$ and the right partial derivative with respect to $\alpha$ to obtain
\begin{align}
2\{\ell_n^2(\bs\theta_2, \bs\lambda, \alpha)&- \ell_n^2(\bs\theta^*, \bs\lambda,0)\} =   \bs G_n(\bs\lambda)\t \bs{\mathcal{I}}_n(\bs\lambda) \bs G_n(\bs\lambda)\nonumber\\
& -  \left[ \sqrt{n}\bs{t}(\bs\theta_2,\alpha)- \bs G_n(\bs\lambda)\right]\t \bs{\mathcal{I}}_n  \left[ \sqrt{n}\bs{t}(\bs\theta_2,\alpha)- \bs G_n(\bs\lambda)\right] + R_n(\bs\theta_2, \bs\lambda, \alpha),\label{LR22}
\end{align}
where $R_n(\bs\theta_2, \bs\lambda, \alpha)$ is a remainder term, and $\bs {\mathcal{I}}_n(\bs\lambda)$, $ \bs G_n(\bs\lambda)$, and $\bs{t}(\bs\theta_2,\alpha)$ are defined as
\begin{align*}
\bs {\mathcal{I}}_n(\bs\lambda):=n^{-1} \sum_{i=1}^n {\bs s}(\bs W_i;\bs\lambda){\bs s}(\bs W_i;\bs\lambda)\t,\
\bs G_n(\bs\lambda):=\bs{\mathcal{I}}_n(\bs\lambda)^{-1}\bs  S_n(\bs\lambda),\ \text{and}\
\bs{t}(\bs\theta_2,\alpha) :=
\begin{pmatrix}
\bs\theta_2-\bs\theta^*\\
\alpha
\end{pmatrix}
\end{align*}
with   $\bs S_n(\bs\lambda)=n^{-1/2}\sum_{i=1}^n {\bs s}(\bs W_i;\bs\lambda)$ and ${\bs s}(\bs W_i;\bs\lambda)$ defined in (\ref{eq:s_1}).  $(\theta,\pi)$ and $(B_T,D \ell_T(\theta_0,\pi),\mathcal{J}_{T \pi},Z_{T \pi})$ in \citet{andrews01em} correspond to our $((\bs\theta_2\t,\alpha)\t,\bs\lambda)$ and $(n^{1/2},n^{1/2}\bs S_n(\bs\lambda),\bs{\mathcal{I}}_n(\bs\lambda),\bs G_n(\bs\lambda))$.

We prove the stated result by applying Theorem 2(b) of \citet{andrews01em} to (\ref{LR22}). $(\beta,\delta,\pi)$ and $(B_T,G_\pi,\mathcal{J}_\pi,Z_\pi,Z_{\beta\pi})$ in \citet[pp.\ 697-699]{andrews01em} correspond to our $(\alpha,\bs\theta_2,\bs\lambda)$ and $(n^{1/2},\bs S(\bs\lambda),\bs{\mathcal{I}}( \bs\lambda),\bs G( \bs\lambda), \bs{G}_{\alpha,\bs\theta_2}(\bs\lambda))$,  where $\bs G( \bs\lambda):=\bs{\mathcal{I}}( \bs\lambda)^{-1}\bs S( \bs\lambda)$, $\bs G_{\alpha,\bs\theta_1}( \bs\lambda):=\bs{\mathcal{I}}_{\alpha,\bs\theta_2}( \bs\lambda)^{-1}\bs S_{\alpha,\bs\theta_2}( \bs\lambda)$, and $\psi$ in \citet[pp.\ 697-699]{andrews01em} does not exist in our setting. The stated result then follows because $\bs s_{\bs\theta_2}(\bs w)$ is identical to the score of the one-component model and $\hat  { \lambda}_{\beta \pi}'(H \mathcal{J}_{*\pi}^{-1}H')^{-1}\hat{ \lambda}_{\beta \pi}$ in Theorem 2(b) of \citet{andrews01em} is distributed as $(\max\{0,\bs{\mathcal{I}}_{\alpha.\bs\theta_2}( \bs\lambda)^{-1/2} \bs S_{\alpha.\bs\theta_2}( \bs\lambda)\})^2$. We proceed to verify the assumptions of Theorem 2(b) of \citet{andrews01em} (hereafter, A-Assumptions $2^{2^*}$, 3-5, 7, and 8). A-Assumption $2^{2^*}$(a)(b) follow from our Assumption \ref{assumption:h02}(a)(b). A-Assumption $2^{2^*}$(c) holds because our Assumption \ref{assumption:h02}(c)(d) and the uniform law of large numbers imply that $\sup_{ \bs\lambda\in\Theta_{ \bs\lambda}(c_1)}|| \bs{\mathcal{I}}_n( \bs\lambda) - \bs{\mathcal{I}}( \bs\lambda)|| \rightarrow_p 0$ and $\bs{\mathcal{I}}( \bs\lambda)$ is continuous.

A-Assumption 3 follows from Theorem 10.2 of \citet{pollard90book} if (i) $\widetilde\Theta_{\bs\lambda}$ is totally bounded, (ii) the finite dimensional distributions of $\bs G_n(\cdot)$ converge to those of $\bs G(\cdot)$, and (iii) $\{\bs G_n(\cdot): n\geq 1\}$ is stochastically equicontinuous. Condition (i) follows from the compactness of $\widetilde\Theta_{\bs\lambda}$ while conditions (ii) follows from Assumption \ref{assumption:h02}(b)(c) and the multivariate CLT. Condition (iii) can be verified by our Assumption \ref{assumption:h02}(b)(c) and Theorem 2 of \citet{andrews94hdbk} because $\nabla_{(\bs\theta_2\t,\alpha)\t} \log g_2(\cdot; \bs\theta_2,\bs\lambda,\alpha)$ are Lipschitz functions indexed by a finite dimensional parameter $\bs\lambda$ by Assumption  \ref{assumption:h02}(b).

A-Assumption 4 follows from Lemma 1 of \citet{andrews01em} because, for each $ \bs\lambda\in \widetilde\Theta_{ \bs\lambda}(c_1)$, $(\tilde{\bs\theta}_2( \bs\lambda),\tilde \alpha( \bs\lambda))= \arg\max_{(\bs\theta_2,\alpha)\in\Theta_{\bs\theta_2}\times[0,1/2]} \ell_n^2(\bs\theta_2, \bs\lambda, \alpha)$ converges to $(\bs\theta_2^*,0)$ in probability from the standard consistency proof. A-Assumption 5 holds because (i) the set $[0,1]$ equals a nonnegative half-line locally around $0$, and (ii) $\Theta_{\bs\theta_2}-\bs\theta_2^*$ is locally equal to ${\mathbb{R}}^{\text{dim}(\bs\theta_2)}$. A-Assumption 7(a) is not relevant for our problem. A-Assumptions 7(b) and 8 follow from our proof of A-Assumption 5.   %$\square$

\end{proof}

\begin{proof}[Proof of Proposition \ref{Ln_thm1}]
For  notational brevity, we drop the superscript $j$ from $\bs\psi_M^j$, $\bs\tau^j$, $\bs s_i^j$ and  $\bs{\mathcal{I}}^j$.
 By using the Taylor expansion of $2 \log(1+x) = 2x - x^2(1+o(1))$ for small $x$, we have uniformly $\bs\psi_M  \in \mathcal{N}_{c/\sqrt{n}}$,
\begin{equation}\label{ell_appn}
\ell_n(\bs\psi_M,\bs\tau) - \ell_n(\bs\psi_M^*,\bs\tau) = 2 \sum_{k=1}^n \log(1+h_{\bs\psi_M\bs\tau,i}) = n P_n(2h_{\bs\psi_M\bs\tau,i} - [1+o_p(1)] h_{\bs\psi_M\bs\tau,i}^2),
\end{equation}
where $h_{\bs\psi_M\bs\tau,i} := \sqrt{l_{\bs\psi_M\bs\tau,i}}-1$.
The stated result holds if we show that
\begin{align}
&  \sup_{ \bs\psi_M \in \mathcal{N}_{c/\sqrt{n}}} \left| nP_n(h_{\bs\psi_M\bs\tau,i}^2) - n \bs t^{\top}_{\bs\psi_M\bs\tau} \bs{\mathcal{I}} \bs t_{\bs\psi_M\bs\tau}/4 \right| = o_p(1)\quad\text{and}\label{hk_appn}\\
& \sup_{ \bs\psi_M \in \mathcal{N}_{c/\sqrt{n}}} | nP_n(h_{\bs\psi_M\bs\tau,i}) - \sqrt{n} \bs t^{\top}_{\bs\psi_M\bs\tau} \nu_n(\bs s_i)/2 + n\bs t_{\bs\psi_M\bs\tau} \bs{\mathcal{I}} \bs t^{\top}_{\bs\psi_M\bs\tau}/8 | = o_p(1),\label{hk_appn2}
\end{align}
because then the right-hand side of (\ref{ell_appn}) is equal to $\sqrt{n} \bs t^{\top}_{\bs\psi_M\bs\tau} \nu_n (\bs s_i) - \bs t_{\bs\psi_M\bs\tau} \bs{\mathcal{I}} \bs t^{\top}_{\bs\psi_M\bs\tau}/2 + o_p(1)$ uniformly in   $ \bs\psi_M \in \mathcal{N}_{c/\sqrt{n}}$.

We first show (\ref{hk_appn}). Let \[
m_{ \bs\psi_M \bs\tau i} :=l_{\bs\psi_M\bs\tau,i} -1= \bs t^{\top}_{\bs\psi_M\bs\tau} \bs s_i +  r_{ \bs\psi_M \bs\tau,i}. \] Observe that
\begin{equation} \label{B2}
\max_{1\leq i \leq n} \sup_{ \bs\psi_M \in \mathcal{N}_{c/\sqrt{n}}} |m_{ \bs\psi_M \bs\tau i}| = \max_{1\leq i \leq n} \sup_{ \bs\psi_M \in \mathcal{N}_{c/\sqrt{n}}} |\bs t^{\top}_{\bs\psi_M\bs\tau} \bs s_i +  r_{ \bs\psi_M \bs\tau,i}| = o_p(1),
\end{equation}
from Assumptions \ref{assumption:BIC}(a) and (c) and Lemma \ref{max_bound}. Write $4P_n ( h_{\bs\psi_M\bs\tau,i}^2)$ as
\begin{equation}\label{B0}
4P_n ( h_{\bs\psi_M\bs\tau,i}^2) = P_n \left(\frac{4(l_{\bs\psi_M\bs\tau,i}-1)^2}{(\sqrt{l_{\bs\psi_M\bs\tau,i}} + 1)^2}\right) = P_n(l_{\bs\psi_M\bs\tau,i}-1)^2 - P_n \left( (l_{\bs\psi_M\bs\tau,i}-1)^3 \frac{(\sqrt{l_{\bs\psi_M\bs\tau,i}}+3)}{(\sqrt{l_{\bs\psi_M\bs\tau,i}} + 1)^3}\right).
\end{equation}
From (\ref{density-ratio}),
\begin{equation} \label{lk_lln}
\begin{aligned}
P_n(l_{\bs\psi_M\bs\tau,i}-1)^2 %&= \bs t^{\top}_{\bs\psi_M\bs\tau}P_n(\bs s_i\bs s_i')\bs t_{\bs\psi_M\bs\tau} + 2 \bs t^{\top}_{\bs\psi_M\bs\tau} P_n[\bs s_i  r_{ \bs\psi_M \bs\tau,i}  ] + P_n( r_{ \bs\psi_M \bs\tau,i}  )^2 \\
&= \bs t^{\top}_{\bs\psi_M\bs\tau}P_n(\bs s_i\bs s_i')\bs t_{\bs\psi_M\bs\tau} +   \zeta_{ \bs\psi_M \bs\tau n  },
\end{aligned}
\end{equation}
where $ \zeta_{ \bs\psi_M \bs\tau n  }:=2 \bs t^{\top}_{\bs\psi_M\bs\tau} P_n[\bs s_i  r_{ \bs\psi_M \bs\tau,i}  ] + P_n( r_{ \bs\psi_M \bs\tau,i}  )^2$. It follows from Assumptions \ref{assumption:BIC}(a)(b)(c) and $(E|XY|)^2 \leq E|X|^2E|Y|^2$ that, uniformly in $ \bs\psi_M \in \mathcal{N}_\varepsilon$,
\begin{equation}\label{zeta}
 |  \zeta_{ \bs\psi_M \bs\tau n  }|=O_p(||\bs t_{\bs\psi_M\bs\tau}||^2||\bs\psi_M-\bs\psi_M^*||)+O_p(n^{-1}||\bs t_{\bs\psi_M\bs\tau}||||\bs\psi_M-\bs\psi_M^*||) + O_p(n^{-1}||\bs\psi_M-\bs\psi_M^*||^2).
 \end{equation} Then, (\ref{hk_appn}) holds because $ || P_n(\bs s_i\bs s_i') - \bs{\mathcal{I}}|| = o_p(1)$ and the second term on the right-hand side of (\ref{B0}) is bounded by
\begin{align*}
& \mathcal{C}  \sup_{ \bs\psi_M \in \mathcal{N}_{c/\sqrt{n}}} P_n \left[ |m_{ \bs\psi_M \bs\tau i}|^3   \right]    \leq o_p(1)  \sup_{ \bs\psi_M \in \mathcal{N}_{c/\sqrt{n}}} P_n \left[ m_{ \bs\psi_M \bs\tau i}^2  \right]  = o_p(n^{-1}),
\end{align*}
from $\sup_{\bs\psi_M \in \mathcal{N}_{c/\sqrt{n}}}\frac{(\sqrt{l_{\bs\psi_M\bs\tau,i}}+3)}{(\sqrt{l_{\bs\psi_M\bs\tau,i}} + 1)^3}\leq  \mathcal{C}$,  (\ref{B2}), and  $P_n( m_{ \bs\psi_M \bs\tau i}^2) = \bs t^{\top}_{\bs\psi_M\bs\tau}\bs{\mathcal{I}} \bs t_{\bs\psi_M\bs\tau} + o_p(||\bs t_{\bs\psi_M\bs\tau}||^2)$.

We proceed to show (\ref{hk_appn2}). Consider the following expansion of $h_{\bs\psi_M\bs\tau,i}$:
\begin{equation} \label{hk_1}
h_{\bs\psi_M\bs\tau,i} = (l_{\bs\psi_M\bs\tau,i}-1)/2 - h_{\bs\psi_M\bs\tau,i}^2/2 = (\bs t^{\top}_{\bs\psi_M\bs\tau} \bs s_i +  r_{ \bs\psi_M \bs\tau,i} )/2 - h_{\bs\psi_M\bs\tau,i}^2/2.
\end{equation}
Then, (\ref{hk_appn2}) follows from (\ref{hk_appn}), (\ref{hk_1}), and Assumptions \ref{assumption:BIC}(d), and the stated result follows.
\end{proof}

\begin{proof}[Proof of Proposition \ref{Ln_thm2}]
For  brevity, we drop the superscript $M$ from $\bs s_i^M$ and  $\bs{\mathcal{I}}^M$.
Define $h_{\bs\psi_M\bs\alpha,i} := \sqrt{l_{\bs\psi_M\bs\alpha,i}}-1$.
For part (a), it follows from $\log(1+x) \leq x$ and $h_{\bs\psi_M\bs\alpha,i} = (l_{\bs\psi_M\bs\alpha,i}-1)/2 - h_{\bs\psi_M\bs\alpha,i}^2/2$ (see (\ref{hk_1})) that
\begin{equation} \label{Ln_ineq}
\ell_n(\bs\psi_M,\bs\alpha)-\ell_n(\bs\psi_M^*,\bs\alpha) = 2 \sum_{k=1}^n \log(1+h_{\bs\psi_M\bs\alpha,i}) \leq 2 n P_n(h_{\bs\psi_M\bs\alpha,i}) = \sqrt{n} \nu_n(l_{\bs\psi_M\bs\alpha,i}-1) - n P_n(h_{\bs\psi_M\bs\alpha,i}^2).
\end{equation}
Observe that $h_{\bs\psi_M\bs\alpha,i}^2 =(l_{\bs\psi_M\bs\alpha,i}-1)^2/(\sqrt{l_{\bs\psi_M\bs\alpha,i}} + 1)^2 \geq \mathbb{I}\{l_{\bs\psi_M\bs\alpha,i} \leq \kappa \} (l_{\bs\psi_M\bs\alpha,i}-1)^2/ (\sqrt{\kappa}+1)^2$ for any $\kappa>0$. Therefore, \begin{equation}\label{pn_lower}
P_n (h_{\bs\psi_M\bs\alpha,i}^2)\geq (\sqrt{\kappa}+1)^{-2}P_n \left( \mathbb{I}\{l_{\bs\psi_M\bs\alpha,i} \leq \kappa\} (l_{\bs\psi_M\bs\alpha,i}-1)^2\right).
\end{equation}
In view of (\ref{lk_lln}), (\ref{pn_lower}) is rewritten as
\begin{align} \label{pn_lower_2}
P_n (h_{\bs\psi_M\bs\alpha,i}^2)& \geq (\sqrt{\kappa}+1)^{-2} \left\{\bs t^{\top}_{\bs\psi_M\bs\alpha} \left[ P_n(\bs s_{i}\bs s_{i}') - P_n(\mathbb{I}\{l_{\bs\psi_M\bs\alpha,i} > \kappa\}\bs s_{i}\bs s_{i}') \right] \bs t_{\bs\psi_M\bs\alpha} + \tilde \zeta_{ \bs\psi_M \bs\alpha n  }\right\},
\end{align}
where $\tilde \zeta_{ \bs\psi_M \bs\alpha n  }$ satisfies the bound similar to  (\ref{zeta}).
From H\"older's inequality, we have $P_n(\mathbb{I}\{l_{\bs\psi_M\bs\alpha,i} > \kappa\}||\bs s_{i}||^2 ) \leq [P_n(\mathbb{I}\{l_{\bs\psi_M\bs\alpha,i} > \kappa\} )]^{\delta/(2+\delta)} [ P_n(||\bs s_{i}||^{2+\delta}) ]^{2/(2+\delta)}$, where the right-hand side is no larger than $\kappa^{-\delta/(2+\delta)} O_p(1)$ uniformly in  $\bs\psi_M \in \mathcal{N}_{\varepsilon_n}$ because (i) it follows from $\kappa \mathbb{I}\{l_{\bs\psi_M\bs\alpha,i} > \kappa\} \leq l_{\bs\psi_M\bs\alpha,i}$ that $P_n(\mathbb{I}\{l_{\bs\psi_M\bs\alpha,i} > \kappa\} ) \leq \kappa^{-1} P_n(l_{\bs\psi_M\bs\alpha,i})$ and $  \sup_{ \bs\psi_M \in \mathcal{N}_{\varepsilon_n}}|P_n(l_{\bs\psi_M\bs\alpha,i})-1| = o_p(1)$ from Assumptions \ref{assumption:BIC}(d)(e), and (ii) $P_n( ||\bs s_{i}||^{2+\delta})=O_p(1)$ from Assumption \ref{assumption:BIC}(a). Consequently, $\mathbb{P}(   \sup_{ \bs\psi_M \in \mathcal{N}_{\varepsilon_n}}P_n(\mathbb{I}\{l_{\bs\psi_M\bs\alpha,i} > \kappa\}||\bs s_{i}||^2 ) \geq \lambda_{\min}/4 ) \to 0$ as $\kappa \to \infty$, where $\lambda_{\min}$ is  the smallest eigenvalue of $\bs{\mathcal{I}}$. Hence,  we can write (\ref{pn_lower_2}) as $P_n (h_{\bs\psi_M\bs\alpha,i}^2) \geq \frac{\tau}{2} (1+o_p(1)) \bs t^{\top}_{\bs\psi_M\bs\alpha} \bs{\mathcal{I}} \bs t_{\bs\psi_M\bs\alpha} + \frac{\tau}{2}  \tilde\zeta_{ \bs\psi_M \bs\alpha n  }$ for $\tau:=2(\sqrt{\kappa}+1)^{-2}>0$ by taking $\kappa$ sufficiently large. Because $\sqrt{n} \nu_n(l_{\bs\psi_M\bs\alpha,i}-1) = \sqrt{n} \bs t^{\top}_{\bs\psi_M\bs\alpha}\nu_n (\bs s_{i}) + O_p(\sqrt{n}||{\bs t}_{\bs\psi_M\bs\alpha}||||\bs\psi_M-\bs\psi_M^*||)$ from Assumptions \ref{assumption:BIC}(d) and because $\tilde \zeta_{ \bs\psi_M \bs\alpha n  }$ satisfies the bound analogous to (\ref{zeta}), it follows from (\ref{Ln_ineq}) that, uniformly in $ \bs\psi_M \in \mathcal{N}_{\varepsilon_n}$,
\begin{align}
-\eta & \leq \ell_n(\bs\psi_M,\bs\alpha)-\ell_n(\bs\psi_M^*,\bs\alpha) \nonumber\\
&\leq  \sqrt{n} \bs t^{\top}_{\bs\psi_M\bs\alpha} \nu_n (\bs s_{i}) - \frac{\tau}{2} (1+o_p(1)) n \bs t^{\top}_{\bs\psi_M\bs\alpha} \bs{\mathcal{I}} \bs t_{\bs\psi_M\bs\alpha} + O_p(\sqrt{n}||{\bs t}_{\bs\psi_M\bs\alpha}||||\bs\psi_M-\bs\psi_M^*||) \nonumber\\
&\quad +O_p(n||\bs t_{\bs\psi_M\bs\alpha}||^2||\bs\psi_M-\bs\psi_M^*||)+O_p(||\bs t_{\bs\psi_M\bs\alpha}||||\bs\psi_M-\bs\psi_M^*||) + O_p(||\bs\psi_M-\bs\psi_M^*||^2).\label{rk_lower2}
\end{align}
Let $\bs T_n:= \bs{\mathcal{I}}^{1/2}\sqrt{n} \bs t_{\bs\psi_M\bs\alpha}$. From (\ref{rk_lower2}), Assumptions \ref{assumption:BIC}(b) and (e), and the fact $||\bs\psi_M-\bs\psi_M^*|| \to 0$ if $\bs t_{\bs\psi_M\bs\alpha} \to 0$, we obtain
\[
-\eta \leq \ell_n(\bs\psi_M,\bs\alpha)-\ell_n(\bs\psi_M^*,\bs\alpha) \leq
||\bs T_n|| O_p(1)-\frac{\tau}{2}||T_n||^2+o_p(||T_n||^2)+o_p(||T_n||) +o_p(1),
\]
uniformly in $ \bs\psi_M \in \mathcal{N}_{\varepsilon_n}$. By  rearranging the inequality  and choosing a sufficiently large constant $C$, we have $ ||\bs T_n|| C - (\tau/2) ||\bs T_n||^2 + C  \geq 0$ with an arbitrarily high probability. Namely,  for any $\delta>0$, there exist $C$, $n_0<\infty$ such that
\begin{equation}
\mathbb{P}\left( \inf_{ \bs\psi_M \in \mathcal{N}_{\varepsilon_n}} | |\bs T_n|| C - (\tau/2) ||\bs T_n||^2 + C  \geq 0 \right) \geq 1-\delta,\ \text{ for all }\ n > n_0.
\end{equation}
Rearranging the terms inside $\mathbb{P}(\cdot)$ gives $  \sup_{ \bs\psi_M \in \mathcal{N}_{\varepsilon_n}} (||\bs T_n||-(C/\tau))^2 \leq 2B/\tau+(C/\tau)^2$. Taking its square root gives $\mathbb{P}(  \sup_{ \bs\psi_M \in \mathcal{N}_{\varepsilon_n}}||\bs T_n|| \leq C_1) \geq 1-\delta$ for a constant $C_1$, and part (a) follows. Part (b) follows from part (a) and Proposition \ref{Ln_thm1}.
\end{proof}

 \subsection{Lemmas}

\begin{lemma}\label{lemma:unbounded_likelihood}
For any $M<\infty$, $ \Pr\left( -\log n + \ell(\bs W_{i^*};\bar Y_{i^*},s_{i^*}^2) < M\right) =\exp(-C_T n^{1/T})$  as $n\rightarrow \infty$, where $C_T$ for $T=2,3,..$ are some positive constant that depends on $M$ and $T$.\end{lemma}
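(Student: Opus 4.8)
The plan is to reduce the stated probability to an event about the minimum of $n$ i.i.d.\ chi-square quantities, then apply the small-argument asymptotics of the chi-square distribution. First I would evaluate $\ell(\bs W_{i^*};\bar Y_{i^*},s_{i^*}^2)$ explicitly. Setting $\mu=\bar Y_{i^*}$ gives $\sum_{t=1}^T(Y_{i^*t}-\bar Y_{i^*})^2=(T-1)s_{i^*}^2$, and setting $\sigma^2=s_{i^*}^2$ collapses the quadratic term to $(T-1)/2$, so that
\[
\ell(\bs W_{i^*};\bar Y_{i^*},s_{i^*}^2)=-\frac{T}{2}\log s_{i^*}^2-\frac{T}{2}\log(2\pi)-\frac{T-1}{2}=:-\frac{T}{2}\log s_{i^*}^2+C,
\]
where $C=C(T)$ is a finite constant. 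Solving $-\log n-\frac{T}{2}\log s_{i^*}^2+C<M$ for $s_{i^*}^2$ shows that the event in the lemma is \emph{exactly} $\{s_{i^*}^2>Kn^{-2/T}\}$ with $K=\exp(-\tfrac{2}{T}(M-C))>0$ depending only on $M$ and $T$.

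Next I would exploit that $s_{i^*}^2=\min_{i\le n}s_i^2$ with the $s_i^2$ i.i.d.\ and $(T-1)s_i^2/\sigma^{*2}\sim\chi^2_{T-1}$, as recorded above the lemma. By independence,
\[
\Pr\!\left(s_{i^*}^2>Kn^{-2/T}\right)=\left[1-F\!\left(Kn^{-2/T}\right)\right]^n,
\]
where $F$ denotes the c.d.f.\ of $s_1^2$. Since the threshold tends to zero, I would use the chi-square density $f_{\chi^2_{T-1}}(x)\propto x^{(T-3)/2}e^{-x/2}$ near the origin to obtain the expansion $F(b)=c_T\,b^{(T-1)/2}(1+O(b))$ as $b\downarrow0$, with $c_T$ an explicit positive constant depending on $T$ and $\sigma^{*2}$. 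Substituting $b=Kn^{-2/T}$ yields $F(Kn^{-2/T})=c_2\,n^{-(T-1)/T}(1+o(1))$ for an explicit $c_2=c_2(M,T,\sigma^{*2})>0$.

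Finally I would take logarithms and use $\log(1-x)=-x+O(x^2)$:
\[
n\log\!\left[1-c_2 n^{-(T-1)/T}(1+o(1))\right]=-c_2 n^{1/T}(1+o(1)),
\]
after checking that the quadratic remainder is of order $n\cdot n^{-2(T-1)/T}=n^{(2-T)/T}$, which is a constant when $T=2$ and vanishes when $T\ge3$, hence negligible against the leading term $n^{1/T}$. This gives $\Pr(\cdots)=\exp(-C_T n^{1/T}(1+o(1)))$ with $C_T=c_2>0$, matching the asymptotic convention used in Proposition \ref{prop:unbounded_likelihood}.

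The one delicate point is the passage through the small-argument chi-square expansion while raising to the $n$-th power: I must control the relative error in $F(Kn^{-2/T})$ tightly enough that it does not perturb the exponent $n^{1/T}$ after multiplication by $n$. This is handled by noting that the $O(b)$ correction in $F(b)=c_T b^{(T-1)/2}(1+O(b))$ contributes only an $O(n^{1/T-2/T})=o(n^{1/T})$ term to the exponent. All remaining steps are elementary algebra together with the exact distributional identity $(T-1)s_i^2/\sigma^{*2}\sim\chi^2_{T-1}$.
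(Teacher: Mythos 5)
Your proposal is correct and follows essentially the same route as the paper's proof: both reduce the event to $\{s_{i^*}^2 > K n^{-2/T}\}$ for the minimum of $n$ i.i.d.\ scaled $\chi^2_{T-1}$ variables, write the probability as $[1-F(Kn^{-2/T})]^n$, and apply the small-argument expansion $F(b)\asymp b^{(T-1)/2}$ to extract the exponent $n\cdot n^{-(T-1)/T}=n^{1/T}$. The only cosmetic difference is that the paper packages the final step via $(1-F)^{1/F}\to e^{-1}$ while you expand $\log(1-x)$ directly and check the quadratic remainder; these are the same computation.
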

\begin{proof}[Proof of Lemma \ref{lemma:unbounded_likelihood}]
Because $\sum_{t=1}^T \frac{\left(Y_{it}-\mu \right)^2}{s_{i^*}^2}= T-1$ when $i=i^*$, we have
\begin{align}
-\log n+\ell(\bs W_{i^*};\bar Y_{i^*},s_{i^*}^2)
&= -\log n - \frac{T}{2} \log s_{i^*}^2 -\frac{T}{2}\log(2\pi) - \frac{T-1}{2} \nonumber\\
&=-\log\left( B n  \left(\chi_{i^*,T-1}^2 \right)^{T/2} \right)
% \label{bound}
\end{align}
with  $B:=\left(\frac{2\pi\sigma^{*2}}{T-1}\right)^{T/2}\exp(\frac{T-1}{2})>0$ and
\[
\chi_{i^*,T-1}^2 := \frac{(T-1)s_{i^*}^2}{\sigma^{*2}}.
\]

Therefore, to prove the stated result, it suffices to show that for any $\epsilon=\exp(-M)/B>0$, $\Pr\left(n \left(\chi_{i^*,T-1}^2\right)^{T/2} >\epsilon\right)=\Pr\left(\chi_{i^*,T-1}^2 >\left({\epsilon}/{n}\right)^{2/T}\right)\rightarrow 0$ as $n\rightarrow\infty$. Given the property of the first-order statistic, the distribution  function of $\chi_{i^*,T-1}^2$ is given by $1-[1-F_{{T-1}}(t)]^n$, where $F_{{k}}(t)$ is the cumulative distribution function for chi-squared variables with  $k$ degree of freedom. It follows that
\[
\Pr\left(\chi_{i^*,T-1}^2 >\left(\frac{\epsilon}{n}\right)^{2/T}  \right) =\left[1-F_{{T-1}}\left( \left(\frac{\epsilon}{n}\right)^{2/T}  \right)\right]^n.
% [1-F_{{T-1}}( (\epsilon/n)^{2/T} )]^n.
\]

For any finite $T\geq 2$,    write
\begin{equation}\label{eq-lemma1}
\Big[1-F_{{T-1}}\big( (\epsilon/n)^{2/T} \big) \Big]^n =  \Big\{ \Big[1-F_{{T-1}}\big( (\epsilon/n)^{2/T} \big) \Big]^{\frac{1}{  F_{{T-1}}( (\epsilon/n)^{2/ T} )} }\Big\}^{ n F_{{T-1}}( (\epsilon/n)^{2/T} )  }.
\end{equation}
Then, because  $(1 - F)^{\frac{1}{F}} \to \frac{1}{e}$ when $ F \to 0 $, the stated result follows from (\ref{eq-lemma1}) if we can show
\begin{equation}\label{eq-lemma1-2}
\frac{  F_{{T-1}}( (\epsilon x)^{2/T} )}{x} =O(x^{-1/T})\ \text{as}\ x\rightarrow 0
\end{equation}  for $x=1/n$, where (\ref{eq-lemma1})-(\ref{eq-lemma1-2}) imply that $\log \Big[1-F_{{T-1}}\big( (\epsilon/n)^{2/T} \big) \Big]^n\leq - C_T n^{1/T}$ for some positive constant $C_T$ when $n$ is sufficiently large.

For small \( t \), the CDF \( F_k(t) \) can be expanded as $F_k(t) = \frac{t^{k/2}}{2^{k/2} \Gamma(k/2 + 1)} + o(t^{k/2}) \text{ as } t \to 0$. Then, letting   $k=T-1$ and \( t =  (\epsilon x)^{2/T}=( \epsilon x)^{2/(k+1)} \), and simplifying the exponent $(x^{2/(k+1)})^{k/2} = x^{k/(k+1)}$,
 we have
\[F_{{T-1}}( (\epsilon x)^{2/T} )=  \frac{( \epsilon x)^{k/(k+1)} }{2^{k/2} \Gamma(k/2 + 1)} + o((\epsilon x)^{k/(k+1)}) \text{ as } x \to 0. \]
Therefore, given finite $\epsilon>0$, noting that $x^{k/(k+1) - 1} =   x^{-1/(k + 1)}$, we have
\[ \frac{F_{{T-1}}( (\epsilon x)^{2/T} )}{x} = \frac{(\epsilon x)^{-1/(k + 1)}}{2^{k/2} \Gamma(k/2 + 1)}  + o(x^{-1/(k + 1)})= O(x^{-1/(k + 1)}) \text{ as } x \to 0,\]
and the equation (\ref{eq-lemma1-2}) follows with $k+1=T$.

\end{proof}

\begin{lemma}\label{lemma:bic}
For any $M<\infty$, $ \Pr\left( -4\log n + \ell(\bs W_{i^*};\bar Y_{i^*},s_{i^*}^2) < M\right) \rightarrow 1$ as $n\rightarrow \infty$.

\end{lemma}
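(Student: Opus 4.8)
The plan is to recycle the explicit evaluation already performed in the proof of Lemma~\ref{lemma:unbounded_likelihood}. Since $\sum_{t=1}^T (Y_{i^*t}-\bar Y_{i^*})^2 = (T-1)s_{i^*}^2$ by the definition of $s_{i^*}^2$, the quadratic term in $\ell(\bs W_{i^*};\bar Y_{i^*},s_{i^*}^2)$ collapses to $(T-1)/2$, giving $\ell(\bs W_{i^*};\bar Y_{i^*},s_{i^*}^2) = -\tfrac{T}{2}\log s_{i^*}^2 - \tfrac{T}{2}\log(2\pi) - \tfrac{T-1}{2}$. Writing $\chi_{i^*,T-1}^2 := (T-1)s_{i^*}^2/\sigma^{*2}$, the first-order statistic of $n$ i.i.d.\ $\chi^2_{T-1}$ variables, and absorbing the constants into $B := (2\pi\sigma^{*2}/(T-1))^{T/2}\exp((T-1)/2)$ exactly as in Lemma~\ref{lemma:unbounded_likelihood}, I would obtain the identity $-4\log n + \ell(\bs W_{i^*};\bar Y_{i^*},s_{i^*}^2) = -4\log n - \tfrac{T}{2}\log\chi_{i^*,T-1}^2 - \log B$.

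First I would translate the target event into a statement about $\chi_{i^*,T-1}^2$. Solving the inequality shows that $\{-4\log n + \ell(\bs W_{i^*};\bar Y_{i^*},s_{i^*}^2) < M\}$ coincides with $\{\chi_{i^*,T-1}^2 > c\, n^{-8/T}\}$, where $c := e^{-2M/T}B^{-2/T}>0$ is a finite positive constant depending only on $M$, $T$, and $\sigma^{*2}$. Hence it suffices to show $\Pr(\chi_{i^*,T-1}^2 \le c\, n^{-8/T}) \to 0$. Using the distribution of the minimum, $\Pr(\chi_{i^*,T-1}^2 \le t) = 1 - [1-F_{T-1}(t)]^n$ where $F_{T-1}$ is the $\chi^2_{T-1}$ CDF, I reduce the claim to showing $[1-F_{T-1}(c\, n^{-8/T})]^n \to 1$.

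The key step is the same small-argument expansion of the chi-square CDF used in Lemma~\ref{lemma:unbounded_likelihood}, namely $F_{T-1}(t) = t^{(T-1)/2}/(2^{(T-1)/2}\Gamma((T-1)/2+1)) + o(t^{(T-1)/2})$ as $t\to 0$. Substituting $t = c\, n^{-8/T}$ yields $F_{T-1}(c\, n^{-8/T}) = O(n^{-4(T-1)/T})$, so that $n\,F_{T-1}(c\, n^{-8/T}) = O(n^{(4-3T)/T})$. Because $(4-3T)/T = 4/T - 3 < 0$ for every integer $T\ge 2$, one has $n\,F_{T-1}(c\, n^{-8/T}) \to 0$, and therefore $[1-F_{T-1}(c\, n^{-8/T})]^n = \exp(-n F_{T-1}(c\,n^{-8/T})(1+o(1))) \to 1$, which gives the stated result.

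The computation is entirely parallel to Lemma~\ref{lemma:unbounded_likelihood}, so I anticipate no serious obstacle; the only point requiring care is that the heavier penalty $-4\log n$ (versus $-\log n$ there) raises the chi-square threshold from order $n^{-2/T}$ to order $n^{-8/T}$, which flips the sign of the exponent in $n\,F_{T-1}(\cdot)$ from positive to negative and thereby reverses the conclusion from divergence to boundedness. I would simply verify that $4/T-3$ is negative for all integer $T\ge 2$ to close the argument.
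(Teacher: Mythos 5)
Your proposal is correct and follows essentially the same route as the paper: both reduce the claim to showing $[1-F_{T-1}(c\,n^{-8/T})]^{n}\to 1$ for the first-order statistic of $n$ i.i.d.\ $\chi^2_{T-1}$ variables and then verify that $n\,F_{T-1}(c\,n^{-8/T})\to 0$, which holds because the relevant exponent $4(T-1)/T-1=(3T-4)/T$ is positive for all $T\ge 2$. The only cosmetic difference is that you obtain the order $F_{T-1}(t)\asymp t^{(T-1)/2}$ from the small-argument series expansion of the chi-square CDF, whereas the paper's proof of this lemma gets the same rate via L'H\^opital's rule.
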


 \begin{proof}
Following the proof of Lemma \ref{lemma:unbounded_likelihood}, because $-4\log n+\ell(\bs W_{i^*};\bar Y_{i^*},s_{i^*}^2) =-\log\left( B n^4  \left(\chi_{i^*,T-1}^2 \right)^{T/2} \right)$, it suffices to show that, for any $\epsilon>0$, $\Pr\left(\chi_{i^*,T-1}^2 >\left({\epsilon}/{n^4}\right)^{2/T}\right)=\Big[1-F_{{T-1}}\big( (\epsilon/n^4)^{2/T} \big) \Big]^n\rightarrow 1$ as $n\rightarrow\infty$. For any finite $T\geq 2$,    write
\begin{equation}\label{eq-lemma-bic}
\Big[1-F_{{T-1}}\big( (\epsilon/n^4)^{2/T} \big) \Big]^n =  \Big\{ \Big[1-F_{{T-1}}\big( (\epsilon/n^4)^{2/T} \big) \Big]^{\frac{1}{  F_{{T-1}}( (\epsilon/n^4)^{2/ T} )} }\Big\}^{ n F_{{T-1}}( (\epsilon/n^4)^{2/T} )  }.
\end{equation}
Then, because  $(1 - F)^{\frac{1}{F}} \to \frac{1}{e}$ when $ F \to 0 $, the stated result follows from (\ref{eq-lemma-bic}) if we can show  $\frac{  F_{{T-1}}( (\epsilon x^4)^{2/T} )}{x} \to 0\text{ as }  x \to 0$ for $x=1/n$. By applying L'Hôpital's rule, we have
\[\begin{split}
\lim_{x \to 0} \frac{  F_{{T-1}}( (\epsilon x^4 )^{2/T} )}{x } = \lim_{x \to 0}  f_{T-1}( (\epsilon x^4 )^{2/T} ) \epsilon^{2/T} (8/T) x^{8/T-1}
= \lim_{x \to 0} \tilde C_{T,\epsilon} e^{- ((\epsilon x^4 )^{2/T} )/2} x^{ 3-\frac{4}{T}} =0,
\end{split}\]
where $\tilde C_{T,\epsilon} = \frac{\epsilon^{(T-1)/T}(8/T)}{2^{(T-1)/2} \Gamma((T-1)/2)} $ because $e^{- ((\epsilon x^4 )^{2/T} )/2}\to 1$  and $x^{ 3-\frac{4}{T}} \to 0$ as $x \to 0$ for any finite $T\geq 2$.
 \end{proof}

\begin{lemma}\label{lemma:expansion}
Suppose that Assumptions  \ref{assumption:K}, \ref{assumption:consistency}, and \ref{assumption:LRT1} hold. Then, under $H_0: M=1$, for $\alpha\in [c_1,1-c_1]$, (a) for any $\delta>0$, $\lim\sup_{n\rightarrow \infty} \Pr(\sup_{\bs\psi \in \Theta_{\bs\psi}: ||\bs\psi-\bs\psi^*||\leq \kappa} |R_n(\bs\psi,\alpha)| > \delta (1+||n\bs t(\bs\psi,\alpha)||^2)) \rightarrow 0$ as $\kappa\rightarrow 0$, (b) $ \bs S_n \overset{d}{\to} \bs S\sim N(0,\bs{\mathcal{I}})$, and (c) $\bs{\mathcal{I}}_n \overset{p}{\to} \bs{\mathcal{I}}$, where $\bs{\mathcal{I}}$ is finite and non-singular, and (d) $\bs t(\hat{\bs{\psi}}_\alpha,\alpha)=O_p(n^{-1/2})$ for any $\alpha\in [c_1,1-c_1]$.
\end{lemma}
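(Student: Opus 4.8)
The plan is to obtain the quadratic expansion (\ref{eq:LR0}) from a fourth-order Taylor expansion of the reparameterized density ratio and to read off $\bs s$, $\bs{\mathcal{I}}_n$, and the remainder $R_n$ from it; parts (b)--(d) then follow from this expansion together with standard limit theorems. The starting point is the computation of the $\bs\lambda$-derivatives of $g(\bs w;\bs\psi,\alpha) = \alpha f(\bs w;\bs\nu+(1-\alpha)\bs\lambda) + (1-\alpha) f(\bs w;\bs\nu-\alpha\bs\lambda)$ at $\bs\lambda = \bs 0$. Expanding each component in $\bs\lambda$ and combining the mixing weights shows that the first-order term cancels, $\nabla_{\bs\lambda} g|_{\bs\lambda=\bs 0} = \bs 0$, the second-order term equals $\tfrac12\alpha(1-\alpha)\bs\lambda\t\nabla_{\bs\theta\bs\theta\t}f(\bs w;\bs\nu)\bs\lambda$, the third-order term is proportional to $\alpha(1-\alpha)(1-2\alpha)\,\nabla^3 f\cdot\bs\lambda^{\otimes 3}$, and so forth. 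Writing the density ratio $l(\bs w;\bs\psi,\alpha) = g/f(\bs w;\bs\theta^*)$, this yields $l - 1 = \bs t(\bs\psi,\alpha)\t\bs s(\bs w) + r_{\bs\psi\alpha}(\bs w)$, where $\bs t$ is exactly the reparameterization (\ref{eq:t_1}) with $\bs t_{\bs\lambda} = \alpha(1-\alpha)v(\bs\lambda)$ and $\bs s$ is the score (\ref{eq:s_1}); the factor $\alpha(1-\alpha)$ and the $\widetilde{\text{vech}}_2$ normalization are precisely what match the quadratic-in-$\bs\lambda$ term to $\bs t_{\bs\lambda}\t\bs s_{\bs\lambda\bs\lambda}$. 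Substituting into $2\sum_i\log(1+(l_i-1))$ and using $2\log(1+x) = 2x - x^2(1+o(1))$ produces (\ref{eq:LR0}) with $\bs{\mathcal{I}}_n = n^{-1}\sum_i \bs s(\bs W_i)\bs s(\bs W_i)\t$ as the negative sample Hessian.

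For part (a), the remainder $r_{\bs\psi\alpha}$ collects (i) the quadratic-and-higher terms in $\bs\nu-\bs\nu^*$ from $(f(\bs w;\bs\nu)-f^*)/f^*$, (ii) the third- and fourth-order terms in $\bs\lambda$, and (iii) the cross terms from expanding $\nabla_{\bs\theta\bs\theta\t}f(\bs w;\bs\nu)$ around $\bs\nu^*$. The crucial observation is that every term in $r_{\bs\psi\alpha}$ carries at least one extra factor of $\|\bs\psi-\bs\psi^*\|$ relative to $\bs t$: for instance the cubic term is $O(\|\bs\lambda\|^3)=O(\|\bs\lambda\|\,\|\bs t_{\bs\lambda}\|)$ since $\|\bs t_{\bs\lambda}\|\asymp\|\bs\lambda\|^2$, hence it is dominated by $\|\bs t\|\,\|\bs\psi-\bs\psi^*\|$ times an integrable envelope. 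Translating this control from the density ratio to $R_n$ and summing yields the stated bound, provided the fourth derivatives of $f$ (and the fifth-order derivatives entering the Taylor remainder) have integrable envelopes uniformly over a neighborhood of $\bs\theta^*$ and over $\alpha\in[c_1,1-c_1]$. For the normal density (\ref{eq:f1}) these derivatives are products of Hermite polynomials in the standardized residual and monomials in $\bs X$ of degree consistent with fifth-order differentiation, and Assumption \ref{assumption:LRT1}(a) (finite ninth moments of $\bs X$) is exactly what makes the needed envelopes integrable. I expect this uniform domination of the remainder to be the main obstacle, since one must simultaneously track the $\alpha(1-\alpha)$-weighting, the singular direction $\bs\lambda$, and the covariate-induced polynomial growth, uniformly in $\alpha$.

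Parts (b) and (c) are comparatively routine. Each entry of $\bs s(\bs W)$ is a normalized derivative of the density, so $\E_{f^*}[\bs s] = \int \nabla f^*\,d\bs w = \bs 0$ by differentiating $\int f^*\,d\bs w = 1$ under the integral sign; the multivariate CLT then gives $\bs S_n\overset{d}{\to}N(\bs 0,\bs{\mathcal{I}})$ once $\E\|\bs s\|^2<\infty$, which again follows from Assumption \ref{assumption:LRT1}(a). The convergence $\bs{\mathcal{I}}_n\overset{p}{\to}\bs{\mathcal{I}}$ is the law of large numbers applied to $\bs s(\bs W_i)\bs s(\bs W_i)\t$, combined with the uniform negligibility of the higher-order Hessian corrections established in part (a). For non-singularity of $\bs{\mathcal{I}}$ I would use the Hermite-polynomial representation of $\bs s$ (Appendix \ref{sec:appendix_score_1}): writing each entry as a linear combination of products of Hermite polynomials in the residual and the components of $\bs U=[1,\bs X\t]\t$, the orthogonality of Hermite polynomials under the normal law reduces a null combination $\bs a\t\bs s = 0$ a.s. to a system of polynomial identities in $\bs U$, which forces $\bs a = \bs 0$ by Assumption \ref{assumption:LRT1}(b). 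This verifies Assumption \ref{assumption:LRT1}(c) for the baseline model.

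Finally, for part (d) I would localize using consistency of $\hat{\bs\psi}_\alpha$ (the reparameterized form of Proposition \ref{prop:consistency}), so that $\|\hat{\bs\psi}_\alpha-\bs\psi^*\|\le\kappa$ with probability tending to one. On this event the expansion (\ref{eq:LR0}) applies, and since $\hat{\bs\psi}_\alpha$ maximizes $L_n(\cdot,\alpha)$ we have $2\{L_n(\hat{\bs\psi}_\alpha,\alpha)-L_n(\bs\psi^*,\alpha)\}\ge 0$. Combining this inequality with $\bs S_n = O_p(1)$ from (b), $\bs{\mathcal{I}}_n\overset{p}{\to}\bs{\mathcal{I}}\succ 0$ from (c), and the remainder bound from (a) gives $C\|\sqrt n\,\bs t(\hat{\bs\psi}_\alpha,\alpha)\| - c\|\sqrt n\,\bs t(\hat{\bs\psi}_\alpha,\alpha)\|^2 + C\ge 0$ with arbitrarily high probability for suitable constants $c,C>0$; solving this quadratic inequality bounds $\|\sqrt n\,\bs t(\hat{\bs\psi}_\alpha,\alpha)\| = O_p(1)$, i.e. $\bs t(\hat{\bs\psi}_\alpha,\alpha)=O_p(n^{-1/2})$, exactly as in the argument for Proposition \ref{Ln_thm2}(a).
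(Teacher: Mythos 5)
Your proposal is correct and establishes all four parts, but the route to the quadratic expansion is genuinely different from the paper's, and the comparison is instructive. You build the expansion from the density ratio $l(\bs w;\bs\psi,\alpha)-1=\bs t\t\bs s+r$ combined with $2\log(1+x)=2x-x^2(1+o(1))$; this is the DQM-style argument the paper deploys only later, in Propositions \ref{Ln_thm1}--\ref{Ln_thm2} of Appendix \ref{app:bic}. Under that route the quadratic coefficient is automatically the outer product $n^{-1}\sum_i\bs s(\bs W_i)\bs s(\bs W_i)\t$, so part (c) reduces to a law of large numbers plus the Hermite non-singularity argument. The paper's own proof of Lemma \ref{lemma:expansion} instead Taylor-expands the log-likelihood $L_n(\bs\psi,\alpha)$ directly to fourth order, so its $\bs{\mathcal{I}}_n$ is a genuinely Hessian-based object built from $\nabla_{\bs\eta\bs\eta\t}L_n^*$, $\nabla_{\bs\eta\lambda_i\lambda_j}L_n^*$ and $\nabla_{\lambda_i\lambda_j\lambda_k\lambda_\ell}L_n^*$; the convergence $\bs{\mathcal{I}}_n\overset{p}{\to}\E[\bs s\bs s\t]$ then rests on the Bartlett-type identities of Lemma \ref{lemma:KS2018_lemma7}(d)--(e), in particular the cancellation of the factor $8/4!=1/3$ against the three symmetrized products of second derivatives. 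Your parenthetical identification of ``$\bs{\mathcal{I}}_n=n^{-1}\sum_i\bs s(\bs W_i)\bs s(\bs W_i)\t$ as the negative sample Hessian'' silently elides that step --- the two finite-sample objects differ and coincide only in the limit --- and your remark in part (c) about ``higher-order Hessian corrections'' is vestigial under your own construction, where there are none. What each approach buys: yours avoids the derivative bookkeeping of (\ref{nabla_1})--(\ref{nabla_2}) and the Bartlett identities entirely, at the price of needing $\max_{1\le i\le n}\sup_{\bs\psi}|l_i-1|=o_p(1)$ (via Lemma \ref{max_bound} and the moment conditions) to justify the logarithm expansion uniformly, which you leave implicit. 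Everything else --- the combinatorial coefficients $\alpha(1-\alpha)^k+(1-\alpha)(-\alpha)^k$ with the first-order cancellation, the remainder bound of order $\|\bs t\|\,\|\bs\psi-\bs\psi^*\|$ driving part (a), the mean-zero-score CLT for part (b), the Hermite-polynomial linear-independence argument using Assumptions \ref{assumption:LRT1}(a)--(b) for non-singularity, and the Andrews-style quadratic inequality $\|\bs T_n\|C-c\|\bs T_n\|^2+C\ge 0$ for part (d) --- matches the paper's argument.
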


\begin{proof}[Proof of Lemma \ref{lemma:expansion}]
The proof follows that of Proposition 2 in \cite{Kasahara2012}.
For a vector $\bs{x}$ and a function $f(\bs{x})$, let $\nabla_{\bs{x}^k}f(\bs{x})$ denote its $k$-th derivative with respect to $\bs{x}$, which can be a multidimensional array. Observe that for any finite $k$ and for a neighborhood $\mathcal{N}$ of $\bs{\psi}^*$, we obtain
\begin{equation}
\begin{aligned}
&E|| \nabla_{\bs{\psi}^k} g(\bs{W}_i;\bs{\psi}^*,\alpha)/ g(\bs{W}_i;\bs{\psi}^*,\alpha)||^2<\infty, \\
&E||\sup_{\bs{\psi}\in\Theta_{\bs{\psi}} \cap \mathcal{N}}\nabla_{\bs{\psi}^k} \log g(\bs{W}_i;\bs{\psi},\alpha)||^2<\infty
\end{aligned} \label{nabla_0}
\end{equation}
because each element of $\nabla_{\bs{\psi}^k} \log g(y|\bs{x},\bs{z};\bs{\psi},\alpha)$ is written as a sum of products of Hermite polynomials.
 Note also that  the following holds:
\begin{align}
& \nabla_{\eta \lambda_j }L_n(\bs\psi^*,\alpha)=0, \quad \nabla_{\lambda_i \lambda_j\lambda_k } L_n(\bs\psi^*,\alpha)=O_p(n^{1/2}), \label{nabla_1} \\
& \nabla_{\eta\eta \lambda_i }L_n(\bs\psi^*,\alpha) =O_p(n), \quad \nabla_{\eta\eta\eta}L_n(\bs\psi^*,\alpha)=O_p(n), \label{nabla_2}
\end{align}
where equation (\ref{nabla_1}) follows from Proposition \ref{lemma:KS2018_lemma7}{(a) and (c)} and (\ref{nabla_0}) and equation (\ref{nabla_2}) is a simple consequence of (\ref{nabla_0}).
Furthermore, for a neighborhood $\mathcal{N}$ of $\bs\psi^*$,
\begin{align}
& \sup_{\bs\psi\in\Theta_{\bs\psi} \cap \mathcal{N}} \left| n^{-1}\nabla^{(4)} L_n(\bs\psi ,\alpha)- E\nabla^{(4)}  \log g(\bs W_i;\bs\psi ,\alpha) \right| = o_p(1), \label{nabla_3}\\
& E\nabla^{(4)} g(\bs W_i;\bs\psi,\alpha) \text{ is continuous in }\psi  {\in \Theta_{\bs\psi} \cap \mathcal{N}}. \label{nabla_4}
\end{align}
  Equations (\ref{nabla_3}) and (\ref{nabla_4})
follow  from Lemma 2.4 of \cite{Newey1994} and the fact that $\nabla_{\bs{\psi}^k} \log g(\bs{w};\bs{\psi},\alpha)$ is written as a sum of products of Hermite polynomials.

Taking a fourth-order Taylor expansion of $L_n(\bs{\psi},\alpha)$ around $\bs{\psi}^*$ and using (\ref{nabla_0}) and (\ref{nabla_1}), we can write $L_n(\bs{\psi},{\alpha}) -L_n(\bs{\psi}^*,\alpha)$ as the sum of the relevant terms and the remainder term as follows:
\begin{align}
\lefteqn{ L_n(\bs{\psi},{\alpha}) -L_n(\bs{\psi}^*,\alpha) = } \nonumber\\
& \quad \nabla_{\bs{\eta}}L_n^* (\bs{\eta} - \bs{\eta}^*) + \frac{1}{2!} (\bs{\eta} - \bs{\eta}^*)^{\top}\nabla_{\bs{\eta} \bs{\eta}^{\top}}L_n^*(\bs{\eta} - \bs{\eta}^*) +\frac{1}{2!}  \sum_{i=1}^{q}\sum_{j=1}^{q} \nabla_{\lambda_i\lambda_j }L_n^*\lambda_i\lambda_j \label{LR0-score0} \\
&\quad + \frac{3}{3!} \sum_{i=1}^{q} \sum_{j=1}^{q} (\bs{\eta} - \bs{\eta}^*)\t\nabla_{\bs\eta \lambda_i\lambda_j} L_n^* \lambda_i\lambda_j    \label{LR0-hessian0} \\
 &\quad + \frac{1}{4!}  \sum_{i=1}^{q}\sum_{j=1}^{q}\sum_{k=1}^{q} \sum_{\ell=1}^{q} \nabla_{\lambda_i\lambda_j\lambda_k\lambda_\ell}L_n^*\lambda_i\lambda_j\lambda_k\lambda_\ell + R_{n}(\bs{\psi},\alpha), \label{LR0-hessian1}   \end{align}
where $\nabla L_n^*$ denotes the derivative of $L_n(\bs{\psi},\alpha)$ evaluated at $(\bs{\psi}^*,\alpha)$. In view of (\ref{nabla_1}) and (\ref{nabla_2}), the remainder term is written as
\begin{align}
\lefteqn{ R_n(\bs\psi,\alpha) =  O_p(n^{1/2}) \sum_{i=1}^{q}\sum_{j=1}^{q}\sum_{k=1}^{q} \lambda_i\lambda_j \lambda_k + O_p(n) \left(  \sum_{i=1}^{q} ||\bs{\eta}-\bs{\eta}^*||^2 \lambda_i +  ||\bs{\eta}-\bs{\eta}^*||^3 \right)} \label{Rn_2} \\
& +O_p(n)\sum_{i=1}^{q}\sum_{j=1}^{q}\sum_{k=1}^{q}\left( ||\bs{\eta}-\bs{\eta}^*||^4 + ||\bs{\eta}-\bs{\eta}^*||^3 |\lambda_i|+ ||\bs{\eta}-\bs{\eta}^*||^2 |\lambda_i \lambda_j| + ||\bs{\eta}-\bs{\eta}^*|| |\lambda_i \lambda_j \lambda_k| \right) \qquad \label{Rn_3} \\
& + \frac{1}{4!}  \sum_{i=1}^{q}\sum_{j=1}^{q}\sum_{k=1}^{q}\sum_{\ell=1}^{q}  \{\nabla_{\lambda_i \lambda_j \lambda_k \lambda_\ell} L_n(\bs{\psi}^\dag,\alpha) - \nabla_{\lambda_i \lambda_j \lambda_k \lambda_\ell} L_n(\bs{\psi}^*,\alpha) \} \lambda_i \lambda_j \lambda_k \lambda_\ell
\label{Rn_4}
\end{align}
with $\bs{\psi}^\dag$ being between $\bs{\psi}$ and $\bs{\psi}^*$. Because $||\sqrt{n}t(\bs{\psi},\alpha)||^2 = n||\bs{\eta} - \bs{\eta}^*||^2 + n \sum_{i=1}^{q}\sum_{j=1}^i \alpha^2(1-\alpha)^2 |\lambda_i\lambda_j|^2$, the right-hand side of (\ref{Rn_2}) and the terms in (\ref{Rn_3}) are bounded by $O_p(1)(||\sqrt{n}t(\bs{\psi},\alpha)||+||\sqrt{n}t(\bs{\psi},\alpha)||^2)(||\bs{\eta}-\bs\eta^*||+||\lambda||)$. In view of (\ref{nabla_3}) and (\ref{nabla_4}), (\ref{Rn_4}) is bounded by  $||\sqrt{n}t(\bs{\psi},\alpha)||^2[d(\bs{\psi}^\dagger)+o_p(1)]$ with $d(\bs{\psi}^\dagger)\rightarrow 0$ as $\bs{\psi}^\dagger \rightarrow \bs{\psi}^*$, where a function $d(\bs{\psi}^\dagger)$ corresponds to $n^{-1}\E[\nabla_{\lambda_i \lambda_j \lambda_k \lambda_\ell}L_n(\bs{\psi}^\dagger,\alpha) - \nabla_{\lambda_i \lambda_j \lambda_k \lambda_\ell}L_n(\bs{\psi}^*,\alpha)]$. Therefore, $R_n(\bs{\psi},\alpha)=(1+||\sqrt{n}t(\bs{\psi},\alpha)||)^2[d(\bs{\psi}^\dagger)+o_p(1)  + O_p(||\bs{\psi}-\bs{\psi}^*||)]$, and part (a) follows.

Part (b) follows from  Lemma \ref{lemma:KS2018_lemma7}(c) and (d), the Lindeberg--Levy central limit theorem, and the finiteness of $\bs{\mathcal{I}}$ in part (c).

For part (c), we first provide the formula of $\bs{\mathcal{I}}_n$. Partition $\bs{\mathcal{I}}_n$ as
\[
\bs{\mathcal{I}}_n = \left( \begin{array}{cc}
\bs{\mathcal{I}}_{\bs\eta n} & \bs{\mathcal{I}}_{\bs\eta\bs\lambda n}\\
\bs{\mathcal{I}}_{\bs\eta\bs\lambda n}\t & \bs{\mathcal{I}}_{\bs\lambda n}
\end{array}\right), \quad \bs{\mathcal{I}}_{\bs\eta n}: q\times q, \quad \bs{\mathcal{I}}_{\bs\eta\bs\lambda n}: q\times q_\lambda, \quad \bs{\mathcal{I}}_{\bs\lambda n}: q_\lambda \times q_\lambda,
\]
where $q_\lambda=q(q+1)/2$.
$\bs{\mathcal{I}}_{\bs\eta n}$ is given by $\bs{\mathcal{I}}_{\bs\eta n} = - n^{-1}\nabla_{\bs\eta\bs\eta\t}L_n(\psi^*,\alpha)$. For $\bs{\mathcal{I}}_{\bs\eta\bs\lambda n}$, let $A_{ij} = n^{-1}\nabla_{\bs\eta \lambda_i \lambda_j} L_n(\psi^*,\alpha)$  and write the  term in (\ref{LR0-hessian0})   as   $(n/2) \sum_{i=1}^{q} \sum_{j=1}^{q} (\bs\eta -\bs\eta^*)\t A_{ij} \lambda_i \lambda_j= n \sum_{i=1}^{q} \sum_{j=1}^i (\bs\eta -\bs\eta^*)\t A_{ij} c_{ij}  \lambda_i \lambda_j$, where $c_{ij}=1/2$ if $i=j$, and $c_{ij}=1$ if $i\neq j$. Then, by defining  $\bs{\mathcal{I}}_{\bs\eta\bs\lambda n} = -(A_{11}, \ldots,  A_{qq},  A_{12}, \ldots,  A_{q-1,q}) /\alpha(1-\alpha)$, the  term in (\ref{LR0-hessian0}) equals $- n(\bs\eta -\bs\eta^*)\t \bs{\mathcal{I}}_{\bs\eta\bs\lambda n}[\alpha(1-\alpha)v(\bs\lambda)]$. For $\bs{\mathcal{I}}_{\bs\lambda n}$, define $B_{ijk\ell} = n^{-1}(8/4!)\nabla_{\lambda_i \lambda_j\lambda_k \lambda_\ell} L_n(\bs\psi^*,\alpha)$ so that the first term in (\ref{LR0-hessian1}) is written as $(n/8)\sum_{i=1}^{q} \sum_{j=1}^{q} \sum_{k=1}^{q} \sum_{\ell=1}^{q} B_{ijk\ell} \lambda_i \lambda_j \lambda_k \lambda_\ell= (n/2) \sum_{i=1}^{q} \sum_{j=1}^i \sum_{k=1}^{q} \sum_{\ell=1}^k c_{ij}c_{k\ell}B_{ijk\ell}\lambda_i \lambda_j \lambda_k \lambda_\ell$. Define $\bs{\mathcal{I}}_{\bs\lambda n}$ such that the $(ij,k\ell)$ element of $\bs{\mathcal{I}}_{\bs\lambda n}$ is $-c_{ij}c_{k\ell} B_{ijk\ell}/\alpha^2(1-\alpha)^2$, where the values of $ij$ run over $\{(1,1),\ldots,(q,q),(1,2),\ldots,(q-1,q)\}$. Then,  the first term in (\ref{LR0-hessian1}) equals $-(n/2) [\alpha(1-\alpha)v(\bs\lambda)]'\bs{\mathcal{I}}_{\bs\lambda n}[\alpha(1-\alpha)v(\bs\lambda)]$. With this definition of $\bs{\mathcal{I}}_n$, the expansion (\ref{LR0-score0})-(\ref{LR0-hessian1}) is written as (\ref{eq:LR0}) in terms of $\sqrt{n}t(\psi,\alpha)$.

We now show that $\bs{\mathcal{I}}_n \rightarrow_p \bs{\mathcal{I}}$. $\bs{\mathcal{I}}_{\bs\eta n} \rightarrow_p \bs{\mathcal{I}}_{\eta}$ holds trivially. For $\bs{\mathcal{I}}_{\bs\eta\bs\lambda n}$, it follows from Lemma \ref{lemma:KS2018_lemma7}(c) and the law of large numbers that $A_{ij} \rightarrow_p -\E[\nabla_{\bs\eta} l(\bs W;\psi^*,\alpha) \nabla_{\lambda_i\lambda_j}l(\bs W;\psi^*,\alpha)]$, giving $\bs{\mathcal{I}}_{\bs\eta\bs\lambda n} \rightarrow_p E\left[\bs s_{\bs\eta} \bs s_{\bs\lambda\bs\lambda}\t/\alpha(1-\alpha)\right]=\bs{\mathcal{I}}_{\bs\eta \bs \lambda}$. For $\bs{\mathcal{I}}_{\bs\lambda n}$, Lemma \ref{lemma:KS2018_lemma7}(d) and the law of large numbers imply that  $\sum_{i=1}^{q} \sum_{j=1}^{q} \sum_{k=1}^{q} \sum_{\ell=1}^{q} B_{ijk\ell} \lambda_i \lambda_j \lambda_k \lambda_\ell \rightarrow_p \\ - \sum_{i=1}^{q} \sum_{j=1}^{q} \sum_{k=1}^{q} \sum_{\ell=1}^{q} E[\nabla_{\lambda_i \lambda_j}l(\bs W;\psi^*,\alpha)\nabla_{\lambda_k \lambda_\ell}l(\bs W;\psi^*,\alpha)]\lambda_i \lambda_j \lambda_k \lambda_\ell$, where the factor $(8/4!)=1/3$ in $B_{ijk\ell}$ and the three derivatives on the right-hand side of Lemma \ref{lemma:KS2018_lemma7}(d) cancel each other out. Therefore, we have  $\bs{\mathcal{I}}_{\bs\lambda n} \rightarrow_p E\left[\bs s_{\bs\lambda\bs\lambda}\bs s_{\bs\lambda\bs\lambda}\t  /\alpha^2(1-\alpha)^2\right]=\bs{\mathcal{I}}_{\bs\lambda}$, and $\bs{\mathcal{I}}_n \rightarrow_p \bs{\mathcal{I}}$ follows.

We complete the proof of part (c) by showing that $\bs{\mathcal{I}} = E[\bs{s}(\bs{W}) \bs{s}(\bs{W})\t]$ is finite and non-singular.
Note that $\bs s(\bs W) $ can be expressed in Hermite polynomials as in (\ref{eq:s_1_hermite}). Then, the finiteness of $\bs{\mathcal{I}}$ follows from  Assumption \ref{assumption:LRT1}(a) and the definition of Hermite polynomials.

To show that $\bs{\mathcal{I}}$ is positive definite, it suffices to show that there exists no multicollinearity in $\bs s(\bs w)$. Suppose, to the contrary, that $\bs {s}(\bs w)$ is multicollinear and that there exists a non-zero vector $\bs a$ that solves the  equation $\bs a\t \bs s(\bs w) = 0$ for all values of $\bs w $. Partition $ s(\bs w)$ as $\bs s(\bs w)=(\bs s_{(\mu)} \t, \bs s_{(\beta)}\t )\t$  with $\bs s_{(\mu)} = (s_{\mu},s_{\sigma},s_{\lambda_{\mu \mu }} ,s_{\lambda_{\mu \sigma }} ,s_{\lambda_{\sigma \sigma }})\t$  and $\bs s_{(\beta)} = (\bs s_{\bs\beta }\t, \bs s_{\lambda_{\mu\bs \beta }}\t,\bs s_{\lambda_{\sigma\bs \beta }}\t,\bs s_{\lambda_{\bs \beta \bs \beta }}\t)\t$,
where $\bs s(\bs w) $ is defined in (\ref{eq:s_1}) and  (\ref{eq:s_1_hermite}).
Similarly, partition $\bs a$ as $\bs a = (\bs a_{(\mu)}\t, \bs a_{(\beta)}\t )\t$ so that
\begin{equation}\label{eq:as}
\bs a\t \bs s(\bs w) = \bs a_{(\mu)}\t \bs s_{(\mu)} + \bs a_{(\beta)}\t \bs s_{(\beta)}.
\end{equation}
By Assumption \ref{assumption:LRT1}(b) and the property of Hermite polynomials, if $\bs a\t \bs s(\bs w) = \bs 0$ for all $\bs w$, then $\bs a_{(\beta)} = \bs 0$.

Then, in view of (\ref{eq:as}), the stated result follows if we can show that $\bs a_{(\mu)}\t  \bs s_{(\mu)} = \bs 0$ for all $\bs w$ implies $\bs a_{(\mu)}=0$.
Suppose that
\begin{align*}
	\bs a_{(\mu)}\t  \bs s_{(\mu)} & = a_{\mu } \sum_{t=1}^T  H^{1*}_{t} +  (a_{\sigma } + a_{\lambda_{\mu \mu }}) \sum_{t=1}^T  H^{2*}_{t} +    \frac{a_{\lambda_{\mu \mu }}}{2} \sum_{t=1}^T \sum_{s \neq t} H^{1*}_{t} H^{1*}_{ is}  \\
	& + a_{\lambda_{\mu \sigma }} \sum_{t=1}^T H^{3*}_{t} + a_{\lambda_{\mu \sigma }} \sum_{t=1}^T \sum_{s \neq t} H^{1*}_{t} H^{2*}_{ is} +
	3  a_{\lambda_{\sigma \sigma }}\sum_{t=1}^T H^{4*}_{t} +  \frac{a_{\lambda_{\sigma \sigma }}}{2} \sum_{t=1}^T \sum_{s \neq t} H^{2*}_{t} H^{2*}_{ is}  = 0
\end{align*}
for all $\bs w$,
where $H^{j*}_{t}$ for $j=1,2,3$ is defined in (\ref{eq:Hermite_polynomial})  in Appendix \ref{sec:appendix_score_1}.

Because the above equation holds for all values of $\bs w$, with the property of the Hermite polynomials, we have $a_{\mu } = 0, (a_{\sigma } + a_{\lambda_{\mu \mu }}) = 0, a_{\lambda_{\mu \mu }} = 0, a_{\lambda_{\mu \sigma }} = 0, a_{\lambda_{\sigma \sigma }} = 0$. This implies that $\bs a_{(\mu)}=0$.
Therefore,  no multicollinearity exists in $\bs s(\bs w)$ and $\bs{\mathcal{I}}$ is non-singular, proving part (c). %$\square$

The proof of part (d) closely follows the proof of Theorem 1 of \cite{Andrews1999}. Let $\boldsymbol{T}_{n}: = \boldsymbol{\mathcal{I}}_{n}^{1/2}\sqrt{n}\bs{t}(\hat{\bs\psi}_\alpha,\alpha)$. Then, in view of (\ref{eq:LR0}), we have
\begin{align*}
o_p(1)&\leq L_n(\hat{\boldsymbol{\psi}}_\alpha,\alpha) - L_n(\boldsymbol{\psi}^*,\alpha)\\
 &= \boldsymbol{T}_{n}' \boldsymbol{\mathcal{I}}_{n}^{-1/2} \boldsymbol{S}_{n} - \frac{1}{2} ||\boldsymbol{T}_{n}||^2 + R_n(\hat{\boldsymbol{\psi}}_\alpha,\alpha)\\
 &= O_p(||\boldsymbol{T}_{n}||) - \frac{1}{2} ||\boldsymbol{T}_{n}||^2 + (1 + || \boldsymbol{\mathcal{I}}_{n}^{-1/2} \boldsymbol{T}_{n}||)^2 o_p(1)\\
 &= ||\boldsymbol{T}_{n}||O_p(1) - \frac{1}{2} ||\boldsymbol{T}_{n}||^2 + o_p(||\boldsymbol{T}_{n}||) + o_p(||\boldsymbol{T}_{n}||^2) + o_p(1),
\end{align*}
where the third equality holds because $\boldsymbol{\mathcal{I}}_{n}^{-1/2} \boldsymbol{S}_{n} = O_p(1)$ and $R_n(\hat{\boldsymbol{\psi}}_\alpha,\alpha) = o_p((1 + ||\boldsymbol{\mathcal{I}}_{n}^{-1/2}\boldsymbol{T}_{n}||)^2)$ from Proposition \ref {lemma:expansion}. Rearranging this equation yields $||\boldsymbol{T}_n||^2 \leq 2 ||\boldsymbol{T}_n|| O_p(1)+o_p(1)$. Denote the $O_p(1)$ term by $\varsigma_{n}$. Then, $(||\boldsymbol{T}_{n}||-\varsigma_{n})^2\leq \varsigma_{n}^2 + o_p(1) = O_p(1)$; taking its square root gives $||\boldsymbol{T}_{n}|| \leq O_p(1)$. In conjunction with $\boldsymbol{\mathcal{I}}_{n} \rightarrow_p \boldsymbol{\mathcal{I}}$, we obtain $\sqrt{n}\bs{t}(\hat{\bs\psi}_\alpha,\alpha) = O_p(1)$, and part (d) follows.

\end{proof}

\begin{lemma}\label{lemma: sht}
Suppose that the assumptions in Proposition \ref{prop:sht} hold. If  $-n^{-1}\log q_n=o(1)$, then  $n^{-1} c^M_{1-q_n}=o(1)$.

\end{lemma}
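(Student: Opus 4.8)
The plan is to exploit the fact that, for each fixed $m$, the limiting statistic $Z^m := \max_{1\le h\le m}(\widehat{\bs t}^h_{\bs\lambda})\t \bs{\mathcal{I}}^h_{\bs\lambda,\bs\eta}\widehat{\bs t}^h_{\bs\lambda}$ has a distribution that does \emph{not} depend on $n$; the quantile $c^M_{1-q_n}$ depends on $n$ only through the level $q_n$. Consequently it suffices to (i) obtain an exponential upper tail bound of the form $\Pr(Z^M>x)\le C e^{-\kappa x}$ for constants $C,\kappa>0$ depending only on $M$ and the fixed matrices $\{\bs{\mathcal{I}}^h_{\bs\lambda,\bs\eta}\}$, then (ii) invert this bound at the level $q_n$ to control $c^M_{1-q_n}$, and finally (iii) divide by $n$ and invoke the hypothesis $-n^{-1}\log q_n=o(1)$. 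The whole argument is driven by the tail of $Z^M$, so the work concentrates on step (i).

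For step (i), I would first bound each summand pathwise by a chi-square. Recall from (\ref{eq:t_lambda_def})--(\ref{eq:t_h}) that $\widehat{\bs t}^h_{\bs\lambda}$ is the $\bs{\mathcal{I}}^h_{\bs\lambda,\bs\eta}$-projection of $\bs G^h_{\bs\lambda,\bs\eta}\sim N(\bs 0,(\bs{\mathcal{I}}^h_{\bs\lambda,\bs\eta})^{-1})$ onto the cone $\Lambda_{\bs\lambda}$. Whitening via $\tilde{\bs G}^h:=(\bs{\mathcal{I}}^h_{\bs\lambda,\bs\eta})^{1/2}\bs G^h_{\bs\lambda,\bs\eta}\sim N(\bs 0,\bs I_{d_h})$ with $d_h:=\dim(v(\bs\lambda))$, the statistic becomes $(\widehat{\bs t}^h_{\bs\lambda})\t\bs{\mathcal{I}}^h_{\bs\lambda,\bs\eta}\widehat{\bs t}^h_{\bs\lambda}=\|\widehat{\tilde{\bs t}}^h\|^2$, where $\widehat{\tilde{\bs t}}^h$ minimizes $\|\tilde{\bs t}-\tilde{\bs G}^h\|^2$ over the transformed cone. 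Since $\bs 0\in\Lambda_{\bs\lambda}$ (take $\bs\lambda=\bs 0$), the minimizing inequality $\|\widehat{\tilde{\bs t}}^h-\tilde{\bs G}^h\|^2\le\|\tilde{\bs G}^h\|^2$ together with Cauchy--Schwarz yields $\|\widehat{\tilde{\bs t}}^h\|^2\le 4\|\tilde{\bs G}^h\|^2$, so that $(\widehat{\bs t}^h_{\bs\lambda})\t\bs{\mathcal{I}}^h_{\bs\lambda,\bs\eta}\widehat{\bs t}^h_{\bs\lambda}\le 4\,\chi^2_{d_h}$ pathwise. (If one verifies that the relevant projection is onto a convex set the constant improves to $1$, but it is irrelevant here.)

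Then, applying the Chernoff bound $\Pr(\chi^2_d>y)\le 2^{d/2}e^{-y/4}$ (from the moment generating function evaluated at $1/4$) and a union bound over $h=1,\dots,M$, I would obtain
\begin{equation*}
\Pr\bigl(Z^M>x\bigr)\le\sum_{h=1}^{M}\Pr\bigl(4\chi^2_{d_h}>x\bigr)\le\Bigl(\sum_{h=1}^{M}2^{d_h/2}\Bigr)e^{-x/16}=:C_M\,e^{-x/16}.
\end{equation*}
For step (ii), set $x_n:=16(\log C_M-\log q_n)$ so that $C_M e^{-x_n/16}=q_n$. Because the survival function of $Z^M$ is nonincreasing and $\Pr(Z^M>x_n)\le C_M e^{-x_n/16}=q_n$, the definition of the quantile $c^M_{1-q_n}=\inf\{x:\Pr(Z^M>x)\le q_n\}$ gives $c^M_{1-q_n}\le x_n$. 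Finally, for step (iii),
\begin{equation*}
n^{-1}c^M_{1-q_n}\le 16\,n^{-1}\log C_M-16\,n^{-1}\log q_n,
\end{equation*}
where the first term vanishes as $n\to\infty$ since $C_M$ is a fixed constant, and the second is $o(1)$ by the hypothesis $-n^{-1}\log q_n=o(1)$; hence $n^{-1}c^M_{1-q_n}=o(1)$.

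The main obstacle I anticipate is the pathwise domination in step (i): the cone $\Lambda_{\bs\lambda}=\{v(\bs\lambda):\bs\lambda\in\R^q\}$ is in general \emph{not convex} (it is the image of the rank-one outer products), so the clean Pythagorean identity for cone projections need not hold and I cannot simply assert $\|\Pi\|^2\le\|\tilde{\bs G}\|^2$. The fix is the elementary factor-$4$ bound above, which uses only $\bs 0\in\Lambda_{\bs\lambda}$ and Cauchy--Schwarz and therefore is insensitive to convexity; everything downstream (the chi-square tail and the quantile inversion) is standard.
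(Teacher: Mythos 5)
Your proof is correct, and it reaches the same skeleton as the paper's argument (exponential tail bound on the limiting distribution, Chernoff inversion at level $q_n$, divide by $n$), but it obtains the tail bound by a genuinely different and arguably cleaner route. The paper invokes Theorem 2.1 of Foutz and Savin to represent the limit of the LRT as a finite weighted sum $\sum_{j=1}^K b_j\chi_j^2$ and then bounds each term's tail via the $\chi^2$ moment generating function; you instead work directly with the distribution whose quantile $c^M_{1-q_n}$ actually is, namely $Z^M=\max_h(\widehat{\bs t}^h_{\bs\lambda})\t\bs{\mathcal{I}}^h_{\bs\lambda,\bs\eta}\widehat{\bs t}^h_{\bs\lambda}$ from Proposition \ref{prop:tm0_distribution}, and dominate it pathwise by $4\max_h\chi^2_{d_h}$ using only that $\bs 0=v(\bs 0)\in\Lambda_{\bs\lambda}$ together with the minimizing property of the projection and the triangle inequality (your "Cauchy--Schwarz" step is really $\|\widehat{\tilde{\bs t}}^h\|\le\|\widehat{\tilde{\bs t}}^h-\tilde{\bs G}^h\|+\|\tilde{\bs G}^h\|\le 2\|\tilde{\bs G}^h\|$, but the bound is the same). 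This buys two things: it avoids importing an external representation theorem whose hypotheses would otherwise need checking against the nonstandard limit here, and your explicit remark that $\Lambda_{\bs\lambda}$ is not convex — so the factor-$4$ bound is needed in place of the usual projection contraction — is exactly the right caution and is handled correctly. The subsequent Chernoff bound $\Pr(\chi^2_d>y)\le 2^{d/2}e^{-y/4}$, the union bound over $h=1,\dots,M$, the quantile inversion $c^M_{1-q_n}\le 16(\log C_M-\log q_n)$, and the final division by $n$ are all standard and match the paper's closing steps.
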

\begin{proof}
For brevity of notation,
write $c_n=c^{M}_{1-q_n}$.
By Theorem 2.1 of \cite{Foutz77as}, $PLR_n(M)\overset{d}{\rightarrow} \sum_{j=1}^K b_j \chi_j^2$ for $0<b_j<\infty$ and $K$ is finite, where $\chi_1^2$, ..., $\chi_K^2$ are independent chi-square random variables with one degree of freedom.
Then, we have
\begin{align*}\label{eq:q_n}
q_n&= \Pr\left( \sum_{j=1}^K b_j \chi_j^2 \geq c_n\right)\leq \sum_{j=1}^K  \Pr\left(  \chi_j^2 \geq \frac{c_n}{ b_j}\right)\leq  \frac{K}{\sqrt{1-2t}}\exp\left(-t \frac{c_n}{ b^*}\right)\quad\text{for $0<t<\frac{1}{2}$}
\end{align*}
with $b^*=\arg\max\{b_1,...,b_K\}$, where the last inequality follows from a Chernoff bound: $\Pr\left(  \chi_j^2 \geq \frac{c_n}{ b^*}\right)\leq \frac{\E[\exp(t(\chi_j^2-1))]}{\exp(t(\frac{c_n}{ b^* }-1))}=\frac{1}{\sqrt{1-2t}}\exp\left(-t \frac{c_n}{ b^*}\right)$ for $0<t<\frac{1}{2}$. Therefore,
$-\frac{\log q_n}{n} \geq - \frac{1}{n}\log\left(\frac{K}{\sqrt{1-2t}}\right) +\frac{1}{2b_j^*} \frac{c_n}{n}$,
and the stated result follows.

\end{proof}

\begin{lemma}\label{lemma:KS2018_lemma7}
    Suppose that $g(\bs{w};\bs{\psi},\alpha)$ is defined as (\ref{eq:repar}), where $\bs{\psi} = (\bs{\eta}^\top,\bs{\lambda}^\top)^\top$. Let $g^*$, $\nabla g^*$, and $\nabla\log g^*$ denote $g(\bs{W};\bs{\psi},\alpha)$, $\nabla g(\bs{W};\bs{\psi},\alpha)$,  and $\nabla \log g(\bs{W};\bs{\psi},\alpha)$ evaluated at $(\bs{\psi}^*,\alpha)$, respectively. Let $\nabla f^*$ denote $\nabla f(\bs{W};\bs{\theta}^*)$.
    The following statements hold.
    \begin{enumerate}[label=(\alph*)]
        \item  For $l = 0,1,\ldots, \nabla_{ (\bs{\lambda} \otimes \bs{\eta}^{\otimes l})\t}g^* = 0$;
        \item $\nabla_{(\bs{\lambda}^{\otimes 2})\t } g^*  = \alpha (1 - \alpha) \nabla_{(\bs{\theta}^{\otimes 2})\t } f^*$;
\item $\nabla_{(\bs{\lambda}^{\otimes 2})\t } \log g^*  =  \alpha (1 - \alpha) \nabla_{(\bs{\theta}^{\otimes 2})\t } f^*/f^*$;
        \item $\E[\nabla_{\lambda_i\lambda_j}\log g^*]=0$, $\E[\nabla_{\lambda_i\lambda_j\lambda_k}\log g^*]=0$,
and  $\E[\nabla_{\eta\lambda_i\lambda_j }\log g^*]=-\E[\nabla_{\eta}\log g^*\nabla_{\lambda_i\lambda_j}\log g^*]$;
 \item $\E[\nabla_{\lambda_i\lambda_j\lambda_k\lambda_\ell}\log g^*]=
-\E[\nabla_{\lambda_i\lambda_j}\log g^*\nabla_{\lambda_k\lambda_\ell}\log g^*+\nabla_{\lambda_i\lambda_k}\log g^*\nabla_{\lambda_j\lambda_\ell}\log g^*+\nabla_{\lambda_i\lambda_\ell}\log g^*\nabla_{\lambda_j\lambda_k}\log g^*]$.    \end{enumerate}
\end{lemma}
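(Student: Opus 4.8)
The plan is to exploit the explicit reparameterized form (\ref{eq:repar}), writing $\bs\theta_1 = \bs\nu + (1-\alpha)\bs\lambda$ and $\bs\theta_2 = \bs\nu - \alpha\bs\lambda$ so that $g(\bs w;\bs\psi,\alpha) = \alpha f(\bs w;\bs\theta_1) + (1-\alpha) f(\bs w;\bs\theta_2)$, with $\bs\eta = \bs\nu$. The basic observation is that $\partial\bs\theta_1/\partial\bs\lambda = (1-\alpha)I$, $\partial\bs\theta_2/\partial\bs\lambda = -\alpha I$, while $\partial\bs\theta_1/\partial\bs\eta = \partial\bs\theta_2/\partial\bs\eta = I$. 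Applying the chain rule, a mixed derivative taking $m$ directions in $\bs\lambda$ and $l$ directions in $\bs\eta$ equals $\alpha(1-\alpha)^m\,\nabla^{(m+l)}f(\bs w;\bs\theta_1) + (1-\alpha)(-\alpha)^m\,\nabla^{(m+l)}f(\bs w;\bs\theta_2)$. Evaluating at $\bs\psi^*$, where $\bs\lambda = \bs 0$ and hence $\bs\theta_1 = \bs\theta_2 = \bs\theta^*$, this collapses to $c_m(\alpha)\,\nabla_{(\bs\theta^{\otimes(m+l)})^\top}f^*$ with $c_m(\alpha) := \alpha(1-\alpha)^m + (1-\alpha)(-\alpha)^m = \alpha(1-\alpha)\bigl[(1-\alpha)^{m-1} + (-\alpha)^{m-1}\bigr]$. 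Since $c_1(\alpha) = 0$, part (a) follows for every $l$; since $c_2(\alpha) = \alpha(1-\alpha)$, part (b) follows with $l=0$.

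For part (c) I would use $\nabla_{\lambda_i\lambda_j}\log g = \nabla_{\lambda_i\lambda_j}g/g - (\nabla_{\lambda_i}g)(\nabla_{\lambda_j}g)/g^2$. At $\bs\psi^*$ the first-order factors $\nabla_{\lambda_i}g^*$ vanish by part (a), and $g^* = f^*$ because $\bs\theta_1 = \bs\theta_2 = \bs\theta^*$; substituting part (b) gives $\nabla_{(\bs\lambda^{\otimes 2})^\top}\log g^* = \alpha(1-\alpha)\,\nabla_{(\bs\theta^{\otimes 2})^\top}f^*/f^*$.

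Parts (d) and (e) are Bartlett-type identities obtained by differentiating the normalization $\int g(\bs w;\bs\psi,\alpha)\,dy = 1$, which holds for every covariate value and every parameter value, so that all its $\bs\lambda$- and $\bs\eta$-derivatives integrate to zero (the interchange of $\nabla$ and $\int$ being justified by integrability bounds of the type (\ref{nabla_0}) for normal mixtures). The engine is the expansion of $\nabla g/g$ as a Bell polynomial in the derivatives of $\log g$: because part (a) forces every first-order $\bs\lambda$-log-derivative to vanish at $\bs\psi^*$, all terms containing a lone $\nabla_\lambda\log g^*$ drop out. For the two second-order statements in (d) this leaves $\nabla_{\lambda_i\lambda_j}\log g^* = \nabla_{\lambda_i\lambda_j}g^*/g^*$ and $\nabla_{\lambda_i\lambda_j\lambda_k}\log g^* = \nabla_{\lambda_i\lambda_j\lambda_k}g^*/g^*$, so taking $\E[\cdot] = \int(\cdot)f^*\,dy\,dP_X$ and using $\int\nabla_\lambda\cdots g^* = 0$ yields the two zero identities. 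For the mixed third derivative, expanding $\nabla_{\eta\lambda_i\lambda_j}g^*$ leaves exactly $g^*\bigl[\nabla_{\eta\lambda_i\lambda_j}\log g^* + \nabla_{\lambda_i\lambda_j}\log g^*\,\nabla_\eta\log g^*\bigr]$ — the one surviving cross term pairs the second $\bs\lambda$-derivative with the non-vanishing $\bs\eta$-first-derivative — and integrating to zero gives the claimed identity. The same mechanism at fourth order produces $\nabla_{\lambda_i\lambda_j\lambda_k\lambda_\ell}g^*/g^* = \nabla_{\lambda_i\lambda_j\lambda_k\lambda_\ell}\log g^* + (\text{three pairings of two second-order log-derivatives})$, and integrating yields (e).

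The main obstacle I anticipate is the combinatorial bookkeeping in the fourth-order Bell-polynomial expansion for part (e): one must verify that, after the vanishing of every first-order $\bs\lambda$-log-derivative, precisely the three pairing terms $\nabla_{\lambda_i\lambda_j}\log g^*\,\nabla_{\lambda_k\lambda_\ell}\log g^*$, $\nabla_{\lambda_i\lambda_k}\log g^*\,\nabla_{\lambda_j\lambda_\ell}\log g^*$, and $\nabla_{\lambda_i\lambda_\ell}\log g^*\,\nabla_{\lambda_j\lambda_k}\log g^*$ survive alongside the fourth log-derivative, each with unit coefficient and no spurious terms. A secondary technical point is rigorously justifying the interchange of differentiation and integration uniformly near $\bs\psi^*$, which I would handle via the Hermite-polynomial representation of the score together with the moment conditions in Assumption~\ref{assumption:LRT1}.
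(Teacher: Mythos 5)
Your proposal is correct and follows essentially the same route as the paper: a direct chain-rule computation showing the $\bs\lambda$-coefficients cancel at $\bs\lambda=\bs 0$ for (a)--(c), and Bartlett-type identities obtained by differentiating the normalization constraint for (d)--(e), with part (a) killing every term containing a lone first-order $\bs\lambda$-log-derivative. Your general coefficient $c_m(\alpha)=\alpha(1-\alpha)^m+(1-\alpha)(-\alpha)^m$ and the Bell-polynomial phrasing are just a more systematic packaging of the paper's case-by-case calculations and its repeated differentiation of $\int \nabla_{\lambda_i}\log g\cdot g\,d\bs w=0$.
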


\begin{proof}[Proof of Lemma \ref{lemma:KS2018_lemma7}]
    Recall that $$g(\bs{w};\bs{\psi},\alpha)  = \alpha f(\bs{w};\bs{\nu}  + (1 - \alpha) \bs{\lambda}) + (1 - \alpha)  f(\bs{w};\bs{\nu} - \alpha \bs{\lambda}). $$
    First, we show that for $l=0$ holds for (a),
    $\nabla_{\bs{\lambda}} g^* = \alpha ( 1 - \alpha) \nabla_{\bs{\theta}} f^* -  \alpha  ( 1 - \alpha) \nabla_{\bs{\theta}} f^*  = 0 $.
    For $l > 0$, by Fubini's theorem, we have
    \[
    \begin{split}
    \nabla_{ (\bs{\lambda} \otimes \bs{\eta}^{\otimes l})\t} g & = \left. \nabla_{\bs{\lambda}}\Big( \alpha \nabla_{ (\bs{\eta}^{\otimes l})\t} f(\bs{w};\bs{\nu}  + (1 - \alpha)  \bs{\lambda}) + (1 - \alpha) \nabla_{ (\bs{\eta}^{\otimes l})\t}f(\bs{w};\bs{\nu} - \alpha \bs{\lambda}) \right|_{\bs{\nu} = \bs{\theta}^*, \bs{\lambda} = \bs{0} } \Big) \\
    & = \left. \Big( \alpha ( 1- \alpha)\nabla_{ (\bs{\lambda} \otimes\bs{\eta}^{\otimes l})\t}f(\bs{w};\bs{\nu}  + (1 - \alpha)  \bs{\lambda}) - \alpha (1 - \alpha) \nabla_{ (\bs{\lambda} \otimes\bs{\eta}^{\otimes l})\t}f(\bs{w};\bs{\nu} - \alpha \bs{\lambda}) \right|_{\bs{\nu} = \bs{\theta}^*, \bs{\lambda} = \bs{0} } \Big) \\
    & = 0.
    \end{split}
    \]
    To show part $(b)$, note that \begin{equation*}
    \begin{split}
    \nabla_{(\bs{\lambda}^{\otimes 2})\t } g & =  \nabla_{\bs{\lambda}} \Big( \alpha ( 1 - \alpha) \nabla_{\bs{\lambda}^\top } f(\bs{w};\bs{\nu}  - \alpha (1 - \alpha)  \bs{\lambda})
    -\alpha (1 - \alpha) \nabla_{\bs{\lambda}^\top} f(\bs{w};\bs{\nu} - \alpha \bs{\lambda}) \Big) \\
    & = \left.   \alpha ( 1 - \alpha)^2 \nabla_{(\bs{\lambda}^{\otimes 2})\t } f(\bs{w};\bs{\nu}  - (1 - \alpha)  \bs{\lambda})
    + \alpha^2(1 - \alpha) \nabla_{(\bs{\lambda}^{\otimes 2})\t } f(\bs{w};\bs{\nu} - \alpha \bs{\lambda}) \right|_{\bs{\nu} = \bs{\theta}^*, \bs{\lambda} = \bs{0} } \\
    & = \nabla_{(\bs{\lambda}^{\otimes 2})\t } f^*.
    \end{split}
    \end{equation*}

    For part (c),  $\nabla_{\bs{\lambda}\bs{\lambda} \t } \log g^*  =  \nabla_{\bs{\lambda}\bs{\lambda} \t } g^*/g^* -( \nabla_{\bs\lambda } \log g^* )( \nabla_{\bs\lambda } \log g^*)=\alpha(1-\alpha)\nabla_{\bs{\lambda}\bs{\lambda} \t } f^*/f^*$ because $\nabla_{\bs{\lambda}\bs{\lambda} \t } g^*/g^*=\alpha(1-\alpha)\nabla_{\bs{\theta}\bs{\theta} \t } f^*/f^*$  from part (a) and $ \nabla_{\bs\lambda } \log g^* =\nabla_{\bs\lambda } g^* /g^*=0$ from part (b).

 For  parts (d) and (e), observe that $\int  \nabla_{\lambda_i}  \log g(\bs w;\psi,\alpha)g(\bs w;\psi,\alpha)dx=0$ holds for any $\psi$ in the interior of $\Theta_\psi$, and differentiating this equation w.r.t.\ $\lambda_j$ gives
\begin{equation}
\int \{ \nabla_{\lambda_i\lambda_j} \log g(\bs w;\psi,\alpha)  +  \nabla_{\lambda_i} \log g(\bs w;\psi,\alpha)  \nabla_{\lambda_j} \log g(\bs w;\psi,\alpha) \} g(\bs w;\psi,\alpha) dx=0.\label{IME}
\end{equation}
Evaluating (\ref{IME}) at $\psi=\psi^*$ in conjunction with  part (a) gives the first equation in  part (d). Differentiating (\ref{IME}) w.r.t.\ $\lambda_k$ or $\eta$ and evaluating  at $\psi=\psi^*$ gives the latter two equations in part (d).  Part (e) follows from differentiating (\ref{IME}) w.r.t.\ $\lambda_k$ and $\lambda_\ell$ and evaluating at $\psi=\psi^*$ in conjunction with  parts (a) and (d).

\end{proof}

\begin{lemma}\label{max_bound}[Lemma 2.1 of \cite{liushao03as}]
Suppose $X_1, \ldots, X_n$ are i.i.d. random variables with $\max_{1 \leq i \leq n} \mathbb{E}|X_i|^{q+\delta} < C$ for some $\delta>0$, $q > 0$,  and $C \in (0,\infty)$. Then, $\max_{1 \leq i \leq n} |X_i| = o_p(n^{1/q})$.
\end{lemma}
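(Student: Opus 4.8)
The plan is to establish the convergence in probability directly from its definition: fixing an arbitrary $\epsilon > 0$, I would show that $\Pr\left(\max_{1 \le i \le n} |X_i| > \epsilon n^{1/q}\right) \to 0$ as $n \to \infty$. Since the $X_i$ are identically distributed, the hypothesis reduces to $\mathbb{E}|X_1|^{q+\delta} < C$, and the whole argument rests on the combination of Boole's inequality with a Markov-type tail bound applied at the elevated moment $q+\delta$.

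First I would decompose the maximum event over the $n$ coordinates using the union bound, and then exploit the i.i.d.\ structure to collapse the sum into a single tail probability:
\[
\Pr\left(\max_{1 \le i \le n} |X_i| > \epsilon n^{1/q}\right) \le \sum_{i=1}^n \Pr\left(|X_i| > \epsilon n^{1/q}\right) = n\, \Pr\left(|X_1| > \epsilon n^{1/q}\right).
\]
Next I would control the single tail probability by rewriting the event as $\{|X_1|^{q+\delta} > \epsilon^{q+\delta} n^{(q+\delta)/q}\}$ and applying Markov's inequality to the $(q+\delta)$-th moment:
\[
\Pr\left(|X_1| > \epsilon n^{1/q}\right) \le \frac{\mathbb{E}|X_1|^{q+\delta}}{\epsilon^{q+\delta}\, n^{(q+\delta)/q}} \le \frac{C}{\epsilon^{q+\delta}}\, n^{-(q+\delta)/q}.
\]
Combining the two displays gives a bound of order
\[
n \cdot \frac{C}{\epsilon^{q+\delta}}\, n^{-(q+\delta)/q} = \frac{C}{\epsilon^{q+\delta}}\, n^{1 - (q+\delta)/q} = \frac{C}{\epsilon^{q+\delta}}\, n^{-\delta/q},
\]
which vanishes as $n \to \infty$ because $\delta/q > 0$. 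Since $\epsilon$ was arbitrary, this yields $\max_{1 \le i \le n} |X_i| = o_p(n^{1/q})$.

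There is no genuine obstacle here—the computation is entirely routine—so the only point I would take care to emphasize is \emph{why} the strict moment surplus $\delta > 0$ is indispensable. The union bound contributes a factor $n$, while Markov's inequality at the bare $q$-th moment would contribute only $n^{-1}$, leaving an $O(1)$ bound that need not decay. It is precisely the extra exponent $\delta$ that upgrades the cancellation to the strictly negative power $n^{-\delta/q}$, driving the probability to zero. Everything else (finiteness of the moment, the rewriting of the event, and the arbitrariness of $\epsilon$) is standard.
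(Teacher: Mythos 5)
Your proof is correct. Both you and the paper open with the union bound over the $n$ coordinates, but the key inequality differs: you apply plain Markov's inequality at the elevated exponent $q+\delta$, obtaining the explicit polynomial bound $C\epsilon^{-(q+\delta)}n^{-\delta/q}$, whereas the paper uses the truncated form $\Pr(|X_1|>t)\le t^{-q}\,\mathbb{E}\bigl[|X_1|^q\,\mathbb{I}\{|X_1|>t\}\bigr]$ at the bare exponent $q$, so that after multiplying by $n$ the bound becomes $\varepsilon^{-q}\,\mathbb{E}\bigl[|X_1|^q\,\mathbb{I}\{|X_1|>\varepsilon n^{1/q}\}\bigr]$, which tends to zero by dominated convergence. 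Your route is more elementary (no DCT) and yields a rate, but it spends the full strength of the moment surplus; the paper's route shows that the conclusion actually holds under the weaker hypothesis $\mathbb{E}|X_1|^q<\infty$ alone. For that reason your closing remark that $\delta>0$ is \emph{indispensable} should be softened: it is indispensable for your particular argument (plain Markov at exponent $q$ would indeed leave an $O(1)$ bound), but not for the lemma itself, since the truncation-plus-DCT argument dispenses with it entirely.
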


\begin{proof}
For any $\varepsilon > 0$, we have
\begin{align*}
P\left(\max_{1 \leq i \leq n} |X_i| > \varepsilon n^{1/q}\right) &\leq \sum_{1 \leq i \leq n} P(|X_i| > \varepsilon n^{1/q})\leq \varepsilon^{-q} \mathbb{E}(|X_1|^q\mathbb{I}{\{|X_1|>\varepsilon n^{1/q}\}})
\end{align*}
by a version of the Markov inequality and the i.i.d. assumption. As $n \to \infty$, the right-hand side tends to $0$ by the dominated convergence theorem.
\end{proof}

\subsection{Score function for testing $H_0:m = 1$ against $H_A:m=2$}\label{sec:appendix_score_1}
$H^j(\cdot)$ is defined as the $j$-th order Hermite polynomial. $H^1(t) = t$, $H^2(t) = t^2 - 1$ , $H^3(t) = t^3 - 3t$, and $H^4(t) = t^4 - 6t^2 + 3$.
As shown in the supplementary material of \cite{kasaharashimotsu15jasa}, the derivative of $\{ \frac{1}{\sigma} \phi(\frac{t}{\sigma}) \}$ is  $$\frac{\nabla_{\mu^m}\nabla_{ (\sigma^2)^{\ell}} \{ \frac{1}{\sigma} \phi(\frac{t}{\sigma}) \} }{\{ \frac{1}{\sigma} \phi(\frac{t}{\sigma}) \}}
 = \left(\frac{1}{2}\right)^\ell \left(\frac{1}{\sigma}\right)^{m+2\ell} H^{m+2\ell}\left(\frac{t}{\sigma}\right). $$
Let \( f^* = f(\bs{W}; \theta^*) = \prod_{t=1}^T f^*_t (\bs{W}; \theta^*) \), where \( f^*_t \) denotes the component density for observation \( t \), which may be either a normal or a mixture density. Let \( \nabla f^* = \nabla f(\bs{W}; \theta^*) \) denote its gradient.

Define the Hermite polynomials evaluated at \( Y_t - \bs{X}_t^\top \bs{\beta}^* - \mu^* \) as follows:
\begin{equation}\label{eq:Hermite_polynomial}
\begin{split}
  H^{j*}_{t} = \frac{1}{(\sigma^*)^j} H^{j}\left(\frac{Y_{t} - \bs{X}_{t}^\top \bs{\beta}^*  - \mu^*  }{\sigma^*} \right).
\end{split}
\end{equation}

\subsubsection{Score function for the model (\ref{eq:fm}) with normal density (\ref{eq:f1})}

Let
\(f^*_t = \frac{1}{\sigma^*} \phi\left( \frac{Y_{t} - \bs{X}_{t}^\top \bs{\beta}^*  - \mu^*  }{\sigma^*} \right) \).
Then, the first-order derivatives of the normal density function (\ref{eq:f1}) with respect to the parameters are given by:
\begin{align*}
  & \nabla_{\mu} f^*  =   f^* \sum_{t=1}^T H^{1*}_{t}, 
  & \nabla_{\sigma^2} f^*   =  f^*  \sum_{t=1}^T \frac{1}{2}  H^{2*}_{t}, \\
  & \nabla_{\bs{\beta}} f^*   =   f^*  \sum_{t=1}^T H^{1*}_{t} \bs{X}_{t}.
\end{align*}

The score function defined in (\ref{eq:s_1}) is then written in terms of the Hermite polynomials:
\begin{equation}\label{eq:s_1_hermite}
\begin{split}
\bs s_{\bs \eta }(\bs W)  = \begin{pmatrix}
s_{\mu } \\
s_{\sigma } \\
\bs s_{\bs\beta } %\\
%\bs s_{\bs \gamma }
\end{pmatrix}= \begin{pmatrix}
\sum_{t=1}^T  H^{1*}_{t} \\
\sum_{t=1}^T  H^{2*}_{t} \\
\sum_{t=1}^T  H^{1*}_{t}\bs X_{t} %\\
%\sum_{t=1}^T  H^{1*}_{t}\bs z_{t}\\
\end{pmatrix}\quad\text{and}\quad
%s_{\lambda_{\mu \sigma} i} =
\bs{s}_{\bs{\lambda} \bs{\lambda}}(\bs W) = \begin{pmatrix}
s_{\lambda_\mu\lambda_\mu}\\
s_{\lambda_\mu\lambda_\sigma}\\
s_{\lambda_\sigma\lambda_\sigma}\\
\bs{s}_{\lambda_\mu\lambda_{\bs\beta}}\\
\bs{s}_{\lambda_\sigma\lambda_{\bs\beta}}\\
\bs{s}_{\lambda_{\bs\beta}\lambda_{\bs\beta}}
  \end{pmatrix},
\end{split}
\end{equation}
where
\begin{equation}\label{eq:s-vector}
\begin{split}
 \begin{pmatrix}
s_{\lambda_{\mu}\lambda_{ \mu } }\\
s_{\lambda_{\mu }\lambda_{ \sigma }}\\
s_{\lambda_{\sigma }\lambda_{ \sigma }}\\
\bs{s}_{\lambda_{\mu}\lambda_{\bs\beta}}\\
\bs{s}_{\lambda_{\sigma}\lambda_{\bs\beta}}\\
\bs{s}_{\lambda_{\bs\beta}\lambda_{\bs\beta}}
\end{pmatrix} &=
\begin{pmatrix}
\sum_{t=1}^T \left[ H^{2*}_{t} + \sum_{s \neq t} H^{1*}_{t} H^{1*}_{s} \right] \\
\frac{1}{2} \sum_{t=1}^T \left[  H^{3*}_{t} + \sum_{s \neq t} H^{1*}_{t} H^{2*}_{s} \right] \\
 \frac{1}{4} \sum_{t=1}^T  \left[  H^{4*}_{t} + \sum_{s \neq t} H^{2*}_{t} H^{2*}_{s}  \right] \\
\sum_{t=1}^T \left[ H^{2*}_{t} + \sum_{s \neq t} H^{1*}_{t} H^{1*}_{s} \right] \boldsymbol{X}_{t}  \\
 \frac{1}{2}  \sum_{t=1}^T \left[  H^{3*}_{t} + \sum_{s \neq t} H^{1*}_{t} H^{2*}_{s} \right] \boldsymbol{X}_{t}  \\
 \sum_{t=1}^T \left[ H^{2*}_{t} \widetilde{\textnormal{vech}}(\boldsymbol{X}_{t} \boldsymbol{X}_{t}^\top) + \sum_{s \neq t} H^{1*}_{t} H^{1*}_{s} \widetilde{\textnormal{vech}}(\boldsymbol{X}_{t} \boldsymbol{X}_{s}^\top) \right]  
\end{pmatrix}.
\end{split}
\end{equation}

%
%When $T = 1$, the score functions are as follows:
%\begin{equation}
%    % \label{eq:s_1_t_1_hermite}
%\bs s_{\bs\eta } = \begin{pmatrix}
%s_{\mu } \\
%s_{\sigma } \\
%\bs s_{\bs\beta } %\\
%%bs s_{\bs \gamma }
%\end{pmatrix}  = \begin{pmatrix}
%  H^{1*}_{i} \\
%  H^{2*}_{i} \\
%  H^{1*}_{i}\bs x_{i}%\\
% % H^{1*}_{i}\bs z_{i}\\
%\end{pmatrix},\quad
%\begin{pmatrix}
%s_{\lambda_{\mu}\lambda_{  \mu } }\\
%s_{\lambda_{\mu }\lambda_{ \sigma }}\\
%s_{\lambda_{\sigma }\lambda_{ \sigma }}\\
%\bs s_{\lambda_{\mu}\lambda_{ \bs \beta }}\\
%\bs s_{\lambda_{\sigma}\lambda_{ \bs \beta }}\\
%\end{pmatrix} =
%\begin{pmatrix}
% H^{2*}_{i}  \\
%3  H^{3*}_{i} \\
%3  H^{4*}_{i}   \\
%2  H^{2*}_{i}\bs x_{i} \\
%3   H^{3*}_{i}\bs x_{i}
%\end{pmatrix},\quad \text{and }
%\bs s_{\lambda_{\bs\beta\bs\beta}} =  \begin{pmatrix}
% H^{2*}_{i}x^2_{i,1} \\
%\vdots \\
% H^{2*}_{i}x^2_{i,q} \\
%2  H^{2*}_{i}x_{i,1} x_{i,2} \\
%\vdots \\
%2  H^{2*}_{i}x_{i,1}x_{i,q}  \\
%2  H^{2*}_{i}x_{i,2} x_{i,3}\\
%\vdots \\
%2  H^{2*}_{i}x_{i,q-1}x_{i,q} \\
%\end{pmatrix}.
%\end{equation}
%Note that $s_{\sigma  }$ and $ s_{\lambda_{\mu \mu }} $ are perfectly collinear, and therefore, the Fisher information matrix associated with the proposed score function is singular under this reparameterization for data with $T=1$.

\subsubsection{Score function for the model (\ref{eq:fm}) with normal mixture density (\ref{eq:f1-mixture}) when $K=2$}
For brevity, we present the score function for the model (\ref{eq:fm}) with a two-component normal mixture (\ref{eq:f1-mixture}) when there are no covariates. The score function for the model with covariates can be analogously derived. Let
\[
f^* = \prod_{t=1}^T f^*_t = \prod_{t=1}^T \left( \tau^* f^*_{1t} + (1 - \tau^*) f^*_{2t}  \right),
\]
where \( f^*_{kt} = \frac{1}{\sigma^*} \phi\left(\frac{Y_t  - \mu_k^*}{\sigma^*} \right) \) for $k = 1, 2$.
In this model, we omit the covariates \( \bs{X}_t \) and the parameter \( \bs{\beta}^* \) for simplicity. The score function with covariates can be derived similarly to the previous section.
Define the $b$-th order normalized Hermite polynomial for $k$-th component evaluated at \( Y_t - \mu_k^* \) as follows:
\begin{equation}
\begin{split}
  H^{b*}_{kt} = \frac{1}{(\sigma^*)^b} H^{b}\left(\frac{Y_{t} - \mu_k^*}{\sigma^*} \right).
\end{split}
\end{equation}
Let
\[
\begin{aligned}
\gamma_{1t}:= \frac{ \tau^* f_{1t}^* }{\tau^* f_{1t}^* + (1 - \tau^*) f_{2t}^*},\quad \gamma_{2t}:=1-\gamma_{1t}.
\end{aligned}
\]
The score function defined in (\ref{eq:s_1}) is then written as:
 \begin{equation}\label{eq:s_1_hermite-mixture}
\begin{split}
\bs s_{\bs \eta }(\bs W)  = \begin{pmatrix}
s_{\tau }\\
s_{\mu_1 } \\
s_{\mu_2 } \\
s_{\sigma }
%\\ \bs s_{\bs\beta } %\\
%\bs s_{\bs \gamma }
\end{pmatrix}= \begin{pmatrix}
 \sum_{t=1}^T \left( \frac{ \gamma_{1t} }{ \tau^* } - \frac{ \gamma_{2t} }{ 1 - \tau^* } \right) \\
 \sum_{t=1}^T \gamma_{1t} H^{1*}_{1t} \\
 \sum_{t=1}^T \gamma_{2t} H^{1*}_{2t} \\
% \sum_{t=1}^T \frac{ \gamma_{1t}  H^2(Z_{1t}) +  \gamma_{2t} H^2(Z_{2t})   }{ 2 (\sigma^*)^3 }
\sum_{t=1}^T  \gamma_{1t} H^{2*}_{1t}  + \gamma_{2t} H^{2*}_{2t} 
\end{pmatrix}\quad\text{and}\quad
%s_{\lambda_{\mu \sigma} i} =
\bs{s}_{\bs{\lambda} \bs\lambda}(\bs W) = \begin{pmatrix}
s_{\lambda_{\tau}\lambda_{\tau} }\\
s_{\lambda_{\mu_1}\lambda_{\mu_1}}\\
s_{\lambda_{\mu_2}\lambda_{\mu_2}}\\
s_{\lambda_{\mu_1}\lambda_{\mu_2}}\\
s_{\lambda_{\tau}\lambda_{\mu_1} }\\
s_{\lambda_{\tau}\lambda_{\mu_2} }\\
s_{\lambda_{\mu_1}\lambda_{\sigma}}\\
s_{\lambda_{\mu_2}\lambda_{\sigma}}\\
s_{ \lambda_{\sigma} \lambda_{\sigma}}\\
s_{ \lambda_{\tau}\lambda_{\sigma} }
  \end{pmatrix},
\end{split}
\end{equation}
where
\begin{equation}\label{eq:s-vector-mixture}
\begin{pmatrix}
s_{\lambda_{\tau}\lambda_{\tau}} \\
s_{\lambda_{\mu_1}\lambda_{\mu_1}} \\
s_{\lambda_{\mu_2}\lambda_{\mu_2}} \\
s_{\lambda_{\mu_1}\lambda_{\mu_2}} \\
s_{\lambda_{\tau}\lambda_{\mu_1}} \\
s_{\lambda_{\tau}\lambda_{\mu_2}} \\
s_{\lambda_{\mu_1}\lambda_{\sigma}} \\
s_{\lambda_{\mu_2}\lambda_{\sigma}} \\
s_{\lambda_{\sigma} \lambda_{\sigma} } \\
s_{ \lambda_{\tau}\lambda_{\sigma} }
\end{pmatrix}
=
\begin{pmatrix}
 \sum_{t=1}^T \sum_{s\neq t} \left( \frac{ \gamma_{1t} }{ \tau^* } - \frac{ \gamma_{2t} }{ 1 - \tau^* } \right) \left( \frac{ \gamma_{1s} }{ \tau^* } - \frac{ \gamma_{2s} }{ 1 - \tau^* } \right) \\
\sum_{t=1}^T \left( \gamma_{1t} H^{2*}_{1t} + \sum_{s \neq t}\gamma_{1t} \gamma_{1s} H^{1*}_{1t}  H^{1*}_{1s} \right) \\
\sum_{t=1}^T \left( \gamma_{2t} H^{2*}_{2t} + \sum_{s \neq t}\gamma_{2t} \gamma_{2s} H^{1*}_{2t}  H^{1*}_{2s} \right) \\
\sum_{t=1}^T \left(  \sum_{s \neq t} \gamma_{1t} \gamma_{2s} H^{1*}_{1t} H^{1*}_{2s} \right) \\
\sum_{t=1}^T \left( \frac{ \gamma_{1t} }{ \tau^* }  H^{1*}_{1t} + \sum_{s \neq t} \gamma_{1t} H^{1*}_{1t} \left( \frac{ \gamma_{1s} }{ \tau^* } - \frac{ \gamma_{2s} }{ 1 - \tau^* } \right) \right) \\
\sum_{t=1}^T \left( - \frac{ \gamma_{2t} }{ 1 - \tau^* }  H^{1*}_{2t} + \sum_{s \neq t} \gamma_{2t} H^{1*}_{2t} \left( \frac{ \gamma_{1s} }{ \tau^* } + \frac{ \gamma_{2s} }{ 1 - \tau^* } \right) \right)  \\
\frac{1}{2} \sum_{t=1}^T \left( \gamma_{1t} H^{3*}_{1t} + \sum_{s \neq t} \gamma_{1t} H^{1*}_{1t} \left( \gamma_{1s} H^{2*}_{1s}  + \gamma_{2s} H^{2*}_{2s} \right) \right)\\
\frac{1}{2} \sum_{t=1}^T \left( \gamma_{2t} H^{3*}_{2t} + \sum_{s \neq t} \gamma_{2t} H^{1*}_{2t} \left( \gamma_{1s} H^{2*}_{1s}  + \gamma_{2s} H^{2*}_{2s} \right) \right)\\
\frac{1}{4}\sum_{t=1}^T \left( \left( \gamma_{1t} H^{4*}_{1t}  + \gamma_{2t} H^{4*}_{2t} \right) + \sum_{s \neq t} \left( \gamma_{1t} H^{2*}_{1t}  + \gamma_{2t} H^{2*}_{2t} \right) \left( \gamma_{1s} H^{2*}_{1s}  + \gamma_{2s} H^{2*}_{2s} \right) \right) \\
\frac{1}{2} \sum_{t=1}^T \left( \left( \frac{ \gamma_{1t} }{ \tau^* } -   \frac{ \gamma_{2t} }{ 1 - \tau^* }  \right) H^{2*}_{1t}  + \sum_{s \neq t} \left( \gamma_{1t} H^{2*}_{1t}  + \gamma_{2t} H^{2*}_{2t} \right) \left( \frac{ \gamma_{1s} }{ \tau^* } - \frac{ \gamma_{2s} }{ 1 - \tau^* } \right) \right)  
\end{pmatrix}.
\end{equation}

\subsection{Score function for testing $H_0:m = M_0$ against $H_A:m=M_0 + 1$}
We only present score function for the model (\ref{eq:fm}) with normal density (\ref{eq:f1}). The score functions for other models can be derived analogously.
% The derivative of the reparameterized density w.r.t. $\lambda$ at $\psi^{h*}_{\tau}$ is identically zero similarly to the test of the homogeneity case.
% The values of the score function $s_{\eta i}$ contain the first-order derivatives w.r.t. the $\pi$s $\gamma$ and $\nu$ at $\psi^{h*}_{\tau}$:
% \begin{equation}
% \begin{split}
% \nabla_{\pi^j} l^h(\bs{W};\psi^{h*}_{\tau},\tau) & = \frac{f(\bs{W};\gamma^*,\theta_0^{j*}) - f(\bs{W};\gamma^*,\theta_0^{M_0 *}) }{\sum_{j=1}^{M_0} \alpha_0^{j*}  f(\bs{W};\gamma^*,\theta_0^{j*})};\\
% \nabla_{\gamma} l^h(\bs{W};\psi^{h*}_{\tau},\tau) & = \frac{  \sum_{j=1}^{M_0} \alpha_0^{j*} \nabla_{\gamma} f(\bs{W};\gamma^*,\theta_0^{j*})}{\sum_{j=1}^{M_0} \alpha_0^{j*}  f(\bs{W};\gamma^*,\theta_0^{j*})};\\
% \nabla_{\nu} l^h(\bs{W};
% \psi^{h*}_{\tau},\tau) & = \frac{\nabla_{\theta} f(\bs{W};\gamma^*,\theta_0^{h*})}{\sum_{j=1}^{M_0} \alpha_0^{j*}  f(\bs{W};\gamma^*,\theta_0^{j*})}.
% \end{split}
% \end{equation}
Let $g^* = g(\bs{W};\vartheta_{M_0}^*) = \sum_{j=1}^{M_0} \alpha_j^* f_j^*$ denote the true $M_0$-component model as in equation (\ref{eq:fm0}), where $f_j^* = f(\bs{W}; \theta_j^*) = \prod_{t=1}^T \frac{1}{\sigma_j^*} \phi\left(\frac{Y_t - \mu_j^*}{\sigma_j^*}\right)$ is the $j$-th component density. 
In this section, we omit the covariate $\bs X_{it}$ and the parameters $\beta_j^*$ for simplicity; the derivation with covariates can be done analogously to the previous section.
Define $H^{b*}_{j, it}$ as a shorthand for the $b$-th order normalized Hermite polynomial evaluated at $\frac{Y_{it} - \mu_j^*}{\sigma_j^*}$, i.e., $H^{b*}_{j,t} = \frac{1}{(\sigma_j^*)^b} H^{b}\left(\frac{Y_{t} - \mu_j^*}{\sigma_j^*}\right)$. Define the weight $w_{i}^{j*}$ as
\(
w_{j}^{*}   =  \frac{\alpha_j^{*} f_j^*}{ \sum_{l} \alpha_l^{*} f_l^* }, \quad j=1,\ldots,M_0.
\)
The score functions are
\begin{align*}
&\bs s_{\bs\alpha}(\bs W)  = \begin{pmatrix}
\frac{w_1^{*}}{\alpha_1^*} - \frac{w_{M_0}^*}{\alpha_{M_0}^*} 
  \\
\vdots \\
\frac{w_{M_0-1}^*}{\alpha_{M_0-1}^*} - \frac{w_{M_0}^*}{\alpha_{M_0}^*} 
\end{pmatrix}, 
& \bs s_{\mu }(\bs W)  = \begin{pmatrix}
w_{1}^{*} \sum_{t=1}^T  H^{1*}_{1, t} \\
\vdots \\
w_{M_0}^*  \sum_{t=1}^T  H^{1*}_{M_0, t}
\end{pmatrix}, \\
& \bs s_{\sigma }(\bs W) = \begin{pmatrix}
w_{1}^*  \sum_{t=1}^T H^{2*}_{1, t} \\
\vdots \\
w_{M_0}^*  \sum_{t=1}^T H^{2*}_{M_0, t}
\end{pmatrix}, 
& \bs s_{\bs\lambda\bs\lambda}(\bs W) =
\begin{pmatrix}
\bs s_{\lambda_{\mu}\lambda_{ \mu } }\\
\bs s_{\lambda_{\mu }\lambda_{ \sigma }}\\
\bs s_{\lambda_{\sigma }\lambda_{ \sigma }}
\end{pmatrix},
\end{align*}
where
\begin{align*}
\bs s_{\lambda_{\mu}\lambda_{ \mu } } &= \begin{pmatrix}
w_1^* \sum_{t=1}^T \left[ H^{2*}_{1,t} + \sum_{s \neq t} H^{1*}_{1, t} H^{1*}_{1, s} \right] \\
\vdots \\
w_{M_0}^* \sum_{t=1}^T \left[ H^{2*}_{M_0,t} + \sum_{s \neq t} H^{1*}_{M_0, t} H^{1*}_{M_0, s} \right] 
\end{pmatrix}, 
\bs s_{\lambda_{\mu }\lambda_{ \sigma }} = \begin{pmatrix}
w_1^*  \frac{1}{2} \sum_{t=1}^T \left[  H^{3*}_{1,t} + \sum_{s \neq t} H^{1*}_{1,t} H^{2*}_{1, s} \right] \\
\vdots \\
w_{M_0}^*  \frac{1}{2} \sum_{t=1}^T \left[  H^{3*}_{M_0,t} + \sum_{s \neq t} H^{1*}_{M_0,t} H^{2*}_{M_0, s} \right] \\
\end{pmatrix}, \\[1em]
\bs s_{\lambda_{\sigma }\lambda_{ \sigma }} &= \begin{pmatrix}
w_1^* \frac{1}{4} \sum_{t=1}^T \left[ H^{4*}_{1,t} + \sum_{s \neq t} H^{2*}_{1,t} H^{2*}_{1,s} \right] \\
\vdots \\
w_{M_0}^* \frac{1}{4} \sum_{t=1}^T \left[ H^{4*}_{M_0,t} + \sum_{s \neq t} H^{2*}_{M_0,t} H^{2*}_{M_0,s} \right]
\end{pmatrix}.
\end{align*}

\subsection{EM Algorithm}\label{sec:em}
\subsubsection{EM Algorithm for the model (\ref{eq:fm}) with normal mixture density (\ref{eq:f1-mixture})}\label{sec:em-algorithm}

We introduce latent variables $D_i \in \{1,2,\ldots,M\}$ and $C_{it} \in \{1,2,\ldots,K\}$, where the top-level latent variable $D_i$ indicates which of the $M$ top-level components was chosen for observation $i$, and $C_{it}$ indicates which of the $K$ inner components was chosen for the $t$-th observation of unit $i$ under the selected top-level component. Then, the complete-data log-likelihood is given as
\[
\log L_c = \sum_{i=1}^n \sum_{j=1}^M \mathbb{I}(D_i=j)\left[\log\alpha_j + \sum_{t=1}^T \sum_{k=1}^{{\cal{K}}} \mathbb{I}(C_{it}=k)\bigl(\log \tau_{jk} + \log\phi(Z_{jk,it}) - \log \sigma_j\bigr)\right],
\]
where $Z_{jk,it}:=\frac{Y_{it}-\mu_{jk}-\bs X_{it}\t \bs\beta_j }{\sigma_j}$.

To obtain the Maximum Likelihood estimator $\hat{\bs\vartheta} = \arg\max_{\bs\vartheta} \sum_{i=1}^n \log g(\mathbf{W}_i;\bs\vartheta)$, we implment the following EM algorithm by alternating between the E-step and the M-step.

In E-step, given current estimates $(\alpha_j^{(m)},\bs\theta_j^{(m)})$, compute the posterior probabilities of the latent variables as
\begin{align*}
\pi_{i,j}^{(m)} & = P(D_i=j|\mathbf{W}_i,\{\alpha_j^{(m)},\bs\theta_j^{(m)}\})
= \frac{\alpha_j^{(m)} f(\mathbf{W}_i;\theta_j^{(m)})}{\sum_{r=1}^M \alpha_r^{(m)}f(\mathbf{W}_i;\theta_r^{(m)})}\text{ and }\\
\gamma_{jk,it}^{(m)} &= P(C_{it}=k|\mathbf{W}_i,D_i=j,\bs\theta_j^{(m)})
= \frac{\tau_{jk}^{(m)}\phi(z_{jk,it}^{(m)})}{\sum_{\ell=1}^{{\cal{K}}} \tau_{j\ell}^{(m)}\phi(z_{j\ell,it}^{(m)})}
\end{align*}
 for each $j=1,\ldots,M$ and $k=1,\dots,K$.

In M-step, we maximize  the expected complete-data log-likelihood $E[\log L_c|\mathbf{W},\alpha_j^{(m)},\bs\theta_j^{(m)}]$ as follows: for $j=1,...,M$,
\begin{enumerate}
\item  Update $\alpha_j$ as
\[
\alpha_j^{(m+1)} = \frac{\sum_{i=1}^n \pi_{i,j}^{(m)}}{n}, \quad \text{with } \sum_{j=1}^M \alpha_j^{(m+1)}=1.
\]
\item  Update $\tau_{jk}$ as
\[
\tau_{jk}^{(m+1)} = \frac{\sum_{i=1}^n \pi_{i,j}^{(m)}\sum_{t=1}^T \gamma_{jk,it}^{(m)}}{\sum_{i=1}^n \pi_{i,j}^{(m)} T}\quad\text{for $k=1,...,{\cal{K}}$}.
\]
%and ensure $\sum_{k=1}^{{\cal{K}}} \tau_{jk}^{(m+1)}=1$ by defining $\tau_{jK}^{(m+1)}=1-\sum_{\ell=1}^{K-1}\tau_{j\ell}^{(m+1)}$.
\item   Update $\bs\mu_{j}$ as
   \[
   \mu_{jk}^{(m+1)} = \frac{\sum_{i,t} \pi_{i,j}^{(m)}\gamma_{jk,it}^{(m)}(Y_{it}-\bs{X}_{it}^\top \beta_j^{(m)})}{\sum_{i,t} \pi_{i,j}^{(m)}\gamma_{jk,it}^{(m)}}\quad\text{   for $k=1,...,K$.}
   \]

\item  Given  $\bs\mu_{j}^{(m+1)}$, update $\bs\beta_j$ as
\[
\bs\beta_j^{(m+1)}=\arg\min_{\bs\beta_j} \sum_{j=1}^M \sum_{k=1}^{{\cal{K}}} \sum_{i=1}^n \sum_{t=1}^T \pi_{i,j}^{(m)}\gamma_{jk,it}^{(m)} (Y_{it}-\mu_{jk}^{(m+1)}-\bs X_{it}\t \bs\beta_j )^2.
\]

\item Update $\sigma_{j}^2$
\[
\sigma_j^{(m+1)2} = \frac{\sum_{i=1}^n \pi_{i,j}^{(m)}\sum_{t=1}^T \sum_{k=1}^{{\cal{K}}} \gamma_{jk,it}^{(m)}(Y_{it}-\mu_{jk}^{(m+1)}-\bs X_{it}\t \bs\beta_j )^2}{\sum_{i=1}^n \pi_{i,j}^{(m)}T}.
\]
\end{enumerate}

We iterate between the E-step and the M-step until convergence.

\subsubsection{EM Algorithm for the model (\ref{eq:fm-dynamic}) with normal mixture density (\ref{eq:f1-dynamic-mixture})}\label{sec:em-algorithm-dynamic}

In a similar manner as above section, we denote latent variables $D_i \in \{1,2,\ldots,M\}$ and $C_{it} \in \{1,2,\ldots,K\}$, where the top-level latent variable $D_i$ indicates which of the $M$ top-level components was chosen for observation $i$, and $C_{it}$ indicates which of the $K$ inner components was chosen for the $t$-th observation of unit $i$ under the selected top-level component. Then, the complete-data log-likelihood is given as
\[
\begin{aligned}
\log L_c = \sum_{i=1}^n \sum_{j=1}^M \mathbb{I}(D_i = j) \Bigg[ \log\alpha_j
+  \Big(
& \sum_{k=1}^{\mathcal{K}} \mathbb{I}(C_{i1} = k) \big(  \log \tau_{jk} + \log\phi(Z_{jk,i1}) - \log \sigma_{1,j} \big) \\
& + \sum_{t=2}^T \sum_{k=1}^{\mathcal{K}} \mathbb{I}(C_{it} = k) \big( \log \tau_{jk} + \log\phi(Z_{jk,it}) - \log \sigma_j \big)
\Big) \Bigg]
\end{aligned}
\]
where $Z_{jk,it}:=\frac{Y_{it} - \rho Y_{i,t-1} - (1 - \rho) \mu_{jk}-\bs X_{it}\t \bs\beta_j +  \bs X_{i,t-1}\t \bs\beta_j \rho_j  }{\sigma_j}$ and $Z_{jk,i1} = \frac{Y_{i1} - \mu_{1,jk} - \bs X_{i1}\t \bs\beta_{1,j}}{\sigma_{1,j}}$.

To obtain the Maximum Likelihood estimator $\hat{\bs\vartheta} = \arg\max_{\bs\vartheta} \sum_{i=1}^n \log g(\mathbf{W}_i;\bs\vartheta)$, we implment the following EM algorithm by alternating between the E-step and the M-step.

In E-step, given current estimates $(\alpha_j^{(m)},\bs\theta_j^{(m)})$, compute the posterior probabilities of the latent variables as
\[
\pi_{i,j}^{(m)} = P(D_i=j|\mathbf{W}_i,\{\alpha_j^{(m)},\bs\theta_j^{(m)}\})
= \frac{\alpha_j^{(m)} f(\mathbf{W}_i;\theta_j^{(m)})}{\sum_{r=1}^M \alpha_r^{(m)}f(\mathbf{W}_i;\theta_r^{(m)})}
\]
\[
\gamma_{jk,it}^{(m)} = P(C_{it}=k|\mathbf{W}_i,D_i=j,\bs\theta_j^{(m)})
= \frac{\tau_{jk}^{(m)}\phi(Z_{jk,it}^{(m)})}{\sum_{\ell=1}^{{\cal{K}}} \tau_{j\ell}^{(m)}\phi(Z_{j\ell,it}^{(m)})} \qquad \text{for } T > 1,
\]
\[
\gamma_{jki,1}^{(m)} = P(C_{i1}=k|\mathbf{W}_i,D_i=j,\bs\theta_j^{(m)})
= \frac{\tau_{jk}^{(m)}\phi(Z_{jki,1}^{(m)})}{\sum_{\ell=1}^{{\cal{K}}} \tau_{j\ell}^{(m)}\phi(Z_{j\ell,i1}^{(m)})} \qquad \text{for } T = 1,
\]
for each $j=1,\ldots,M$ and $k=1,\dots,K$.

In M-step, we maximize the expected complete-data log-likelihood $E[\log L_c|\mathbf{W},\alpha_j^{(m)},\bs\theta_j^{(m)}]$ as follows: for $j=1,...,M$,
\begin{enumerate}
\item  Update $\alpha_j$ as
\[
\alpha_j^{(m+1)} = \frac{\sum_{i=1}^n \pi_{i,j}^{(m)}}{n}, \quad \text{with } \sum_{j=1}^M \alpha_j^{(m+1)}=1.
\]
\item  Update $\tau_{jk}$ as
\[
\tau_{jk}^{(m+1)} = \frac{\sum_{i=1}^n \pi_{i,j}^{(m)}\sum_{t=1}^T \gamma_{jk,it}^{(m)}}{\sum_{i=1}^n \pi_{i,j}^{(m)} T}\quad\text{for $k=1,...,{\cal{K}}$}.
\]
%and ensure $\sum_{k=1}^{{\cal{K}}} \tau_{jk}^{(m+1)}=1$ by defining $\tau_{jK}^{(m+1)}=1-\sum_{\ell=1}^{K-1}\tau_{j\ell}^{(m+1)}$.
\item   Update $\bs\mu_{j}$ and $\bs\mu_{1,j}$ as
   \[
   \mu_{jk}^{(m+1)} = \frac{\sum_{i,t} \pi_{i,j}^{(m)}\gamma_{jk,it}^{(m)}( Y_{it} - \rho^{(m)} Y_{i,t-1} - \bs X_{it}\t \bs\beta_j^{(m)} +  \bs X_{i,t-1}\t \bs\beta_j^{(m)} \rho_j^{(m)}  )}{\sum_{i,t} \pi_{i,j}^{(m)}\gamma_{jk,it}^{(m)} (1 - \rho_j^{(m)}) }\quad\text{   for $k=1,...,K$.}
   \]
   and 
  \[
   \mu_{1,jk}^{(m+1)} = \frac{\sum_{i} \pi_{i,j}^{(m)}\gamma_{jk,i1}^{(m)}( Y_{i1} - \bs X_{i1}\t \bs\beta_{1,j}^{(m)}  )}{\sum_{i,t} \pi_{i,j}^{(m)}\gamma_{jk,i1}^{(m)}}\quad\text{   for $k=1,...,K$.}
   \]

\item  Given  $\bs\mu_{j}^{(m+1)}$, $\mu_{1,jk}^{(m+1)}$ and $\rho_{j}^{(m)}$, update $\bs\beta_j$ and $\bs\beta_{1,j}$ as
\begin{align*}
\bs\beta_j^{(m+1)} = \arg\min_{\bs\beta_j} \sum_{j=1}^M \sum_{k=1}^{\mathcal{K}} & \sum_{i=1}^n \sum_{t=2}^T 
\pi_{i,j}^{(m)} \gamma_{jk,it}^{(m)} \\
& \times
\Big[
 Y_{it} - \rho_{j}^{(m)} Y_{i,t-1} - (1 - \rho_{j}^{(m)}) \mu_{jk}^{(m+1)} - \left( \bs X_{it}^\top - \rho_j^{(m)} \bs X_{i,t-1}^\top \right) \bs\beta_j
\Big]^2
\end{align*}

and
\[
\bs\beta_{1,j}^{(m+1)} = \arg\min_{\bs\beta_{1,j}} \sum_{j=1}^M \sum_{k=1}^{\mathcal{K}} \sum_{i=1}^n 
\pi_{i,j}^{(m)} \gamma_{jk,i1}^{(m)} 
\left( Y_{i1} - \mu_{1,jk}^{(m+1)} - \bs X_{i1}^\top \bs\beta_{1,j} \right)^2.
\]

\item Given $\bs\beta_j^{(m+1)}$ and $\mu_{jk}^{(m+1)}$, update $\rho_j$ 
\begin{align*}
&\rho_j^{(m+1)} =  \arg\min_{\rho_j} \sum_{j=1}^M \sum_{k=1}^{\mathcal{K}} \sum_{i=1}^n \sum_{t=2}^T 
\pi_{i,j}^{(m)} \gamma_{jk,it}^{(m)} \\
&\qquad \times \left[
 \big(Y_{it} - \mu_{jk}^{(m+1)} - \bs X_{it}^\top  \bs\beta_j^{(m+1)}\big) - \rho_j \big(Y_{it-1} - \mu_{jk}^{(m+1)} - \bs X_{it-1}^\top  \bs\beta_j^{(m+1)}\big)
\right]^2.
\end{align*}

\item Update $\sigma_{j}^2$ and $\sigma_{1,j}^2$ As

\scalebox{0.85}{%
\parbox{\linewidth}{%
\begin{align*}
\sigma_j^{(m+1)2} =  
  \frac{
  \sum_{i=1}^n \sum_{t=2}^T \sum_{k=1}^{\mathcal{K}} \pi_{i,j}^{(m)} \gamma_{jk,it}^{(m)}
  \left(
    Y_{it} - Y_{i,t-1}\rho_j^{(m+1)}
    - \mu_{jk}^{(m+1)} (1 - \rho_j^{(m+1)})
    - \bs X_{it}^\top \bs\beta_j^{(m+1)} + \bs X_{i,t-1}^\top \bs\beta_j^{(m+1)} \rho_j^{(m+1)}
  \right)^2
  }{
  \sum_{i=1}^n \pi_{i,j}^{(m)} (T-1)
  }
\end{align*}
}%
}
and 
\[ \sigma_{1,j}^{(m+1)2} =  
  \frac{
  \displaystyle
  \sum_{i=1}^n \sum_{k=1}^{\mathcal{K}} \pi_{i,j}^{(m)} \gamma_{jk,i1}^{(m)}
  \left(
    Y_{i1} - \mu_{1,jk}^{(m+1)}
    - \bs X_{i1}^\top \bs\beta_{1,j}^{(m+1)}
  \right)^2
  }{
  \displaystyle
  \sum_{i=1}^n \sum_{k=1}^{\mathcal{K}} \pi_{i,j}^{(m)}
  }
\]
\end{enumerate}

We iterate between the E-step and the M-step until convergence.

\newpage

\section{Additional Tables and Figures}
\label{sec:additional}

%Tables \ref{tab:sequential_test_ar1_normal} and \ref{tab:sequential_test_ar1_mixture} present results for datasets generated from an estimated three-component model using data from the Chilean fabricated metal industry, where the error terms follow AR(1) processes whose innovations are drawn from either a normal distribution or a two-component normal mixture, as defined by equations (\ref{eq:fm-dynamic}) with (\ref{eq:f1-dynamic}) or (\ref{eq:f1-dynamic-mixture}), respectively, setting $\mathcal{K}=2$. Estimated parameter values are detailed in the notes accompanying each table.
%
% \textbf{[To Jasmine: What is the true number of components for AR(1) specification? $M_0=2$? If so, Table \ref{tab:sequential_test_ar1_normal} show that all tests over-estimate the number of components. Table \ref{tab:sequential_test_ar1_mixture} show that the correctly specified LRT underestimate the number of components. ]}
%
\begin{table}[h!]
    \centering
    \caption{Selection frequencies for estimated number of components when data are generated from a two-component AR(1) mixture model with normal innovations   ($M_0=2$,  $\bs{\mathcal{K}=1}$, $\bs{n=225}$, $T=5$)}
    \label{tab:sequential_test_ar1_normal}
    
        \begin{tabular*}{0.9\linewidth}{@{\extracolsep{\fill}}l | cccccc@{}}
                    \toprule
                     \textbf{Methods ($q_n$, Error Density) }   & \textbf{M=1} & \textbf{M=2} & \textbf{M=3} & \textbf{M=4} & \textbf{M=5} & \textbf{M=6} \\
                        \midrule
    %            \textbf{\underline{Correctly-specified density}}\\
                    \textbf{AIC (Normal)} & 0 & 4 & 22 & 20 & 23 & 31 \\
                    \textbf{BIC (Normal)} & 0 & 98 & 2 & 0 & 0 & 0 \\
                    \textbf{LR ($0.01$, Normal)} & 0 & 98 & 2 & 0 & 0 & 0 \\
                    \textbf{LR ($0.05$, Normal)} & 0 & 95 & 5 & 0 & 0 & 0 \\
                    \textbf{LR ($0.10$, Normal)} & 0 & 90 & 10 & 0 & 0 & 0 \\  \midrule
  %              \textbf{\underline{Over-specified density}}\\
                    \textbf{AIC (Mixture)} & 0 & 29 & 38 & 23 & 7 & 3 \\
                    \textbf{BIC (Mixture)} & 1 & 80 & 19 & 0 & 0 & 0 \\
                    \textbf{LR ($0.01$, Mixture)} & 5 & 88 & 7 & 0 & 0 & 0 \\
                    \textbf{LR ($0.05$, Mixture)} & 2 & 90 & 8 & 0 & 0 & 0 \\
                    \textbf{LR ($0.10$, Mixture)} & 1 & 90 & 9 & 0 & 0 & 0 \\  \midrule
                    \textbf{ave-rk ($0.05$)} & 0 & 84 & 16 & 0 & 0 & 0 \\
                    \textbf{max-rk ($0.05$)} & 3 & 92 & 5 & 0 & 0 & 0 \\
                    \bottomrule
                    \end{tabular*}
                \begin{tablenotes}
                     \footnotesize
                        \textbf{Notes:}
                         Based on 100 simulation repetitions, each simulated dataset consists of $(n,T)$ panel observations with $n=225$ and $T=5$. The datasets are generated from the two-component model (\ref{eq:fm-dynamic}) with normal error density (\ref{eq:f1-dynamic})
                          without covariates, using parameter values estimated from Chilean fabricated metal products industry data: $\bs\alpha=[0.359, 0.641]$; $\bs\mu=[-0.883, -0.419]$; $\bs\rho=[0.472, 0.579]$; $\bs\sigma= [0.465, 0.208]$;  $\bs\mu_{0}=[-0.991, -0.499]$; $\bs\sigma_{0}=[0.476, 0.269]$. The simulation assumes that $\alpha$ (mixing probabilities) is bounded between 0.05 and 0.95, and $\tau$ (within-component mixing probabilities) is bounded between 0.05 and 0.95.  The true number of components is $M_0=2$. The reported values represent the percentages of simulations in which each criterion—AIC, BIC, ave-rk, max-rk, and LRTs—selected a given number of components.
                \end{tablenotes}
        
    \end{table}

\begin{table}%[h!]
    \centering
    \caption{Selection frequencies for estimated number of components when data are generated from a two-component AR(1) mixture model with normal mixture innovations  ($M_0=2$,  $\bs{\mathcal{K}=2}$, $\bs{n=225}$, $T=5$)}
      \label{tab:sequential_test_ar1_mixture}
      \begin{tabular*}{0.9\linewidth}{@{\extracolsep{\fill}}l | cccccc@{}}
            \toprule
          \textbf{Methods ($q_n$, Error Density) } & \textbf{M=1} & \textbf{M=2} & \textbf{M=3} & \textbf{M=4} & \textbf{M=5} & \textbf{M=6} \\
          \midrule
    %    \textbf{\underline{Under-specified density}}\\
          \textbf{AIC (Normal)} & 0 & 18 & 30 & 28 & 14 & 10 \\
          \textbf{BIC (Normal)} & 0 & 97 & 3 & 0 & 0 & 0 \\
          \textbf{LR ($0.01$, Normal)} & 0 & 97 & 3 & 0 & 0 & 0 \\
          \textbf{LR ($0.05$, Normal)} & 0 & 92 & 8 & 0 & 0 & 0 \\
          \textbf{LR ($0.10$, Normal)} & 0 & 87 & 12 & 1 & 0 & 0 \\  \midrule
 %       \textbf{\underline{Correctly-specified density}}\\
          \textbf{AIC (Mixture)} & 0 & 45 & 38 & 15 & 1 & 1 \\
          \textbf{BIC (Mixture)} & 0 & 83 & 16 & 1 & 0 & 0 \\
          \textbf{LR ($0.01$, Mixture)} & 8 & 87 & 5 & 0 & 0 & 0 \\
          \textbf{LR ($0.05$, Mixture)} & 4 & 90 & 6 & 0 & 0 & 0 \\
          \textbf{LR ($0.10$, Mixture)} & 1 & 90 & 8 & 1 & 0 & 0 \\   \midrule
          \textbf{ave-rk ($0.05$)} & 0 & 76 & 24 & 0 & 0 & 0 \\
          \textbf{max-rk ($0.05$)} & 0 & 89 & 11 & 0 & 0 & 0 \\
          \bottomrule
        \end{tabular*}
        \begin{tablenotes}
             \footnotesize
          \textbf{Notes:}
           Based on 100 simulation repetitions, each simulated dataset consists of $(n,T)$ panel observations with $n=225$ and $T=5$. The datasets are generated from the two-component model (\ref{eq:fm-dynamic}) with normal mixture error density (\ref{eq:f1-dynamic-mixture})  without covariates, using parameter values estimated from Chilean fabricated metal products industry data: $\bs\alpha= [0.314, 0.686]$; $\bs\tau =  [[0.159, 0.841], [0.626, 0.374]]$ $\bs\mu=[[-1.369, -0.874], [-0.435, -0.409]]$; $\bs\rho= [0.342, 0.565]$; $\bs\sigma = [0.463, 0.217]$; $\bs\mu_0= [[-1.047, -1.038], [-0.512, -0.504]]$;  $\bs\sigma_0= [0.466, 0.281]$.  The simulation assumes that $\alpha$ (mixing probabilities) is bounded between 0.05 and 0.95, and $\tau$ (within-component mixing probabilities) is bounded between 0.05 and 0.95.  The true number of components is $M_0=2$. The reported values represent the percentages of simulations in which each criterion—AIC, BIC, ave-rk, max-rk, and LRTs—selected a given number of components.
        \end{tablenotes}
      \end{table}

      \begin{table}[h!]
    \centering
    \caption{Selection frequencies for estimated number of components when data are generated from a two-component factor-augmented technology AR(1) mixture model with normal innovations ($M_0=2$,  $\bs{\mathcal{K}=1}$, $\bs{n=103}$, $T=5$)}
    \label{tab:sequential_test_ar1_lat_normal}
    
        \begin{tabular*}{0.9\linewidth}{@{\extracolsep{\fill}}l | cccccc@{}}
                    \toprule
                     \textbf{Methods ($q_n$, Error Density) }   & \textbf{M=1} & \textbf{M=2} & \textbf{M=3} & \textbf{M=4} & \textbf{M=5} & \textbf{M=6} \\
                        \midrule
    %            \textbf{\underline{Correctly-specified density}}\\
                    \textbf{AIC (Normal)} & 0 & 67 & 27 & 6 & 0 & 0 \\
                    \textbf{BIC (Normal)} & 0 & 98 & 2 & 0 & 0 & 0 \\
                    \textbf{AIC (Mixture)} & 0 & 4 & 6 & 20 & 13 & 57 \\
                    \textbf{BIC (Mixture)} & 0 & 55 & 33 & 10 & 2 & 0 \\
                    \textbf{LR ($0.01$, Normal)} & 0 & 99 & 1 & 0 & 0 & 0 \\
                    \textbf{LR ($0.05$, Normal)} & 0 & 93 & 7 & 0 & 0 & 0 \\
                    \textbf{LR ($0.10$, Normal)} & 0 & 88 & 12 & 0 & 0 & 0 \\
                    \textbf{LR ($0.01$, Mixture)} & 6 & 93 & 1 & 0 & 0 & 0 \\
                    \textbf{LR ($0.05$, Mixture)} & 4 & 95 & 1 & 0 & 0 & 0 \\
                    \textbf{LR ($0.10$, Mixture)} & 2 & 94 & 4 & 0 & 0 & 0 \\
                    \textbf{ave-rk ($0.05$)} & 95 & 5 & 0 & 0 & 0 & 0 \\
                    \textbf{max-rk ($0.05$)} & 100 & 0 & 0 & 0 & 0 & 0 \\
                    \bottomrule
                    \end{tabular*}
                    \begin{tablenotes}
                      \footnotesize
                      \textbf{Notes:}
                      Based on 100 simulation repetitions, each simulated dataset consists of $(n,T)$ panel observations with $n=103$ and $T=5$. The datasets are generated from the two-component model (\ref{eq:fm-dynamic}) with normal error density (\ref{eq:f1-dynamic}) without covariates, using parameter values: $\bs\alpha = [0.39, 0.61]$; $\bs\mu = [-0.442 -0.123]$; $\bs\rho = [ 0.582 , 0.771 ]$; $\bs\sigma = [0.419, 0.183]$; $\bs\mu_0 = [-0.947, -0.504]$; $\bs\sigma_0 = [0.538, 0.324]$. The simulation assumes that $\alpha$ (mixing probabilities) is bounded between 0.05 and 0.95, and $\tau$ (within-component mixing probabilities) is bounded between 0.05 and 0.95. The true number of components is $M_0=2$. The reported values represent the percentages of simulations in which each criterion—AIC, BIC, ave-rk, max-rk, and LRTs—selected a given number of components.
                    \end{tablenotes}
        
    \end{table}

\begin{table}%[h!]
    \centering
    \caption{Selection frequencies for estimated number of components when data are generated from a two-component factor-augmented technology AR(1) mixture model with normal mixture innovations  ($M_0=2$,  $\bs{\mathcal{K}=2}$, $\bs{n=103}$, $T=5$)}
      \label{tab:sequential_test_ar1_lat_mixture}
      \begin{tabular*}{0.9\linewidth}{@{\extracolsep{\fill}}l | cccccc@{}}
            \toprule
          \textbf{Methods ($q_n$, Error Density) } & \textbf{M=1} & \textbf{M=2} & \textbf{M=3} & \textbf{M=4} & \textbf{M=5} & \textbf{M=6} \\
          \midrule
    %    \textbf{\underline{Under-specified density}}\\
          \textbf{AIC (Normal)} & 0 & 13 & 51 & 28 & 8 & 0 \\
          \textbf{BIC (Normal)} & 2 & 27 & 64 & 7 & 0 & 0 \\
          \textbf{LR ($0.01$, Normal)} & 2 & 28 & 68 & 2 & 0 & 0 \\
          \textbf{LR ($0.05$, Normal)} & 2 & 23 & 66 & 9 & 0 & 0 \\
          \textbf{LR ($0.10$, Normal)} & 2 & 21 & 66 & 11 & 0 & 0 \\  \midrule
          \textbf{AIC (Mixture)} & 0 & 3 & 6 & 18 & 25 & 48 \\
          \textbf{BIC (Mixture)} & 2 & 14 & 43 & 33 & 8 & 0 \\
          \textbf{LR ($0.01$, Mixture)} & 16 & 58 & 25 & 1 & 0 & 0 \\
          \textbf{LR ($0.05$, Mixture)} & 14 & 54 & 26 & 6 & 0 & 0 \\
          \textbf{LR ($0.10$, Mixture)} & 14 & 43 & 33 & 10 & 0 & 0 \\   \midrule
          \textbf{ave-rk ($0.05$)} & 90 & 10 & 0 & 0 & 0 & 0 \\
          \textbf{max-rk ($0.05$)} & 100 & 0 & 0 & 0 & 0 & 0 \\     \bottomrule  \end{tabular*}
        \begin{tablenotes}
            \footnotesize
                   \textbf{Notes:}
                        Based on 100 simulation repetitions, each simulated dataset consists of $(n,T)$ panel observations with $n=103$ and $T=5$. The datasets are generated from the two-component model (\ref{eq:fm-dynamic}) with normal mixture error density (\ref{eq:f1-dynamic-mixture}) without covariates, using the following parameter values: $\bs\alpha= [0.951, 0.049]$; $\bs\tau = [[0.751, 0.249], [0.5, 0.5]]$; $\bs\mu=[[-0.143, -0.143], [-1.098, -1.098]]$; $\bs\sigma = [0.287, 0.692]$; $\rho = [0.781, 0.195]$; $\bs\mu_0 = [[-0.502, -1.082], [-3.189, -0.998]]$; $\bs\sigma_0 = [0.314, 0.33]$. The simulation assumes that $\alpha$ (mixing probabilities) is bounded between 0.05 and 0.95, and $\tau$ (within-component mixing probabilities) is bounded between 0.05 and 0.95. The true number of components is $M_0=2$. The reported values represent the percentages of simulations in which each criterion—AIC, BIC, ave-rk, max-rk, and LRTs—selected a given number of components.
        \end{tablenotes}
      \end{table}
      
        \begin{table}%[h!]
          \centering 
      \caption{
        Sensitivity analysis: Selection frequencies for estimated number of components ($M$) in simulated data from a three-component normal mixture panel model ($M_0=3$, $\mathcal{K}=1$, $n=225$, $T=3$) \textbf{under different $\alpha$ bounds}
      } \label{tab:sensitivity-1}
      \begin{tabular*}{\textwidth}{@{\extracolsep{\fill}}l|cccccc@{}}
        \toprule
        \textbf{Methods ($q_n$, Error Density)} & \textbf{M=1} & \textbf{M=2} & \textbf{M=3} & \textbf{M=4} & \textbf{M=5} & \textbf{M=6} \\
        \midrule
        \multicolumn{7}{l}{\textbf{Panel A:} \textnormal{Parameter bounds: $\tau \in [0.05, 0.95]$, $\alpha \in [0.1, 0.9]$, $\sigma \geq 0.05 \times$ sample variance}} \\
        \midrule
          \textbf{AIC (Normal)}                & 0 & 0 & 86 & 12 & 2 & 0 \\
          \textbf{BIC (Normal)}                & 0 & 0 & 100 & 0 & 0 & 0 \\
          \textbf{LR (1\%)}           & 0 & 0 & 98 & 2 & 0 & 0 \\
          \textbf{LR (5\%)}           & 0 & 0 & 94 & 5 & 1 & 0 \\
          \textbf{LR (10\%)}          & 0 & 0 & 89 & 9 & 2 & 0 \\
          \midrule
   %       \textbf{\underline{Over-specified density}}\\
          \textbf{AIC (Mixture)}      & 0 & 0 & 75 & 23 & 2 & 0 \\
          \textbf{BIC (Mixture)}      & 0 & 0 & 100 & 0 & 0 & 0 \\
          \textbf{LR (1\% Mixture)}   & 0 & 0 & 100 & 0 & 0 & 0 \\
          \textbf{LR (5\% Mixture)}   & 0 & 0 & 90 & 10 & 0 & 0 \\
          \textbf{LR (10\% Mixture)}  & 0 & 0 & 83 & 16 & 1 & 0 \\      \midrule
        \multicolumn{7}{l}{\textbf{Panel B:} \textnormal{Parameter bounds: $\tau \in [0.05, 0.95]$, $\alpha \in [0.01, 0.99]$, $\sigma \geq 0.05 \times$ sample variance}} \\
        \midrule
        \textbf{AIC (Normal)}         & 0 & 0 & 57 & 36 & 6 & 1 \\
        \textbf{BIC (Normal)}         & 0 & 0 & 100 & 0 & 0 & 0 \\
        \textbf{LR ($0.01$, Normal)}  & 0 & 0 & 97 & 3 & 0 & 0 \\
        \textbf{LR ($0.05$, Normal)}  & 0 & 0 & 97 & 3 & 0 & 0 \\
        \textbf{LR ($0.10$, Normal)}  & 0 & 0 & 94 & 6 & 0 & 0 \\
        \midrule
  %      \textbf{\underline{Over-specified density}}\\
        \textbf{AIC (Mixture)}        & 0 & 0 & 54 & 36 & 9 & 1 \\
        \textbf{BIC (Mixture)}        & 0 & 0 & 99 & 1 & 0 & 0 \\
        \textbf{LR ($0.01$, Mixture)} & 0 & 0 & 99 & 1 & 0 & 0 \\
        \textbf{LR ($0.05$, Mixture)} & 0 & 0 & 97 & 3 & 0 & 0 \\
        \textbf{LR ($0.10$, Mixture)} & 0 & 0 & 88 & 12 & 0 & 0 \\
        \bottomrule  \end{tabular*}
      \begin{tablenotes}
        \footnotesize
        \textbf{Notes:} Based on 100 simulation repetitions. Each simulated dataset consists of $n=225$ units and $T=3$ periods, generated from the same DGP as Table~\ref{tab:empirical_test_normal}. Entries show the proportion of simulations in which each method selected $M$ components (in percent).
      \end{tablenotes}

      \end{table}

      \begin{table}%[h!]
          \centering

      \caption{
        Sensitivity analysis: Selection frequencies for estimated number of components ($M$) in simulated data from a three-component normal mixture panel model ($M_0=3$, $\mathcal{K}=1$, $n=225$, $T=3$) \textbf{under different $\tau$ bounds}
      }\label{tab:sensitivity-2}
      \begin{tabular*}{\textwidth}{@{\extracolsep{\fill}}l|cccccc@{}}
        \toprule
        \textbf{Methods ($q_n$, Error Density)} & \textbf{M=1} & \textbf{M=2} & \textbf{M=3} & \textbf{M=4} & \textbf{M=5} & \textbf{M=6} \\
        \midrule
        \multicolumn{7}{l}{\textbf{Panel A:} \textnormal{Parameter bounds: $\tau \in [0.1, 0.9]$, $\alpha \in [0.05, 0.95]$, $\sigma \geq 0.05 \times$ sample variance}} \\
        \midrule
    %    \textbf{\underline{Correctly-specified density}}\\
        \textbf{AIC (Normal)} & 0 & 0 & 81 & 17 & 2 & 0 \\
        \textbf{BIC (Normal)} & 0 & 0 & 100 & 0 & 0 & 0 \\
        \textbf{LR ($0.01$, Normal)} & 0 & 0 & 98 & 2 & 0 & 0 \\
        \textbf{LR ($0.05$, Normal)} & 0 & 0 & 97 & 3 & 0 & 0 \\
        \textbf{LR ($0.10$, Normal)} & 0 & 0 & 92 & 7 & 1 & 0 \\ \midrule
  %      \textbf{\underline{Over-specified density}}\\
        \textbf{AIC (Mixture)} & 0 & 0 & 73 & 22 & 4 & 1 \\
        \textbf{BIC (Mixture)} & 0 & 0 & 100 & 0 & 0 & 0 \\
        \textbf{LR ($0.01$, Mixture)} & 0 & 0 & 100 & 0 & 0 & 0 \\
        \textbf{LR ($0.05$, Mixture)} & 0 & 0 & 93 & 7 & 0 & 0 \\
        \textbf{LR ($0.10$, Mixture)} & 0 & 0 & 87 & 13 & 0 & 0 \\
        \midrule
        \multicolumn{7}{l}{\textbf{Panel B:} \textnormal{Parameter bounds: $\tau \in [0.01, 0.99]$, $\alpha \in [0.05, 0.95]$, $\sigma \geq 0.05 \times$ sample variance}} \\
        \midrule
        \textbf{AIC (Normal)} & 0 & 0 & 80 & 18 & 2 & 0 \\
        \textbf{BIC (Normal)} & 0 & 0 & 100 & 0 & 0 & 0 \\
        \textbf{LR ($0.01$, Normal)} & 0 & 0 & 98 & 2 & 0 & 0 \\
        \textbf{LR ($0.05$, Normal)} & 0 & 0 & 98 & 2 & 0 & 0 \\
        \textbf{LR ($0.10$, Normal)} & 0 & 0 & 92 & 7 & 1 & 0 \\ \midrule
    %    \textbf{\underline{Over-specified density}}\\
        \textbf{AIC (Mixture)} & 0 & 0 & 61 & 28 & 10 & 1 \\
        \textbf{BIC (Mixture)} & 0 & 0 & 99 & 1 & 0 & 0 \\
        \textbf{LR ($0.01$, Mixture)} & 0 & 0 & 100 & 0 & 0 & 0 \\
        \textbf{LR ($0.05$, Mixture)} & 0 & 0 & 97 & 3 & 0 & 0 \\
        \textbf{LR ($0.10$, Mixture)} & 0 & 0 & 91 & 9 & 0 & 0 \\
        \bottomrule
      \end{tabular*}
      \begin{tablenotes}
        \footnotesize
        \textbf{Notes:} Based on 100 simulation repetitions. Each simulated dataset consists of $n=225$ units and $T=3$ periods, generated from the same DGP as Table~\ref{tab:empirical_test_normal}. Entries show the proportion of simulations in which each method selected $M$ components (in percent).
      \end{tablenotes}

        \end{table}

         \begin{table}%[h!]
        \centering 
          \caption{
      Sensitivity analysis: Selection frequencies for estimated number of components ($M$) in simulated data from a three-component normal mixture panel model ($M_0=3$, $\mathcal{K}=1$, $n=225$, $T=3$) \textbf{under different $\sigma$ bounds}
          }\label{tab:sensitivity-3}
          \begin{tabular*}{\textwidth}{@{\extracolsep{\fill}}l|cccccc@{}}
      \toprule
      \textbf{Methods ($q_n$, Error Density)} & \textbf{M=1} & \textbf{M=2} & \textbf{M=3} & \textbf{M=4} & \textbf{M=5} & \textbf{M=6} \\
      \midrule
      \multicolumn{7}{l}{\textbf{Panel A:} \textnormal{Parameter bounds: $\tau \in [0.05, 0.95]$, $\alpha \in [0.05, 0.95]$, $\sigma \geq 0.1 \times$ sample variance}} \\
      \midrule
    %  \textbf{\underline{Correctly-specified density}}\\
      \textbf{AIC (Normal)} & 0 & 0 & 81 & 17 & 2 & 0 \\
      \textbf{BIC (Normal)} & 0 & 0 & 100 & 0 & 0 & 0 \\
      \textbf{LR ($0.01$, Normal)} & 0 & 0 & 98 & 2 & 0 & 0 \\
      \textbf{LR ($0.05$, Normal)} & 0 & 0 & 97 & 3 & 0 & 0 \\
      \textbf{LR ($0.10$, Normal)} & 0 & 0 & 92 & 7 & 1 & 0 \\ \midrule
 %     \textbf{\underline{Over-specified density}}\\
      \textbf{AIC (Mixture)} & 0 & 0 & 67 & 28 & 5 & 0 \\
      \textbf{BIC (Mixture)} & 0 & 0 & 100 & 0 & 0 & 0 \\
      \textbf{LR ($0.01$, Mixture)} & 0 & 0 & 100 & 0 & 0 & 0 \\
      \textbf{LR ($0.05$, Mixture)} & 0 & 0 & 96 & 4 & 0 & 0 \\
      \textbf{LR ($0.10$, Mixture)} & 0 & 0 & 87 & 13 & 0 & 0 \\
      \midrule
      \multicolumn{7}{l}{\textbf{Panel B:} \textnormal{Parameter bounds: $\tau \in [0.05, 0.95]$, $\alpha \in [0.05, 0.95]$, $\sigma \geq 0.01 \times$ sample variance}} \\
      \midrule
      \textbf{AIC (Normal)} & 0 & 0 & 83 & 15 & 2 & 0 \\
      \textbf{BIC (Normal)} & 0 & 0 & 100 & 0 & 0 & 0 \\
      \textbf{LR ($0.01$, Normal)} & 0 & 0 & 98 & 2 & 0 & 0 \\
      \textbf{LR ($0.05$, Normal)} & 0 & 0 & 96 & 4 & 0 & 0 \\
      \textbf{LR ($0.10$, Normal)} & 0 & 0 & 93 & 4 & 3 & 0 \\ \midrule
    %  \textbf{\underline{Over-specified density}}\\
      \textbf{AIC (Mixture)} & 0 & 0 & 63 & 30 & 7 & 0 \\
      \textbf{BIC (Mixture)} & 0 & 0 & 100 & 0 & 0 & 0 \\
      \textbf{LR ($0.01$, Mixture)} & 0 & 0 & 100 & 0 & 0 & 0 \\
      \textbf{LR ($0.05$, Mixture)} & 0 & 0 & 95 & 4 & 0 & 1 \\
      \textbf{LR ($0.10$, Mixture)} & 0 & 0 & 89 & 10 & 0 & 1 \\
      \bottomrule
          \end{tabular*}
          \begin{tablenotes}
      \footnotesize
      \textbf{Notes:} Based on 100 simulation repetitions. Each simulated dataset consists of $n=225$ units and $T=3$ periods, generated from the same DGP as Table~\ref{tab:empirical_test_normal}. Entries show the proportion of simulations in which each method selected $M$ components (in percent).
          \end{tablenotes}

      \end{table}

\begin{table}[h!]
    \centering
    \caption{Selection frequencies when data generated from an estimated three-component model with \textbf{normal error density} ($M_0=3$, $\bs{\mathcal{K}=1}$, $T=3$) for food and textile industries} \label{tab:empirical_test_normal_food_textile}
    \begin{threeparttable}
            
            \begin{tabular*}{0.9\linewidth}{@{\extracolsep{\fill}}l | cccccc@{}}
                \toprule
             \multicolumn{7}{c}{ \textbf{Panel A:\  Food industry}}\\  \midrule
               \textbf{Methods ($q_n$, Error Density) }  & \textbf{M=1} & \textbf{M=2} & \textbf{M=3} & \textbf{M=4} & \textbf{M=5} & \textbf{M=6} \\
                \midrule
              %  \textbf{\underline{Correctly-specified density}}&\multicolumn{6}{c}{}\\
                \textbf{AIC (Normal)} & 0 & 0 & 77 & 20 & 2 & 1 \\
                \textbf{BIC (Normal)} & 0 & 0 & 100 & 0 & 0 & 0 \\
                \textbf{LR ($0.01$, Normal)} & 0 & 0 & 99 & 0 & 1 & 0 \\
                \textbf{LR ($0.05$, Normal)} & 0 & 0 & 94 & 5 & 0 & 1 \\
                \textbf{LR ($0.10$, Normal)} & 0 & 0 & 84 & 13 & 2 & 1 \\
                \midrule
                \textbf{AIC (Mixture)} & 0 & 0 & 74 & 22 & 3 & 1 \\
                \textbf{BIC (Mixture)} & 0 & 0 & 100 & 0 & 0 & 0 \\
                \textbf{LR ($0.01$, Mixture)} & 0 & 0 & 98 & 2 & 0 & 0 \\
                \textbf{LR ($0.05$, Mixture)} & 0 & 0 & 95 & 5 & 0 & 0 \\
                \textbf{LR ($0.10$, Mixture)} & 0 & 0 & 89 & 10 & 0 & 1 \\ 
                \midrule
                \textbf{ave-rk ($0.05$)} & 0 & 93 & 7 & 0 & 0 & 0 \\
                \textbf{max-rk ($0.05$)} & 0 & 95 & 5 & 0 & 0 & 0 \\
                \midrule
             \multicolumn{7}{c}{ \textbf{Panel B: \ Textile industry}}\\  \midrule
               \textbf{Methods ($q_n$, Error Density) }  & \textbf{M=1} & \textbf{M=2} & \textbf{M=3} & \textbf{M=4} & \textbf{M=5} & \textbf{M=6} \\
                \midrule 
              \textbf{AIC (Normal)} & 0 & 0 & 79 & 18 & 3 & 0 \\
              \textbf{BIC (Normal)} & 0 & 0 & 99 & 1 & 0 & 0 \\
              \textbf{LR ($0.01$, Normal)} & 0 & 0 & 97 & 3 & 0 & 0 \\
              \textbf{LR ($0.05$, Normal)} & 0 & 0 & 96 & 3 & 1 & 0 \\
              \textbf{LR ($0.10$, Normal)} & 0 & 0 & 91 & 8 & 1 & 0 \\ \midrule
              \textbf{AIC (Mixture)} & 0 & 0 & 68 & 25 & 7 & 0 \\
              \textbf{BIC (Mixture)} & 0 & 0 & 100 & 0 & 0 & 0 \\
              \textbf{LR ($0.01$, Mixture)} & 0 & 2 & 97 & 1 & 0 & 0 \\
              \textbf{LR ($0.05$, Mixture)} & 0 & 0 & 92 & 8 & 0 & 0 \\
              \textbf{LR ($0.10$, Mixture)} & 0 & 0 & 88 & 12 & 0 & 0 \\  \midrule
              \textbf{ave-rk ($0.05$)} & 0 & 96 & 4 & 0 & 0 & 0 \\
              \textbf{max-rk ($0.05$)} & 0 & 96 & 4 & 0 & 0 & 0 \\
                \bottomrule
            \end{tabular*}
            \begin{tablenotes}
                \footnotesize
\textbf{Notes:}  Based on 100 simulation repetitions. Each simulated dataset consists of $(n,T)$ panel observations with $T=3$. For the food industry, $n=794$ and the true parameters are: mixing probabilities $\bs\alpha = [0.171, 0.457, 0.372]$, means $\bs\mu = [-0.917, -0.571, -0.341]$, and standard deviations $\bs\sigma = [0.533, 0.127, 0.125]$. For the textile industry, $n=196$ and the true parameters are: mixing probabilities $\bs\alpha = [0.081, 0.418, 0.5]$, means $\bs\mu = [-1.549, -0.835, -0.348]$, and standard deviations $\bs\sigma = [0.855, 0.248, 0.216]$. No covariates are included. The true number of components is $M_0=3$. The reported values represent the percentages of simulations in which each criterion—AIC, BIC, ave-rk, max-rk, and LRTs—selected a given number of components.
        \end{tablenotes}

    \end{threeparttable}
\end{table}

\begin{table}[h!]
    \centering
    \caption{Selection frequencies when data generated from an estimated three-component model with \textbf{mixture error density} ($M_0=3$, $\bs{\mathcal{K}=2}$, $T=3$) for food and textile industries}\label{tab:empirical_test_mixture_food_textile}
    \begin{threeparttable}
            
            \begin{tabular*}{0.9\linewidth}{@{\extracolsep{\fill}}l | cccccc@{}}
                \toprule
             \multicolumn{7}{c}{\textbf{Panel A:\  Food industry}}\\  \midrule
               \textbf{Methods ($q_n$, Error Density) }  & \textbf{M=1} & \textbf{M=2} & \textbf{M=3} & \textbf{M=4} & \textbf{M=5} & \textbf{M=6} \\
                \midrule
              %  \textbf{\underline{Correctly-specified density}}&\multicolumn{6}{c}{}\\
                \textbf{AIC (Normal)} & 0 & 0 & 43 & 50 & 7 & 0 \\
                \textbf{BIC (Normal)} & 0 & 0 & 95 & 5 & 0 & 0 \\
                \textbf{LR ($0.01$, Normal)} & 0 & 0 & 77 & 22 & 1 & 0 \\
                \textbf{LR ($0.05$, Normal)} & 0 & 0 & 65 & 34 & 1 & 0 \\
                \textbf{LR ($0.10$, Normal)} & 0 & 0 & 51 & 45 & 4 & 0 \\  \midrule 
                \textbf{AIC (Mixture)} & 0 & 0 & 80 & 18 & 2 & 0 \\
                \textbf{BIC (Mixture)} & 0 & 0 & 100 & 0 & 0 & 0 \\ 
                \textbf{LR ($0.01$, Mixture)} & 1 & 0 & 98 & 1 & 0 & 0 \\
                \textbf{LR ($0.05$, Mixture)} & 0 & 0 & 93 & 6 & 1 & 0 \\
                \textbf{LR ($0.10$, Mixture)} & 0 & 0 & 88 & 11 & 1 & 0 \\ 
                \midrule
                \textbf{ave-rk ($0.05$)} & 0 & 93 & 7 & 0 & 0 & 0 \\
                \textbf{max-rk ($0.05$)} & 0 & 93 & 7 & 0 & 0 & 0 \\
            \midrule
            \multicolumn{7}{c}{\textbf{Panel B: \ Textile industry}}\\  \midrule
               \textbf{Methods ($q_n$, Error Density) }  & \textbf{M=1} & \textbf{M=2} & \textbf{M=3} & \textbf{M=4} & \textbf{M=5} & \textbf{M=6} \\
                     \midrule 
                      \textbf{AIC (Normal)} & 0 & 0 & 76 & 22 & 2 & 0 \\
                      \textbf{BIC (Normal)} & 0 & 0 & 98 & 1 & 1 & 0 \\
                      \textbf{LR ($0.01$, Normal)} & 0 & 0 & 97 & 2 & 1 & 0 \\
                      \textbf{LR ($0.05$, Normal)} & 0 & 0 & 89 & 10 & 1 & 0 \\
                      \textbf{LR ($0.10$, Normal)} & 0 & 0 & 85 & 14 & 1 & 0 \\ \midrule
                      \textbf{AIC (Mixture)} & 0 & 0 & 76 & 20 & 4 & 0 \\
                      \textbf{BIC (Mixture)} & 0 & 0 & 99 & 1 & 0 & 0 \\
                      \textbf{LR ($0.01$, Mixture)} & 0 & 2 & 95 & 2 & 1 & 0 \\
                      \textbf{LR ($0.05$, Mixture)} & 0 & 0 & 94 & 5 & 1 & 0 \\
                      \textbf{LR ($0.10$, Mixture)} & 0 & 0 & 88 & 12 & 0 & 0 \\  \midrule
                      \textbf{ave-rk ($0.05$)} & 0 & 96 & 4 & 0 & 0 & 0 \\
                      \textbf{max-rk ($0.05$)} & 0 & 96 & 4 & 0 & 0 & 0 \\
                      \bottomrule
            \end{tabular*}
            \begin{tablenotes}
                \footnotesize
\textbf{Notes:}  Based on 100 simulation repetitions. Each simulated dataset consists of $(n,T)$ panel observations with $T=3$. For the food industry, $n=794$ and the true parameters are: mixing probabilities $\bs\alpha = [0.136, 0.451, 0.413]$, within-component mixing probabilities $\bs\tau = [[0.547, 0.453], [0.111, 0.889], [0.699, 0.301]]$, means $\bs\mu = [[-0.959, -0.943], [-0.844, -0.568], [-0.405, -0.229]]$, and standard deviations $\bs\sigma = [0.583, 0.113, 0.096]$. For the textile industry, $n=196$ and the true parameters are: mixing probabilities $\bs\alpha = [0.056, 0.433, 0.511]$, within-component mixing probabilities $\bs\tau = [[0.5, 0.5], [0.151, 0.849], [0.395, 0.605]]$, means $\bs\mu = [[-1.782, -1.665], [-1.262, -0.791], [-0.523, -0.242]]$, and standard deviations $\bs\sigma = [0.946, 0.208, 0.169]$. No covariates are included. The true number of components is $M_0=3$. The reported values represent the percentages of simulations in which each criterion—AIC, BIC, ave-rk, max-rk, and LRTs—selected a given number of components.
        \end{tablenotes}

    \end{threeparttable}
\end{table}

\begin{figure}%[ht!]
    \centering
        \caption{Estimated parameter values for two-component mixture models under conditional independence (\ref{spec-4}) with normal error density and covariates  ($M=3, \mathcal{K}=1$)  } \label{fig:parameter-normal-3}
    % Updated figure
    \subfigure[$\hat\alpha_j$]{\includegraphics[width=0.32\textwidth]{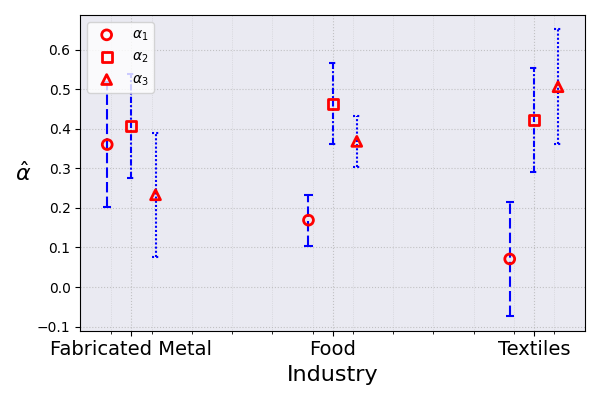}}
    \subfigure[$\hat\mu_j$]{\includegraphics[width=0.32\textwidth]{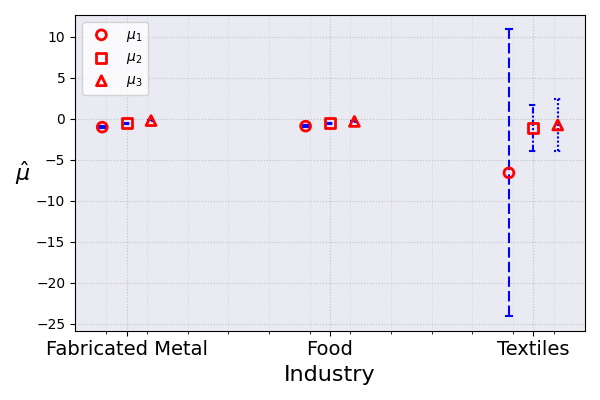}}
    \subfigure[$\hat\sigma_j$]{\includegraphics[width=0.32\textwidth]{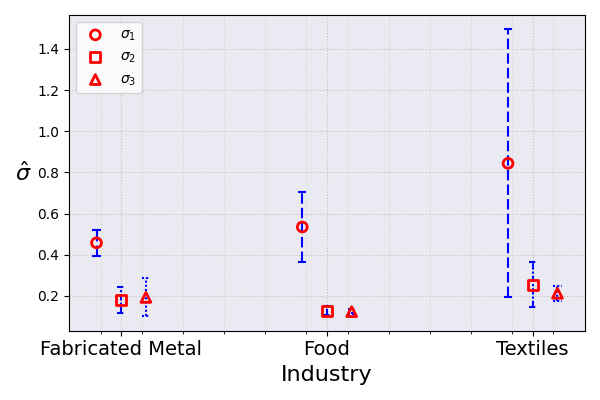}}

    \subfigure[$\hat{\beta}_{\log K, j}$]{\includegraphics[width=0.32\textwidth]{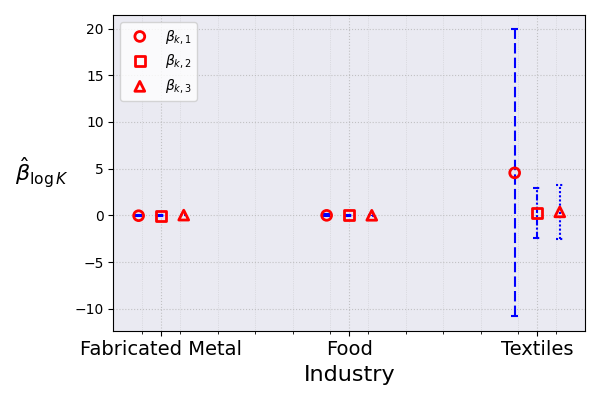}}
    \subfigure[$\hat{\beta}_{\text{Import},j}$]{\includegraphics[width=0.32\textwidth]{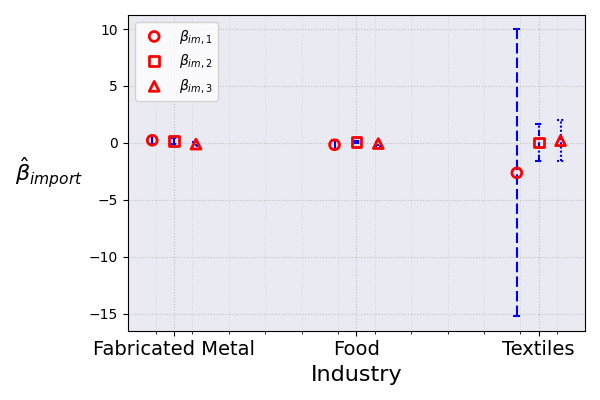}}
 \end{figure}

%\begin{figure}[ht!]
%    \centering
%    % Updated figure
%        \caption{Estimated $\alpha_{j}$ with 95\% CIs for models with normal error density with $\log K$, Import, and CIIU4 as covariates \textbf{[To Jasmine:]} }
%    \subfigure[Fabricated Metal Products Industry]{\includegraphics[width=0.32\textwidth]{figures/empirical_alpha_with_error_bars_stat_normal_Fabricated metal products.png}}
%    \subfigure[Food Products Industry]{\includegraphics[width=0.32\textwidth]{figures/empirical_alpha_with_error_bars_stat_normal_Food products.png}}
%    \subfigure[Textiles Industry]{\includegraphics[width=0.32\textwidth]{figures/empirical_alpha_with_error_bars_stat_normal_Textiles.png}}
%\end{figure}

\begin{figure}%[ht!]
    \centering
        \caption{Estimated parameter values for two-component mixture models under conditional independence (\ref{spec-4}) with two-components normal mixture error density and covariates  ($M=3, \mathcal{K}=2$) }\label{fig:parameter-mixture-3}
    % Updated figure
    % \subfigure[Fabricated Metal Products ($\hat\mu_{jk}$)]{\includegraphics[width=0.32\textwidth]{figures/empirical_mu_with_error_bars_stat_mixture_Fabricated metal products.png}}
    % \subfigure[Food Products ($\hat\mu_{jk}$)]{\includegraphics[width=0.32\textwidth]{figures/empirical_mu_with_error_bars_stat_mixture_Food products.png}}
    % \subfigure[Textiles ($\hat\mu_{jk}$)]{\includegraphics[width=0.32\textwidth]{figures/empirical_mu_with_error_bars_stat_mixture_Textiles.png}}
    % \subfigure[Fabricated Metal Products ($\hat\alpha_j$)]{\includegraphics[width=0.32\textwidth]{figures/empirical_alpha_with_error_bars_stat_mixture_Fabricated metal products.png}}
    % \subfigure[Food Products ($\hat\alpha_j$)]{\includegraphics[width=0.32\textwidth]{figures/empirical_alpha_with_error_bars_stat_mixture_Food products.png}}
    % \subfigure[Textiles ($\hat\alpha_j$)]{\includegraphics[width=0.32\textwidth]{figures/empirical_alpha_with_error_bars_stat_mixture_Textiles.png}}
    \subfigure[$\hat\alpha_j$]{\includegraphics[width=0.32\textwidth]{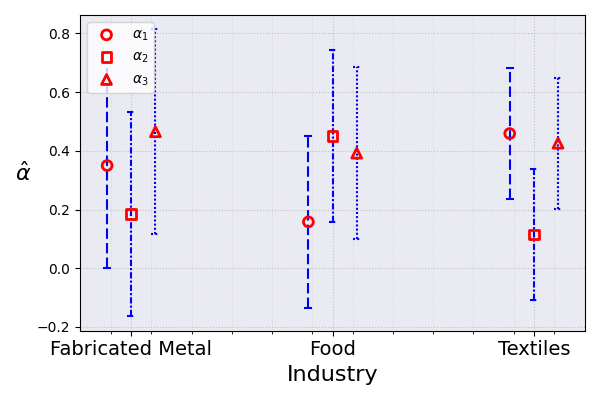}}
%    \subfigure[($\hat\mu_{jk}$)]{\includegraphics[width=0.32\textwidth]{figures/stationary_mixture_kmshare_ciiu_mu_across_industries_m3.png}}
    \subfigure[$\hat{\mu}_j$]{\includegraphics[width=0.32\textwidth]{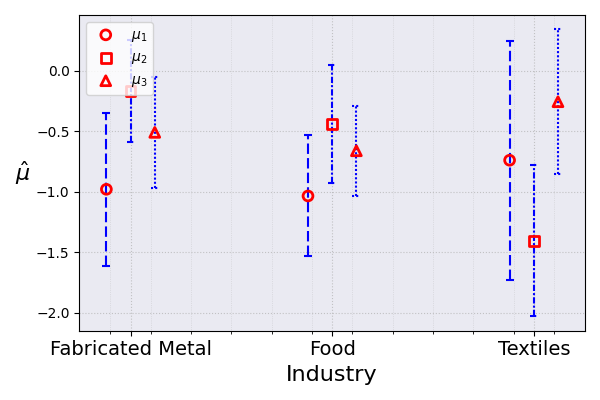}}
    \subfigure[$\hat\sigma_j$]{\includegraphics[width=0.32\textwidth]{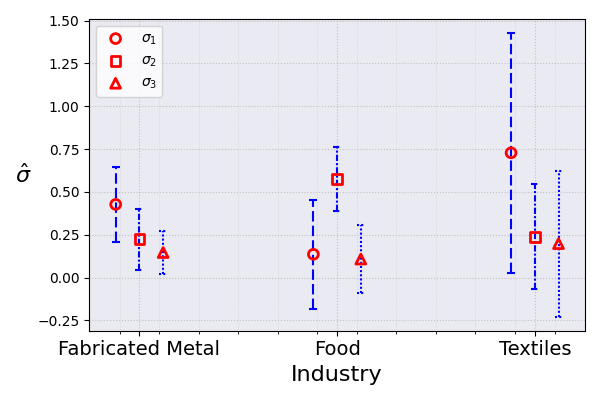}}

    \subfigure[$\hat{\beta}_{\log K, j}$]{\includegraphics[width=0.32\textwidth]{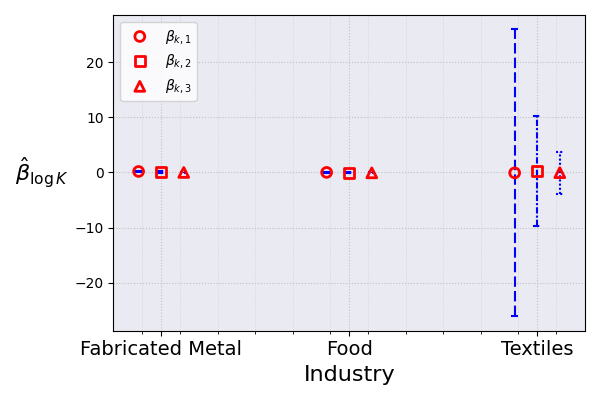}}
    \subfigure[$\hat{\beta}_{\text{Import},j}$]{\includegraphics[width=0.32\textwidth]{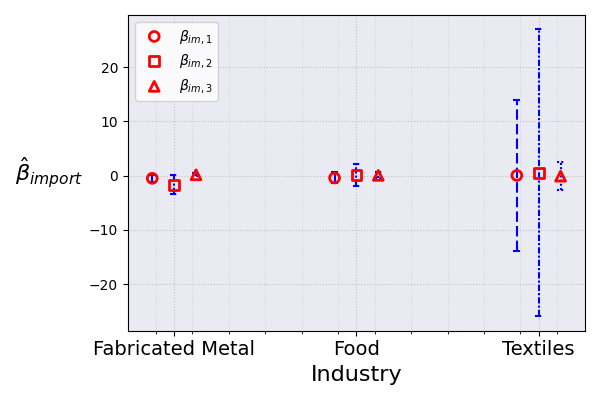}}
    \end{figure}

\end{document}